\documentclass[12pt]{iopart}
\usepackage{newtxtext}
\usepackage{color}
\usepackage{comment}
\usepackage{cite}
\usepackage[dvipdfmx,hypertexnames=false]{hyperref}
\hypersetup{colorlinks=true,linkcolor=blue,citecolor=blue}
\usepackage{amssymb}
\usepackage{amsthm}
\usepackage{iopams}
\usepackage{graphicx}

\theoremstyle{definition}
\newtheorem{Def}{Definition}[section]
\newtheorem{lem}[Def]{Lemma}
\newtheorem{thm}[Def]{Theorem}

\newtheorem{eg}[Def]{Example}
\newtheorem{rem}[Def]{Remark}

\newcommand{\od}[3][1]{
\ifnum #1=1\frac{\mathrm{d} #2}{\mathrm{d} #3}
\else\frac{\mathrm{d}^{#1} #2}{\mathrm{d} #3^{#1}}
\fi}
\newcommand{\pd}[3][1]{
\ifnum #1=1\frac{\partial #2}{\partial #3}
\else\frac{\partial^{#1} #2}{\partial #3^{#1}}
\fi}

\makeatletter
  \renewcommand{\theequation}
  {\arabic{section}.\arabic{equation}}
  \@addtoreset{equation}{section}

	\renewcommand{\thefigure}{
	\thesection.\arabic{figure}}
	\@addtoreset{figure}{section}

	\renewcommand{\thetable}{
	\thesection.\arabic{table}}
	\@addtoreset{table}{section}
\makeatother

\begin{document}

\title[Integrable full discretization of the multi-component short pulse equation]
{Integrable full discretization of the multi-component short pulse equation}
\author{Ayako Hori$^1$, Yuta Tanaka$^1$, Ken-ichi Maruno$^2$, Yasuhiro Ohta $^3$, Bao-Feng Feng $^4$}
\address{$^1$~Department of Pure and Applied Mathematics, 
School of Fundamental Science and Engineering, Waseda University, 3-4-1 Okubo, Shinjuku-ku, Tokyo 169-8555, Japan}
\address{$^2$~Department of Applied Mathematics, Faculty of Science and Engineering, Waseda University, 
3-4-1 Okubo, Shinjuku-ku, Tokyo 169-8555, Japan}
\address{$^3$~ Department of Mathematics, Kobe University, Rokko, Kobe 657-8501, Japan}
\address{$^4$~ School of Mathematical and Statistical Sciences, The University of Texas Rio Grande Valley, 
1201 West University Dr., Edinburg, Texas 78541, USA}

\eads{ayako0903@akane.waseda.jp, tanayu@suou.waseda.jp, kmaruno@waseda.jp, ohta@math.kobe-u.ac.jp, baofeng.feng@utrgv.edu}
\vspace{10pt}
\begin{indented}
\item[]{\today}
\end{indented}
\begin{abstract}
We propose a new formulation of the multi-component short pulse (MCSP) equation
that includes the coupled complex short pulse (CCSP) equation as a reduction.
Using Hirota's bilinear method, we construct its $N$-soliton solutions in Pfaffian form.
We then derive integrable semi-discrete and fully discrete analogues of the MCSP equation 
admitting Pfaffian $N$-soliton solutions.
The resulting fully discrete system provides a practical self-adaptive moving mesh scheme 
for numerical simulations. For the parameter sets considered, 
numerical simulations demonstrate excellent agreement between the numerical and exact solutions, 
confirming the robustness and high accuracy of the proposed scheme.
\end{abstract}

\noindent{\it Keywords\/}:\ \ 
integrable full discretization, multi-component short pulse equation, 
coupled complex short pulse equation, self-adaptive moving mesh scheme, Pfaffian


\begin{section}{Introduction}
\label{sec_intro}

The theory of discrete integrable systems has developed substantially since the pioneering
work of Hirota and of Ablowitz and Ladik in the mid-1970s.
Hirota introduced the bilinear formalism, which enables soliton equations to be discretized
while preserving their integrable structure and solution properties. He applied this approach
to several equations, including the KdV, mKdV, sine-Gordon, and KP equations
\cite{Hirota1,Hirota2,Hirota3,Hirota4,Hirota5,Hirota_DAGTE}. Independently, Ablowitz and
Ladik developed a discretization method based on the AKNS-type linear eigenvalue problem and
derived integrable discretizations of several equations, including the nonlinear Schr\"{o}dinger 
(NLS) equation \cite{Abl1,Abl2,Abl3,Abl4,Abl5}. Detailed historical accounts of discrete
integrable systems can be found in the books by Suris \cite{Suris} and by
Hietarinta, Joshi, and Nijhoff \cite{DS}.

A class of soliton equations is associated with Wadati--Konno--Ichikawa (WKI)-type
linear eigenvalue problems \cite{WKI1,WKI2}. Ishimori, as well as Wadati and Sogo, showed that
WKI-type soliton equations can be transformed, through a hodograph transformation, into
equations associated with AKNS-type linear eigenvalue problems
\cite{ishimori1,ishimori2,wadachi1,rogers}. Although the integrable discretization of
WKI-type soliton equations had long been considered difficult, some of the present authors
overcame this difficulty by combining Hirota's bilinear method with a discrete hodograph
transformation \cite{maruno3,maruno4,maruno5}. The resulting discrete equations naturally
provide self-adaptive moving mesh schemes, whose mesh intervals adjust automatically in
regions where the solution varies rapidly. 
The same approach has subsequently been applied
to several other soliton equations that admit 
loop solitons (which are multivalued) and cusp solitons 
(whose derivatives blow up at the peak) 
\cite{maruno1,maruno8,maruno2,maruno6,maruno7}.

The short pulse (SP) equation
\begin{eqnarray}
u_{xt}=u+\frac{1}{6}(u^{3})_{xx},
\label{SP}
\end{eqnarray}
belongs to the WKI class and has also been discretized while preserving its
integrability by this approach. Sch\"{a}fer and Wayne proposed the SP equation as an approximate model for
the propagation of ultrashort optical pulses in nonlinear media \cite{S-W}. Although
the NLS equation derived from Maxwell's equations accurately describes slowly varying
envelopes in weakly nonlinear media, its accuracy deteriorates as the pulse width
decreases. In this regime, numerical studies show that the SP equation more
accurately approximates solutions of Maxwell's equations \cite{C-J-S-W}.

The SP equation has also attracted considerable attention in the field of integrable systems.
Sakovich and Sakovich constructed a Lax pair and thereby established its integrability
\cite{Sakovich1}. They further showed that the SP equation can be transformed into the
sine-Gordon equation, with loop and breather solutions of the SP equation corresponding to
kink and breather solutions of the sine-Gordon equation, respectively \cite{Sakovich2}.
Multi-loop-soliton and multi-breather solutions were subsequently obtained \cite{Matsuno1}.
Some of the present authors also proposed an integrable discretization of the SP equation
\cite{maruno3} and constructed a self-adaptive moving mesh scheme 
with nonzero boundary values \cite{hori2}. 
A self-adaptive moving mesh scheme has also been derived from a
differential-geometric perspective \cite{geometry}.

Multi-component and complex extensions of the SP equation have been studied extensively. Several multi-component 
SP equations and their $N$-soliton solutions have been proposed
\cite{Matsuno2, CoupleSP}, together with integrable semi-discretizations and self-adaptive moving mesh schemes \cite{semidisMCSP, semidisCoupleSP,hori2}. 
Similar results have been obtained for the complex short pulse (CSP) equation \cite{maruno5, complexSP}.

From a physical viewpoint, two-component or vector extensions of the short pulse equation arise when the effects of polarization and anisotropy 
on ultrashort optical pulses are taken into account \cite{Pietrzyk2008, Sakovich2008}.
In such models, the different components represent polarization components of the optical field, while the nonlinear coupling describes their mutual interaction. 
For birefringent optical fibers, a coupled complex short pulse
equation has also been derived to describe the propagation of two interacting polarization components \cite{complexSP}.

An integrable full discretization of the multi-component SP (MCSP) equation, however, has not yet been established. 
In particular, a practical self-adaptive moving mesh scheme for the fully discrete MCSP equation remains unavailable. 
In general, the construction of integrable fully discrete schemes for multi-component or coupled nonlinear wave equations is a challenging but highly important problem, as it requires a sophisticated treatment to preserve the underlying algebraic and geometric structures simultaneously in both space and time. An important example in this direction was provided by Hirota \cite{Hirota2000_mKdV}, who constructed an integrable fully discrete analogue of the coupled modified KdV
equation using the bilinear formalism and gauge-invariant discrete bilinear operators.
This example shows how auxiliary variables associated with the $\tau$-functions can
close fully discrete nonlinear difference equations while preserving integrability and
Pfaffian multi-soliton solutions.

In the case of the short pulse type equations, a previous full
discretization of the SP equation was constructed based on Wronskian solutions \cite{maruno3},
but the resulting scheme is too complicated for practical numerical implementation and is
difficult to extend to multi-component or complex-valued systems. In this paper, we instead
construct a full discretization from Pfaffian solutions, leading to a substantially more
compact scheme suitable for numerical computation.

To this end, we introduce a novel formulation of the MCSP equation that admits Pfaffian
multi-soliton solutions and includes the coupled complex short pulse (CCSP) equation
as a natural reduction:
\begin{eqnarray}
\left\{
\begin{array}{ll}
u_{xt}^{(\mu)}=u^{(\mu)}+\displaystyle\frac{1}{2}\left(\left(\displaystyle\sum_{\alpha \in I}\displaystyle\sum_{\beta \in J}c_{\alpha\beta}u^{(\alpha)}v^{(\beta)}\right)u_{x}^{(\mu)}\right)_{x},\\
v_{xt}^{(\nu)}=v^{(\nu)}+\displaystyle\frac{1}{2}\left(\left(\displaystyle\sum_{\alpha \in I}\displaystyle\sum_{\beta \in J}c_{\alpha\beta}u^{(\alpha)}v^{(\beta)}\right)v_{x}^{(\nu)}\right)_{x}, 
\end{array}
\right.\label{MCSP}
\end{eqnarray}
where $\mu\in I$, $\nu\in J$, $I=\{1,2,\cdots,K\}$, and
$J=\{1,2,\cdots,L\}$. The coefficients $c_{\alpha\beta}$ are constants independent of
$x$ and $t$, and $K$ and $L$ are positive integers. Under an appropriate complex
conjugacy reduction, this system yields the CCSP equation
\begin{eqnarray}
u_{xt}^{(\mu)}=u^{(\mu)}+\frac{1}{2}\left(\displaystyle\sum_{\alpha\in I}|u^{(\alpha)}|^{2}u_{x}^{(\mu)}\right)_{x}, \qquad \mu\in I, \quad I=\{1,2,\cdots, K\}.
\label{CCSP}
\end{eqnarray}
Specifically, (\ref{CCSP}) follows from (\ref{MCSP}) by imposing
$I=J=\{1,2,\cdots,K\}$, $c_{\mu\nu}=\delta_{\mu\nu}$ (where $\delta_{\mu\nu}$ is the Kronecker delta), and
$v^{(\mu)}=(u^{(\mu)})^*$, where $(u^{(\mu)})^*$ denotes the complex conjugate of
$u^{(\mu)}$.
Under this complex-conjugacy reduction, the nonlinear coupling term becomes the total intensity $\sum_{\alpha\in I}|u^{(\alpha)}|^2$.
Thus, in physically relevant reductions, the component index can be interpreted as labeling interacting polarization or modal components. 
The present MCSP system may therefore be viewed as a mathematical multi-component generalization of previously studied vector and coupled short pulse models.

The remainder of this paper is organized as follows. Section \ref{sec_bilinear} derives a bilinear form of the MCSP equation 
and constructs its $N$-soliton solutions in Pfaffian form. 
Section \ref{sec_semi} develops an integrable semi-discrete analogue and its corresponding $N$-soliton solutions. 
Section \ref{sec_full} derives an integrable fully discrete analogue. Section \ref{sec_nc} presents numerical 
simulations based on the resulting self-adaptive moving mesh scheme and compares the numerical results with exact solutions. 
Section \ref{sec_con} concludes the paper.

\end{section}

\begin{section}{Bilinear equations and \texorpdfstring{$N$}{N}-soliton solutions for the MCSP equation}
\label{sec_bilinear}

In this section, we derive a bilinear form and a hodograph transformation for
the MCSP equation (\ref{MCSP}) and construct its $N$-soliton solutions in Pfaffian
form.

First, every sufficiently smooth solution of the MCSP equation (\ref{MCSP}) satisfies the conservation law
\begin{eqnarray}
\left(\frac{1}{\rho}\right)_{t}-\left(\frac{F}{2\rho}\right)_{x}=0,\label{conservationlaw}
\end{eqnarray}
where
\begin{eqnarray}
\rho=\left(1+\displaystyle\sum_{\mu\in I}\displaystyle\sum_{\nu\in J}c_{\mu\nu}u^{(\mu)}_{x}v^{(\nu)}_{x}\right)^{-\frac{1}{2}},\qquad 
F=\displaystyle\sum_{\mu\in I}\displaystyle\sum_{\nu\in J}c_{\mu\nu}u^{(\mu)}v^{(\nu)}.\label{conservationlaw2}
\end{eqnarray}
Indeed, setting
$S=\sum_{\mu\in I}\sum_{\nu\in J}c_{\mu\nu}u_x^{(\mu)}v_x^{(\nu)}$
and $q=\sqrt{1+S}=1/\rho$, the MCSP equation (\ref{MCSP}) can be expanded as
\begin{eqnarray*}
u_{xt}^{(\mu)}&=&u^{(\mu)}+\frac{1}{2}F_xu_x^{(\mu)}
+\frac{1}{2}Fu_{xx}^{(\mu)},\\
v_{xt}^{(\nu)}&=&v^{(\nu)}+\frac{1}{2}F_xv_x^{(\nu)}
+\frac{1}{2}Fv_{xx}^{(\nu)}.
\end{eqnarray*}
Thus, differentiating $S$ with respect to $t$ and substituting these equations, we obtain
\begin{eqnarray*}
S_t&=&\sum_{\mu\in I}\sum_{\nu\in J}c_{\mu\nu}
\left(u_{xt}^{(\mu)}v_x^{(\nu)}+u_x^{(\mu)}v_{xt}^{(\nu)}\right)\\
&=&\sum_{\mu\in I}\sum_{\nu\in J}c_{\mu\nu}
\left(u^{(\mu)}v_x^{(\nu)}+u_x^{(\mu)}v^{(\nu)}\right)
+F_x\sum_{\mu\in I}\sum_{\nu\in J}c_{\mu\nu}u_x^{(\mu)}v_x^{(\nu)}
\nonumber\\
&&+\frac{F}{2}\sum_{\mu\in I}\sum_{\nu\in J}c_{\mu\nu}
\left(u_{xx}^{(\mu)}v_x^{(\nu)}+u_x^{(\mu)}v_{xx}^{(\nu)}\right)\\
&=&F_x+F_xS+\frac{F}{2}S_x
=F_x(1+S)+\frac{F}{2}S_x.
\end{eqnarray*}
Moreover, differentiating $q^2=1+S$ with respect to $t$ and
$x$ gives
\begin{eqnarray*}
2qq_t=S_t,\qquad 2qq_x=S_x.
\end{eqnarray*}
Using these relations together with $1+S=q^2$, we find
\begin{eqnarray*}
q_t&=&\frac{S_t}{2q}
=\frac{1}{2q}\left(F_xq^2+\frac{F}{2}(2qq_x)\right)\\
&=&\frac{1}{2}F_xq+\frac{1}{2}Fq_x
=\left(\frac{Fq}{2}\right)_x,
\end{eqnarray*}
which proves (\ref{conservationlaw}).
The conservation law (\ref{conservationlaw}) therefore defines the hodograph transformation
\begin{eqnarray}
dX=\frac{1}{\rho} dx+\frac{F}{2\rho}dt,\qquad dT=dt,\label{hodograph}
\end{eqnarray}
Equivalently, the inverse of the hodograph transformation (\ref{hodograph}) is
given in differential form by
\begin{eqnarray}
dx=\rho\,dX-\frac{F}{2}dT,\qquad dt=dT.
\label{inversehodographdifferential}
\end{eqnarray}
Hence, for a fixed reference point $X_{*}$, integration with respect to $X$
gives
\begin{eqnarray}
x(X,T)=x_{*}(T)+\int_{X_{*}}^{X}\rho(\bar{X},T)\,d\bar{X},
\qquad t=T,
\label{inversehodograph}
\end{eqnarray}
where $x_{*}(T)=x(X_{*},T)$ satisfies
\begin{eqnarray*}
\frac{d x_{*}}{dT}=-\frac{1}{2}F(X_{*},T).
\end{eqnarray*}
The compatibility condition for (\ref{inversehodographdifferential}) is
$\rho_T+F_X/2=0$, which is derived below in
(\ref{conservationlawXT}).
The hodograph transformation also gives the derivative relations
\begin{eqnarray}
\frac{\partial}{\partial X}=\rho \frac{\partial}{\partial x},\qquad\frac{\partial}{\partial T}
=\frac{\partial }{\partial t}-\frac{F}{2}\frac{\partial }{\partial x}.\label{derivativelaw}
\end{eqnarray}
The MCSP equation (\ref{MCSP}) can be written as
\begin{eqnarray}
\left\{
\begin{array}{ll}
\partial_{x}\left(\partial_{t}-\displaystyle\frac{F}{2}\partial_{x}\right)u^{(\mu)}=u^{(\mu)},\\
\partial_{x}\left(\partial_{t}-\displaystyle\frac{F}{2}\partial_{x}\right)v^{(\nu)}=v^{(\nu)}.\\
\end{array}
\right.
\label{eq_6}
\end{eqnarray}
Using (\ref{derivativelaw}), equation (\ref{eq_6}) becomes
\begin{eqnarray}
\left\{
\begin{array}{ll}
u_{XT}^{(\mu)}=u^{(\mu)}\rho,\\
v_{XT}^{(\nu)}=v^{(\nu)}\rho.
\end{array}
\right.
\label{CND}
\end{eqnarray}
Similarly, the conservation law (\ref{conservationlaw}) becomes
\begin{eqnarray}
\rho_{T}+\frac{1}{2}\left(\sum_{\mu\in I}\sum_{\nu\in J}c_{\mu\nu}u^{(\mu)}v^{(\nu)}\right)_{X}=0.\label{conservationlawXT}
 \end{eqnarray}
Thus, in the $(X,T)$ variables, $\rho$ is the conserved density and
$\frac{1}{2}\left(\sum_{\mu\in I}\sum_{\nu\in J}c_{\mu\nu}u^{(\mu)}v^{(\nu)}\right)$
is the associated flux.

Consequently, the hodograph transformation (\ref{hodograph}) maps the MCSP equation
(\ref{MCSP}) to the multi-component coupled integrable dispersionless (MCCID) system
\begin{eqnarray}
\left\{
\begin{array}{lll}
\rho_{T}+\displaystyle\frac{1}{2}\left(\displaystyle\sum_{\mu\in I}\displaystyle\sum_{\nu\in J}c_{\mu\nu}u^{(\mu)}v^{(\nu)}\right)_{X}=0,\\
u_{XT}^{(\mu)}=u^{(\mu)}\rho,\\
v_{XT}^{(\nu)}=v^{(\nu)}\rho.
\end{array}
\right.
\label{dispersionlessMCSP}
\end{eqnarray}

Next, we introduce the dependent-variable transformation
\begin{eqnarray}
\fl{u^{(\mu)}=\frac{g^{(\mu)}}{f}\quad(\mu\in I),\qquad v^{(\nu)}=\frac{h^{(\nu)}}{f}\quad(\nu\in J),\qquad\rho=1-2(\log{f})_{XT}.}\label{dependenttransformation}
\end{eqnarray}
Under this transformation, the MCCID system (\ref{dispersionlessMCSP}) is
equivalent to the following bilinear system:
\begin{eqnarray}
\left\{
\begin{array}{ll}
D_{T}^{2}f\cdot f=\displaystyle\frac{1}{2}\displaystyle\sum_{\mu\in I}\displaystyle\sum_{\nu\in J}c_{\mu\nu}g^{(\mu)}h^{(\nu)},\\
D_{X}D_{T}f\cdot g^{(\mu)}=fg^{(\mu)},\\
D_{X}D_{T}f\cdot h^{(\nu)}=fh^{(\nu)}.
\end{array}
\right.
\label{bilinearMCSP}
\end{eqnarray}
Here, the Hirota $D$-operators are defined by
\begin{eqnarray*}
D_X^m D_T^n a \cdot b = \left. \left(\frac{\partial}{\partial X}
 - \frac{\partial}{\partial X'}\right)^m \left(\frac{\partial}{\partial T} - \frac{\partial}{\partial T'}\right)^n a(X,T)b(X',T') \right|_{X=X', T=T'}.
\end{eqnarray*}

For completeness, the equivalence above includes a gauge choice. Substitution of
$\rho=1-2(\log f)_{XT}$ into the first equation of
(\ref{dispersionlessMCSP}) gives
\begin{eqnarray*}
\partial_X\left[-2(\log f)_{TT}+\frac{F}{2}\right]=0.
\end{eqnarray*}
Hence the expression in brackets may initially be an arbitrary function $A(T)$.
The common gauge transformation
$(f,g^{(\mu)},h^{(\nu)})\mapsto
e^{\psi(T)}(f,g^{(\mu)},h^{(\nu)})$ leaves
$u^{(\mu)}$, $v^{(\nu)}$, and $\rho$ unchanged and preserves the form of the
last two equations in (\ref{bilinearMCSP}). Choosing $\psi_{TT}=A(T)/2$
sets $A(T)=0$, which gives the first equation in (\ref{bilinearMCSP}).

We now express the inverse hodograph transformation in terms of the
$\tau$-function. Using $\rho=1-2(\log f)_{XT}$ in
(\ref{inversehodograph}), we obtain
\begin{eqnarray*}
x(X,T)&=&x_{*}(T)+(X-X_{*})
-2\left[(\log f)_{T}(X,T)-(\log f)_{T}(X_{*},T)\right]\\
&=&X-2(\log f)_{T}(X,T)+C(T),
\end{eqnarray*}
where
$C(T)=x_{*}(T)-X_{*}+2(\log f)_{T}(X_{*},T)$.
The first equation in (\ref{bilinearMCSP}) implies
$(\log f)_{TT}=F/4$, and therefore
\begin{eqnarray*}
C_{T}=-\frac{1}{2}F(X_{*},T)
+2(\log f)_{TT}(X_{*},T)=0.
\end{eqnarray*}
Thus $C(T)$ is constant. Since this constant corresponds to a translation of
the spatial coordinate $x$, we set $C(T)=0$ without loss of generality. The
corresponding solution of the original MCSP equation is then given
parametrically by
\begin{eqnarray}
\fl{u^{(\mu)}=\frac{g^{(\mu)}}{f},\qquad
v^{(\nu)}=\frac{h^{(\nu)}}{f},\qquad
x=X-2(\log f)_{T},\qquad t=T.}
\label{parametricMCSP}
\end{eqnarray}

We next introduce the Pfaffian elements needed to establish a Pfaffian identity used below. 
Let $M$ and $N$ be integers such that $1\leq N<M$. We use the standard antisymmetry convention
${\rm Pf}(\alpha,\beta)=-{\rm Pf}(\beta,\alpha)$ and
${\rm Pf}(\alpha,\alpha)=0$; hence it is sufficient to specify one orientation
of each element. The required Pfaffian elements are defined as follows:
\begin{eqnarray}
\fl{{\rm Pf}(j,k): {\rm arbitrary}, \quad(1\leq j<k\leq M),}
\label{pfMCSP00}
\end{eqnarray}
\begin{eqnarray}
\fl{{\rm Pf}(d_{0},j): {\rm arbitrary}, \quad(1\leq j\leq M),}
\end{eqnarray}
\begin{eqnarray}
\fl{{\rm Pf}(j,k^{\prime})=\delta_{jk},\quad (1\leq j\leq M,\quad 1\leq k\leq M)}
\label{pfMCSP10}
\end{eqnarray}
\begin{eqnarray}
\fl{{\rm Pf}(k^{\prime},j^{\prime})=
\left\{
\begin{array}{ll}
\displaystyle\frac{1}{P_{j}-P_{k}}\sum_{\mu\in I}\sum_{\nu\in J} c_{\mu\nu}a_{j}^{(\mu)}b_{k}^{(\nu)},&
(1\leq j\leq N<k\leq M),\\
0,&{\rm otherwise},
\end{array}
\right.}
\end{eqnarray}
\begin{eqnarray}
\fl{{\rm Pf}(a^{(\mu)}, j^{\prime})=
\left\{
\begin{array}{ll}
a_{j}^{(\mu)},&(1\leq j\leq N),\\
0,&(N+1\leq j\leq M),
\end{array}
\right.}
\qquad\mu\in I,
\label{pfMCSP20}
\end{eqnarray}
\begin{eqnarray}
\fl{{\rm Pf}(b^{(\nu)}, k^{\prime})=
\left\{
\begin{array}{ll}
0,&(1\leq k\leq N),\\
b_{k}^{(\nu)},& (N+1\leq k\leq M),
\end{array}
\right.}
\qquad\nu\in J,
\label{pfMCSP21}
\end{eqnarray}
where $P_{j}, c_{\mu\nu}, a_{j}^{(\mu)}$ and $b_{k}^{(\nu)}$ are arbitrary constants,
subject to the condition that all denominators appearing above are nonzero.
All remaining Pfaffian elements are set to zero. 

\begin{lem}
\label{lem2.1}
For the Pfaffian elements defined in (\ref{pfMCSP00})--(\ref{pfMCSP21}), the following identity holds:
\begin{eqnarray}
\fl{\sum_{\mu\in I}\sum_{\nu\in J}c_{\mu\nu}{\rm Pf}(a^{(\mu)}, d_{0},1,\cdots, M, M^{\prime}, 
\cdots, 2^{\prime}, 1^{\prime}){\rm Pf}(b^{(\nu)}, d_{0},1,\cdots, M, M^{\prime}, \cdots, 2^{\prime}, 1^{\prime})}\nonumber\\
\fl{=\sum_{j=1}^{M}P_{j}{\rm Pf}(d_{0},1,\cdots, \hat{j},\cdots, M, M^{\prime}, 
\cdots, 2^{\prime},1^{\prime})}\nonumber\\
\fl{\qquad\times{\rm Pf}(d_{0},1,\cdots, M, M^{\prime}, \cdots, \hat{j^{\prime}}, \cdots , 2^{\prime},1^{\prime}).}
\end{eqnarray}
Here the hat $\,\hat{}\,$ indicates that the corresponding index is omitted.
\end{lem}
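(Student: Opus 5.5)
The plan is to reduce both sides to the $2M$-index minors
\[
A_j={\rm Pf}(d_0,1,\cdots,M,M',\cdots,\hat{j'},\cdots,1'),\qquad
B_j={\rm Pf}(d_0,1,\cdots,\hat{j},\cdots,M,M',\cdots,1'),
\]
so that the right-hand side is $\sum_{j=1}^M P_jB_jA_j$. We then use a vanishing Pfaffian, one with a repeated index, to relate the $A$'s to the $B$'s.

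\textbf{Step 1 (expand the left-hand side).} Expand ${\rm Pf}(a^{(\mu)},d_0,1,\cdots,M,M',\cdots,1')$ along its first index. By (\ref{pfMCSP20}), and because all other elements of $a^{(\mu)}$ vanish, only the terms ${\rm Pf}(a^{(\mu)},j')=a^{(\mu)}_j$ with $1\le j\le N$ survive. This gives $\sum_{j\le N}\epsilon_j a_j^{(\mu)}A_j$, where $\epsilon_j=\pm1$ is the positional sign of $j'$ in the ordered list. Likewise, by (\ref{pfMCSP21}), the $b^{(\nu)}$-Pfaffian equals $\sum_{k>N}\epsilon_k b_k^{(\nu)}A_k$, with the same sign rule. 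Multiplying these, summing with $c_{\mu\nu}$, and using the definition of ${\rm Pf}(k',j')$ for $j\le N<k$, the left-hand side becomes
\[
\sum_{j\le N<k}\epsilon_j\epsilon_k\,(P_j-P_k)\,{\rm Pf}(k',j')\,A_jA_k .
\]
We split this into a $P_j$-part and a $P_k$-part. In each part, one inner sum can be extended to all indices $1,\dots,M$, because ${\rm Pf}(k',j')=0$ outside the block $j\le N<k$.

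\textbf{Step 2 (the key linear relation).} For each fixed $l$, consider the vanishing Pfaffian ${\rm Pf}(l',d_0,1,\cdots,M,M',\cdots,1')=0$; it vanishes because $l'$ appears twice. Expand it along the first index $l'$. The entry ${\rm Pf}(l',d_0)$ is zero, and among the unprimed indices only ${\rm Pf}(l',l)=-1$ contributes, by (\ref{pfMCSP10}); this term produces $\pm B_l$. The remaining terms come from ${\rm Pf}(l',k')$ and produce $\sum_k \pm{\rm Pf}(l',k')A_k$. The term with $k'=l'$ itself corresponds to a zero element. The expected outcome is
\[
\sum_{k}\epsilon_k\,{\rm Pf}(k',l')\,A_k=\sigma_l\,\epsilon_l\,B_l ,
\]
with a sign $\sigma_l$ that should turn out to be independent of $l$.

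\textbf{Step 3 (combine).} Substituting this relation into the $P_j$-part (with $l=j\le N$) and into the $P_k$-part (with $l=k>N$, using ${\rm Pf}(j',k')=-{\rm Pf}(k',j')$) turns the left-hand side into $\sum_{j\le N}P_jA_jB_j+\sum_{k>N}P_kA_kB_k$, provided the signs match up. This is exactly the right-hand side of the lemma. The minus sign in $(P_j-P_k)$ is absorbed by the antisymmetry, which is what makes the two parts combine additively.

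The main obstacle is the sign bookkeeping: the positional signs $\epsilon_j$ in the ordering $d_0,1,\dots,M,M',\dots,1'$, and the sign of the $B_l$ term relative to the $A_k$ terms in Step 2. These must cancel so that every product $P_lA_lB_l$ appears with coefficient $+1$. I would first verify the sign conventions on the case $M=2$, $N=1$, and only then run the general argument. If the sign in Step 2 comes out as $-1$ uniformly, the relation can instead be derived from the vanishing Pfaffian ${\rm Pf}(d_0,1,\cdots,M,M',\cdots,1',l')$ with $l'$ placed at the end, which reverses the relevant signs.
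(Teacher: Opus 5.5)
Your proposal is correct and is essentially the paper's proof. It uses the same first-index expansions (your $\epsilon_j$ is the paper's $(-1)^{j-1}$), the same substitution $\sum_{\mu,\nu}c_{\mu\nu}a_j^{(\mu)}b_k^{(\nu)}=(P_j-P_k){\rm Pf}(k',j')$ with the split into $P_j$- and $P_k$-parts, and the same linear relation taken from a Pfaffian with a repeated primed index; the paper duplicates $j'$ in place while you put the copy in front, which only changes the sign of a vanishing Pfaffian. Carrying out your Step 2 gives $\sigma_l=+1$ for every $l$, so the signs close up as you hoped; your fallback is therefore unnecessary, and for the same reason it would not alter the relation anyway.
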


\begin{proof}
Since
\begin{eqnarray} 
\fl{{\rm Pf}(a^{(\mu)},d_{0},1,\cdots, M,M^{\prime},\cdots, 2^{\prime}, 1^{\prime})}\nonumber\\
\fl{\qquad=\sum_{j=1}^{N}(-1)^{j-1}{\rm Pf}(a^{(\mu)},j^{\prime}){\rm Pf}(d_{0},1,\cdots, M, M^{\prime}, \cdots, \hat{j^{\prime}}, \cdots, 2^{\prime}, 1^{\prime}),}\nonumber
\end{eqnarray}
\begin{eqnarray} 
\fl{{\rm Pf}(b^{(\nu)},d_{0},1,\cdots, M,M^{\prime},\cdots, 2^{\prime}, 1^{\prime})}\nonumber\\
\fl{\qquad=\sum_{k=N+1}^{M}(-1)^{k-1}{\rm Pf}(b^{(\nu)},k^{\prime}){\rm Pf}(d_{0},1,\cdots, M, M^{\prime}, \cdots, \hat{k^{\prime}}, \cdots, 2^{\prime}, 1^{\prime}),}\nonumber
\end{eqnarray}
we obtain
\begin{eqnarray}
\fl{\sum_{\mu\in I}\sum_{\nu\in J}c_{\mu\nu}{\rm Pf}(a^{(\mu)}, d_{0},1,\cdots, M, M^{\prime}, \cdots, 2^{\prime}, 1^{\prime})}\nonumber\\
\fl{\qquad\times{\rm Pf}(b^{(\nu)},d_{0},1,\cdots, M, M^{\prime}, \cdots, 2^{\prime}, 1^{\prime})}\nonumber\\
\fl{=\sum_{j=1}^{N}\sum_{k=N+1}^{M}(-1)^{j+k}\sum_{\mu\in I}\sum_{\nu\in J}c_{\mu\nu}a_{j}^{(\mu)}b_{k}^{(\nu)}{\rm Pf}(d_{0},1,\cdots, M, M^{\prime}, \cdots, \hat{j^{\prime}}, \cdots,  2^{\prime}, 1^{\prime})}\nonumber\\
\fl{\qquad\times{\rm Pf}(d_{0},1,\cdots, M, M^{\prime}, \cdots,  \hat{k^{\prime}}, \cdots, 2^{\prime}, 1^{\prime})}\nonumber\\
\fl{=\sum_{j=1}^{N}\sum_{k=N+1}^{M}(-1)^{j+k}(P_{j}-P_{k}){\rm Pf}(k^{\prime},j^{\prime}){\rm Pf}(d_{0},1,\cdots, M, M^{\prime}, \cdots, \hat{j^{\prime}}, \cdots,  2^{\prime}, 1^{\prime})}\nonumber\\
\fl{\qquad\times{\rm Pf}(d_{0},1,\cdots, M, M^{\prime}, \cdots,  \hat{k^{\prime}}, \cdots, 2^{\prime}, 1^{\prime})}\nonumber\\
\fl{=\sum_{j=1}^{N}P_{j}{\rm Pf}(d_{0},1,\cdots, M, M^{\prime}, \cdots, \hat{j^{\prime}}, \cdots,  2^{\prime}, 1^{\prime})}\nonumber\\
\fl{\qquad\times\sum_{k=N+1}^{M}(-1)^{j+k}{\rm Pf}(k^{\prime},j^{\prime}){\rm Pf}(d_{0},1,\cdots, M, M^{\prime}, \cdots, \hat{k^{\prime}}, \cdots,  2^{\prime}, 1^{\prime})}\nonumber\\
\fl{-\sum_{k=N+1}^{M}P_{k}{\rm Pf}(d_{0},1,\cdots, M, M^{\prime}, \cdots, \hat{k^{\prime}}, \cdots,  2^{\prime}, 1^{\prime})}\nonumber\\
\fl{\qquad\times\sum_{j=1}^{N}(-1)^{j+k}{\rm Pf}(k^{\prime},j^{\prime}){\rm Pf}(d_{0},1,\cdots, M, M^{\prime}, \cdots, \hat{j^{\prime}}, \cdots,  2^{\prime}, 1^{\prime}).}
\label{eq_2.16}
 \end{eqnarray}
Now, by expanding the trivially vanishing Pfaffians ${\rm Pf}(d_{0},1,\cdots ,M,M^{\prime},\cdots, j+1^{\prime}, j^{\prime},j^{\prime},j-1^{\prime},\cdots,1^{\prime})$ and ${\rm Pf}(d_{0},1,\cdots ,M,M^{\prime},\cdots, k+1^{\prime}, k^{\prime},k^{\prime},k-1^{\prime},\cdots,1^{\prime})$, we obtain 
\begin{eqnarray}
\fl{\sum_{k=N+1}^{M}(-1)^{j+k}{\rm Pf}(k^{\prime},j^{\prime}){\rm Pf}(d_{0},1,\cdots, M, M^{\prime}, \cdots, \hat{k^{\prime}}, \cdots,  2^{\prime}, 1^{\prime})}\nonumber\\
\fl{\qquad={\rm Pf}(d_{0},1,\cdots, \hat{j}, \cdots, M, M^{\prime}, \cdots, 1^{\prime}),\qquad (1\leq j \leq N),}\nonumber
\end{eqnarray}
\begin{eqnarray}
\fl{\sum_{j=1}^{N}(-1)^{j+k}{\rm Pf}(k^{\prime},j^{\prime}){\rm Pf}(d_{0},1,\cdots, M, M^{\prime}, \cdots, \hat{j^{\prime}}, \cdots,  2^{\prime}, 1^{\prime})}\nonumber\\
\fl{\qquad=-{\rm Pf}(d_{0},1,\cdots, \hat{k}, \cdots, M, M^{\prime}, \cdots, 1^{\prime}),\qquad (N+1 \leq k \leq M).}\nonumber
\end{eqnarray}
Therefore, it follows from (\ref{eq_2.16}) that
\begin{eqnarray}
\fl{\sum_{\mu\in I}\sum_{\nu\in J}c_{\mu\nu}{\rm Pf}(a^{(\mu)}, d_{0},1,\cdots, M, M^{\prime}, \cdots, 2^{\prime}, 1^{\prime}){\rm Pf}(b^{(\nu)}, d_{0},1,\cdots, M, M^{\prime}, \cdots, 2^{\prime}, 1^{\prime})}\nonumber\\
\fl{=\sum_{j=1}^{N}P_{j}{\rm Pf}(d_{0},1,\cdots, M, M^{\prime}, \cdots, \hat{j^{\prime}}, \cdots, 2^{\prime}, 1^{\prime}){\rm Pf}(d_{0},1,\cdots, \hat{j}, \cdots, M, M^{\prime}, \cdots, 1^{\prime})}\nonumber\\
\fl{\quad +\sum_{k=N+1}^{M}P_{k}{\rm Pf}(d_{0},1,\cdots, M, M^{\prime}, \cdots, \hat{k^{\prime}}, \cdots, 2^{\prime}, 1^{\prime}){\rm Pf}(d_{0},1,\cdots, \hat{k}, \cdots, M, M^{\prime}, \cdots, 1^{\prime})}\nonumber\\
\fl{=\sum_{j=1}^{M}P_{j}{\rm Pf}(d_{0},1,\cdots, \hat{j}, \cdots, M, M^{\prime}, \cdots, 2^{\prime}, 1^{\prime})}\nonumber\\
\fl{\qquad\times{\rm Pf}(d_{0},1,\cdots, M, M^{\prime}, \cdots, \hat{j^{\prime}}, \cdots, 2^{\prime}, 1^{\prime}).}
\end{eqnarray}
\end{proof}
In what follows, we abbreviate ${\rm Pf}(1, 2,\cdots, M-1, M, M^{\prime}, M-1^{\prime},\cdots, 2^{\prime}, 1^{\prime})$ as ${\rm Pf}(\cdots)$, and ${\rm Pf}(\alpha, \beta, 1, 2,\cdots, M-1, M, M^{\prime}, M-1^{\prime},\cdots, 2^{\prime}, 1^{\prime})$ as ${\rm Pf}(\alpha,\beta,\cdots)$, where $\alpha$ and $\beta$ are arbitrary
Pfaffian labels. The same notation is used throughout the semi-discrete and fully discrete formulations.

To construct Pfaffian solutions of the bilinear system (\ref{bilinearMCSP}), let $M=2N$
and introduce the following Pfaffian elements: 
\begin{eqnarray}
\fl{{\rm Pf}(j,k)=\frac{p_{j}-p_{k}}{p_{j}+p_{k}}\varphi_{j}\varphi_{k},\quad (1\leq j<k\leq M),\label{pfMCSP1}}
\end{eqnarray}
\begin{eqnarray}
\fl{{\rm Pf}(j,k^{\prime})=\delta_{jk},\quad (1\leq j\leq M,\quad 1\leq k\leq M),}
\end{eqnarray}
\begin{eqnarray}
\fl{{\rm Pf}(k^{\prime},j^{\prime})=
\left\{
\begin{array}{ll}
\displaystyle\frac{p_{j}^{2}p_{k}^{2}}{4(p_{j}^{2}-p_{k}^{2})}\sum_{\mu\in I}\sum_{\nu\in J} c_{\mu\nu}a_{j}^{(\mu)}b_{k}^{(\nu)},&
(1\leq j\leq N<k\leq M),\\
0,&{\rm otherwise},
\end{array}
\right.}
\end{eqnarray}
\begin{eqnarray}
\fl{{\rm Pf}(a^{(\mu)}, j^{\prime})=
\left\{
\begin{array}{ll}
a_{j}^{(\mu)},&(1\leq j\leq N),\\
0,&(N+1\leq j\leq M),
\end{array}
\right.}
\qquad\mu\in I,
\label{pfMCSP22}
\end{eqnarray}
\begin{eqnarray}
\fl{{\rm Pf}(b^{(\nu)}, k^{\prime})=
\left\{
\begin{array}{ll}
0,&(1\leq k\leq N),\\
b_{k}^{(\nu)},& (N+1\leq k\leq M),
\end{array}
\right.}
\qquad\nu\in J,
\label{pfMCSP23}
\end{eqnarray}
where $\varphi_{j}=B_{j}e^{p_{j}X+p_{j}^{-1}T}$, and $B_j$ and $p_j$ are constants determining the phase and wave number of the $j$-th soliton, respectively.
The constants $a_{j}^{(\mu)}$ and $b_{k}^{(\nu)}$ are arbitrary, and the parameters are
chosen so that all expressions above are well-defined.
The element involving $d_0$ is defined by
\begin{eqnarray}
{\rm Pf}(d_0,i)=\varphi_i,\qquad 1\leq i\leq M.
\label{pfMCSP24}
\end{eqnarray}
All other Pfaffian elements formed from the labels appearing in the solution
formulas above are set to zero.

\begin{thm}
\label{thm2.2}
For the Pfaffian elements defined in (\ref{pfMCSP1})--(\ref{pfMCSP24}), the functions
\begin{eqnarray}
f={\rm Pf}(1,2,\cdots, M-1, M, M^{\prime}, M-1^{\prime}, \cdots, 2^{\prime}, 1^{\prime}),\label{con_solb}\\
g^{(\mu)}={\rm Pf}(a^{(\mu)}, d_{0},1,2,\cdots, M-1, M, M^{\prime}, M-1^{\prime}, \cdots, 2^{\prime}, 1^{\prime}),\\
h^{(\nu)}={\rm Pf}(b^{(\nu)}, d_{0},1,2,\cdots, M-1, M, M^{\prime}, M-1^{\prime}, \cdots, 2^{\prime}, 1^{\prime}),
\label{con_sole}
\end{eqnarray}
with $\mu\in I$ and $\nu\in J$, satisfy the bilinear system (\ref{bilinearMCSP}). 
\end{thm}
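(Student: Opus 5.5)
We will verify the three bilinear equations by the standard Pfaffian technique: compute the $X$- and $T$-derivatives of $f$, $g^{(\mu)}$, $h^{(\nu)}$ as Pfaffians with a few extra labels, substitute them, and reduce each equation to a Pfaffian identity. To express derivatives we introduce auxiliary labels $d_n$ with ${\rm Pf}(d_n,i)=p_i^n\varphi_i$ for $n\in\mathbb{Z}$, ${\rm Pf}(d_n,d_m)=0$, ${\rm Pf}(d_n,j')=0$, and ${\rm Pf}(d_n,a^{(\mu)})={\rm Pf}(d_n,b^{(\nu)})=0$. Since $\varphi_j$ depends on $X,T$ only through $e^{p_jX+p_j^{-1}T}$, we have
\[
\partial_X{\rm Pf}(j,k)=(p_j-p_k)\varphi_j\varphi_k={\rm Pf}(d_0,d_1,j,k),
\qquad
\partial_T{\rm Pf}(j,k)=\frac{p_j-p_k}{p_jp_k}\varphi_j\varphi_k \cdot(-1)={\rm Pf}(d_{-1},d_0,j,k)
\]
(up to sign, to be fixed). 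All elements involving primed indices or $a,b$ are constant. This gives the derivative formulas
\[
f_X={\rm Pf}(d_0,d_1,\cdots),\quad f_T={\rm Pf}(d_{-1},d_0,\cdots),\quad f_{XT}={\rm Pf}(d_{-1},d_1,\cdots),\quad f_{TT}={\rm Pf}(d_{-2},d_0,\cdots)
\]
and, from ${\rm Pf}(d_0,i)=\varphi_i$,
\[
g^{(\mu)}_X={\rm Pf}(a^{(\mu)},d_1,\cdots),\quad g^{(\mu)}_T={\rm Pf}(a^{(\mu)},d_{-1},\cdots),\quad g^{(\mu)}_{XT}={\rm Pf}(a^{(\mu)},d_0,\cdots)+{\rm Pf}(a^{(\mu)},d_{-1},d_0,d_1,\cdots).
\]
The analogous formulas hold for $h^{(\nu)}$.

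\textbf{Second and third equations.} Write $D_XD_Tf\cdot g^{(\mu)}=f_{XT}g-f_Xg_T-f_Tg_X+fg_{XT}$. Substituting the derivative formulas, the term $f\,{\rm Pf}(a^{(\mu)},d_0,\cdots)=fg^{(\mu)}$ appears on the right-hand side. The remaining terms
\[
{\rm Pf}(d_{-1},d_1,\cdots){\rm Pf}(a,d_0,\cdots)-{\rm Pf}(d_0,d_1,\cdots){\rm Pf}(a,d_{-1},\cdots)-{\rm Pf}(d_{-1},d_0,\cdots){\rm Pf}(a,d_1,\cdots)+{\rm Pf}(\cdots){\rm Pf}(a,d_{-1},d_0,d_1,\cdots)
\]
form exactly the Pfaffian identity with the four labels $a^{(\mu)},d_{-1},d_0,d_1$, so they vanish. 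The equation for $h^{(\nu)}$ is identical.

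\textbf{First equation.} We have $D_T^2f\cdot f=2(ff_{TT}-f_T^2)$. The contribution $2{\rm Pf}(\cdots){\rm Pf}(d_{-2},d_0,\cdots)-2{\rm Pf}(d_{-1},d_0,\cdots)^2$ is not killed by any single identity. The plan is to expand each Pfaffian along $d$-rows using ${\rm Pf}(j,k')=\delta_{jk}$ and reduce the whole expression to sums over $j$ of products of minors. We then compare with Lemma~\ref{lem2.1}. That lemma is stated for elements ${\rm Pf}(k',j')=\frac{1}{P_j-P_k}\sum c\,a_jb_k$, whereas here the prefactor is $\frac{p_j^2p_k^2}{4(p_j^2-p_k^2)}=\frac{1}{4}\cdot\frac{1}{p_k^{-2}-p_j^{-2}}$. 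Thus the lemma applies with $P_j=-p_j^{-2}/4$ after rescaling, and it gives $\frac12\sum c\,g h=\sum_j(\text{coefficient}\propto p_j^{-2})\,{\rm Pf}(d_0,\ldots\hat j\ldots){\rm Pf}(d_0,\ldots\hat{j'}\ldots)$.

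We need to show that $ff_{TT}-f_T^2$ equals this same sum. To do so we use the structure ${\rm Pf}(j,k)=\frac{p_j-p_k}{p_j+p_k}\varphi_j\varphi_k$, which yields
\[
\partial_T^2{\rm Pf}(j,k)-\text{(quadratic corrections)}\sim(p_j^{-2}-p_k^{-2})\cdots,
\]
and rewrite ${\rm Pf}(d_{-2},d_0,\cdots)$ as ${\rm Pf}(d_{-1},d_0,\cdots)$-type terms plus $\sum_jp_j^{-2}\varphi_j(\ldots)$. One then applies the expansion identity, i.e.\ the Jacobi-type identity ${\rm Pf}(\cdots){\rm Pf}(d_{-2},d_0,\cdots)-{\rm Pf}(d_{-1},d_0,\cdots)^2$ equals a sum over $j$ of $p_j^{-2}$ times products of the minors ${\rm Pf}(d_0,\ldots,\hat j,\ldots)$ and ${\rm Pf}(d_0,\ldots,\hat{j'},\ldots)$. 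This rewriting is most naturally done via the operator $\sum_j p_j^{-2}\,\partial/\partial(\text{row }j)$ acting on the Pfaffian, and also checks that $f$ is a genuine Pfaffian with the $d_0$ label handled consistently. This matching of the quadratic $T$-derivative expression with the right-hand side of Lemma~\ref{lem2.1}, including the correct normalization factor $1/4$ and the signs, is the main obstacle. I would first test it for $N=1$ ($M=2$) to fix the signs and constants before doing the general bookkeeping.
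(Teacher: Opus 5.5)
Your treatment of the second and third equations follows the paper's argument. You use the same derivative formulas in terms of $d_{-1},d_0,d_1$; your $g^{(\mu)}_{XT}$ agrees with the paper's ${\rm Pf}(a^{(\mu)},d_0,\cdots)-{\rm Pf}(d_{-1},d_0,d_1,a^{(\mu)},\cdots)$ after reordering labels. You then apply the same four-label Pfaffian identity. The sign you left open is $\partial_T{\rm Pf}(j,k)=(p_k^{-1}-p_j^{-1})\varphi_j\varphi_k={\rm Pf}(d_{-1},d_0,j,k)$, with no extra minus.

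The genuine gap is the first equation, which you only outline and yourself call the main obstacle. Write $A_j={\rm Pf}(d_0,1,\ldots,\hat{j},\ldots,M,M^{\prime},\ldots,1^{\prime})$ and $B_j={\rm Pf}(d_0,1,\ldots,M,M^{\prime},\ldots,\hat{j^{\prime}},\ldots,1^{\prime})$. Lemma~\ref{lem2.1} with $P_j=-4p_j^{-2}$ gives $\sum c_{\mu\nu}g^{(\mu)}h^{(\nu)}=-4\sum_jp_j^{-2}A_jB_j$. This choice matches the elements exactly; your $-p_j^{-2}/4$ works only if you also rescale $c_{\mu\nu}$ by $1/16$. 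So what must be proved is $ff_{TT}-f_T^2=-\sum_jp_j^{-2}A_jB_j$. This is not a Jacobi-type identity valid for arbitrary Pfaffian entries. Already for $M=2$ it holds only because ${\rm Pf}(1,2)=\frac{p_1-p_2}{p_1+p_2}\varphi_1\varphi_2$. Your ``operator $\sum_jp_j^{-2}\partial/\partial(\mbox{row }j)$'' is not a defined tool, so nothing in the proposal establishes the identity.

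The missing computation, which is how the paper proceeds, has four steps.
\begin{itemize}
\item Since ${\rm Pf}(d_0,d_0,\cdots)=0$, write $-f_T^2={\rm Pf}(d_{-2},d_0,\cdots){\rm Pf}(d_0,d_0,\cdots)-{\rm Pf}(d_{-1},d_0,\cdots)^2$. Expand every factor along its $d$-label to get $\sum_{j,k}(-1)^{j+k}(p_j^{-2}-p_j^{-1}p_k^{-1})\varphi_j\varphi_kA_jA_k$.
\item Use the explicit entry to rewrite $(p_j^{-2}-p_j^{-1}p_k^{-1})\varphi_j\varphi_k=-(p_j^{-2}+p_j^{-1}p_k^{-1}){\rm Pf}(j,k)$. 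The $p_j^{-1}p_k^{-1}$ part drops out, because its coefficient $(-1)^{j+k}p_j^{-1}p_k^{-1}A_jA_k$ is symmetric in $(j,k)$ while ${\rm Pf}(j,k)$ is antisymmetric.
\item Expand the vanishing Pfaffian ${\rm Pf}(j,d_0,1,\ldots,M,M^{\prime},\ldots,1^{\prime})$ along its first label. This gives $\sum_k(-1)^k{\rm Pf}(j,k)A_k=\varphi_jf+(-1)^jB_j$, where the $B_j$ term comes from ${\rm Pf}(j,j^{\prime})=1$.
\item Substitute. The $\varphi_jf$ terms reassemble into $-f\,{\rm Pf}(d_{-2},d_0,\cdots)=-ff_{TT}$, and the rest is $-\sum_jp_j^{-2}A_jB_j$, which is exactly the required identity.
\end{itemize}
Testing $N=1$ would fix constants, but it cannot replace this general argument.
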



\begin{proof}
We first verify the first equation of the bilinear system (\ref{bilinearMCSP}).
For this purpose, define the additional Pfaffian elements
\begin{eqnarray}
{\rm Pf}(d_{n},i)=p_{i}^{n}\varphi_{i},
\end{eqnarray}
where $1 \leq i \leq M$ and $n=-2,-1,1$. We set
${\rm Pf}(d_n,d_r)=0$ for $n,r\in\{-2,-1,0,1\}$ and set all other
Pfaffian elements involving the newly introduced labels to zero. Since
\begin{eqnarray}
\frac{\partial}{\partial X}{\rm Pf}(j,k)=(p_{j}-p_{k})\varphi_{j}\varphi_{k}={\rm Pf}(d_{0},d_{1},j,k),\nonumber\\
\frac{\partial}{\partial T}{\rm Pf}(j,k)=(p_{k}^{-1}-p_{j}^{-1})\varphi_{j}\varphi_{k}={\rm Pf}(d_{-1},d_{0},j,k),\nonumber\\
\frac{\partial^{2}}{\partial T^{2}}{\rm Pf}(j,k)=(p_{k}^{-2}-p_{j}^{-2})\varphi_{j}\varphi_{k}={\rm Pf}(d_{-2},d_{0},j,k),\nonumber\\
\frac{\partial^{2}}{\partial X \partial T}{\rm Pf}(j,k)=(p_{j}p_{k}^{-1}-p_{k}p_{j}^{-1})\varphi_{j}\varphi_{k}={\rm Pf}(d_{-1},d_{1},j,k),\nonumber
\end{eqnarray}
we have 
\begin{eqnarray}
\frac{\partial f}{\partial X}={\rm Pf}(d_{0},d_{1},\cdots),\qquad
\frac{\partial f}{\partial T}={\rm Pf}(d_{-1},d_{0},\cdots),\nonumber\\
\frac{\partial^{2} f}{\partial T^{2}}={\rm Pf}(d_{-2},d_{0},\cdots),\qquad
\frac{\partial^{2} f}{\partial X\partial T}={\rm Pf}(d_{-1},d_{1},\cdots).
\end{eqnarray}
Applying Lemma \ref{lem2.1} with $P_{j}=-4p_{j}^{-2}$, we obtain
\begin{eqnarray}
\fl{\sum_{\mu\in I}\sum_{\nu\in J}c_{\mu\nu}{\rm Pf}(a^{(\mu)}, d_{0},1,\cdots, M, M^{\prime}, \cdots, 2^{\prime}, 1^{\prime})}\nonumber\\
\fl{\qquad\times{\rm Pf}(b^{(\nu)}, d_{0},1,\cdots, M, M^{\prime}, \cdots, 2^{\prime}, 1^{\prime})}\nonumber\\
\fl{=-\sum_{j=1}^{M}4p_{j}^{-2}{\rm Pf}(d_{0},1,\cdots, \hat{j},\cdots, M, M^{\prime}, \cdots, 2^{\prime},1^{\prime})}\nonumber\\
\fl{\qquad\times{\rm Pf}(d_{0},1,\cdots, M, M^{\prime}, \cdots, \hat{j^{\prime}}, \cdots, 2^{\prime},1^{\prime}).}\label{eq_thm1}
\end{eqnarray}
Since ${\rm Pf}(d_0,d_0,\cdots)=0$, we have
\begin{eqnarray}
\fl{\frac{\partial^{2}f}{\partial T^{2}}\times 0-\left(\frac{\partial f}{\partial T}\right)^2}\nonumber\\
\fl{\qquad={\rm Pf}(d_{-2},d_{0},\cdots){\rm Pf}(d_{0},d_{0},\cdots)-{\rm Pf}(d_{-1},d_{0},\cdots){\rm Pf}(d_{-1},d_{0},\cdots)}\nonumber\\
\fl{=\sum_{j=1}^{M}(-1)^{j}{\rm Pf}(d_{-2},j){\rm Pf}(d_{0},1,\cdots,\hat{j},\cdots, M, M^{\prime}, \cdots 1^{\prime})}\nonumber\\
\fl{\qquad\times\sum_{k=1}^{M}(-1)^{k}{\rm Pf}(d_{0},k){\rm Pf}(d_{0},1,\cdots,\hat{k},\cdots M, M^{\prime}, \cdots, 1^{\prime})}\nonumber\\
\fl{-\sum_{j=1}^{M}(-1)^{j}{\rm Pf}(d_{-1},j){\rm Pf}(d_{0},1,\cdots,\hat{j},\cdots, M, M^{\prime}, \cdots 1^{\prime})}\nonumber\\
\fl{\qquad\times\sum_{k=1}^{M}(-1)^{k}{\rm Pf}(d_{-1},k){\rm Pf}(d_{0},1,\cdots,\hat{k},\cdots M, M^{\prime}, \cdots, 1^{\prime})}\nonumber\\
\fl{=\sum_{j,k=1}^{M}(-1)^{j+k}(p_{j}^{-2}-p_{j}^{-1}p_{k}^{-1})\varphi_{j}\varphi_{k}}\nonumber\\
\fl{\qquad\times{\rm Pf}(d_{0},1,\cdots,\hat{j},\cdots, M, M^{\prime}, \cdots 1^{\prime}){\rm Pf}(d_{0},1,\cdots,\hat{k},\cdots, M, M^{\prime}, \cdots 1^{\prime})}\nonumber\\
\fl{=\sum_{j,k=1}^{M}(-1)^{j+k+1}(p_{j}^{-2}+p_{j}^{-1}p_{k}^{-1}){\rm Pf}(j,k)}\nonumber\\
\fl{\qquad\times{\rm Pf}(d_{0}, 1,\cdots,\hat{j},\cdots, M, M^{\prime}, \cdots, 1^{\prime}){\rm Pf}(d_{0}, 1,\cdots,\hat{k},\cdots, M, M^{\prime}, \cdots, 1^{\prime}).}\label{eq_A55}
\end{eqnarray}
Since
\begin{eqnarray}
\fl{\quad\sum_{j,k=1}^{M}(-1)^{j+k+1}p_{j}^{-1}p_{k}^{-1}{\rm Pf}(j,k){\rm Pf}(d_{0}, 1,\cdots,\hat{j},\cdots, M, M^{\prime}, \cdots 1^{\prime})}\nonumber\\
\fl{\qquad\times{\rm Pf}(d_{0},1,\cdots,\hat{k},\cdots, M, M^{\prime}, \cdots 1^{\prime})}\nonumber\\
\fl{=\sum_{k,j=1}^{M}(-1)^{k+j+1}p_{k}^{-1}p_{j}^{-1}{\rm Pf}(k,j){\rm Pf}(d_{0},1,\cdots,\hat{k},\cdots, M, M^{\prime}, \cdots 1^{\prime})}\nonumber\\
\fl{\qquad\times{\rm Pf}(d_{0}, 1,\cdots,\hat{j},\cdots, M, M^{\prime}, \cdots 1^{\prime})}\nonumber\\
\fl{=-\sum_{j,k=1}^{M}(-1)^{j+k+1}p_{j}^{-1}p_{k}^{-1}{\rm Pf}(j,k){\rm Pf}(d_{0}, 1,\cdots,\hat{j},\cdots, M, M^{\prime}, \cdots 1^{\prime})}\nonumber\\
\fl{\qquad\times{\rm Pf}(d_{0}, 1,\cdots,\hat{k},\cdots, M, M^{\prime}, \cdots 1^{\prime}),} \nonumber
\end{eqnarray}
we obtain
\begin{eqnarray}
\fl{\sum_{j,k=1}^{M}(-1)^{j+k+1}p_{j}^{-1}p_{k}^{-1}{\rm Pf}(j,k){\rm Pf}(d_{0}, 1,\cdots,\hat{j},\cdots, M, M^{\prime}, \cdots 1^{\prime})}\nonumber\\
\fl{\qquad\times{\rm Pf}(d_{0}, 1,\cdots,\hat{k},\cdots, M, M^{\prime}, \cdots 1^{\prime})=0.}\label{eq_A66}
\end{eqnarray}
Substituting (\ref{eq_A66}) into (\ref{eq_A55}), it follows that
\begin{eqnarray}
\fl{-\frac{\partial f}{\partial T}\frac{\partial f}{\partial T}}\nonumber\\
\fl{\quad =\sum_{j,k=1}^{M}(-1)^{j+k+1}p_{j}^{-2}{\rm Pf}(j,k){\rm Pf}(d_{0}, 1, \cdots,\hat{j},\cdots, M, M^{\prime}, \cdots 1^{\prime})}\nonumber\\
\fl{\qquad\times{\rm Pf}(d_{0}, 1,\cdots,\hat{k},\cdots, M, M^{\prime}, \cdots 1^{\prime})}\nonumber\\
\fl{\quad =\sum_{j=1}^{M}(-1)^{j+1}p_{j}^{-2}{\rm Pf}(d_{0}, 1,\cdots,\hat{j},\cdots, M, M^{\prime}, \cdots 1^{\prime})}\nonumber\\
\fl{\qquad\times\left[\sum_{k=1}^{M}(-1)^{k}{\rm Pf}(j,k){\rm Pf}(d_{0}, 1, \cdots,\hat{k},\cdots, M, M^{\prime}, \cdots 1^{\prime})\right].}\label{eq_A77}
\end{eqnarray}
Now, by expanding the trivially vanishing Pfaffian ${\rm Pf}(j,d_{0},1,\cdots, j, \cdots, M, M^{\prime}, \cdots, 1^{\prime})$, we obtain
\begin{eqnarray}
\fl{\sum_{k=1}^{M}(-1)^{k}{\rm Pf}(j,k){\rm Pf}(d_{0}, 1,\cdots,\hat{k},\cdots M, M^{\prime},\cdots 1^{\prime})}\nonumber\\
\fl{\quad ={\rm Pf}(d_{0},j){\rm Pf}(1, \cdots, M, M^{\prime}, \cdots, 1^{\prime})
+(-1)^{j}{\rm Pf}(d_{0}, 1,\cdots,M,M^{\prime},\cdots,\hat{j^{\prime}},\cdots ,1^{\prime}).}
\label{eq_A88}
\end{eqnarray}
Substituting (\ref{eq_A88}) into (\ref{eq_A77}), we have
\begin{eqnarray}
\fl{-\frac{\partial f}{\partial T}\frac{\partial f}{\partial T}}\nonumber\\
\fl{\quad =\sum_{j=1}^{M}(-1)^{j+1}p_{j}^{-2}{\rm Pf}(d_{0}, 1,\cdots,\hat{j},\cdots, M, M^{\prime}, \cdots 1^{\prime})}\nonumber\\
\fl{\qquad\times\left[{\rm Pf}(d_{0},j){\rm Pf}(1, \cdots, M, M^{\prime}, \cdots, 1^{\prime})
+(-1)^{j}{\rm Pf}(d_{0}, 1,\cdots,M,M^{\prime},\cdots,\hat{j^{\prime}},\cdots ,1^{\prime})\right]}\nonumber\\
\fl{\quad =\sum_{j=1}^{M}(-1)^{j+1}p_{j}^{-2}{\rm Pf}(d_{0},j){\rm Pf}(d_{0}, 1, \cdots,\hat{j},\cdots, M, M^{\prime}, \cdots 1^{\prime})}\nonumber\\
\fl{\qquad\times{\rm Pf}(1, \cdots, M, M^{\prime}, \cdots, 1^{\prime})}\nonumber\\
\fl{\qquad -\sum_{j=1}^{M}p_{j}^{-2}{\rm Pf}(d_{0}, 1,\cdots,\hat{j},\cdots, M, M^{\prime}, \cdots 1^{\prime}){\rm Pf}(d_{0}, 1,\cdots,M,M^{\prime},\cdots,\hat{j^{\prime}},\cdots ,1^{\prime})}\nonumber\\
\fl{\quad =-\sum_{j=1}^{M}(-1)^{j}{\rm Pf}(d_{-2},j){\rm Pf}(d_{0}, 1,\cdots,\hat{j},\cdots, M, M^{\prime}, \cdots 1^{\prime})}\nonumber\\
\fl{\qquad\times{\rm Pf}(1, \cdots, M, M^{\prime}, \cdots, 1^{\prime})}\nonumber\\
\fl{\qquad -\sum_{j=1}^{M}p_{j}^{-2}{\rm Pf}(d_{0}, 1,\cdots,\hat{j},\cdots, M, M^{\prime}, \cdots 1^{\prime})
{\rm Pf}(d_{0}, 1,\cdots,M,M^{\prime},\cdots,\hat{j^{\prime}},\cdots ,1^{\prime})}\nonumber\\
\fl{\quad =-{\rm Pf}(d_{-2},d_{0},\cdots){\rm Pf}(1, \cdots, M, M^{\prime}, \cdots, 1^{\prime})}\nonumber\\
\fl{\qquad -\sum_{j=1}^{M}p_{j}^{-2}{\rm Pf}(d_{0},1, \cdots,\hat{j},\cdots, M, M^{\prime}, \cdots 1^{\prime})
{\rm Pf}(d_{0}, 1,\cdots,M,M^{\prime},\cdots,\hat{j^{\prime}},\cdots ,1^{\prime})}\nonumber\\
\fl{\quad =-\frac{\partial^{2} f}{\partial T^{2}}f+\frac{1}{4}\displaystyle\sum_{\mu\in I}\sum_{\nu\in J}c_{\mu\nu}{\rm Pf}(a^{(\mu)},d_{0},\cdots){\rm Pf}(b^{(\nu)},d_{0},\cdots)}\nonumber\\
\fl{\quad =-\frac{\partial^{2} f}{\partial T^{2}}f+\frac{1}{4}\displaystyle\sum_{\mu\in I}\sum_{\nu\in J}c_{\mu\nu}g^{(\mu)}h^{(\nu)},}\nonumber
\end{eqnarray}
where (\ref{eq_thm1}) was used in the last step. Hence,
\begin{eqnarray}
\frac{\partial^{2}f}{\partial T^{2}}f-\frac{\partial f}{\partial T}\frac{\partial f}{\partial T}=\frac{1}{4}\displaystyle\sum_{\mu\in I}\sum_{\nu\in J}c_{\mu\nu}g^{(\mu)}h^{(\nu)},\nonumber
\end{eqnarray}
which is the first equation of (\ref{bilinearMCSP}).

We next verify the second equation of (\ref{bilinearMCSP}).
\begin{eqnarray}
\fl{\displaystyle\frac{\partial g^{(\mu)}}{\partial X}=\partial_{X}{\rm Pf}(a^{(\mu)},d_{0},\cdots)}\nonumber\\
\fl{\quad =-\partial_{X}\left(\sum_{j=1}^{M}(-1)^{j}{\rm Pf}(d_{0},j){\rm Pf}(a^{(\mu)},1,\cdots,\hat{j},\cdots, M, M^{\prime}, \cdots, 1^{\prime})\right)}\nonumber\\
\fl{\quad =-\sum_{j=1}^{M}(-1)^{j}\partial_{X}{\rm Pf}(d_{0},j){\rm Pf}(a^{(\mu)}, 1,\cdots,\hat{j},\cdots, M, M^{\prime}, \cdots, 1^{\prime})}\nonumber\\
-\sum_{j=1}^{M}(-1)^{j}{\rm Pf}(d_{0},j)\partial_{X}{\rm Pf}(a^{(\mu)},1,\cdots,\hat{j},\cdots, M, M^{\prime}, \cdots, 1^{\prime})\nonumber\\
\fl{\quad =-\sum_{j=1}^{M}(-1)^{j}{\rm Pf}(d_{1},j){\rm Pf}(a^{(\mu)}, 1,\cdots,\hat{j},\cdots, M, M^{\prime}, \cdots, 1^{\prime})}\nonumber\\
\fl{\qquad-\sum_{j=1}^{M}(-1)^{j}{\rm Pf}(d_{0},j){\rm Pf}(d_{0}, d_{1},a^{(\mu)},1,\cdots,\hat{j},\cdots, M, M^{\prime}, \cdots, 1^{\prime})}\nonumber\\
\fl{\quad ={\rm Pf}(a^{(\mu)},d_{1},\cdots).}
\end{eqnarray}
Similarly, we obtain
\begin{eqnarray}
\fl{\displaystyle\frac{\partial g^{(\mu)}}{\partial T}=\partial_{T}{\rm Pf}(a^{(\mu)},d_{0},\cdots)}\nonumber\\
\fl{\quad =-\partial_{T}\left(\sum_{j=1}^{M}(-1)^{j}{\rm Pf}(d_{0},j){\rm Pf}(a^{(\mu)},1,\cdots,\hat{j},\cdots, M, M^{\prime}, \cdots, 1^{\prime})\right)}\nonumber\\
\fl{\quad =-\sum_{j=1}^{M}(-1)^{j}\partial_{T}{\rm Pf}(d_{0},j){\rm Pf}(a^{(\mu)},1,\cdots,\hat{j},\cdots, M, M^{\prime}, \cdots, 1^{\prime})}\nonumber\\
-\sum_{j=1}^{M}(-1)^{j}{\rm Pf}(d_{0},j)\partial_{T}{\rm Pf}(a^{(\mu)},1,\cdots,\hat{j},\cdots, M, M^{\prime}, \cdots, 1^{\prime})\nonumber\\
\fl{\quad =-\sum_{j=1}^{M}(-1)^{j}{\rm Pf}(d_{-1},j){\rm Pf}(a^{(\mu)},1,\cdots,\hat{j},\cdots, M, M^{\prime}, \cdots, 1^{\prime})}\nonumber\\
\fl{\qquad-\sum_{j=1}^{M}(-1)^{j}{\rm Pf}(d_{0},j){\rm Pf}(d_{-1},d_{0},a^{(\mu)},1,\cdots,\hat{j},\cdots, M, M^{\prime}, \cdots, 1^{\prime})}\nonumber\\
\fl{\quad ={\rm Pf}(a^{(\mu)},d_{-1},\cdots).}
\end{eqnarray}
Furthermore, 
\begin{eqnarray}
\fl{\displaystyle\frac{\partial^{2} g^{(\mu)}}{\partial X\partial T}=-\partial_{X}{\rm Pf}(d_{-1},a^{(\mu)},\cdots, M, M^{\prime}, \cdots, 1^{\prime})}\nonumber\\
\fl{\quad =-\partial_{X}\sum_{j=1}^{M}(-1)^{j}{\rm Pf}(d_{-1},j){\rm Pf}(a^{(\mu)},1,\cdots, \hat{j}, \cdots, M, M^{\prime}, \cdots, 1^{\prime})}\nonumber\\
\fl{\quad =-\sum_{j=1}^{M}(-1)^{j}{\rm Pf}(d_{0},j){\rm Pf}(a^{(\mu)},1,\cdots, \hat{j}, \cdots, M, M^{\prime}, \cdots, 1^{\prime})}\nonumber\\
-\sum_{j=1}^{M}(-1)^{j}{\rm Pf}(d_{-1},j){\rm Pf}(d_{0},d_{1},a^{(\mu)},1,\cdots, \hat{j}, \cdots, M, M^{\prime}, \cdots, 1^{\prime})\nonumber\\
\fl{\quad ={\rm Pf}(a^{(\mu)},d_{0},\cdots)-{\rm Pf}(d_{-1},d_{0},d_{1},a^{(\mu)}, \cdots).}
\end{eqnarray}
A Pfaffian identity\cite{direct}
\begin{eqnarray}
\fl{{\rm Pf}(d_{-1},d_{0},d_{1},a^{(\mu)},\cdots){\rm Pf}(\cdots)={\rm Pf}(d_{-1},d_{0},\cdots){\rm Pf}(d_{1},a^{(\mu)},\cdots)}\nonumber\\
-{\rm Pf}(d_{-1},d_{1},\cdots){\rm Pf}(d_{0},a^{(\mu)},\cdots)+{\rm Pf}(d_{-1},a^{(\mu)},\cdots){\rm Pf}(d_{0},d_{1},\cdots)
\end{eqnarray}
leads to
\begin{eqnarray}
\fl{\partial_{X}\partial_{T}f\times g^{(\mu)}-\partial_{X}f \times \partial_{T}g^{(\mu)}-\partial_{T}f \times \partial_{X}g^{(\mu)}+f\times\partial_{X}\partial_{T}g^{(\mu)}=fg^{(\mu)}},
\end{eqnarray}
which is the second equation of (\ref{bilinearMCSP}). The third equation follows
by the same argument, with $g^{(\mu)}$ replaced by $h^{(\nu)}$.
\end{proof}

\begin{eg}
\label{eg2.3}
We now present explicit one- and two-soliton solutions of the MCSP equation (\ref{MCSP}).\\

\noindent
One-soliton: Setting $M=2$ and $N=1$ in (\ref{con_solb})--(\ref{con_sole})
gives the following $\tau$-functions:
\begin{eqnarray}
\fl{f={\rm Pf}(1,2,2^{\prime},1^{\prime})=1+{\rm Pf}(1,2){\rm Pf}(2^{\prime},1^{\prime})}\nonumber\\
\fl{\quad =1+\displaystyle\frac{1}{4}\left(\frac{p_{1}p_{2}}{p_{1}+p_{2}}\right)^{2}\varphi_{1}\varphi_{2}\sum_{\mu\in I}\sum_{\nu\in J}c_{\mu\nu}a_{1}^{(\mu)}b_{2}^{(\nu)},}\label{con_ex1}
\end{eqnarray}
\begin{eqnarray}
\fl{g^{(\mu)}={\rm Pf}(a^{(\mu)},d_{0},1,2,2^{\prime},1^{\prime})={\rm Pf}(a^{(\mu)},1^{\prime}){\rm Pf}(d_{0},1)=a_{1}^{(\mu)}\varphi_{1},\qquad \mu\in I},
\end{eqnarray}
\begin{eqnarray}
\fl{h^{(\nu)}={\rm Pf}(b^{(\nu)},d_{0},1,2,2^{\prime},1^{\prime})={\rm Pf}(b^{(\nu)},2^{\prime}){\rm Pf}(d_{0},2)=b_{2}^{(\nu)}\varphi_{2},\qquad \nu\in J},\label{con_ex2}
\end{eqnarray}
where $\varphi_{i}=B_{i}e^{p_{i}X+p_{i}^{-1}T}$ and $i=1,2$.\\

\noindent
Two-soliton: Setting $M=4$ and $N=2$ in (\ref{con_solb})--(\ref{con_sole})
gives the following $\tau$-functions. 
To express these $\tau$-functions in a compact form, we introduce the notation:
\begin{eqnarray}
\fl C_{ij}&:=&\sum_{\mu\in I}\sum_{\nu\in J}c_{\mu\nu}a_i^{(\mu)}b_j^{(\nu)},\qquad
R_{ij}:=\frac{p_i-p_j}{p_i+p_j},\nonumber\\
\fl \tilde{C}_{ij}&:=&\frac{p_i^2p_j^2}{4(p_i+p_j)^2}C_{ij}, \qquad \Delta:=\frac{p_1^2p_2^2p_3^2p_4^2}{16}
\left|
\begin{array}{rr}
\displaystyle\frac{C_{14}}{p_{1}^{2}-p_{4}^{2}} &
\displaystyle\frac{C_{24}}{p_{2}^{2}-p_{4}^{2}}\\[15pt]
\displaystyle\frac{C_{13}}{p_{1}^{2}-p_{3}^{2}} &
\displaystyle\frac{C_{23}}{p_{2}^{2}-p_{3}^{2}}
\end{array}
\right|.
\label{con_compact_coefficients}
\end{eqnarray}
\begin{eqnarray}
\fl f&=&{\rm Pf}(1,2,3,4,4^{\prime},3^{\prime},2^{\prime},1^{\prime})\nonumber\\
\fl &=&1+\tilde{C}_{14}\varphi_1\varphi_4+\tilde{C}_{13}\varphi_1\varphi_3
+\tilde{C}_{24}\varphi_2\varphi_4+\tilde{C}_{23}\varphi_2\varphi_3\nonumber\\
\fl &&+\Delta\left(\prod_{1\leq i<j\leq4}R_{ij}\right)
\varphi_1\varphi_2\varphi_3\varphi_4,\label{con_ex3}
\end{eqnarray}
\begin{eqnarray}
\fl g^{(\mu)}&=&{\rm Pf}(a^{(\mu)},d_{0},1,2,3,4,4^{\prime},3^{\prime},2^{\prime},1^{\prime})\nonumber\\
\fl &=&a_{1}^{(\mu)}\left(\varphi_1+R_{12}R_{14}\tilde{C}_{24}\varphi_1\varphi_2\varphi_4
+R_{12}R_{13}\tilde{C}_{23}\varphi_1\varphi_2\varphi_3\right)\nonumber\\
\fl &&+a_{2}^{(\mu)}\left(\varphi_2-R_{12}R_{24}\tilde{C}_{14}\varphi_1\varphi_2\varphi_4
-R_{12}R_{23}\tilde{C}_{13}\varphi_1\varphi_2\varphi_3\right),\qquad \mu\in I,
\end{eqnarray}
\begin{eqnarray}
\fl h^{(\nu)}&=&{\rm Pf}(b^{(\nu)},d_{0},1,2,3,4,4^{\prime},3^{\prime},2^{\prime},1^{\prime})\nonumber\\
\fl &=&b_{3}^{(\nu)}\left(\varphi_3-R_{23}R_{34}\tilde{C}_{24}\varphi_2\varphi_3\varphi_4
-R_{13}R_{34}\tilde{C}_{14}\varphi_1\varphi_3\varphi_4\right)\nonumber\\
\fl &&+b_{4}^{(\nu)}\left(\varphi_4+R_{24}R_{34}\tilde{C}_{23}\varphi_2\varphi_3\varphi_4
+R_{14}R_{34}\tilde{C}_{13}\varphi_1\varphi_3\varphi_4\right),\qquad \nu\in J.\label{con_ex4}
\end{eqnarray}
where $\varphi_{i}=B_{i}e^{p_{i}X+p_{i}^{-1}T}$ and $i=1,2,3,4$.\\
\end{eg}

\begin{rem}
\label{rem2.4}
The $N$-soliton solutions of the $K$-component CSP equation follow from those
of the $2K$-component SP equation under the reduction
\begin{eqnarray}
\fl{I=J=\{1,2,\cdots,K\},\quad M=2N,\quad c_{\mu\nu}=\delta_{\mu\nu},}\nonumber\\
\fl{p_{i}=p_{2N+1-i}^{*},\quad B_{i}=B_{2N+1-i}^{*},\quad a_{i}^{(\mu)}=(b_{2N+1-i}^{(\mu)})^*, \qquad1\leq i\leq N,\quad 1 \leq \mu \leq K.}
\end{eqnarray}
Examples of the corresponding CCSP solutions are presented in Section \ref{sec_nc}.
\end{rem}
\end{section}

\begin{section}{Semi-discretization of the MCSP equation}
\label{sec_semi}
We take the following system as a semi-discrete analogue of the bilinear
system (\ref{bilinearMCSP}):
\begin{eqnarray}
\left\{
\begin{array}{ll}
 D_{T}^{2}f_{l}\cdot f_{l}=\displaystyle\frac{1}{2}\displaystyle\sum_{\mu\in I}\sum_{\nu\in J}c_{\mu\nu}g_{l}^{(\mu)}h_{l}^{(\nu)},\\
\displaystyle\frac{1}{a}D_{T}(f_{l+1}\cdot g_{l}^{(\mu)}-f_{l}\cdot g_{l+1}^{(\mu)})=f_{l+1}g_{l}^{(\mu)}+f_{l}g_{l+1}^{(\mu)},\\
\\
\displaystyle\frac{1}{a}D_{T}(f_{l+1}\cdot h_{l}^{(\nu)}-f_{l}\cdot h_{l+1}^{(\nu)})=f_{l+1}h_{l}^{(\nu)}+f_{l}h_{l+1}^{(\nu)},
\end{array}
\right.
\label{disbilinear}
\end{eqnarray}
where $l\in\mathbb{Z}$ labels the uniform lattice
$X_l=2al$ in the hodograph coordinate $X$, $a>0$, and $2a$ is the
corresponding lattice spacing. The physical mesh interval in the $x$ coordinate
will be defined below and is generally nonuniform.

To construct Pfaffian solutions of the bilinear system (\ref{disbilinear}), let $M=2N$ and introduce the following Pfaffian elements:
\begin{eqnarray}
\fl{{\rm Pf}(j,k)_{l}=\frac{p_{j}-p_{k}}{p_{j}+p_{k}}\varphi_{j}(l)\varphi_{k}(l),\quad(1\leq j<k\leq M),}
\label{sdpfMCSP0}
\end{eqnarray}
\begin{eqnarray}
\fl{{\rm Pf}(j,k^{\prime})_{l}=\delta_{jk},\quad (1\leq j\leq M,\quad 1\leq k\leq M),\label{sdpfMCSP1}}
\end{eqnarray}
\begin{eqnarray}
\fl{{\rm Pf}(k^{\prime},j^{\prime})_{l}=
\left\{
\begin{array}{ll}
\displaystyle\frac{p_{j}^{2}p_{k}^{2}}{4(p_{j}^{2}-p_{k}^{2})}\sum_{\mu\in I}\sum_{\nu\in J} c_{\mu\nu}a_{j}^{(\mu)}b_{k}^{(\nu)},& (1\leq j\leq N<k\leq M),\\
0,&{\rm otherwise},
\end{array}
\right.}
\end{eqnarray}
\begin{eqnarray}
\fl{{\rm Pf}(a^{(\mu)}, j^{\prime})_{l}=
\left\{
\begin{array}{ll}
a_{j}^{(\mu)},&(1\leq j\leq N),\\
0,&(N+1\leq j\leq M),
\end{array}
\right.}
\qquad\mu\in I,
\label{sdpfMCSP3}
\end{eqnarray}
\begin{eqnarray}
\fl{{\rm Pf}(b^{(\nu)}, k^{\prime})_{l}=
\left\{
\begin{array}{ll}
0,&(1\leq k\leq N),\\
b_{k}^{(\nu)},&(N+1\leq k\leq M),
\end{array}
\right.}
\qquad\nu\in J,
\label{sdpfMCSP2}
\end{eqnarray}
where
$\varphi_{i}(l)=B_{i}\left(\displaystyle\frac{1+ap_{i}}{1-ap_{i}}\right)^{l}e^{p_{i}^{-1}T}$,
and $B_i$ and $p_i$ are the phase and spectral parameters associated with the
$i$-th exponential factor, respectively. The element involving $d_0$ is
defined by
\begin{eqnarray}
{\rm Pf}(d_0,i)_l=\varphi_i(l),\qquad 1\leq i\leq M.
\label{sdpfMCSP4}
\end{eqnarray}
All other Pfaffian elements formed from the labels appearing in the solution
formulas above are set to zero. The functions $\varphi_j$ satisfy the linear dispersion relation
\begin{eqnarray} 
\displaystyle\frac{\varphi_{j}(l+1)-\varphi_{j}(l)}{a}=p_{j}(\varphi_{j}(l+1)+\varphi_{j}(l)).\label{lde}
\end{eqnarray}
The constants $a_{j}^{(\mu)}$ and $b_{k}^{(\nu)}$ are arbitrary, and the parameters are
chosen so that all expressions above are well-defined.

\begin{thm}
\label{thm3.1}
For the Pfaffian elements defined in (\ref{sdpfMCSP0})--(\ref{sdpfMCSP4}), the functions 
\begin{eqnarray}
f_{l}={\rm Pf}(1,2,\cdots, M-1, M, M^{\prime}, M-1^{\prime}, \cdots, 2^{\prime}, 1^{\prime})_{l},\label{semidis_solb}\\
g_{l}^{(\mu)}={\rm Pf}(a^{(\mu)}, d_{0},1,2,\cdots, M-1, M, M^{\prime}, M-1^{\prime}, \cdots, 2^{\prime}, 1^{\prime})_{l},\\
h_{l}^{(\nu)}={\rm Pf}(b^{(\nu)}, d_{0},1,2,\cdots, M-1, M, M^{\prime}, M-1^{\prime}, \cdots, 2^{\prime}, 1^{\prime})_{l},
\label{semidis_sole}
\end{eqnarray}
with $\mu\in I$ and $\nu\in J$,  satisfy the semi-discrete bilinear system (\ref{disbilinear}).
\end{thm}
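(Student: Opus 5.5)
The proof will follow the continuous case (Theorem \ref{thm2.2}). The time dependence of the Pfaffian elements is unchanged: $\varphi_i(l)\propto e^{p_i^{-1}T}$. Only the $X$-derivative is replaced by the shift $l\to l+1$.

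\textbf{First equation.} For the first equation of (\ref{disbilinear}), $l$ enters only as a parameter, so the argument is verbatim that of Theorem \ref{thm2.2}. I introduce ${\rm Pf}(d_n,i)_l=p_i^n\varphi_i(l)$ for $n=-2,-1$. Then $\partial_T f_l={\rm Pf}(d_{-1},d_0,\cdots)_l$ and $\partial_T^2 f_l={\rm Pf}(d_{-2},d_0,\cdots)_l$. Lemma \ref{lem2.1} with $P_j=-4p_j^{-2}$ gives the right-hand side, and the same expansion of the vanishing Pfaffian yields $f_l\partial_T^2f_l-(\partial_Tf_l)^2=\frac14\sum c_{\mu\nu}g_l^{(\mu)}h_l^{(\nu)}$.

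\textbf{Shift labels.} For the second and third equations I express the shifted $\tau$-functions as Pfaffians at level $l$ with extra labels. By (\ref{lde}) we have $\varphi_j(l+1)=\frac{1+ap_j}{1-ap_j}\varphi_j(l)$. Then
\[
{\rm Pf}(j,k)_{l+1}-{\rm Pf}(j,k)_l=\frac{p_j-p_k}{p_j+p_k}\varphi_j\varphi_k\left(\frac{(1+ap_j)(1+ap_k)}{(1-ap_j)(1-ap_k)}-1\right)
=\frac{2a(p_j-p_k)}{(1-ap_j)(1-ap_k)}\varphi_j(l)\varphi_k(l),
\]
which factors as a rank-two term. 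I would introduce labels $e$ and $\bar d$, with ${\rm Pf}(e,i)_l=\varphi_i(l)/(1-ap_i)$ and ${\rm Pf}(\bar d,i)_l$ suitably chosen, together with constant pairings to $d_0$. This should give
\[
f_{l+1}={\rm Pf}(\bar d,e,\cdots)_l \quad\text{or}\quad f_{l+1}=f_l+c\,{\rm Pf}(d_0,e,\cdots)_l,
\]
and similarly $g^{(\mu)}_{l+1}$ as a Pfaffian with $a^{(\mu)}$ and shifted labels. The natural choice is the standard one for discrete KP/BKP-type Pfaffians, namely a label $e_l$ with ${\rm Pf}(e,d_0)=1$ and the pairing adjusted so that a shift corresponds to adding the pair $(d_0,e)$. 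I also need the $T$-derivatives of these shifted objects. These come from ${\rm Pf}(d_{-1},\cdot)$ insertions, in exact analogy with the $\partial_X g$, $\partial_T g$, $\partial_X\partial_T g$ computations in the proof of Theorem \ref{thm2.2}.

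\textbf{Reduction to a Pfaffian identity.} Once $f_{l+1}$, $g_{l+1}^{(\mu)}$ and their $T$-derivatives are written as Pfaffians over the common index set $\{d_{-1},d_0,e,a^{(\mu)},\cdots\}$, the second equation of (\ref{disbilinear}) becomes a linear combination of products. I expect these to match the four-label identity
\begin{eqnarray*}
\fl{{\rm Pf}(d_{-1},d_0,e,a^{(\mu)},\cdots){\rm Pf}(\cdots)={\rm Pf}(d_{-1},d_0,\cdots){\rm Pf}(e,a^{(\mu)},\cdots)}\\
-{\rm Pf}(d_{-1},e,\cdots){\rm Pf}(d_0,a^{(\mu)},\cdots)+{\rm Pf}(d_{-1},a^{(\mu)},\cdots){\rm Pf}(d_0,e,\cdots),
\end{eqnarray*}
possibly applied twice, once at level $l$ and once with a shifted frame. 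The factor $1/a$ and the relation (\ref{lde}) should absorb the $p_i$-dependent weights. The third equation follows identically with $b^{(\nu)}$ in place of $a^{(\mu)}$, since the $b^{(\nu)}$ elements do not depend on $l$.

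\textbf{Main obstacle.} The hardest step is choosing the auxiliary label $e$ (its pairings with $i$, $d_0$, $d_{-1}$, $a^{(\mu)}$ and $j'$) so that two things hold simultaneously. First, the shift acts as a single label insertion for both $f$ and $g^{(\mu)}$. Second, the $j'$-block and the $d_0$ column remain $l$-independent. I would first try ${\rm Pf}(e,i)_l=\varphi_i(l)/(1-ap_i)$, ${\rm Pf}(d_0,e)=-1/(2a)$ and ${\rm Pf}(e,j')=0$, and check these choices on the one-soliton case $M=2$ before carrying out the general bookkeeping.
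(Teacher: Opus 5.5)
Your strategy matches the paper's: one auxiliary shift label so that level-$(l+1)$ quantities become Pfaffians at level $l$, the label $d_{-1}$ for $T$-derivatives, and a four-label Pfaffian identity. Your treatment of the first equation is exactly what the paper does. The gap is that the proposal stops at the step that carries the whole proof, and the one concrete choice you give fails.

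Take ${\rm Pf}(e,i)_l=\varphi_i(l)/(1-ap_i)$ and $c={\rm Pf}(d_0,e)_l$. Then ${\rm Pf}(d_0,e,j,k)_l=c\,{\rm Pf}(j,k)_l+\frac{a(p_j-p_k)}{(1-ap_j)(1-ap_k)}\varphi_j(l)\varphi_k(l)$. This is proportional to ${\rm Pf}(j,k)_{l+1}={\rm Pf}(j,k)_l+\frac{2a(p_j-p_k)}{(1-ap_j)(1-ap_k)}\varphi_j(l)\varphi_k(l)$ only for $c=\frac12$, not $-\frac{1}{2a}$. Even with $c=\frac12$, $g^{(\mu)}_{l+1}$ is not a single insertion of $e$; instead $g^{(\mu)}_{l+1}+g^{(\mu)}_l=-2{\rm Pf}(e,a^{(\mu)},\cdots)_l$. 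The clean label is the paper's $s_l^+=2e-d_0$, i.e. ${\rm Pf}(s_l^+,i)_l=\varphi_i(l+1)$ and ${\rm Pf}(d_0,s_l^+)_l=1$. Then ${\rm Pf}(j,k)_{l+1}={\rm Pf}(d_0,s_l^+,j,k)_l$, so $f_{l+1}={\rm Pf}(d_0,s_l^+,\cdots)_l$. Since ${\rm Pf}(s_l^+,i)_l={\rm Pf}(d_0,i)_{l+1}$, expanding the vanishing ${\rm Pf}(s_l^+,a^{(\mu)},d_0,s_l^+,\cdots)_l$ gives $g^{(\mu)}_{l+1}=-{\rm Pf}(s_l^+,a^{(\mu)},\cdots)_l$.

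The missing idea concerns the $T$-derivatives of the shifted functions. You leave ${\rm Pf}(e,d_{-1})$ unspecified and hope that (\ref{lde}) ``absorbs'' the factor $a$. First rewrite the second equation of (\ref{disbilinear}) as
\[
(\partial_T-a)g^{(\mu)}_{l+1}\,f_l+(\partial_T-a)f_{l+1}\,g^{(\mu)}_l=g^{(\mu)}_{l+1}\,\partial_Tf_l+f_{l+1}\,\partial_Tg^{(\mu)}_l .
\]
The objects that fit the four-label identity over $\{s_l^+,d_{-1},d_0,a^{(\mu)}\}$ are $(\partial_T-a)f_{l+1}$ and $(\partial_T-a)g^{(\mu)}_{l+1}$. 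For this the shift label needs the nonzero pairing ${\rm Pf}(s_l^+,d_{-1})_l=a$; in your normalization this is ${\rm Pf}(e,d_{-1})_l=a/2$. Indeed, using (\ref{lde}) in the form $(p_i^{-1}-a)\varphi_i(l+1)=(p_i^{-1}+a)\varphi_i(l)$, one finds
\[
(\partial_T-a){\rm Pf}(j,k)_{l+1}=-a\,{\rm Pf}(j,k)_l+p_k^{-1}\varphi_j(l+1)\varphi_k(l)-p_j^{-1}\varphi_j(l)\varphi_k(l+1)=-{\rm Pf}(s_l^+,d_{-1},j,k)_l .
\]
Hence $(\partial_T-a)f_{l+1}={\rm Pf}(d_{-1},s_l^+,\cdots)_l$. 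Expanding along $s_l^+$ and using $(\partial_T-a)\varphi_i(l+1)=(p_i^{-1}+a)\varphi_i(l)$ gives $(\partial_T-a)g^{(\mu)}_{l+1}=-{\rm Pf}(s_l^+,d_{-1},d_0,a^{(\mu)},\cdots)_l$. Add $\partial_Tf_l={\rm Pf}(d_{-1},d_0,\cdots)_l$ and $\partial_Tg^{(\mu)}_l={\rm Pf}(a^{(\mu)},d_{-1},\cdots)_l$. The displayed equation is then exactly one application of the four-label identity at level $l$; no second application in a shifted frame is needed. Without fixing this pairing, your plan does not close.
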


\begin{proof}
The first equation of (\ref{disbilinear}) follows as in the continuous case.
We therefore verify the second equation. For this purpose, introduce the
auxiliary Pfaffian label $s_{l}^{+}$, associated with the forward shift
$l\mapsto l+1$, and define the following additional Pfaffian elements:
\begin{eqnarray}
{\rm Pf}(d_{-1},i)_{l}=p_{i}^{-1}\varphi_{i}(l),\qquad{\rm Pf}(d_{0},s_{l}^{+})_{l}=1,\nonumber\\
{\rm Pf}(s_{l}^{+},i)_{l}=\varphi_{i}(l+1),\qquad{\rm Pf}(s_{l}^{+},d_{-1})_{l}=a,
\end{eqnarray}
where $1\leq i \leq M$. We set ${\rm Pf}(d_{-1},d_0)_l=0$ and set all
other Pfaffian elements involving the newly introduced labels $d_{-1}$ and
$s^l$ to zero.
As in the continuous case, we obtain 
\begin{eqnarray}
\fl{\partial_{T}f_{l}={\rm Pf}(d_{-1},d_{0},\cdots)_{l},\qquad\partial_{T}g_{l}^{(\mu)}={\rm Pf}(a^{(\mu)}, d_{-1},\cdots)_{l}.}\nonumber
\end{eqnarray}
Since
\begin{eqnarray}
\fl{{\rm Pf}(j,k)_{l+1}={\rm Pf}(j,k)_{l}+\varphi_{j}(l+1)\varphi_{k}(l)-\varphi_{j}(l)\varphi_{k}(l+1)}\nonumber\\
\fl{\quad ={\rm Pf}(d_{0},s_{l}^{+})_{l}{\rm Pf}(j,k)_{l}-{\rm Pf}(d_{0},j)_{l}{\rm Pf}(s_{l}^{+},k)_{l}+{\rm Pf}(d_{0},k)_{l}{\rm Pf}(s_{l}^{+},j)_{l}}\nonumber\\
\fl{\quad ={\rm Pf}(d_{0},s_{l}^{+},j,k)_{l},}\nonumber\\
\fl{(\partial_{T}-a){\rm Pf}(j,k)_{l+1}=\partial_{T}{\rm Pf}(j,k)_{l+1}-a{\rm Pf}(j,k)_{l+1}}\nonumber\\
\fl{\quad =p_{k}^{-1}\varphi_{j}(l+1)\varphi_{k}(l+1)-p_{j}^{-1}\varphi_{j}(l+1)\varphi_{k}(l+1)}\nonumber\\
\fl{\qquad -a\left({\rm Pf}(j,k)_{l}+\varphi_{j}(l+1)\varphi_{k}(l)-\varphi_{j}(l)\varphi_{k}(l+1)\right)}\nonumber\\
\fl{\quad =-a{\rm Pf}(j,k)_{l}+\varphi_{j}(l+1)\left(p_{k}^{-1}\varphi_{k}(l+1)-a\varphi_{k}(l)\right)
-\varphi_{k}(l+1)\left(p_{j}^{-1}\varphi_{j}(l+1)-a\varphi_{j}(l)\right)}\nonumber\\
\fl{\quad =-a{\rm Pf}(j,k)_{l}+\varphi_{j}(l+1)\left(a\varphi_{k}(l+1)+p_{k}^{-1}\varphi_{k}(l)\right)
-\varphi_{k}(l+1)\left(a\varphi_{j}(l+1)+p_{j}^{-1}\varphi_{j}(l)\right)}\nonumber\\
\fl{\quad =-a{\rm Pf}(j,k)_{l}+p_{k}^{-1}\varphi_{j}(l+1)\varphi_{k}(l)-p_{j}^{-1}\varphi_{j}(l)\varphi_{k}(l+1)}\nonumber\\
\fl{\quad =-{\rm Pf}(s_{l}^{+},d_{-1})_{l}{\rm Pf}(j,k)_{l}+{\rm Pf}(s_{l}^{+},j)_{l}{\rm Pf}(d_{-1},k)_{l}
-{\rm Pf}(d_{-1},j)_{l}{\rm Pf}(s_{l}^{+},k)_{l}}\nonumber\\
\fl{\quad =-{\rm Pf}(s_{l}^{+},d_{-1},j,k)_{l},}\nonumber
\end{eqnarray}
we obtain
\begin{eqnarray}
\fl{f_{l+1}={\rm Pf}(d_{0},s_{l}^{+},\cdots)_{l},\qquad(\partial_{T}-a)f_{l+1}={\rm Pf}(d_{-1},s_{l}^{+},\cdots)_{l}.}\nonumber
\end{eqnarray}
where the linear dispersion relation (\ref{lde}) was used. Moreover,
\begin{eqnarray}
\fl{g_{l+1}^{(\mu)}=-\sum_{j=1}^{M}(-1)^{j}{\rm Pf}(d_{0},j)_{l+1}{\rm Pf}(a^{(\mu)},1,\cdots,\hat{j},\cdots)_{l+1}}\nonumber\\
\fl{\qquad =-\sum_{j=1}^{M}(-1)^{j}{\rm Pf}(s_{l}^{+},j)_{l}{\rm Pf}(d_{0},s_{l}^{+},a^{(\mu)},1,\cdots,\hat{j},\cdots)_{l}}\nonumber\\
\fl{\qquad =-{\rm Pf}(s_{l}^{+},a^{(\mu)},\cdots)_{l},}\nonumber
\end{eqnarray}
by expanding the trivially vanishing Pfaffian ${\rm Pf}(s_{l}^{+},a^{(\mu)},d_{0},s_{l}^{+}, 1, \cdots, M, M^{\prime}, \cdots, 1^{\prime})_{l}$.
A simple expression for $(\partial_{T}-a)g_{l+1}^{(\mu)}$ is obtained as follows:
\begin{eqnarray}
\fl{(\partial_{T}-a)g_{l+1}^{(\mu)}}\nonumber\\
\fl{\quad =-\sum_{j=1}^{M}(-1)^{j}(\partial_{T}-a){\rm Pf}(s_{l}^{+},j)_{l}{\rm Pf}(a^{(\mu)},1,\cdots,\hat{j},\cdots)_{l}}\nonumber\\
\fl{\qquad-\sum_{j=1}^{M}(-1)^{j}{\rm Pf}(s_{l}^{+},j)_{l}\partial_{T}{\rm Pf}(a^{(\mu)}, 1,\cdots,\hat{j},\cdots)_{l}}\nonumber\\
\fl{\quad =-\sum_{j=1}^{M}(-1)^{j}({\rm Pf}(d_{-1},j)_{l}+a{\rm Pf}(d_{0},j)_{l}){\rm Pf}(a^{(\mu)}, 1,\cdots,\hat{j},\cdots)_{l}}\nonumber\\
\fl{\qquad-\sum_{j=1}^{M}(-1)^{j}{\rm Pf}(s_{l}^{+},j)_{l}{\rm Pf}(d_{-1},d_{0},a^{(\mu)}, 1,\cdots,\hat{j},\cdots)_{l}}\nonumber\\
\fl{\quad =-{\rm Pf}(d_{-1},a^{(\mu)},\cdots)_{l}-a {\rm Pf}(d_{0},a^{(\mu)},\cdots)_{l}}\nonumber\\
\fl{\qquad-\sum_{j=1}^{M}(-1)^{j}{\rm Pf}(s_{l}^{+},j)_{l}{\rm Pf}(d_{-1},d_{0},a^{(\mu)},1,\cdots,\hat{j},\cdots)_{l}}\nonumber\\
\fl{\quad =-{\rm Pf}(s_{l}^{+},d_{-1},d_{0},a^{(\mu)},\cdots)_{l}.}\nonumber
\end{eqnarray}

A Pfaffian identity\cite{direct}
\begin{eqnarray}
\fl{-{\rm Pf}(s_{l}^{+},d_{-1},d_{0},a^{(\mu)},\cdots)_{l}{\rm Pf}(\cdots)_{l}+{\rm Pf}(s_{l}^{+},d_{-1},\cdots)_{l}{\rm Pf}(d_{0},a^{(\mu)},\cdots)_{l}}\nonumber\\
\fl{\qquad-{\rm Pf}(s_{l}^{+},d_{0},\cdots)_{l}{\rm Pf}(d_{-1},a^{(\mu)},\cdots)_{l}
+{\rm Pf}(s_{l}^{+},a^{(\mu)},\cdots)_{l}{\rm Pf}(d_{-1},d_{0},\cdots)_{l}=0,}\nonumber
\end{eqnarray}
gives
\begin{eqnarray}
\fl{(\partial_{T}-a)g_{l+1}^{(\mu)}\times f_{l}=g_{l+1}^{(\mu)}\times \partial_{T}f_{l}
-(\partial_{T}-a)f_{l+1}\times g_{l}^{(\mu)}+\partial_{T}g_{l}^{(\mu)}\times f_{l+1},}\nonumber
\end{eqnarray}
which is the second equation of (\ref{disbilinear}). The third equation follows
by the same argument, with $g_l^{(\mu)}$ replaced by $h_l^{(\nu)}$.
\end{proof}

To transform the bilinear system (\ref{disbilinear}) into a semi-discrete system in the physical variables, we introduce the dependent-variable transformation
\begin{eqnarray}
u_{l}^{(\mu)}=\frac{g_{l}^{(\mu)}}{f_{l}},\qquad v_{l}^{(\nu)}=\frac{h_{l}^{(\nu)}}{f_{l}},\qquad
\rho_{l}=1-\displaystyle\frac{1}{a}\left(\log{\frac{f_{l+1}}{f_{l}}}\right)_{T}.
\label{disdependenttransformation}
\end{eqnarray}
From derivative relation (\ref{derivativelaw}), we obtain
\begin{eqnarray}
\frac{\partial x}{\partial X}=\rho.
\label{xXrelation}
\end{eqnarray}
For fixed $T$, integrating (\ref{xXrelation}), with respect to $X$ gives the integral form of the hodograph transformation (\ref{hodograph})
\begin{eqnarray}
x=\int\rho(X,T)dX=x_{0}(T)+\int_{0}^{X}\rho(\bar{X},T)d\bar{X},
\label{int_hodograph}
\end{eqnarray}
where $x_{0}(T)=x(0,T)$ is an integration constant with respect to $X$. Discretizing (\ref{int_hodograph}) yields the discrete hodograph transformation 
\begin{eqnarray}
x_{l}=x_{0}(T)+\sum_{m=0}^{l-1}2a\rho_{m}\qquad(l\geq1),
\label{dishodographMCSP}
\end{eqnarray}
where $x_{l}=x_{l}(T)$ denotes the $l$-th mesh point and $\rho_{l}$ denotes the semi-discrete density at $X_{l}=2al$.
The mesh point $x_{0}(T)$ is allowed to evolve and is assumed to satisfy the consistency condition 
\begin{eqnarray}
\frac{dx_{0}(T)}{dT}=-\displaystyle\frac{1}{2}\sum_{\mu\in I}\displaystyle\sum_{\nu\in J}c_{\mu\nu}u_{0}^{(\mu)}v_{0}^{(\nu)}.
\label{semi-con-con}
\end{eqnarray}
The derivation of this consistency condition is given in \cite{hori2}.
The discrete hodograph transformation (\ref{dishodographMCSP}) can also be written locally in terms of the mesh interval as
\begin{eqnarray}
\delta_l:=x_{l+1}-x_l=2a\rho_l.
\label{delta_semidis}
\end{eqnarray}
The variables $u^{(\mu)}_{l}$ and $v^{(\nu)}_{l}$ are regarded as the field values at the physical mesh point $x_{l}(T)$.

\begin{thm}
\label{thm3.2}
Let $(f_l,g_l^{(\mu)},h_l^{(\nu)})$ be a solution of the
semi-discrete bilinear system (\ref{disbilinear}), and define
$u_l^{(\mu)}$, $v_l^{(\nu)}$, $\rho_l$, $x_l$, and $\delta_l$
by (\ref{disdependenttransformation}), (\ref{dishodographMCSP}) and (\ref{delta_semidis}).
Assume that $x_0(T)$ satisfies (\ref{semi-con-con}).
Then the variables $u_l^{(\mu)}$, $v_l^{(\nu)}$, and $x_l$
satisfy the following semi-discrete MCSP system:
\begin{eqnarray}
\left\{
\begin{array}{lll}
\displaystyle\frac{d}{dT}(u_{l+1}^{(\mu)}-u_{l}^{(\mu)})=\displaystyle\frac{\delta_{l}}{2}(u_{l+1}^{(\mu)}+u_{l}^{(\mu)}), \\
\\
\displaystyle\frac{d}{dT}(v_{l+1}^{(\nu)}-v_{l}^{(\nu)})=\displaystyle\frac{\delta_{l}}{2}(v_{l+1}^{(\nu)}+v_{l}^{(\nu)}), \\
\\
\displaystyle\frac{d x_{l}}{d T}=-\displaystyle\frac{1}{2}\sum_{\mu\in I}\sum_{\nu\in J}c_{\mu\nu}u_{l}^{(\mu)}v_{l}^{(\nu)}.
\end{array}
\right.
\label{semiMCSP}
\end{eqnarray}
Here $\mu\in I, \nu\in J$, and all $T$-derivatives are taken with the lattice label $l$ fixed.
\end{thm}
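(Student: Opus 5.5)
The plan is to derive each equation of (\ref{semiMCSP}) directly from the corresponding bilinear equation in (\ref{disbilinear}), dividing by suitable products of $\tau$-functions. For the first equation, expand the Hirota derivative in the second equation of (\ref{disbilinear}). Writing $D_T(f_{l+1}\cdot g_l)=f_{l+1,T}g_l-f_{l+1}g_{l,T}$ and similarly for the other term, then dividing by $f_lf_{l+1}$, the left-hand side becomes
\begin{eqnarray*}
\frac{1}{a}\left[(\log f_{l+1})_T u_l-u_{l,T}-(\log f_{l+1})_T u_l+\ldots\right].
\end{eqnarray*}
It is cleaner to work with the quotients directly. We use
\begin{eqnarray*}
\frac{D_T(f_{l+1}\cdot g_l)}{f_{l+1}f_l}=\frac{f_{l+1,T}}{f_{l+1}}u_l-\frac{g_{l,T}}{f_l},\qquad u_{l,T}=\frac{g_{l,T}}{f_l}-u_l(\log f_l)_T .
\end{eqnarray*}
With these, the combination $\frac{1}{f_lf_{l+1}}D_T(f_{l+1}\cdot g_l-f_l\cdot g_{l+1})$ should reduce to
\begin{eqnarray*}
-\frac{d}{dT}(u_l-u_{l+1})\;-\;\left(\log\frac{f_{l+1}}{f_l}\right)_T(u_{l+1}+u_l)\cdot(\pm1),
\end{eqnarray*}
with the sign to be tracked carefully. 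The right-hand side divided by $f_lf_{l+1}$ is $u_l+u_{l+1}$. Moving the logarithmic term to the right and using $\rho_l=1-\frac{1}{a}(\log(f_{l+1}/f_l))_T$ from (\ref{disdependenttransformation}) produces $\frac{d}{dT}(u_{l+1}-u_l)=a\rho_l(u_{l+1}+u_l)=\frac{\delta_l}{2}(u_{l+1}+u_l)$ by (\ref{delta_semidis}). The $v$ equation is identical, with $h$ in place of $g$. A possible overall sign or factor in the definition of $\rho_l$ must come out exactly as $1-\frac1a(\cdots)$; I will check this against the sign conventions by comparing with the one-soliton case if needed.

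For the mesh equation, differentiate (\ref{dishodographMCSP}) in $T$ at fixed $l$:
\begin{eqnarray*}
\frac{dx_l}{dT}=\frac{dx_0}{dT}+2a\sum_{m=0}^{l-1}\rho_{m,T}.
\end{eqnarray*}
The key step is the semi-discrete conservation law
\begin{eqnarray*}
2a\rho_{m,T}=-\frac12(F_{m+1}-F_m),\qquad F_m=\sum_{\mu\in I}\sum_{\nu\in J} c_{\mu\nu}u_m^{(\mu)}v_m^{(\nu)}.
\end{eqnarray*}
This follows from $2a\rho_{m,T}=-2(\log f_{m+1}-\log f_m)_{TT}$ together with the first bilinear equation, which in the form $D_T^2f_m\cdot f_m=2f_m^2(\log f_m)_{TT}$ gives $(\log f_m)_{TT}=F_m/4$. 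The sum then telescopes to $-\frac12(F_l-F_0)$. Combined with (\ref{semi-con-con}), this yields $dx_l/dT=-\frac12F_l$ for $l\ge1$, and the case $l=0$ is exactly (\ref{semi-con-con}).

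The only real obstacle is the sign bookkeeping in the first step, namely making the logarithmic-derivative terms combine into exactly $(\log f_{l+1}/f_l)_T(u_l+u_{l+1})$ with the coefficient matching the definition of $\rho_l$. Everything else is mechanical.
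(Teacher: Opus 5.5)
Your proposal is correct and follows the paper's proof essentially step for step. You divide the second and third bilinear equations by $f_lf_{l+1}$ to obtain the factor $a\rho_l=\delta_l/2$. You then use $(\log f_l)_{TT}=F_l/4$ from the first bilinear equation to get $\rho_{l,T}=-\frac{1}{4a}(F_{l+1}-F_l)$, and telescope the $T$-derivative of the discrete hodograph transformation together with (\ref{semi-con-con}).

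The sign you left open does come out correctly. The divided left-hand side equals $\frac{1}{a}\left[\frac{d}{dT}(u_{l+1}-u_l)+\left(\log\frac{f_{l+1}}{f_l}\right)_T(u_{l+1}+u_l)\right]$, which gives exactly $\frac{d}{dT}(u_{l+1}-u_l)=a\rho_l(u_{l+1}+u_l)$, so no check against the one-soliton solution is needed.
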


\begin{proof}
Dividing the second and third equations of (\ref{disbilinear}) by
$f_{l+1}f_l$ gives
\begin{eqnarray}
\left(\frac{g_{l+1}^{(\mu)}}{f_{l+1}}-\frac{g_{l}^{(\mu)}}{f_{l}}\right)_{T}=\left(a-\left(\log{\frac{f_{l+1}}{f_{l}}}\right)_{T}\right)\left(\frac{g_{l+1}^{(\mu)}}{f_{l+1}}+\frac{g_{l}^{(\mu)}}{f_{l}}\right),
\label{bi1}
\end{eqnarray}
and
\begin{eqnarray}
\left(\frac{h_{l+1}^{(\nu)}}{f_{l+1}}-\frac{h_{l}^{(\nu)}}{f_{l}}\right)_{T}=\left(a-\left(\log{\frac{f_{l+1}}{f_{l}}}\right)_{T}\right)\left(\frac{h_{l+1}^{(\nu)}}{f_{l+1}}+\frac{h_{l}^{(\nu)}}{f_{l}}\right).
\label{bi2}
\end{eqnarray}
Applying the dependent-variable transformation (\ref{disdependenttransformation}) to (\ref{bi1}) and (\ref{bi2}), we obtain
\begin{eqnarray}
\frac{d}{dT}(u_{l+1}^{(\mu)}-u_{l}^{(\mu)})=a\rho_{l}(u_{l+1}^{(\mu)}+u_{l}^{(\mu)}),\label{MCCID1}
\end{eqnarray}
and
\begin{eqnarray}
\frac{d}{dT}(v_{l+1}^{(\nu)}-v_{l}^{(\nu)})=a\rho_{l}(v_{l+1}^{(\nu)}+v_{l}^{(\nu)}).\label{MCCID2}
\end{eqnarray}
Since $\delta_{l}=2a\rho_{l}$, equations (\ref{MCCID1}) and (\ref{MCCID2}) yield the first two equations of (\ref{semiMCSP}).

It remains to derive the evolution equation for the mesh points.
Using (\ref{disdependenttransformation}), the first equation of
(\ref{disbilinear}) yields
\begin{eqnarray}
(\log{f_{l}})_{TT}=\displaystyle\frac{1}{4}\sum_{\mu\in I}\sum_{\nu\in J}c_{\mu\nu}u_{l}^{(\mu)}v_{l}^{(\nu)}.\label{eq_33}
\end{eqnarray}
Combining (\ref{eq_33}) with (\ref{disdependenttransformation}) yields
 \begin{eqnarray}
\frac{d}{dT}\rho_{l}=-\displaystyle\frac{1}{4a}\sum_{\mu\in I}\sum_{\nu\in J}c_{\mu\nu}(u_{l+1}^{(\mu)}v_{l+1}^{(\nu)}-u_{l}^{(\mu)}v_{l}^{(\nu)}).
\label{rho_evolution}
\end{eqnarray}
For convenience, set
\begin{eqnarray}
F_l=\sum_{\mu\in I}\sum_{\nu\in J}c_{\mu\nu}
u_l^{(\mu)}v_l^{(\nu)}.
\end{eqnarray}
Then (\ref{rho_evolution}) can be written as
\begin{eqnarray}
\frac{d\rho_l}{dT}=-\frac{1}{4a}\left(F_{l+1}-F_l\right).
\label{rho_F}
\end{eqnarray}

Differentiating the discrete hodograph transformation (\ref{dishodographMCSP}) with respect to $T$ and using (\ref{semi-con-con}) and (\ref{rho_F}), we obtain
\begin{eqnarray}
\frac{dx_{l}}{dT}&=&\frac{dx_{0}}{dT}+2a\sum_{m=0}^{l-1}\frac{d\rho_{m}}{dT}
\nonumber\\
&=&-\frac{1}{2}F_{0}-\frac{1}{2}\sum_{m=0}^{l-1}\left(F_{m+1}-F_{m}\right)\nonumber\\
&=&-\frac{1}{2}F_{l}\nonumber\\
&=&-\frac{1}{2}\sum_{\mu\in I}\sum_{\nu\in J}c_{\mu\nu}u_{l}^{(\mu)}v_{l}^{(\nu)}.
\label{x_evolution}
\end{eqnarray}
This is the third equation of (\ref{semiMCSP}).
\end{proof}

The mesh points can written directly in terms of the $\tau$-function. Substituting (\ref{disdependenttransformation}) into (\ref{dishodographMCSP}) gives
\begin{eqnarray}
x_{l}=x_{0}(T)+2al-2\left(\log\frac{f_{l}}{f_{0}}\right)_{T}.
\label{x_tau_general}
\end{eqnarray}
At $l=0$, equation (\ref{eq_33}) together with (\ref{semi-con-con}) gives
\begin{eqnarray}
x_{0,T}=-2(\log f_{0})_{TT}.
\end{eqnarray}
Hence, up to a time-independent translation of $x$, we may choose
\begin{eqnarray}
x_{0}(T)=-2(\log f_{0})_{T}.
\end{eqnarray}
Equation (\ref{x_tau_general}) then reduces to
\begin{eqnarray}
x_{l}=2al-2(\log f_{l})_{T}.
\label{semidisparametricMCSP}
\end{eqnarray}
Therefore, the Pfaffian solution of the semi-discrete MCSP system (\ref{semiMCSP}) is given parametrically by
\begin{eqnarray}
u_{l}^{(\mu)}=\frac{g_{l}^{(\mu)}}{f_{l}},\qquad v_{l}^{(\nu)}=\frac{h_{l}^{(\nu)}}{f_{l}},\qquad x_{l}=
2al-2(\log f_{l})_{T}.
\label{semi_pf_solution}
\end{eqnarray}

\end{section}


\begin{section}{\texorpdfstring{Full}{Full} discretization of the MCSP equation}
\label{sec_full}

We take the following system as a fully discrete analogue of the bilinear
system (\ref{bilinearMCSP}):
\begin{eqnarray}
\fl{
\left\{ 
\begin{array}{ll}
f_{l,m+1}f_{l,m-1}-f_{l,m}^{2}=b^{2}\displaystyle\sum_{\mu\in I}\displaystyle\sum_{\nu\in J}c_{\mu\nu}g_{l,m}^{(\mu)}h_{l,m}^{(\nu)},\\
(1-ab)(g^{(\mu)}_{l+1,m+1}f_{l,m}+g_{l,m}^{(\mu)}f_{l+1,m+1})=(1+ab)(g_{l+1,m}^{(\mu)}f_{l,m+1}+g^{(\mu)}_{l,m+1}f_{l+1,m}),\\
(1-ab)(h^{(\nu)}_{l+1,m+1}f_{l,m}+h_{l,m}^{(\nu)}f_{l+1,m+1})=(1+ab)(h_{l+1,m}^{(\nu)}f_{l,m+1}+h^{(\nu)}_{l,m+1}f_{l+1,m}),\\
\end{array}
\right.}
\label{fullbilinear}
\end{eqnarray}
where $l,m\in\mathbb{Z}$ label the uniform lattice
$X_l=2al$, $T_m=2bm$ in the hodograph coordinates, and $a,b>0$.
Thus $2a$ and $2b$ are the lattice spacings in $X$ and $T$, respectively;
the physical mesh in the $x$ coordinate is generally nonuniform. We assume
throughout this section that all denominators appearing below are nonzero. In
particular, $1-ab\neq0$ and the relevant $\tau$-functions do not vanish when
the dependent-variable transformations are used.

To construct Pfaffian solutions of the fully discrete bilinear system (\ref{fullbilinear}), let $M=2N$ and introduce the following Pfaffian elements: 
\begin{eqnarray}
\fl{{\rm Pf}(j,k)_{l,m}=\frac{p_{j}-p_{k}}{p_{j}+p_{k}}\varphi_{j}(l,m)\varphi_{k}(l,m),\quad(1\leq j<k\leq M),}\label{fdpfMCSP1}
\end{eqnarray}
\begin{eqnarray}
\fl{{\rm Pf}(j,k^{\prime})_{l,m}=\delta_{jk},\quad (1\leq j\leq M,\quad 1\leq k\leq M),}
\end{eqnarray}
\begin{eqnarray}
\fl{{\rm Pf}(k^{\prime},j^{\prime})_{l,m}=
\left\{
\begin{array}{ll}
\begin{array}{c}
\displaystyle\frac{(p_{j}^{2}-b^{2})(p_{k}^{2}-b^{2})}{4(p_{j}^{2}-p_{k}^{2})}\\
\displaystyle{}\times\sum_{\mu\in I}\sum_{\nu\in J} c_{\mu\nu}a_{j}^{(\mu)}b_{k}^{(\nu)}
\end{array},&(1\leq j\leq N<k\leq M),\\
0,&{\rm otherwise},
\end{array}
\right.}
\end{eqnarray}
\begin{eqnarray}
\fl{{\rm Pf}(a^{(\mu)}, j^{\prime})_{l,m}=
\left\{
\begin{array}{ll}
a_{j}^{(\mu)},&(1\leq j\leq N),\\
0,&(N+1\leq j\leq M),
\end{array}
\right.}
\qquad\mu\in I,
\label{fdpfMCSP3}
\end{eqnarray}
\begin{eqnarray}
\fl{{\rm Pf}(b^{(\nu)}, k^{\prime})_{l,m}=
\left\{
\begin{array}{ll}
0,& (1\leq k\leq N),\\
b_{k}^{(\nu)},&(N+1\leq k\leq M),
\end{array}
\right.}
\qquad\nu\in J.
\label{fdpfMCSP2}
\end{eqnarray}
where $\varphi_{j}(l,m)=B_{j}\left(\displaystyle\frac{1+ap_{j}}{1-ap_{j}}\right)^{l}\left(\displaystyle\frac{p_{j}+b}{p_{j}-b}\right)^{m}$, and $B_{j}$ and $p_{j}$ are the phase and spectral parameters associated with the $j$-th exponential factor, respectively. The constants $c_{\mu\nu}$, $a_{j}^{(\mu)}$ and $b_{k}^{(\nu)}$ are arbitrary, and the parameters are chosen so that all expressions above are well-defined.
The element involving $d_0$ is defined by
\begin{eqnarray}
{\rm Pf}(d_0,i)_{l,m}=\varphi_i(l,m),\qquad 1\leq i\leq M.
\label{fdpfMCSP4}
\end{eqnarray}
All other Pfaffian elements formed from the labels appearing in the solution
formulas above are set to zero.
In particular, $1\pm ap_j\neq0$, $p_j\pm b\neq0$, and all denominators in
the Pfaffian elements are assumed to be nonzero.

\begin{thm}
\label{thm4.1}
For the Pfaffian elements defined in (\ref{fdpfMCSP1})--(\ref{fdpfMCSP4}), the functions
\begin{eqnarray}
f_{l,m}={\rm Pf}(1,2,\cdots, M-1, M, M^{\prime}, M-1^{\prime}, \cdots, 2^{\prime}, 1^{\prime})_{l,m},\label{fulldis_solb}\\
g_{l,m}^{(\mu)}={\rm Pf}(a^{(\mu)}, d_{0},1,2,\cdots, M-1, M, M^{\prime}, M-1^{\prime}, \cdots, 2^{\prime}, 1^{\prime})_{l,m},\\
h_{l,m}^{(\nu)}={\rm Pf}(b^{(\nu)}, d_{0},1,2,\cdots, M-1, M, M^{\prime}, M-1^{\prime}, \cdots, 2^{\prime}, 1^{\prime})_{l,m},
\label{fulldis_sole}
\end{eqnarray}
where $\mu\in I$ and $\nu\in J$, satisfy the fully discrete bilinear system (\ref{fullbilinear}).
\end{thm}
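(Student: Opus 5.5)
The proof follows the pattern of Theorems \ref{thm2.2} and \ref{thm3.1}. We realize the discrete shifts in $m$ and $l$ as Pfaffians with auxiliary labels adjoined to the labels of $f_{l,m}$, and then reduce each bilinear equation to a Pfaffian identity (Lemma \ref{lem2.1} for the first equation, and a four-term Jacobi-type identity \cite{direct} for the second and third). The basic linear relations are
\begin{eqnarray*}
\varphi_j(l,m+1)(p_j-b)=\varphi_j(l,m)(p_j+b),\qquad
\varphi_j(l+1,m)(1-ap_j)=\varphi_j(l,m)(1+ap_j).
\end{eqnarray*}

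\textbf{Backward and forward shifts in $m$.} We introduce labels $t^{\pm}$ with ${\rm Pf}(t^{\pm},i)_{l,m}=\varphi_i(l,m\pm1)$ and ${\rm Pf}(d_0,t^{\pm})=1$. Mimicking the computation of ${\rm Pf}(j,k)_{l+1}$ in the proof of Theorem \ref{thm3.1}, we check that ${\rm Pf}(j,k)_{l,m\pm1}={\rm Pf}(d_0,t^{\pm},j,k)_{l,m}$. The key computation is
\begin{eqnarray*}
\varphi_j(m+1)\varphi_k(m)-\varphi_j(m)\varphi_k(m+1)=\frac{2b(p_j-p_k)}{(p_j-b)(p_k-b)}\varphi_j\varphi_k,
\end{eqnarray*}
which is compatible with $\frac{p_j-p_k}{p_j+p_k}\varphi_j\varphi_k$ being a "$\partial_T$-type" element. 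It follows that $f_{l,m\pm1}={\rm Pf}(d_0,t^{\pm},\cdots)$ and $g_{l,m\pm1}^{(\mu)}=-{\rm Pf}(t^{\pm},a^{(\mu)},\cdots)$. The latter comes from expanding a trivially vanishing Pfaffian, exactly as for $g_{l+1}^{(\mu)}$ in Theorem \ref{thm3.1}.

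\textbf{First equation.} The Jacobi identity for the labels $d_0,t^+,t^-$ (with a possibly nonzero ${\rm Pf}(t^+,t^-)$) gives
\begin{eqnarray*}
f_{l,m+1}f_{l,m-1}-f_{l,m}\,{\rm Pf}(d_0,t^+,t^-,d_0,\cdots)-\cdots,
\end{eqnarray*}
which we reorganize as in the continuous proof: expand both Pfaffians along $d_0$ and symmetrize in $j,k$. The result has the form
\begin{eqnarray*}
f_{l,m+1}f_{l,m-1}-f_{l,m}^2=\sum_{j=1}^M P_j\,{\rm Pf}(d_0,\ldots,\hat j,\ldots){\rm Pf}(d_0,\ldots,\hat{j'},\ldots),
\end{eqnarray*}
with
\begin{eqnarray*}
P_j=-\frac{4b^2}{p_j^2-b^2}.
\end{eqnarray*}
This value is exactly what makes ${\rm Pf}(k',j')$ equal $\frac{1}{P_j-P_k}\sum c_{\mu\nu}a_j^{(\mu)}b_k^{(\nu)}$ up to the overall factor $b^{2}$: indeed $P_j-P_k=\frac{4b^2(p_j^2-p_k^2)}{(p_j^2-b^2)(p_k^2-b^2)}$. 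Lemma \ref{lem2.1} then converts the right-hand side into $b^2\sum c_{\mu\nu}g^{(\mu)}h^{(\nu)}$.

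Concretely, we expand $f_{l,m\pm1}$ along $t^{\pm}$, so that $f_{l,m\pm1}=f_{l,m}+\sum_j(\pm)\ldots$, and combine the products through the elements
\begin{eqnarray*}
\varphi_j(m+1)\varphi_k(m-1)=\frac{(p_j+b)(p_k-b)}{(p_j-b)(p_k+b)}\varphi_j\varphi_k.
\end{eqnarray*}
The antisymmetric part of $\frac{(p_j+b)(p_k-b)}{(p_j-b)(p_k+b)}$ in $(j,k)$ is proportional to ${\rm Pf}(j,k)$ and is killed exactly as in (\ref{eq_A66}). The symmetric part, together with the identity (\ref{eq_A88}), produces $-f_{l,m}^2$ plus the $P_j$-sum. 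This step is the main obstacle, and we expect the cancellation of the $f\cdot{\rm Pf}(d_0,t^{\pm},\ldots)$ cross terms to require care with the constant ${\rm Pf}(t^+,t^-)$.

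\textbf{Second and third equations.} These are mixed $(l,m)$ shifts. We adjoin $s^+$ for $l\mapsto l+1$ and $t^+$ for $m\mapsto m+1$, set ${\rm Pf}(s^+,t^+)=c$ for a suitable constant $c$ depending on $a,b$, and express the four shifted quantities as
\begin{eqnarray*}
f_{l+1,m+1}\propto{\rm Pf}(s^+,t^+,\cdots),\qquad g_{l+1,m+1}^{(\mu)}\propto{\rm Pf}(s^+,t^+,d_0,a^{(\mu)},\cdots)
\end{eqnarray*}
(after moving $d_0$), and similarly for the singly shifted quantities. The constant $c$ is chosen so that $\varphi_j(l+1,m+1)$ is reproduced. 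Using
\begin{eqnarray*}
\frac{1+ap}{1-ap}\cdot\frac{p-b}{p+b}\ \text{ versus }\ \frac{1-ab}{1+ab}
\end{eqnarray*}
(the two sides of a bilinear identity), the four-term identity
\begin{eqnarray*}
{\rm Pf}(s^+,t^+,d_0,a^{(\mu)},\cdots){\rm Pf}(\cdots)={\rm Pf}(s^+,t^+,\cdots){\rm Pf}(d_0,a^{(\mu)},\cdots)-{\rm Pf}(s^+,d_0,\cdots){\rm Pf}(t^+,a^{(\mu)},\cdots)+{\rm Pf}(s^+,a^{(\mu)},\cdots){\rm Pf}(t^+,d_0,\cdots)
\end{eqnarray*}
then yields the second equation, with the factors $1\pm ab$ emerging from the normalization constants. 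The third equation follows by replacing $a^{(\mu)}$ with $b^{(\nu)}$.
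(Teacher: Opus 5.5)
Your architecture is the paper's: shift labels adjoined to the labels of $f_{l,m}$, Lemma~\ref{lem2.1} for the first equation with the factor $b^2$ pulled out, and the four-term identity for the mixed equations. However, your $m$-shift labels have the wrong orientation, and as written this breaks both steps. The sign in your key computation is off. Since $\frac{p+b}{p-b}=1+\frac{2b}{p-b}$, one has $\varphi_j(l,m+1)\varphi_k(l,m)-\varphi_j(l,m)\varphi_k(l,m+1)=-\frac{2b(p_j-p_k)}{(p_j-b)(p_k-b)}\varphi_j\varphi_k$, whereas ${\rm Pf}(j,k)_{l,m+1}-{\rm Pf}(j,k)_{l,m}=+\frac{2b(p_j-p_k)}{(p_j-b)(p_k-b)}\varphi_j\varphi_k$. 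Hence, with ${\rm Pf}(d_0,t^+)=1$, you get ${\rm Pf}(d_0,t^+,j,k)_{l,m}=2{\rm Pf}(j,k)_{l,m}-{\rm Pf}(j,k)_{l,m+1}$, and likewise for $t^-$. So ${\rm Pf}(d_0,t^{\pm},\cdots)\neq f_{l,m\pm1}$, and the product in your first equation is not $f_{l,m+1}f_{l,m-1}$. Copying Theorem~\ref{thm3.1} is exactly what fails here. The $m$-shift must be oriented opposite to the $l$-shift, just as in Theorem~\ref{thm2.2} one has $\partial_T{\rm Pf}(j,k)={\rm Pf}(d_{-1},d_0,j,k)$ but $\partial_X{\rm Pf}(j,k)={\rm Pf}(d_0,d_1,j,k)$. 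The paper takes ${\rm Pf}(s_m^{\pm},d_0)=1$ but ${\rm Pf}(s_l^{+},d_0)=-1$, giving $f_{l,m\pm1}={\rm Pf}(s_m^{\pm},d_0,\cdots)$ and $f_{l+1,m}={\rm Pf}(d_0,s_l^{+},\cdots)$.

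The orientation also matters in your Step 2. With $s^+$ normalized as the $l$-shift requires and $t^+$ as you chose, ${\rm Pf}(s^+,t^+,d_0,i)=c\,\varphi_i(l,m)+\varphi_i(l,m+1)-\varphi_i(l+1,m)$. For $a,b>0$ no constant $c$ makes this a fixed multiple of $\varphi_i(l+1,m+1)$ as an identity in $p_i$. The reason is that the available relation, $\frac{1-ab}{1+ab}\bigl(\varphi_i(l+1,m+1)+\varphi_i(l,m)\bigr)=\varphi_i(l+1,m)+\varphi_i(l,m+1)$, involves the sum rather than the difference.

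After the fix your variant is correct, with $c={\rm Pf}(s_l^+,s_m^+)=-\frac{1-ab}{1+ab}$. Then ${\rm Pf}(s_l^+,s_m^+,j,k)=c\,{\rm Pf}(j,k)_{l+1,m+1}$ and ${\rm Pf}(s_l^+,s_m^+,d_0,i)=-c\,\varphi_i(l+1,m+1)$. It follows that ${\rm Pf}(s_l^+,s_m^+,\cdots)=c\,f_{l+1,m+1}$ and ${\rm Pf}(s_l^+,s_m^+,d_0,a^{(\mu)},\cdots)=c\,g^{(\mu)}_{l+1,m+1}$, and the four-term identity yields the second equation directly. This is equivalent to the paper's choice ${\rm Pf}(s_l^+,s_m^+)=0$. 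Both bordered Pfaffians are affine in that entry, with coefficients $f_{l,m}$ and $-g^{(\mu)}_{l,m}$, so moving the entry from $0$ to $c$ turns (\ref{fg_shift1})--(\ref{fg_shift2}) into your formulas.

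For the first equation, no entry ${\rm Pf}(t^+,t^-)$ is needed, and the term ${\rm Pf}(d_0,t^+,t^-,d_0,\cdots)$ you display vanishes identically. As in Appendix A.1, expand each of $f_{l,m\pm1}$ along its shift label and multiply. After inserting $0=\sum_j(-1)^j\varphi_j(l,m){\rm Pf}(d_0,\cdots,\hat{j},\cdots)$, the terms linear in $f_{l,m}$ become $f_{l,m}\sum_j(-1)^j\frac{4b^2}{p_j^2-b^2}\varphi_j(l,m){\rm Pf}(d_0,\cdots,\hat{j},\cdots)$. These cancel the $f_{l,m}\varphi_j$ contribution produced by (\ref{eq_h777}).
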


\begin{proof}

The proof is divided into two steps. We first verify the first bilinear equation. 
We then verify the second and third bilinear equations by using mixed lattice-shift representations and a Pfaffian identity. The detailed Pfaffian expansions used below are collected in Appendix A.
\\
\\
\textit{Step 1: Verification of the first bilinear equation.}
Define the additional Pfaffian elements
\begin{eqnarray}
\fl{{\rm Pf}(s_{l}^{+},d_{0})_{l,m}={\rm Pf}(s_{l}^{+},d_{0})_{l,m+1}=-1,\qquad {\rm Pf}(s_{m}^{+}, d_{0})_{l,m}={\rm Pf}(s_{m}^{-}, d_{0})_{l,m}=1,}\nonumber\\
\fl{{\rm Pf}(s_{l}^{+},i)_{l,m}=\varphi_{i}(l+1,m),\quad{\rm Pf}(s_{m}^{+},i)_{l,m}=\varphi_{i}(l,m+1),}\nonumber\\
\fl{{\rm Pf}(s_{m}^{-},i)_{l,m}=\varphi_{i}(l,m-1),}
\end{eqnarray}
where $1\leq i \leq M$. Here, $s_{l}^{+}$ is the Pfaffian label associated with
the forward shift in $l$, whereas $s_{m}^{+}$ and $s_{m}^{-}$ are distinct Pfaffian
labels associated with the forward and backward shifts in $m$, respectively.
All other Pfaffian elements involving these newly introduced labels are set
to zero. 
Applying Lemma \ref{lem2.1} with $P_{j}=\displaystyle\frac{-4}{p_{j}^{2}-b^{2}}$, we obtain
\begin{eqnarray}
\fl{\displaystyle\sum_{\mu\in I}\displaystyle\sum_{\nu\in J}c_{\mu\nu}{\rm Pf}(a^{(\mu)}, d_{0},1,\cdots, M, M^{\prime}, \cdots, 2^{\prime}, 1^{\prime})_{l,m}}\nonumber\\
\fl{\qquad\times{\rm Pf}(b^{(\nu)}, d_{0},1,\cdots, M, M^{\prime}, \cdots, 2^{\prime}, 1^{\prime})_{l,m}}\nonumber\\
\fl{=\sum_{j=1}^{M}\displaystyle\frac{-4}{p_{j}^{2}-b^{2}}{\rm Pf}(d_{0},1,\cdots, \hat{j},\cdots, M, M^{\prime}, \cdots, 2^{\prime},1^{\prime})_{l,m}}\nonumber\\
\fl{\qquad\times{\rm Pf}(d_{0},1,\cdots, M, M^{\prime}, \cdots, \hat{j^{\prime}}, \cdots, 2^{\prime},1^{\prime})_{l,m}.}\label{eq_thm2}
\end{eqnarray}
The $m$-shift formulas for $f_{l,m}$ are
\begin{eqnarray}
\fl{f_{l,m+1}={\rm Pf}(s_{m}^{+},d_{0},\cdots)_{l,m},\qquad f_{l,m-1}={\rm Pf}(s_{m}^{-},d_{0},\cdots)_{l,m}.}
\label{fd_shift relation}
\end{eqnarray}
Using the shift relations (\ref{fd_shift relation}), the left-hand side of the first bilinear equation in (\ref{fullbilinear}) can be written as
\begin{eqnarray}
\fl{f_{l,m+1}f_{l,m-1}-f_{l,m}^{2}=b^{2}\displaystyle\sum_{j=1}^{M}\displaystyle\frac{-4}{p_{j}^{2}-b^{2}}{\rm Pf}(d_{0},\cdots,\hat{j},\cdots)_{l,m}{\rm Pf}(d_{0},\cdots,\hat{j^{\prime}},\cdots)_{l,m}}.
\label{fd_first_relation}
\end{eqnarray}
The detailed derivations of (\ref{fd_shift relation}) and (\ref{fd_first_relation}) are provided in Appendix A.1. Combining (\ref{eq_thm2}) and (\ref{fd_first_relation}), we obtain
\begin{eqnarray} 
\fl{f_{l,m+1}f_{l,m-1}-f_{l,m}^{2}=b^{2}\displaystyle\sum_{\mu\in I}\sum_{\nu\in J}c_{\mu\nu}{\rm Pf}(a^{(\mu)},d_{0},\cdots)_{l,m}{\rm Pf}(b^{(\nu)},d_{0},\cdots)_{l,m}},
\end{eqnarray}
which is the first equation of (\ref{fullbilinear}).
\\
\\
\textit{Step 2: Verification of the second and third bilinear equations.}
The $l$- and $m$-shift formulas for $f_{l,m}$ and $g^{(\mu)}_{l,m}$ are
\begin{eqnarray}
f_{l,m+1}={\rm Pf}(s_{m}^{+},d_{0},\cdots)_{l,m},\qquad f_{l+1,m}={\rm Pf}(d_{0},s_{l}^{+}\cdots)_{l,m},\nonumber\\
g^{(\mu)}_{l,m+1}=-{\rm Pf}(s_{m}^{+},a^{(\mu)},\cdots)_{l,m},\qquad g^{(\mu)}_{l+1,m}=-{\rm Pf}(s_{l}^{+},a^{(\mu)},\cdots)_{l,m}.
\label{fg_shift}
\end{eqnarray}
From these formulas, we obtain the following mixed $l$- and $m$-shift relations:
\begin{eqnarray}
\frac{1-ab}{1+ab}(f_{l+1,m+1}-f_{l,m})=-{\rm Pf}(s_{l}^{+},s_{m}^{+},\cdots)_{l,m},
\label{fg_shift1}
\end{eqnarray}
and
\begin{eqnarray}
\frac{1-ab}{1+ab}(g^{(\mu)}_{l+1,m+1}+g^{(\mu)}_{l,m})=-{\rm Pf}(s_{m}^{+}, d_{0}, s_{l}^{+}, a^{(\mu)}, \cdots)_{l,m}.
\label{fg_shift2}
\end{eqnarray}
The detailed derivations of (\ref{fg_shift}), (\ref{fg_shift1}) and (\ref{fg_shift2}) are provided in Appendix A.2.
Combing (\ref{fg_shift})--(\ref{fg_shift2}) with a Pfaffian identity\cite{direct}, we obtain
\begin{eqnarray}
\fl{\frac{1-ab}{1+ab}(g_{l+1,m+1}^{(\mu)}+g_{l,m}^{(\mu)})f_{l,m}+\frac{1-ab}{1+ab}(f_{l+1,m+1}-f_{l,m})g^{(\mu)}_{l,m}}\nonumber\\
\fl{\qquad-g^{(\mu)}_{l+1,m}f_{l,m+1}-g^{(\mu)}_{l,m+1}f_{l+1,m}}\nonumber\\
\fl{\quad =-{\rm Pf}(s_{m}^{+},d_{0},s_{l}^{+},a^{(\mu)},\cdots)_{l,m}{\rm Pf}(\cdots)_{l,m}
-{\rm Pf}(s_{l}^{+},s_{m}^{+},\cdots)_{l,m}{\rm Pf}(a^{(\mu)},d_{0},\cdots)_{l,m}}\nonumber\\
\fl{\qquad+{\rm Pf}(s_{l}^{+},a^{(\mu)},\cdots)_{l,m}{\rm Pf}(s_{m}^{+},d_{0},\cdots)_{l,m}
+{\rm Pf}(s_{m}^{+},a^{(\mu)},\cdots)_{l,m}{\rm Pf}(d_{0},s_{l}^{+},\cdots)_{l,m}}\nonumber\\
\fl{\quad=0.}
\label{bi_first}
\end{eqnarray}
Multiplying (\ref{bi_first}) by $1+ab$ yields precisely the second bilinear equation of (\ref{fullbilinear}).
The third equation follows by the same argument, with $g_{l,m}^{(\mu)}$ replaced by $h_{l,m}^{(\nu)}$.
 \end{proof}

\begin{thm}
\label{thm4.2}
Under the dependent-variable transformation
\begin{eqnarray}
u_{l,m}^{(\mu)}=\displaystyle\frac{g_{l,m}^{(\mu)}}{f_{l,m}},\quad v_{l,m}^{(\nu)}=\displaystyle\frac{h_{l,m}^{(\nu)}}{f_{l,m}},\quad \Gamma_{l,m}=\displaystyle\frac{f_{l-1,m+1}f_{l,m}}{f_{l-1,m}f_{l,m+1}},
\label{fulltrans}
\end{eqnarray}
the bilinear system (\ref{fullbilinear}) yields the following difference
equations:
\begin{eqnarray}
\left\{
\begin{array}{lll}
\Gamma_{l,m}=\displaystyle\frac{1+b^{2}\displaystyle\sum_{\mu\in I}\displaystyle\sum_{\nu\in J}c_{\mu\nu}u_{l-1,m}^{(\mu)}v_{l-1,m}^{(\nu)}}{1+b^{2}\displaystyle\sum_{\mu\in I}\sum_{\nu\in J}c_{\mu\nu}u_{l,m}^{(\mu)}v_{l,m}^{(\nu)}}\Gamma_{l,m-1},\\
\\
u_{l+1,m+1}^{(\mu)}+u_{l,m}^{(\mu)}=\displaystyle\frac{1+ab}{1-ab} \Gamma_{l+1,m}(u_{l+1,m}^{(\mu)}+u_{l,m+1}^{(\mu)}),\\
\\
v_{l+1,m+1}^{(\nu)}+v_{l,m}^{(\nu)}=\displaystyle\frac{1+ab}{1-ab} \Gamma_{l+1,m}(v_{l+1,m}^{(\nu)}+v_{l,m+1}^{(\nu)})
\end{array}
\right.
\label{fulldisper}
\end{eqnarray}
\end{thm}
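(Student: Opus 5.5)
Dividing the bilinear equations by suitable products of $\tau$-functions reduces each difference equation in (\ref{fulldisper}) to a direct algebraic consequence of (\ref{fullbilinear}). No Pfaffian identity is needed; everything is manipulation of ratios. Throughout, we use the standing assumption that the relevant $\tau$-functions and $1-ab$ are nonzero.

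\emph{First equation.} Divide the first equation of (\ref{fullbilinear}) by $f_{l,m}^2$. With $u_{l,m}^{(\mu)}=g_{l,m}^{(\mu)}/f_{l,m}$ and $v_{l,m}^{(\nu)}=h_{l,m}^{(\nu)}/f_{l,m}$ this gives
\begin{eqnarray*}
\frac{f_{l,m+1}f_{l,m-1}}{f_{l,m}^{2}}=1+b^{2}\sum_{\mu\in I}\sum_{\nu\in J}c_{\mu\nu}u_{l,m}^{(\mu)}v_{l,m}^{(\nu)}=:Q_{l,m}.
\end{eqnarray*}
From the definition (\ref{fulltrans}) we compute
\begin{eqnarray*}
\frac{\Gamma_{l,m}}{\Gamma_{l,m-1}}=\frac{f_{l-1,m+1}f_{l,m}}{f_{l-1,m}f_{l,m+1}}\cdot\frac{f_{l-1,m}f_{l,m}}{f_{l-1,m-1}f_{l,m-1}}.
\end{eqnarray*}
Regrouping the factors, this equals
\begin{eqnarray*}
\frac{f_{l-1,m+1}f_{l-1,m-1}}{f_{l-1,m}^{2}}\cdot\frac{f_{l-1,m}^{2}}{f_{l-1,m-1}^{2}}\cdot\frac{f_{l,m}^{2}}{f_{l,m+1}f_{l,m-1}}\cdot\frac{f_{l-1,m}^{\,}}{f_{l-1,m}}\,\cdots
\end{eqnarray*}
The shortcut is to collect the factors as
\begin{eqnarray*}
\frac{\Gamma_{l,m}}{\Gamma_{l,m-1}}=\frac{f_{l-1,m+1}f_{l-1,m-1}}{f_{l-1,m}^{2}}\cdot\frac{f_{l,m}^{2}}{f_{l,m+1}f_{l,m-1}}.
\end{eqnarray*}
One checks this by counting powers: in the product, $f_{l-1,m}$ appears with net power $-1+1=0$. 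This matches the right-hand side only after the correction: $f_{l-1,m}$ has net power $0$ on the left but $-2$ on the right. So I will recount carefully. The left side contains $f_{l-1,m+1}f_{l,m}^{2}/(f_{l,m+1}f_{l-1,m-1}f_{l,m-1})$. The right side contains $f_{l-1,m+1}f_{l-1,m-1}f_{l,m}^2/(f_{l-1,m}^2 f_{l,m+1}f_{l,m-1})$. The two differ exactly by $f_{l-1,m-1}^2/f_{l-1,m}^2$.

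This mismatch signals that the quantity whose ratio is $Q_{l-1,m}/Q_{l,m}$ must be checked against the paper's indexing of $\Gamma$. My plan is to write $\Gamma_{l,m}/\Gamma_{l,m-1}$ explicitly and compare with $Q_{l-1,m}/Q_{l,m}$, which equals the right-hand product above. I expect that with the correct index bookkeeping the identity reduces to $Q_{l-1,m}/Q_{l,m}$. If necessary, I will reconcile the indexing by noting which $m$-shift convention in $\Gamma$ makes the telescoping exact. This is the step most prone to error.

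\emph{Second and third equations.} Divide the second equation of (\ref{fullbilinear}) by $f_{l,m}f_{l+1,m+1}$. This gives
\begin{eqnarray*}
(1-ab)\bigl(u_{l+1,m+1}^{(\mu)}+u_{l,m}^{(\mu)}\bigr)=(1+ab)\frac{f_{l,m+1}f_{l+1,m}}{f_{l,m}f_{l+1,m+1}}\bigl(u_{l+1,m}^{(\mu)}+u_{l,m+1}^{(\mu)}\bigr).
\end{eqnarray*}
The ratio of $\tau$-functions on the right is $\Gamma_{l+1,m}$ by (\ref{fulltrans}) with $l\to l+1$. Dividing by $1-ab$ yields the second equation of (\ref{fulldisper}). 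The $v$-equation follows identically from the third bilinear equation.

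The main obstacle is the index bookkeeping in the first equation, where the $\Gamma$ ratio must telescope exactly into $Q_{l-1,m}/Q_{l,m}$.
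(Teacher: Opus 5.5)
Your treatment of the second and third equations is correct and matches the paper: dividing by $f_{l,m}f_{l+1,m+1}$ gives the common factor $f_{l,m+1}f_{l+1,m}/(f_{l,m}f_{l+1,m+1})=\Gamma_{l+1,m}$.

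\textbf{Gap: the first equation is not proved.} The mismatch you report does not come from the indexing of $\Gamma$. It comes from an algebra slip: you inverted $\Gamma_{l,m-1}$ incorrectly, interchanging $f_{l-1,m}$ and $f_{l-1,m-1}$. From (\ref{fulltrans}),
\[
\Gamma_{l,m-1}=\frac{f_{l-1,m}f_{l,m-1}}{f_{l-1,m-1}f_{l,m}},\qquad\mbox{hence}\qquad \frac{1}{\Gamma_{l,m-1}}=\frac{f_{l-1,m-1}f_{l,m}}{f_{l-1,m}f_{l,m-1}},
\]
not $f_{l-1,m}f_{l,m}/(f_{l-1,m-1}f_{l,m-1})$ as you wrote. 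With the correct reciprocal the product telescopes exactly:
\[
\frac{\Gamma_{l,m}}{\Gamma_{l,m-1}}=\frac{f_{l-1,m+1}f_{l,m}}{f_{l-1,m}f_{l,m+1}}\cdot\frac{f_{l-1,m-1}f_{l,m}}{f_{l-1,m}f_{l,m-1}}=\frac{f_{l-1,m+1}f_{l-1,m-1}}{f_{l-1,m}^{2}}\cdot\frac{f_{l,m}^{2}}{f_{l,m+1}f_{l,m-1}}=\frac{Q_{l-1,m}}{Q_{l,m}}.
\]
Here $f_{l-1,m}$ has net power $-2$ and $f_{l-1,m-1}$ has power $+1$, as required. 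This is the paper's derivation: divide the first bilinear equation by $f_{l,m}^2$, then take the ratio $\Gamma_{l,m}/\Gamma_{l,m-1}$.

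No change to the $m$-shift convention in $\Gamma$ is needed; the statement is correct as given. Your proposal, however, leaves the first equation unproved and ends by questioning a formula that is right. Replace the garbled ``regrouping'' display and the paragraph about reconciling the indexing with the computation above.
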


\begin{proof}
Dividing the first equation of (\ref{fullbilinear}) by $f_{l,m}^{2}$ gives
\begin{eqnarray}
\displaystyle\frac{f_{l,m+1}f_{l,m-1}}{f_{l,m}^{2}}-1=b^{2}\sum_{\mu\in I}\sum_{\nu\in J}c_{\mu\nu}\frac{g_{l,m}^{(\mu)}}{f_{l,m}}\frac{h_{l,m}^{(\nu)}}{f_{l,m}}.\label{eq_4.32}
\end{eqnarray}
It follows that
\begin{eqnarray}
\displaystyle\frac{\Gamma_{l,m}}{\Gamma_{l,m-1}}=\displaystyle\frac{1+b^{2}\displaystyle\sum_{\mu\in I}\displaystyle\sum_{\nu\in J}c_{\mu\nu}\frac{g_{l-1,m}^{(\mu)}}{f_{l-1,m}}\frac{h_{l-1,m}^{(\nu)}}{f_{l-1,m}}}{1+b^{2}\displaystyle\sum_{\mu\in I}\displaystyle\sum_{\nu\in J}c_{\mu\nu}\frac{g_{l,m}^{(\mu)}}{f_{l,m}}\frac{h_{l,m}^{(\nu)}}{f_{l,m}}},
\end{eqnarray}
which gives the first equation of (\ref{fulldisper}) under the transformation
(\ref{fulltrans}).
Similarly, dividing the second equation of (\ref{fullbilinear}) by
$f_{l+1,m+1}f_{l,m}$ gives
\begin{eqnarray}
\displaystyle\frac{1+ab}{1-ab}\Gamma_{l+1,m}(u_{l+1,m}^{(\mu)}+u_{l,m+1}^{(\mu)})=u_{l+1,m+1}^{(\mu)}+u_{l,m}^{(\mu)},
\end{eqnarray}
which gives the second equation of (\ref{fulldisper}). The third equation follows
similarly.
\end{proof}
To transform the difference equations (\ref{fulldisper}) into a fully discrete system in the physical variables, we first recall the time evolution of the physical mesh points in the semi-discrete case. The parametric representation (\ref{semidisparametricMCSP}) gives
\begin{eqnarray}
\displaystyle\frac{dx_{l}}{dT}=-2(\log{f_{l}})_{TT}.\label{eq_4.37}
\end{eqnarray}
We introduce the fully discrete analogue of (\ref{eq_4.37}) as
\begin{eqnarray}
\displaystyle\frac{x_{l,m+1}-x_{l,m}}{2b}=\frac{1}{2b^{2}}\left(1-\frac{f_{l,m+1}f_{l,m-1}}{f_{l,m}^{2}}\right), 
\label{fulldis_x}
\end{eqnarray}
where $x_{l,m}$ denotes the $l$-th mesh point at the $m$-th time level. 
Indeed, setting $T=2bm$ and $f_{l,m}=f_l(T)$, one finds that, in the limit $b\to0$,
\begin{eqnarray*}
\displaystyle\frac{1}{2b^2}\left(1-\frac{f_{l,m+1}f_{l,m-1}}{f_{l,m}^{2}}\right)
&=&-2(\log f_l)_{TT}+O(b^2), 
\end{eqnarray*}
so that (\ref{fulldis_x}) reduces to (\ref{eq_4.37}) at leading order. 

The variables $u^{(\mu)}_{l,m}$ and $v^{(\nu)}_{l,m}$ are regarded as the field values at the physical mesh point $x_{l, m}$.

\begin{thm}
\label{thm4.3}
Let $(f_{l,m},g_{l,m}^{(\mu)},h_{l,m}^{(\nu)})$ satisfy the fully discrete bilinear system (\ref{fullbilinear}). Define $u_{l,m}^{(\mu)}$, $v_{l,m}^{(\nu)}$, and $\Gamma_{l,m}$ by (\ref{fulltrans}), and let $x_{l,m}$ be determined by (\ref{fulldis_x}). Then the variables $u_{l,m}^{(\mu)}$, $v_{l,m}^{(\nu)}$, and $x_{l,m}$ satisfy the following fully discrete MCSP system:
\begin{eqnarray}
\left\{
\begin{array}{ll}
\displaystyle\frac{u_{l,m+1}^{(\mu)}+u_{l-1,m}^{(\mu)}}{u_{l,m}^{(\mu)}+u_{l-1,m+1}^{(\mu)}}=\displaystyle\frac{1-b(x_{l-1,m+1}-x_{l-1,m})}{1-b(x_{l,m+1}-x_{l,m})} \displaystyle\frac{u_{l,m}^{(\mu)}+u_{l-1,m-1}^{(\mu)}}{u_{l,m-1}^{(\mu)}+u_{l-1,m}^{(\mu)}},\\
\displaystyle\frac{v_{l,m+1}^{(\nu)}+v_{l-1,m}^{(\nu)}}{v_{l,m}^{(\nu)}+v_{l-1,m+1}^{(\nu)}}=\displaystyle\frac{1-b(x_{l-1,m+1}-x_{l-1,m})}{1-b(x_{l,m+1}-x_{l,m})}\displaystyle\frac{v_{l,m}^{(\nu)}+v_{l-1,m-1}^{(\nu)}}{v_{l,m-1}^{(\nu)}+v_{l-1,m}^{(\nu)}},\\
\displaystyle\frac{x_{l,m+1}-x_{l,m}}{2b}=-\displaystyle\frac{1}{2}\sum_{\mu\in I}\sum_{\nu\in J}c_{\mu\nu}u_{l,m}^{(\mu)}v_{l,m}^{(\nu)}.
\end{array}
\right.
\label{fulldis}
\end{eqnarray}
All denominators appearing above are assumed to be nonzero.
\end{thm}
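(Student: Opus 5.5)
The plan is to express everything through Theorem \ref{thm4.2}, so that (\ref{fulldis}) follows from (\ref{fulldisper}) once $\Gamma_{l,m}$ is rewritten in terms of the mesh increments. I would prove the third equation of (\ref{fulldis}) first, then use it to identify the ratio $\Gamma_{l,m}/\Gamma_{l,m-1}$ with the ratio of the factors $1-b(x_{\cdot,m+1}-x_{\cdot,m})$, and finally eliminate $\Gamma$ between two consecutive time levels of the second equation of (\ref{fulldisper}).

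\emph{Step 1: the mesh equation.} By the definition (\ref{fulldis_x}),
\[
\frac{x_{l,m+1}-x_{l,m}}{2b}=\frac{1}{2b^2}\left(1-\frac{f_{l,m+1}f_{l,m-1}}{f_{l,m}^2}\right).
\]
The first bilinear equation of (\ref{fullbilinear}), divided by $f_{l,m}^2$ as in (\ref{eq_4.32}), gives
\[
\frac{f_{l,m+1}f_{l,m-1}}{f_{l,m}^2}-1=b^2F_{l,m},\qquad F_{l,m}=\sum_{\mu\in I}\sum_{\nu\in J}c_{\mu\nu}u^{(\mu)}_{l,m}v^{(\nu)}_{l,m}.
\]
Substituting immediately yields $(x_{l,m+1}-x_{l,m})/(2b)=-F_{l,m}/2$, which is the third equation of (\ref{fulldis}). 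As a by-product, $1+b^2F_{l,m}=1-b(x_{l,m+1}-x_{l,m})$.

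\emph{Step 2: rewriting $\Gamma$.} By Step 1, the first equation of (\ref{fulldisper}) becomes
\[
\frac{\Gamma_{l,m}}{\Gamma_{l,m-1}}=\frac{1-b(x_{l-1,m+1}-x_{l-1,m})}{1-b(x_{l,m+1}-x_{l,m})}.
\]

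\emph{Step 3: eliminating $\Gamma$.} In the second equation of (\ref{fulldisper}), shift $l\mapsto l-1$ to get
\[
u_{l,m+1}+u_{l-1,m}=\frac{1+ab}{1-ab}\,\Gamma_{l,m}\,(u_{l,m}+u_{l-1,m+1}).
\]
Applying the further shift $m\mapsto m-1$ gives
\[
u_{l,m}+u_{l-1,m-1}=\frac{1+ab}{1-ab}\,\Gamma_{l,m-1}\,(u_{l,m-1}+u_{l-1,m}).
\]
Dividing these two relations cancels the constant $(1+ab)/(1-ab)$ and leaves
\[
\frac{u_{l,m+1}+u_{l-1,m}}{u_{l,m}+u_{l-1,m+1}}=\frac{\Gamma_{l,m}}{\Gamma_{l,m-1}}\,\frac{u_{l,m}+u_{l-1,m-1}}{u_{l,m-1}+u_{l-1,m}}.
\]
Inserting the expression from Step 2 gives the first equation of (\ref{fulldis}), and the $v^{(\nu)}$ equation follows identically.

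No step is a genuine obstacle; the only care needed is with the index bookkeeping in the shifts and with the nonvanishing of the denominators, which is assumed in the statement.
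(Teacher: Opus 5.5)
Your proposal is correct and follows the paper's proof. The mesh equation comes from dividing the first bilinear equation by $f_{l,m}^2$ and comparing with (\ref{fulldis_x}), and the field equations come from combining the first two equations of (\ref{fulldisper}) with that mesh equation. You only spell out what the paper leaves implicit: the identity $1+b^2\sum_{\mu,\nu}c_{\mu\nu}u^{(\mu)}_{l,m}v^{(\nu)}_{l,m}=1-b(x_{l,m+1}-x_{l,m})$, and the elimination of $\Gamma$ by writing the second equation of (\ref{fulldisper}) at $(l-1,m)$ and $(l-1,m-1)$ and taking the quotient.
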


\begin{proof}
We first derive the evolution equation for the physical mesh points.
From (\ref{eq_4.32}) and the dependent-variable transformation (\ref{fulltrans}), we have
\begin{eqnarray}
\frac{1}{2b^2}\left(1-\frac{f_{l,m+1}f_{l,m-1}}{f_{l,m}^2}\right)=-\frac{1}{2}\sum_{\mu\in I}\sum_{\nu\in J}
c_{\mu\nu}u_{l,m}^{(\mu)}v_{l,m}^{(\nu)}.
\end{eqnarray}
Combining this relation with (\ref{fulldis_x}) gives
\begin{eqnarray}
\frac{x_{l,m+1}-x_{l,m}}{2b}=-\frac{1}{2}\sum_{\mu\in I}\sum_{\nu\in J}
c_{\mu\nu}u_{l,m}^{(\mu)}v_{l,m}^{(\nu)},
\end{eqnarray}
which is the third equation of (\ref{fulldis}).

We next derive the first equation. Using the first and second equations of (\ref{fulldisper}) together with the third equation of (\ref{fulldis}), we obtain
\begin{eqnarray}
\displaystyle\frac{u_{l,m+1}^{(\mu)}+u_{l-1,m}^{(\mu)}}{u_{l,m}^{(\mu)}+u_{l-1,m+1}^{(\mu)}}=\displaystyle\frac{1-b(x_{l-1,m+1}-x_{l-1,m})}{1-b(x_{l,m+1}-x_{l,m})} \displaystyle\frac{u_{l,m}^{(\mu)}+u_{l-1,m-1}^{(\mu)}}{u_{l,m-1}^{(\mu)}+u_{l-1,m}^{(\mu)}},
\end{eqnarray}
which is the first equation of (\ref{fulldis}). The second equation follows
in the same way.
\end{proof}
Once the initial mesh points $x_{l,0}$ are prescribed, the third equation of (\ref{fulldis}) determines the subsequent evolution of the mesh points. 
Therefore, for $m\geq1$, the mesh points are reconstructed as
\begin{eqnarray}
\fl{x_{l,m}=x_{l,0}-b\sum_{n=0}^{m-1}\sum_{\mu\in I}\sum_{\nu\in J}
c_{\mu\nu}u_{l,n}^{(\mu)}v_{l,n}^{(\nu)}=x_{l,0}+\frac{1}{b}\sum_{n=0}^{m-1}
\left(1-\frac{f_{l,n+1}f_{l,n-1}}{f_{l,n}^{2}}\right).}
\label{fulldisxsolution}
\end{eqnarray}
For negative $m$, the corresponding oriented sum is used.

\begin{rem}
\label{rem4.4}
Set $T=2bm$ and suppose that the lattice variables admit smooth interpolants
in $T$. As $b\rightarrow0$, we have
\begin{equation}
\begin{array}{l}
x_{l,m+1}-x_{l,m}=2bx_{l,T}+2b^2x_{l,TT}+O(b^3),\\
u^{(\mu)}_{l,m\pm1}=u^{(\mu)}_l\pm 2b\,u^{(\mu)}_{l,T}
+2b^2u^{(\mu)}_{l,TT}+O(b^3),
\end{array}
\end{equation}
and similarly for $v^{(\nu)}$. The third equation of (\ref{fulldis}) then
reduces to $x_{l,T}=-\frac{1}{2}\sum_{\mu,\nu}c_{\mu\nu}
u_l^{(\mu)}v_l^{(\nu)}$. The first two equations give
\begin{eqnarray*}
\frac{d}{dT}\left[
\frac{u_{l,T}^{(\mu)}-u_{l-1,T}^{(\mu)}}
{u_l^{(\mu)}+u_{l-1}^{(\mu)}}
-\frac{x_l-x_{l-1}}{2}\right]=0,\\
\frac{d}{dT}\left[
\frac{v_{l,T}^{(\nu)}-v_{l-1,T}^{(\nu)}}
{v_l^{(\nu)}+v_{l-1}^{(\nu)}}
-\frac{x_l-x_{l-1}}{2}\right]=0.
\end{eqnarray*}
Consequently, if the initial data satisfy the semi-discrete compatibility
conditions obtained by setting both expressions in brackets to zero, the
limit is the semi-discrete MCSP equation (\ref{semiMCSP}). Taking further
the limit $a\rightarrow0$ recovers the continuous MCSP equation (\ref{MCSP}).
\end{rem}

\begin{eg}
\label{eg4.5}
We now present explicit one- and two-soliton solutions of the fully discrete
MCSP equation (\ref{fulldis}). The fields are given by
$u_{l,m}^{(\mu)}=g_{l,m}^{(\mu)}/f_{l,m}$ and
$v_{l,m}^{(\nu)}=h_{l,m}^{(\nu)}/f_{l,m}$, while the mesh points are recovered
from (\ref{fulldisxsolution}) after the initial mesh $x_{l,0}$ is prescribed.\\

\noindent
One-soliton: Setting $M=2$ and $N=1$ in
(\ref{fulldis_solb})--(\ref{fulldis_sole}) gives the following $\tau$-functions:
\begin{eqnarray}
\fl{f_{l,m}={\rm Pf}(1,2,2^{\prime},1^{\prime})_{l,m}=1+\displaystyle\frac{(p_{1}^{2}-b^{2})(p_{2}^{2}-b^{2})}{4(p_{1}+p_{2})^{2}}\varphi_{1}\varphi_{2}\sum_{\mu\in I}\sum_{\nu\in J}c_{\mu\nu}a_{1}^{(\mu)}b_{2}^{(\nu)},}
\end{eqnarray}
\begin{eqnarray}
\fl{g^{(\mu)}_{l,m}={\rm Pf}(a^{(\mu)},d_{0},1,2,2^{\prime},1^{\prime})_{l,m}=a_{1}^{(\mu)}\varphi_{1},\qquad \mu\in I,}
\end{eqnarray}
\begin{eqnarray}
\fl{h^{(\nu)}_{l,m}={\rm Pf}(b^{(\nu)},d_{0},1,2,2^{\prime},1^{\prime})_{l,m}=b_{2}^{(\nu)}\varphi_{2},\qquad \nu\in J,}
\end{eqnarray}
where $\varphi_{i}=B_{i}\left(\displaystyle\frac{1+ap_{i}}{1-ap_{i}}\right)^{l}\left(\displaystyle\frac{p_{i}+b}{p_{i}-b}\right)^{m}$ and $i=1,2$.\\
\noindent
Two-soliton: Setting $M=4$ and $N=2$ in
(\ref{fulldis_solb})--(\ref{fulldis_sole}) gives the following $\tau$-functions. 
To express these $\tau$-functions in a compact form, we introduce the notation:
\begin{eqnarray*}
\fl C_{ij}&:=&\sum_{\mu\in I}\sum_{\nu\in J}c_{\mu\nu}a_i^{(\mu)}b_j^{(\nu)},\qquad
R_{ij}:=\frac{p_i-p_j}{p_i+p_j},\\
\fl \tilde{C}_{ij}&:=&\frac{(p_i^2-b^2)(p_j^2-b^2)}{4(p_i+p_j)^2}C_{ij},\qquad 
\Delta:=\frac{1}{16}\prod_{i=1}^{4}(p_i^2-b^2)
\left|
\begin{array}{rr}
\displaystyle\frac{C_{14}}{p_{1}^{2}-p_{4}^{2}} &
\displaystyle\frac{C_{24}}{p_{2}^{2}-p_{4}^{2}}\\[15pt]
\displaystyle\frac{C_{13}}{p_{1}^{2}-p_{3}^{2}} &
\displaystyle\frac{C_{23}}{p_{2}^{2}-p_{3}^{2}}
\end{array}
\right|.
\end{eqnarray*}
\begin{eqnarray}
\fl f_{l,m}&=&{\rm Pf}(1,2,3,4,4^{\prime},3^{\prime},2^{\prime},1^{\prime})_{l,m}\nonumber\\
\fl &=&1+\tilde{C}_{14}\varphi_1\varphi_4+\tilde{C}_{13}\varphi_1\varphi_3
+\tilde{C}_{24}\varphi_2\varphi_4+\tilde{C}_{23}\varphi_2\varphi_3\nonumber\\
\fl &&+\Delta\left(\prod_{1\leq i<j\leq4}R_{ij}\right)
\varphi_1\varphi_2\varphi_3\varphi_4.
\end{eqnarray}
\begin{eqnarray}
\fl g^{(\mu)}_{l,m}&=&{\rm Pf}(a^{(\mu)},d_{0},1,2,3,4,4^{\prime},3^{\prime},2^{\prime},1^{\prime})_{l,m}\nonumber\\
\fl &=&a_{1}^{(\mu)}\left(\varphi_1+R_{12}R_{14}\tilde{C}_{24}\varphi_1\varphi_2\varphi_4
+R_{12}R_{13}\tilde{C}_{23}\varphi_1\varphi_2\varphi_3\right)\nonumber\\
\fl &&+a_{2}^{(\mu)}\left(\varphi_2-R_{12}R_{24}\tilde{C}_{14}\varphi_1\varphi_2\varphi_4
-R_{12}R_{23}\tilde{C}_{13}\varphi_1\varphi_2\varphi_3\right),\qquad \mu\in I.
\end{eqnarray}
\begin{eqnarray}
\fl h^{(\nu)}_{l,m}&=&{\rm Pf}(b^{(\nu)},d_{0},1,2,3,4,4^{\prime},3^{\prime},2^{\prime},1^{\prime})_{l,m}\nonumber\\
\fl &=&b_{3}^{(\nu)}\left(\varphi_3-R_{23}R_{34}\tilde{C}_{24}\varphi_2\varphi_3\varphi_4
-R_{13}R_{34}\tilde{C}_{14}\varphi_1\varphi_3\varphi_4\right)\nonumber\\
\fl &&+b_{4}^{(\nu)}\left(\varphi_4+R_{24}R_{34}\tilde{C}_{23}\varphi_2\varphi_3\varphi_4
+R_{14}R_{34}\tilde{C}_{13}\varphi_1\varphi_3\varphi_4\right),\qquad \nu\in J.
\end{eqnarray}
where $\varphi_{i}=B_{i}\left(\displaystyle\frac{1+ap_{i}}{1-ap_{i}}\right)^{l}\left(\displaystyle\frac{p_{i}+b}{p_{i}-b}\right)^{m}$ and $i=1,2,3,4$.\\
\end{eg}

\begin{rem}
\label{rem4.6}
For real $a$ and $b$ and a real initial mesh, imposing $I=J$,
$c_{\mu\nu}=\delta_{\mu\nu}$, and $v^{(\mu)}=(u^{(\mu)})^*$ on the fully
discrete MCSP equation (\ref{fulldis}) yields the fully discrete CCSP equation
\begin{eqnarray}
\left\{
\begin{array}{ll}
\displaystyle\frac{u_{l,m}^{(\mu)}+u_{l-1,m-1}^{(\mu)}}{u_{l,m-1}^{(\mu)}+u_{l-1,m}^{(\mu)}}\displaystyle\frac{1-b(x_{l-1,m+1}-x_{l-1,m})}{1-b(x_{l,m+1}-x_{l,m})} =\displaystyle\frac{u_{l,m+1}^{(\mu)}+u_{l-1,m}^{(\mu)}}{u_{l,m}^{(\mu)}+u_{l-1,m+1}^{(\mu)}},\\
\displaystyle\frac{x_{l,m+1}-x_{l,m}}{2b}=-\displaystyle\frac{1}{2}\displaystyle\sum_{\mu\in I}|u_{l,m}^{(\mu)}|^{2}.
\end{array}
\right.
\label{CCSPfulldis}
\end{eqnarray}
where $(u^{(\mu)})^*$ denotes the complex conjugate of $u^{(\mu)}$.
Its $N$-soliton solutions are obtained from the Pfaffian solutions (\ref{fulldis_solb})--(\ref{fulldis_sole}) under the reduction in Remark
\ref{rem2.4}. Numerical examples are presented in Section \ref{sec_nc}.
\end{rem}
\end{section}

\begin{section}{Numerical simulations of the MCSP and CCSP equations}
\label{sec_nc}
In this section, we numerically test the fully discrete MCSP and CCSP
equations. 
The continuous exact solutions (\ref{con_ex1})--(\ref{con_ex4}) 
are used to generate the initial data for the fully discrete schemes. 
In each figure, the blue dotted curve represents the numerical solution, 
the yellow solid curve represents the exact solution, 
and the red dots indicate the distribution of the adaptive mesh points. 
In all simulations, the number of mesh intervals is $N_x=8000$, the
computational domain width is $D=80$, and the time step is $\Delta t=0.001$.
Since $2a$ and $2b$ denote the spatial and temporal lattice intervals,
respectively, the numerical parameters satisfy
\begin{eqnarray}
\Delta X = 2a=\frac{D}{N_x}, \qquad \Delta t = 2b,
\end{eqnarray}
where $a=0.005$ and $b=0.0005$. 
The computational grid is defined by 
\begin{eqnarray}
X_{l}=X_{0}+2al,\qquad X_{0}=-\frac{D}{2},\qquad T_{m}=2bm=m\Delta t,
\end{eqnarray}
where $l=0,\ldots,N_x,$ and $m=0,1,\ldots$.
The initial mesh points $x_{l,0}$ are determined from the hodograph
transformation
\begin{eqnarray}
x_{l,0}=X_l-2\frac{f_T(X_{l},0)}{f(X_{l},0)}.
\end{eqnarray}
Because the first two equations of the fully discrete system
(\ref{fulldis}) involve the previous time level $m-1$, the field variables are initialized on two time levels. 
Specifically, their values at $m=0$ and $m=1$ are obtained from the
corresponding continuous exact solutions at $T=0$ and $T=\Delta t$,
respectively:
\begin{eqnarray}
u_{l,0}^{(\mu)}=u_{\rm ex}^{(\mu)}(X_l,0),\qquad
v_{l,0}^{(\nu)}=v_{\rm ex}^{(\nu)}(X_l,0),
\nonumber\\
u_{l,1}^{(\mu)}=u_{\rm ex}^{(\mu)}(X_l,\Delta t),\qquad
v_{l,1}^{(\nu)}=v_{\rm ex}^{(\nu)}(X_l,\Delta t).
\end{eqnarray}
Here $u_{\rm ex}^{(\mu)}(X,T)$ and $v_{\rm ex}^{(\nu)}(X,T)$ denote the
continuous exact solutions given in (\ref{con_ex1})--(\ref{con_ex4}).
Once $x_{l,0}$,
$u_{l,0}^{(\mu)}$, and $v_{l,0}^{(\nu)}$ have been specified, 
the mesh points at $m=1$ can be determined 
from the discrete mesh equation. Thus we set
\begin{eqnarray}
x_{l,1}=x_{l,0}-\frac{\Delta t}{2}\sum_{\mu\in I}\sum_{\nu\in J}c_{\mu\nu}u_{l,0}^{(\mu)}v_{l,0}^{(\nu)} .
\end{eqnarray}
This choice makes the initial mesh consistent with the fully discrete
time evolution

After initializing the field variables at time levels $m=0$ and $m=1$, at each time step we
first update the mesh points at the new time level using the third equation
of (\ref{fulldis}) and then compute $u^{(\mu)}$ and $v^{(\nu)}$ from its
first two equations.
To make their linear structure explicit, let $w$ denote either
$u^{(\mu)}$ or $v^{(\nu)}$, and define the known coefficient
\begin{eqnarray}
\fl A_{l,m}[w]:=
\frac{1-b(x_{l-1,m+1}-x_{l-1,m})}
{1-b(x_{l,m+1}-x_{l,m})}
\frac{w_{l,m}+w_{l-1,m-1}}
{w_{l,m-1}+w_{l-1,m}}.
\nonumber
\end{eqnarray}
After the mesh points at time level $m+1$ have been updated, all quantities
in $A_{l,m}[w]$ are known. The corresponding equation in
(\ref{fulldis}) is then equivalent to
\begin{eqnarray}
\fl w_{l,m+1}-A_{l,m}[w]w_{l-1,m+1}
=A_{l,m}[w]w_{l,m}-w_{l-1,m}.
\nonumber
\end{eqnarray}
Thus, for each $\mu\in I$ and $\nu\in J$, the first and second equations of
(\ref{fulldis}) yield bidiagonal linear systems for $u^{(\mu)}$ and
$v^{(\nu)}$, respectively, at time level $m+1$.
Under the right-boundary conditions used here, the right-to-left recursive
calculation is precisely backward substitution for these systems, so that
there is no need to form their inverse matrices explicitly. Because the
computational domain is sufficiently large, we impose
\begin{eqnarray}
u^{(\mu)}_{N_x,m}=v^{(\nu)}_{N_x,m}=0,\qquad m \geq 2.
\end{eqnarray}
The values at the remaining grid points are then determined successively
from $l=N_x-1$ to $l=0$, and this procedure is repeated until the final
time is reached.
For a general coupling matrix $c_{\mu\nu}$, with $K=|I|$ and $L=|J|$, 
the evaluation of the mesh equation requires $O(KL N_x)$ operations per time step. 
The subsequent updates of the $K+L$ field components consist of $K+L$ bidiagonal backward substitutions, each requiring $O(N_x)$ operations. 
Hence, the total computational cost per time step is $O\bigl((KL+K+L)N_x\bigr)$.
Since the numbers of components $K$ and $L$ are fixed, this reduces to $O(N_x)$. 
Thus, the computational cost scales linearly with the number of spatial mesh intervals $N_x$. No explicit matrix inversion is required.

Under periodic boundary conditions, the coupling between the two endpoints
results in cyclic bidiagonal linear systems. The field values at the new time level can
then be calculated simultaneously, provided that the resulting coefficient
matrices are nonsingular and the denominators in (\ref{fulldis}) do not
vanish. Thus the same formulation can also be adapted to periodic boundary
conditions.

To quantify the error in each displayed profile, let $q_{l,m}$ denote the corresponding
real-valued quantity. Thus, $q_{l,m}$ is $u_{l,m}^{(\mu)}$ or $v_{l,m}^{(\nu)}$
for a real-valued solution, and is $|u_{l,m}^{(\mu)}|$, ${\rm Re}(u_{l,m}^{(\mu)})$,
or ${\rm Im}(u_{l,m}^{(\mu)})$ for a complex-valued solution. 
The corresponding exact values are evaluated at 
the same computational grid points:
\begin{eqnarray}
q^{e}_{l,m}:=q_{\rm{ex}}(X_l,T_m).
\end{eqnarray}
To examine the accuracy near the peak of the displayed profile, define the set of grid indices
\begin{eqnarray}
\mathcal{I}_m(q):=\left\{l\mid 0\leq l\leq N_x,\ 
|q^e_{l,m}|\geq0.9\max_{0\leq j\leq N_x}|q^e_{j,m}|\right\}.
\end{eqnarray}
At each time level $m$, we define the maximum relative error in the displayed
profile by
\begin{eqnarray}
{\rm err}_{m}(q):=\max_{l\in\mathcal{I}_m(q)}
\left|\frac{q_{l,m}^{e}-q_{l,m}^{n}}{q_{l,m}^{e}}\right|.
\end{eqnarray}
The value reported in each figure caption is the maximum over all time
levels used in the simulation:
\begin{eqnarray}
{\rm maxerr}(q):=\max_{2\leq m\leq N_{t}}{\rm err}_{m}(q),
\end{eqnarray}
where $N_{t}$ denotes the final time level. Here $q^n_{l,m}$ and
$q^e_{l,m}$ denote the numerical and exact values of the displayed
profile, respectively.\\

\noindent 
Two-component SP (2-SP) equation:\\
\noindent
Setting $I=J=\{1\}$ and $c_{11}=1$ in (\ref{MCSP}) reduces the MCSP equation
to the 2-SP equation
\begin{eqnarray}
\left\{
\begin{array}{ll}
u^{(1)}_{xt}=u^{(1)}+\displaystyle\frac{1}{2}\left(u^{(1)}v^{(1)}u^{(1)}_{x}\right)_{x},\\
\\
v^{(1)}_{xt}=v^{(1)}+\displaystyle\frac{1}{2}\left(u^{(1)}v^{(1)}v^{(1)}_{x}\right)_{x}.
\end{array}
\right.
\label{2SP}
\end{eqnarray}
The corresponding fully discrete analogue is
\begin{eqnarray}
\left\{
\begin{array}{lll}
\displaystyle\frac{u^{(1)}_{l,m+1}+u^{(1)}_{l-1,m}}{u^{(1)}_{l,m}+u^{(1)}_{l-1,m+1}}=\displaystyle\frac{1-b(x_{l-1,m+1}-x_{l-1,m})}{1-b(x_{l,m+1}-x_{l,m})} \displaystyle\frac{u^{(1)}_{l,m}+u^{(1)}_{l-1,m-1}}{u^{(1)}_{l,m-1}+u^{(1)}_{l-1,m}},\\
\displaystyle\frac{v^{(1)}_{l,m+1}+v^{(1)}_{l-1,m}}{v^{(1)}_{l,m}+v^{(1)}_{l-1,m+1}}=\displaystyle\frac{1-b(x_{l-1,m+1}-x_{l-1,m})}{1-b(x_{l,m+1}-x_{l,m})} \displaystyle\frac{v^{(1)}_{l,m}+v^{(1)}_{l-1,m-1}}{v^{(1)}_{l,m-1}+v^{(1)}_{l-1,m}},\\
\displaystyle\frac{x_{l,m+1}-x_{l,m}}{2b}=-\displaystyle\frac{1}{2}u^{(1)}_{l,m}v^{(1)}_{l,m}.
\end{array}
\right.
\label{2SPfulldis}
\end{eqnarray}
We numerically test (\ref{2SPfulldis}) using the one- and two-soliton
solutions obtained from (\ref{con_ex1})--(\ref{con_ex4}) by setting
$I=J=\{1\}$ and $c_{11}=1$. 
The continuous one- and two-soliton solutions used 
to generate the initial data are given below.

\noindent
One-soliton:
 \begin{eqnarray}
\fl{u^{(1)}=\frac{g^{(1)}}{f},\quad v^{(1)}=\frac{h^{(1)}}{f},\quad x=X-2(\log{f})_{T},\quad t=T,}\nonumber\\
\fl{f=1+\displaystyle\frac{1}{4}\left(\displaystyle\frac{p_{1}p_{2}}{p_{1}+p_{2}}\right)^{2}\varphi_{1}\varphi_{2}a^{(1)}_{1}b^{(1)}_{2},\quad g^{(1)}=a^{(1)}_{1}\varphi_{1},\quad h^{(1)}=b^{(1)}_{2}\varphi_{2},}\label{eq_5.4}
\end{eqnarray}
where $\varphi_{i}=B_{i}e^{p_{i}X+p_{i}^{-1}T}$ and $i=1,2$.

\noindent
Two-soliton:
Using the notation in (\ref{con_compact_coefficients}) with
$C_{ij}=a_i^{(1)}b_j^{(1)}$, 
the two-soliton solution can be written as
 \begin{eqnarray}
 \fl u^{(1)}&=&\frac{g^{(1)}}{f},\quad v^{(1)}=\frac{h^{(1)}}{f},\nonumber\\
\fl x&=&X-2(\log{f})_{T},\quad t=T,\nonumber\\
\fl f&=&1+\tilde{C}_{14}\varphi_1\varphi_4+\tilde{C}_{13}\varphi_1\varphi_3
+\tilde{C}_{24}\varphi_2\varphi_4+\tilde{C}_{23}\varphi_2\varphi_3\nonumber\\
\fl &&+\Delta\left(\prod_{1\leq i<j\leq4}R_{ij}\right)
\varphi_1\varphi_2\varphi_3\varphi_4,\nonumber\\
\fl g^{(1)}&=&a_{1}^{(1)}\left(\varphi_1+R_{12}R_{14}\tilde{C}_{24}\varphi_1\varphi_2\varphi_4
+R_{12}R_{13}\tilde{C}_{23}\varphi_1\varphi_2\varphi_3\right)\nonumber\\
\fl &&+a_{2}^{(1)}\left(\varphi_2-R_{12}R_{24}\tilde{C}_{14}\varphi_1\varphi_2\varphi_4
-R_{12}R_{23}\tilde{C}_{13}\varphi_1\varphi_2\varphi_3\right),\nonumber\\
\fl h^{(1)}&=&b_{3}^{(1)}\left(\varphi_3-R_{23}R_{34}\tilde{C}_{24}\varphi_2\varphi_3\varphi_4
-R_{13}R_{34}\tilde{C}_{14}\varphi_1\varphi_3\varphi_4\right)\nonumber\\
\fl &&+b_{4}^{(1)}\left(\varphi_4+R_{24}R_{34}\tilde{C}_{23}\varphi_2\varphi_3\varphi_4
+R_{14}R_{34}\tilde{C}_{13}\varphi_1\varphi_3\varphi_4\right),
\label{eq_5.5}
\end{eqnarray}
where $\varphi_{i}=B_{i}e^{p_{i}X+p_{i}^{-1}T}$ and $i=1,2,3,4$.

Figures \ref{2SP_1_u1} and \ref{2SP_1_v1} show the numerical
$u^{(1)}$- and $v^{(1)}$-profiles for the one-soliton solution with
$p_{1}=0.95$, $p_{2}=1.1$, $a^{(1)}_{1}=0.5$, $b^{(1)}_{2}=20$, and
$B_{1}=B_{2}={\rm exp}(-5)$. Figures \ref{2SP_2_u1} and \ref{2SP_2_v1} show
the corresponding profiles for the two-soliton solution with $p_{1}=0.95$,
$p_{2}=1$, $p_{3}=1.1$, $p_{4}=1.2$, $a^{(1)}_{1}=0.5$, $a^{(1)}_{2}=1$,
$b^{(1)}_{3}=20$, $b^{(1)}_{4}=40$, and
$B_{1}=B_{2}=B_{3}=B_{4}={\rm exp}(-5)$. In both cases, the solitons propagate
to the left along the $x$-axis.\\


\begin{figure}[htbp]
 \begin{tabular}{cc}
      \begin{minipage}[t]{0.47\hsize}
       \centering
        \includegraphics[keepaspectratio, scale=0.33]{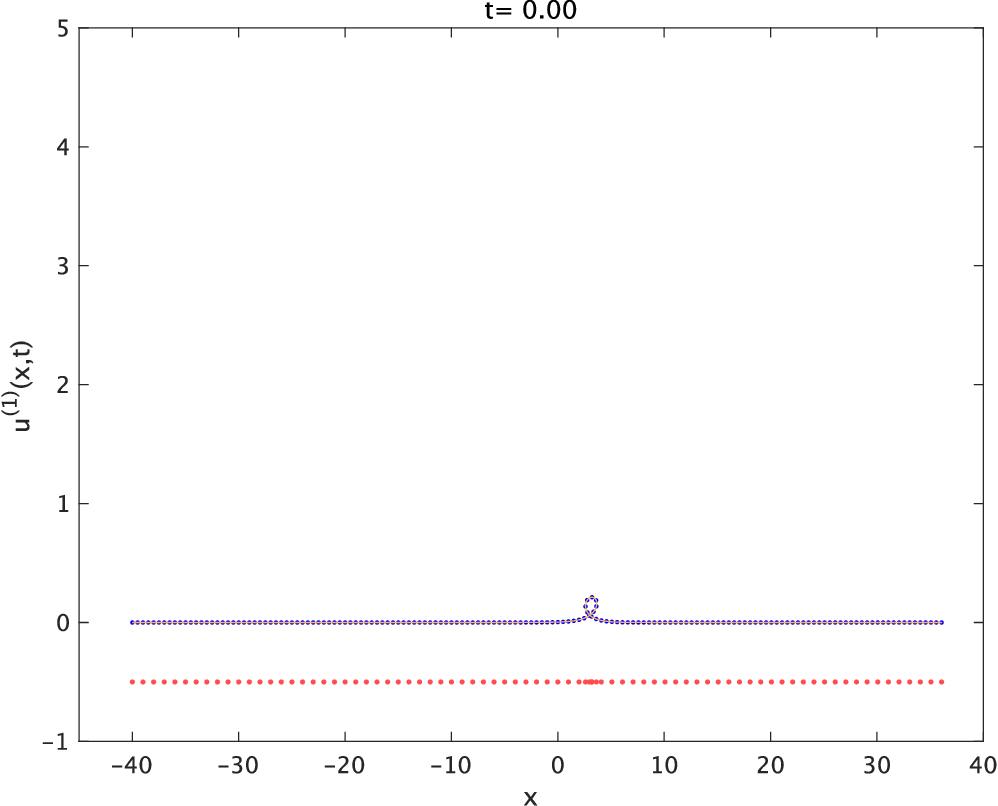}
      \end{minipage} &
      \begin{minipage}[t]{0.47\hsize}
        \centering
        \includegraphics[keepaspectratio, scale=0.33]{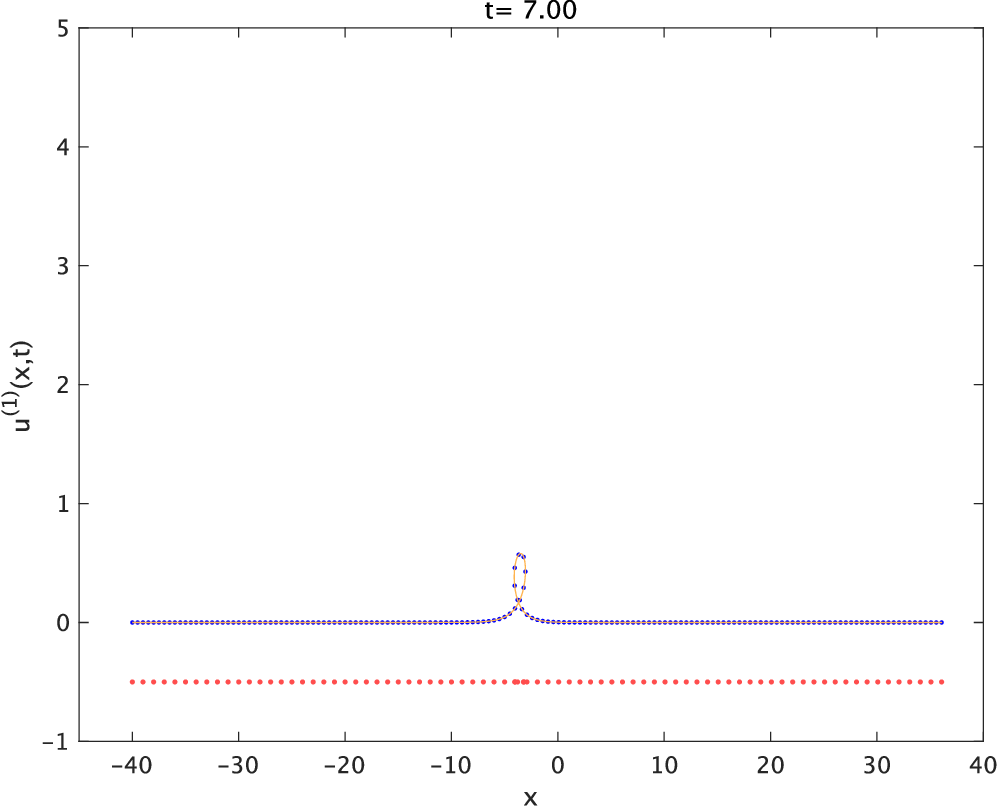}
      \end{minipage}\\ 
  
      \begin{minipage}[t]{0.47\hsize}
        \centering
        \includegraphics[keepaspectratio, scale=0.33]{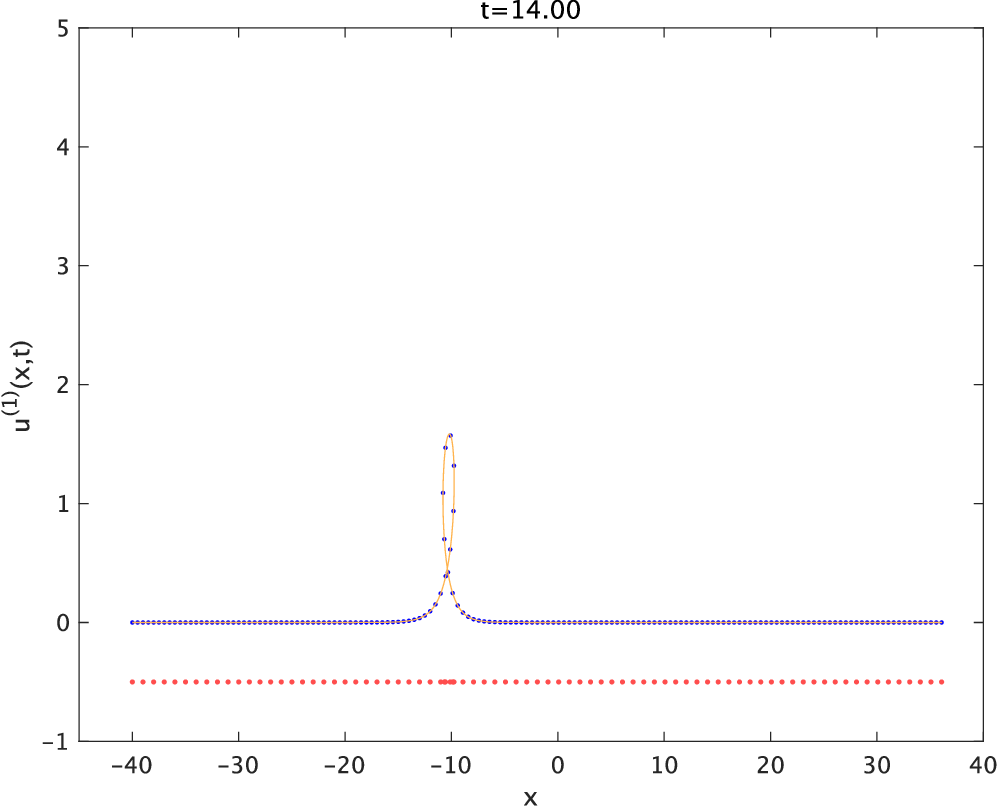}
      \end{minipage} &
      \begin{minipage}[t]{0.47\hsize}
        \centering
        \includegraphics[keepaspectratio, scale=0.33]{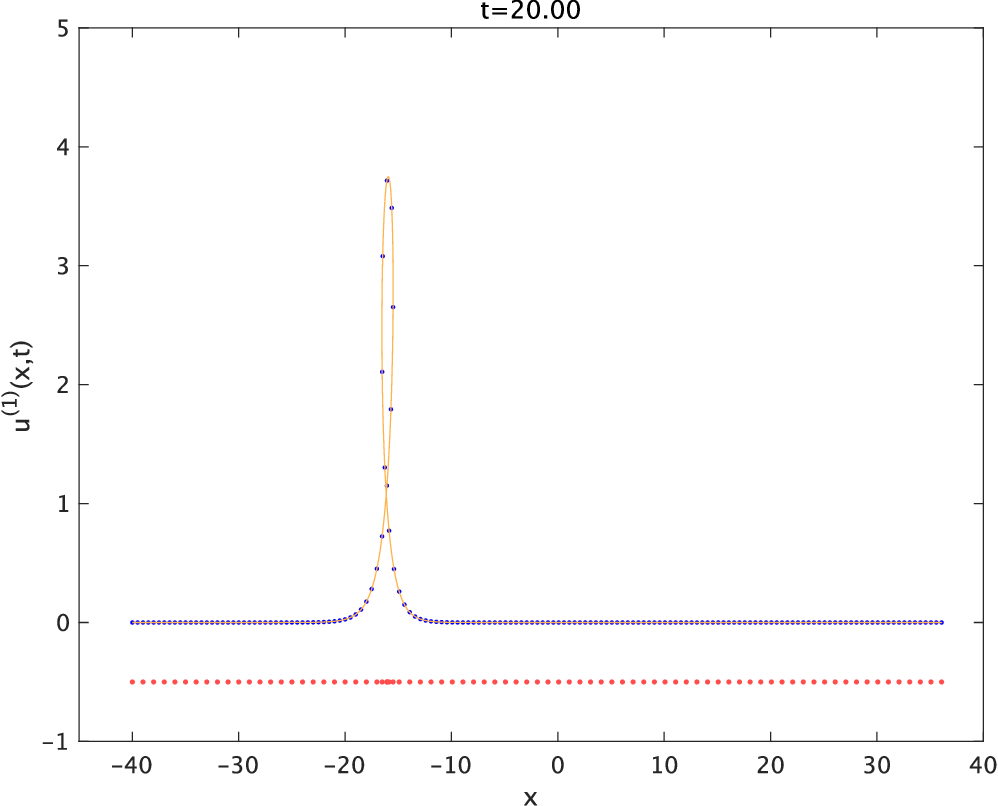}
      \end{minipage} 
       \end{tabular}
     \caption{Numerical simulation of the $u^{(1)}$-profile of 
     the one-soliton solution of the 2-SP equation. ${\rm maxerr}(u^{(1)})=9.33\times 10^{-4}$.}
              \label{2SP_1_u1}
  \end{figure}


\begin{figure}[htbp]
 \begin{tabular}{cc}
      \begin{minipage}[t]{0.47\hsize}
       \centering
        \includegraphics[keepaspectratio, scale=0.33]{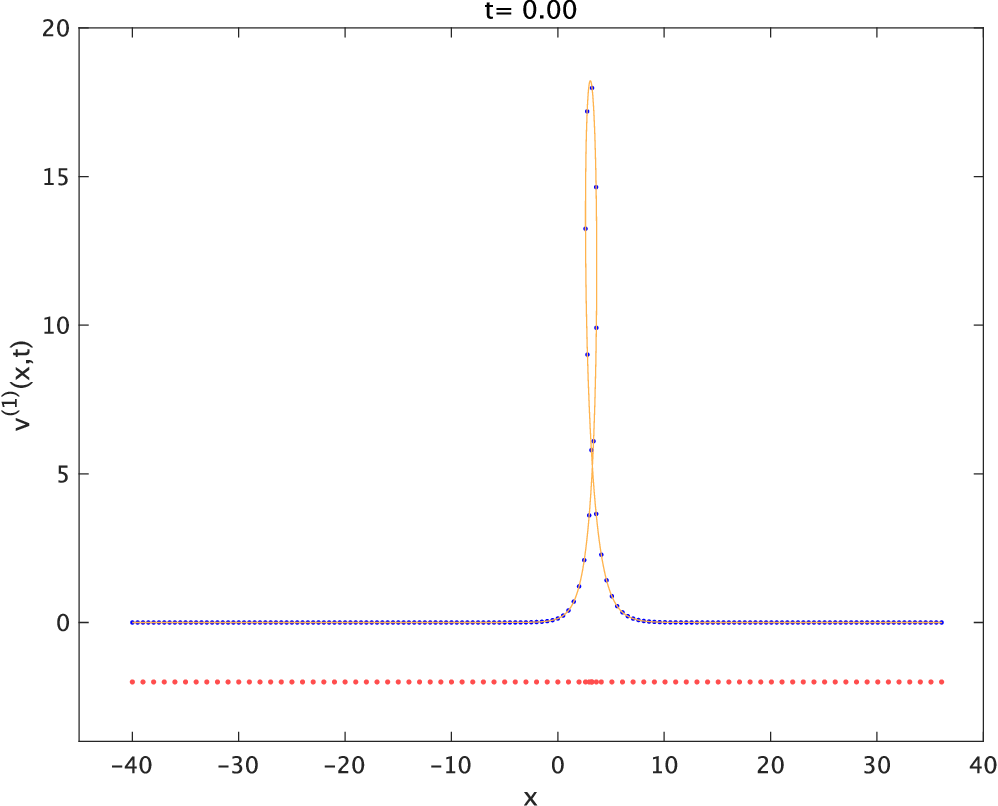}
      \end{minipage} &
      \begin{minipage}[t]{0.47\hsize}
        \centering
        \includegraphics[keepaspectratio, scale=0.33]{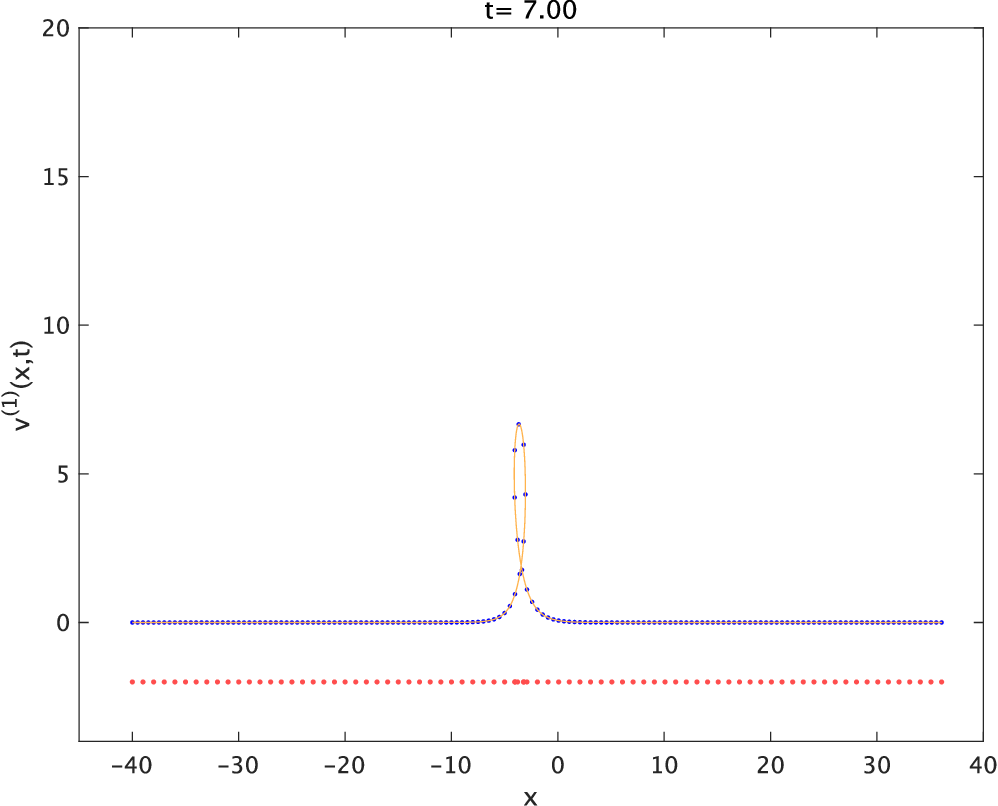}
      \end{minipage}\\ 
  
      \begin{minipage}[t]{0.47\hsize}
        \centering
        \includegraphics[keepaspectratio, scale=0.33]{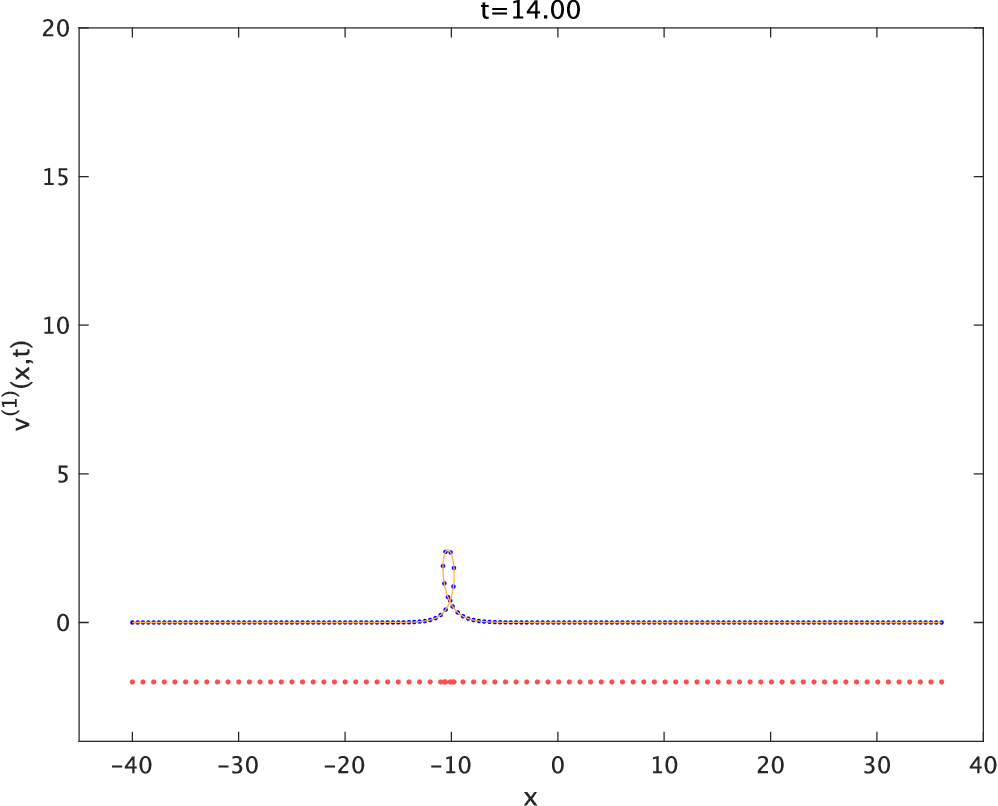}
      \end{minipage} &
      \begin{minipage}[t]{0.47\hsize}
        \centering
        \includegraphics[keepaspectratio, scale=0.33]{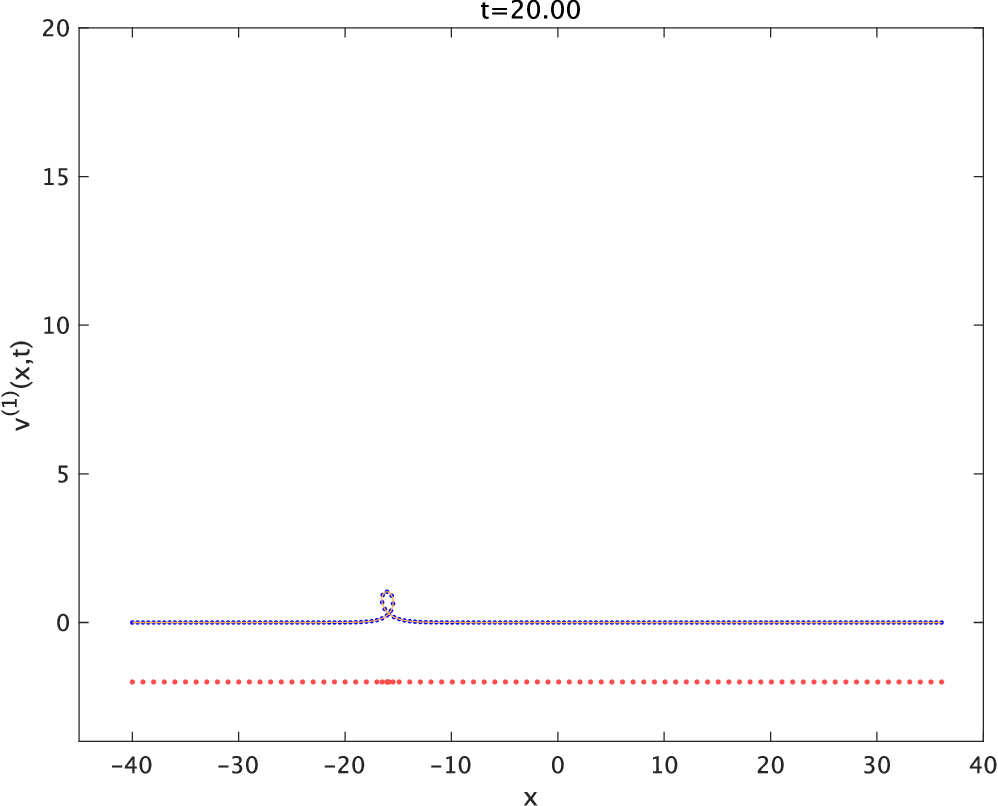}
      \end{minipage} 
       \end{tabular}
     \caption{Numerical simulation of the $v^{(1)}$-profile of 
     the one-soliton solution of the 2-SP equation. ${\rm maxerr}(v^{(1)})=9.80\times 10^{-4}$.}
              \label{2SP_1_v1}
  \end{figure}


\begin{figure}[htbp]
 \begin{tabular}{cc}
      \begin{minipage}[t]{0.47\hsize}
       \centering
        \includegraphics[keepaspectratio, scale=0.33]{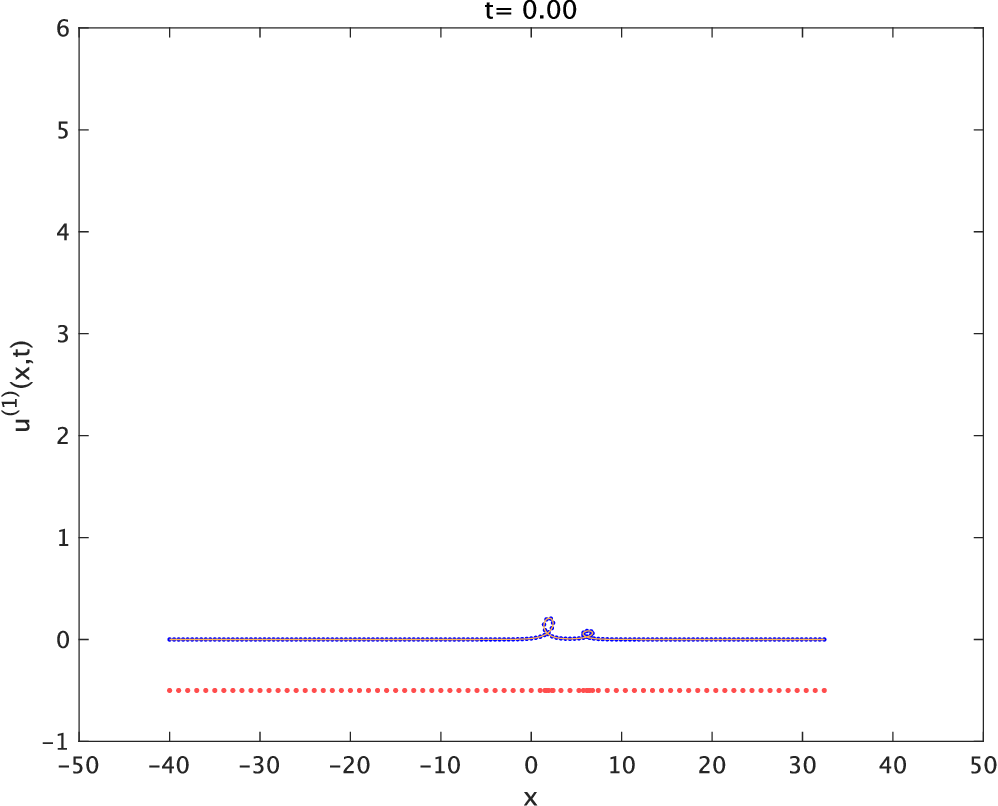}
      \end{minipage} &
      \begin{minipage}[t]{0.47\hsize}
        \centering
        \includegraphics[keepaspectratio, scale=0.33]{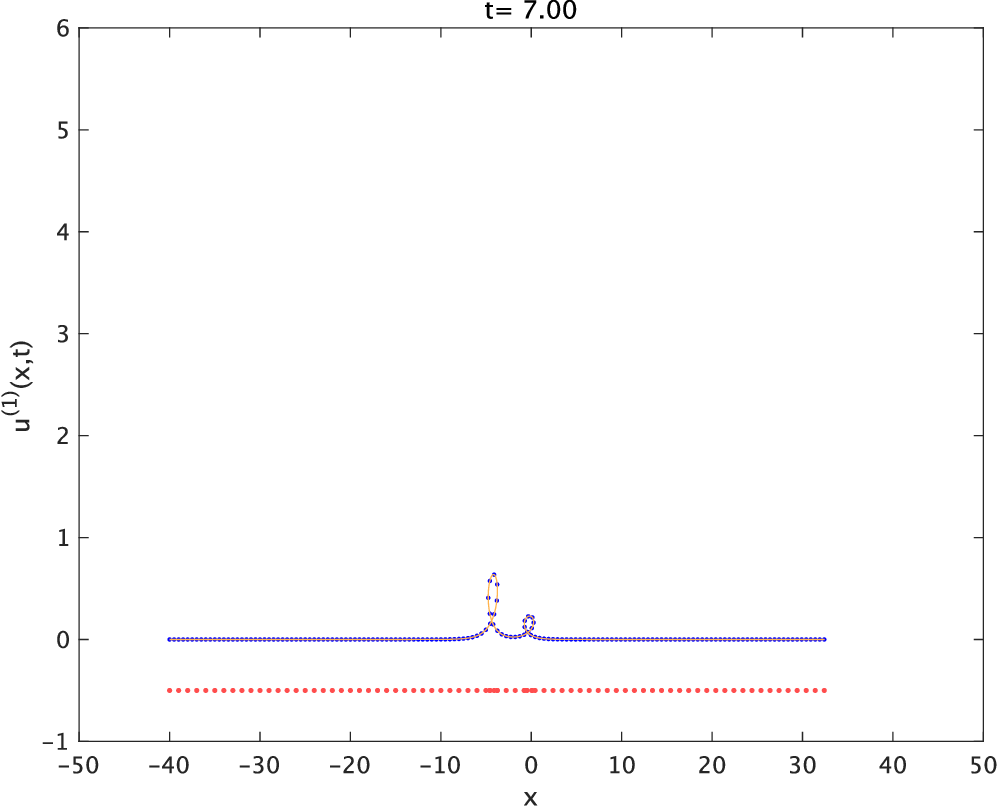}
      \end{minipage}\\ 
  
      \begin{minipage}[t]{0.47\hsize}
        \centering
        \includegraphics[keepaspectratio, scale=0.33]{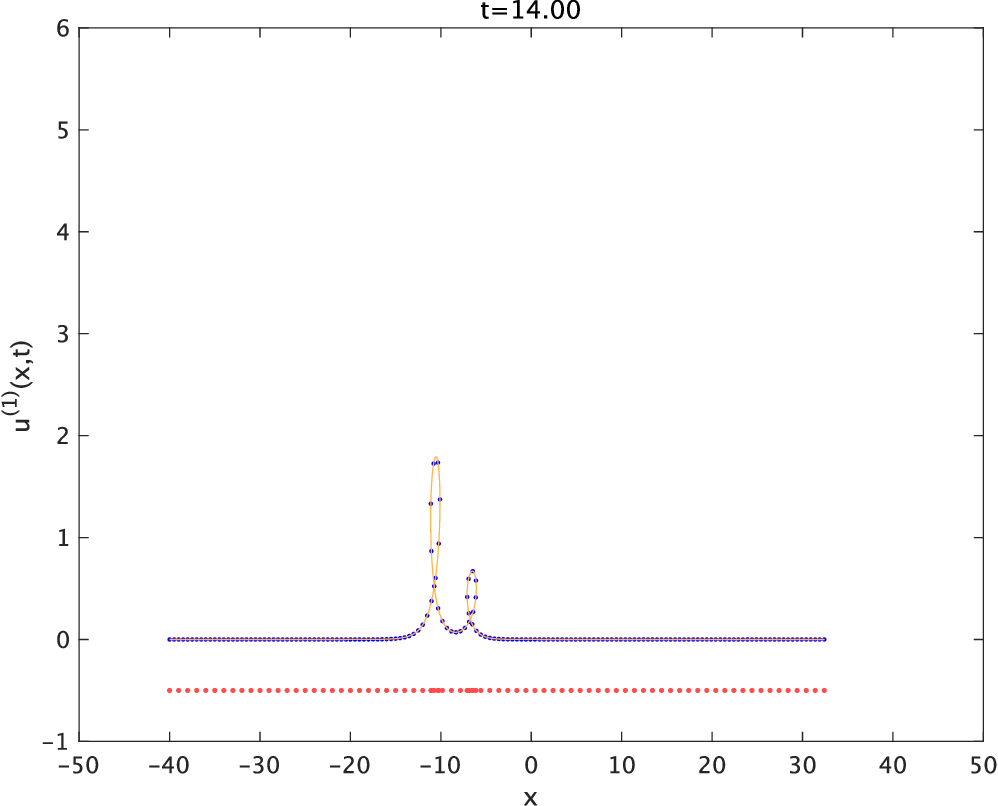}
      \end{minipage} &
      \begin{minipage}[t]{0.47\hsize}
        \centering
        \includegraphics[keepaspectratio, scale=0.33]{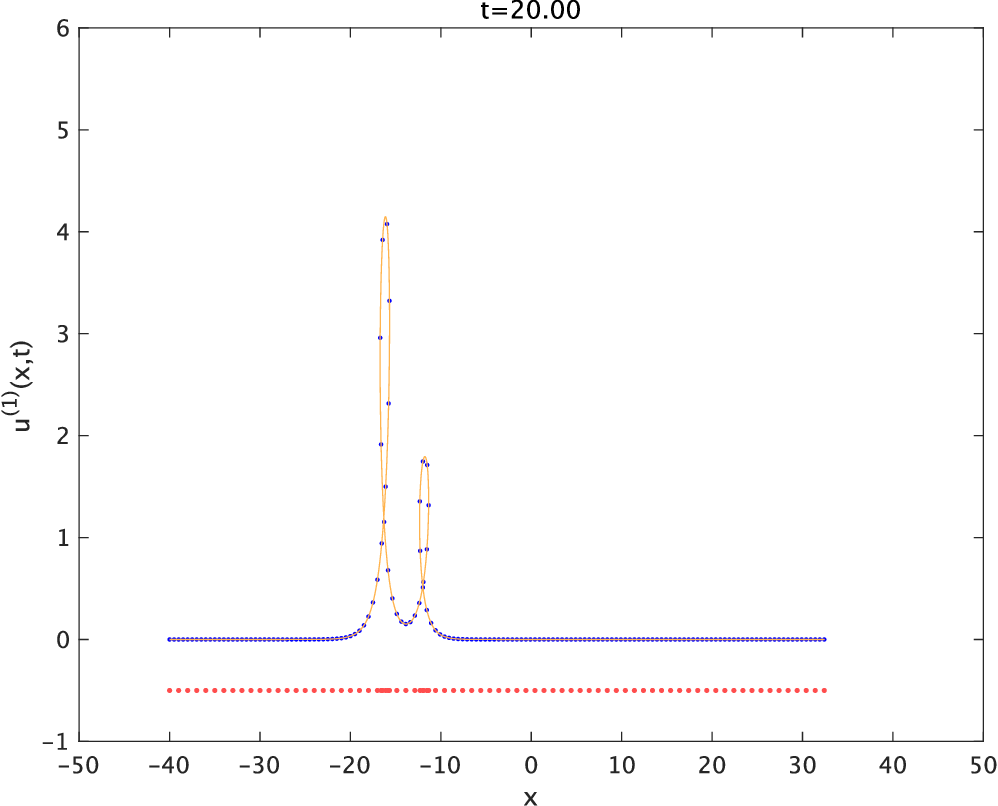}
      \end{minipage} 
       \end{tabular}
     \caption{Numerical simulation of the $u^{(1)}$-profile of 
     the two-soliton solution of the 2-SP equation. ${\rm maxerr}(u^{(1)})=1.00\times 10^{-3}$.}
              \label{2SP_2_u1}
  \end{figure}


\begin{figure}[htbp]
 \begin{tabular}{cc}
      \begin{minipage}[t]{0.47\hsize}
       \centering
        \includegraphics[keepaspectratio, scale=0.33]{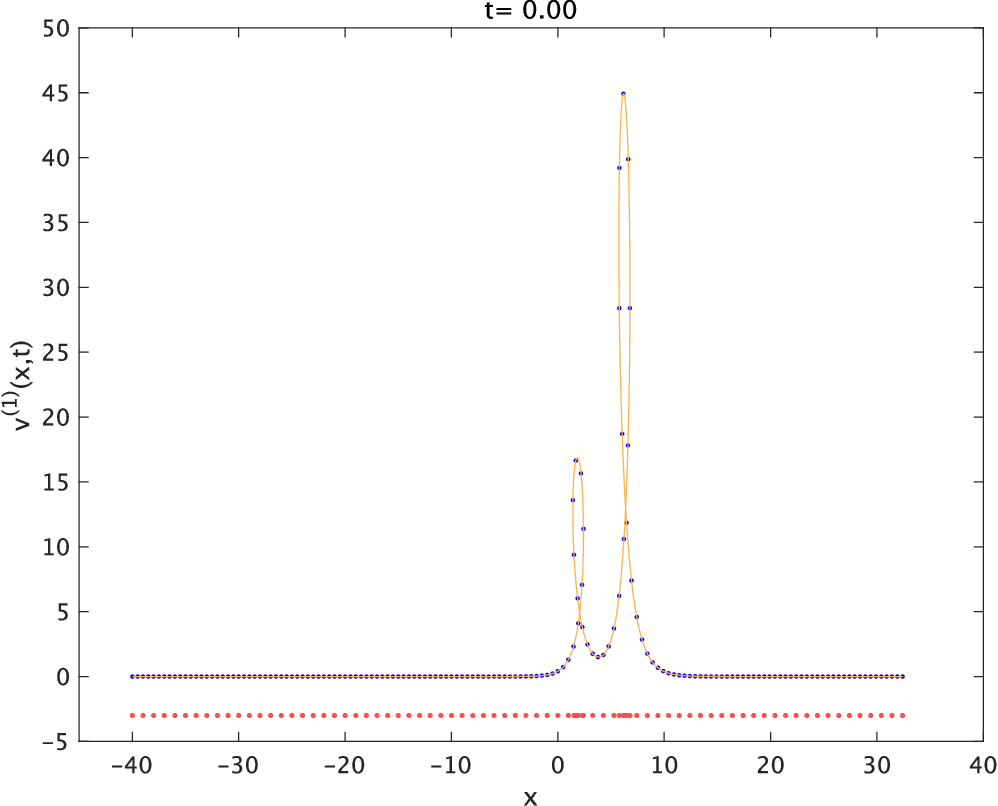}
      \end{minipage} &
      \begin{minipage}[t]{0.47\hsize}
        \centering
        \includegraphics[keepaspectratio, scale=0.33]{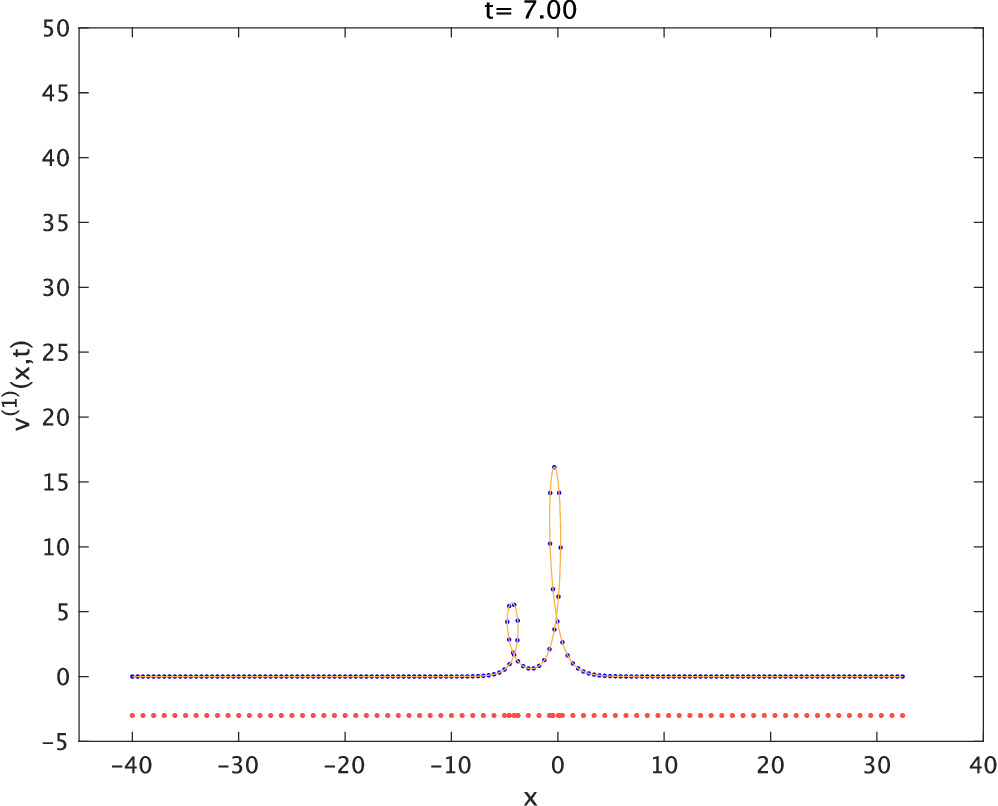}
      \end{minipage}\\ 
  
      \begin{minipage}[t]{0.47\hsize}
        \centering
        \includegraphics[keepaspectratio, scale=0.33]{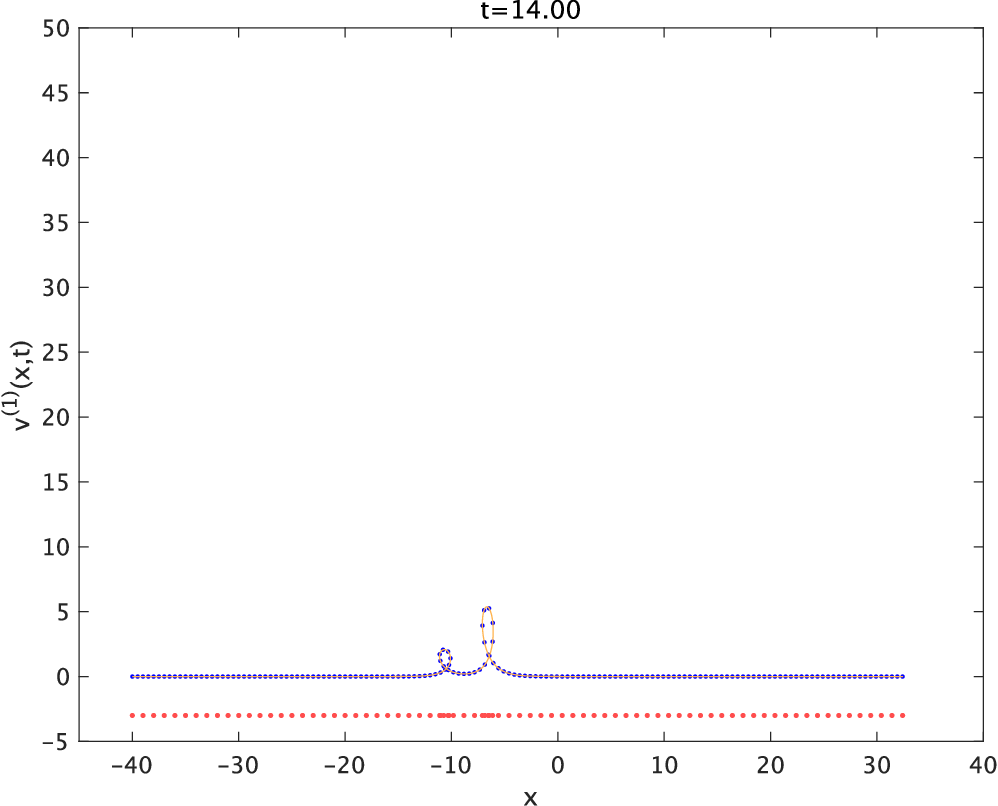}
      \end{minipage} &
      \begin{minipage}[t]{0.47\hsize}
        \centering
        \includegraphics[keepaspectratio, scale=0.33]{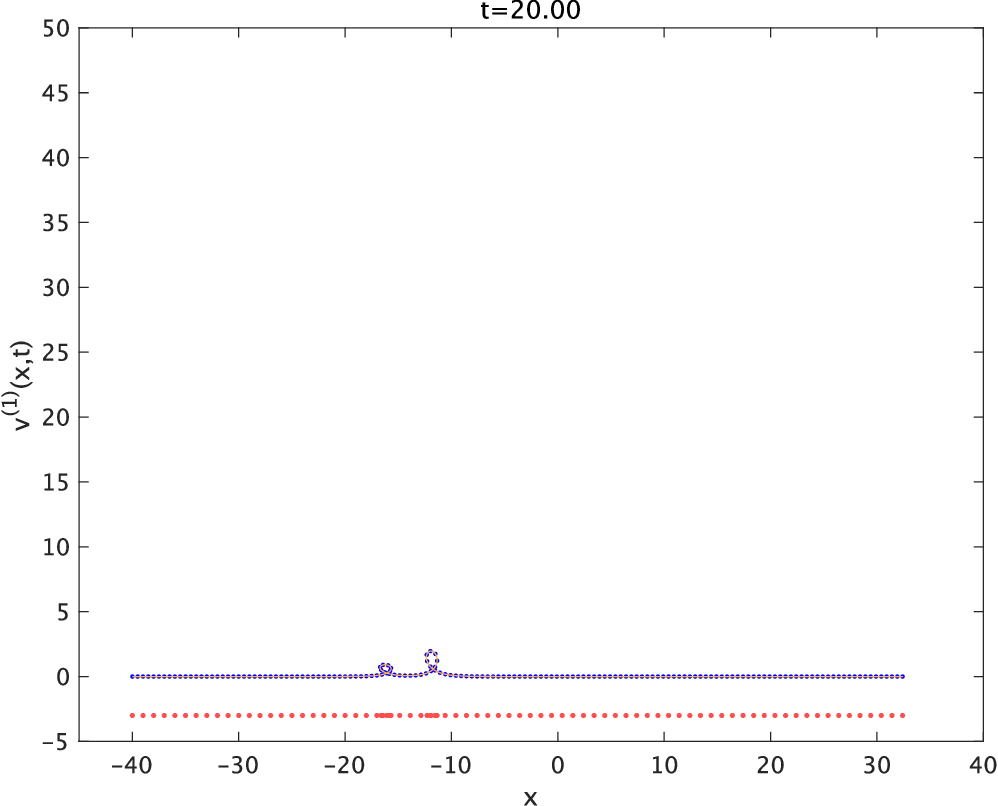}
      \end{minipage} 
       \end{tabular}
     \caption{Numerical simulation of the $v^{(1)}$-profile of 
     the two-soliton solution of the 2-SP equation. ${\rm maxerr}(v^{(1)})=3.70\times 10^{-4}$.}
              \label{2SP_2_v1}
  \end{figure}

\noindent 
One-component CSP (1-CSP) equation:\\
\noindent
Imposing the complex conjugacy condition $v^{(1)}=(u^{(1)})^*$ reduces the 2-SP equation (\ref{2SP}) to the CSP equation
\begin{eqnarray}
u^{(1)}_{xt}=u^{(1)}+\frac{1}{2}(|u^{(1)}|^{2}u^{(1)}_{x})_{x}, \label{CSP}
\end{eqnarray}
where $(u^{(1)})^*$ denotes the complex conjugate of $u^{(1)}$.
Imposing the same constraint on (\ref{2SPfulldis}) gives the fully discrete
CSP equation
\begin{eqnarray}
\left\{
\begin{array}{lll}
\displaystyle\frac{u^{(1)}_{l,m+1}+u^{(1)}_{l-1,m}}{u^{(1)}_{l,m}+u^{(1)}_{l-1,m+1}}=\displaystyle\frac{1-b(x_{l-1,m+1}-x_{l-1,m})}{1-b(x_{l,m+1}-x_{l,m})} \displaystyle\frac{u^{(1)}_{l,m}+u^{(1)}_{l-1,m-1}}{u^{(1)}_{l,m-1}+u^{(1)}_{l-1,m}},\\
\displaystyle\frac{x_{l,m+1}-x_{l,m}}{2b}=-\displaystyle\frac{1}{2}|u^{(1)}_{l,m}|^{2}.
\end{array}
\right.
\label{CSPfulldis}
\end{eqnarray}
We next test the fully discrete CSP equation (\ref{CSPfulldis}) 
using initial data generated from the following 
continuous one- and two-soliton solutions.


\begin{figure}[htbp]
 \begin{tabular}{cc}
      \begin{minipage}[t]{0.47\hsize}
       \centering
        \includegraphics[keepaspectratio, scale=0.33]{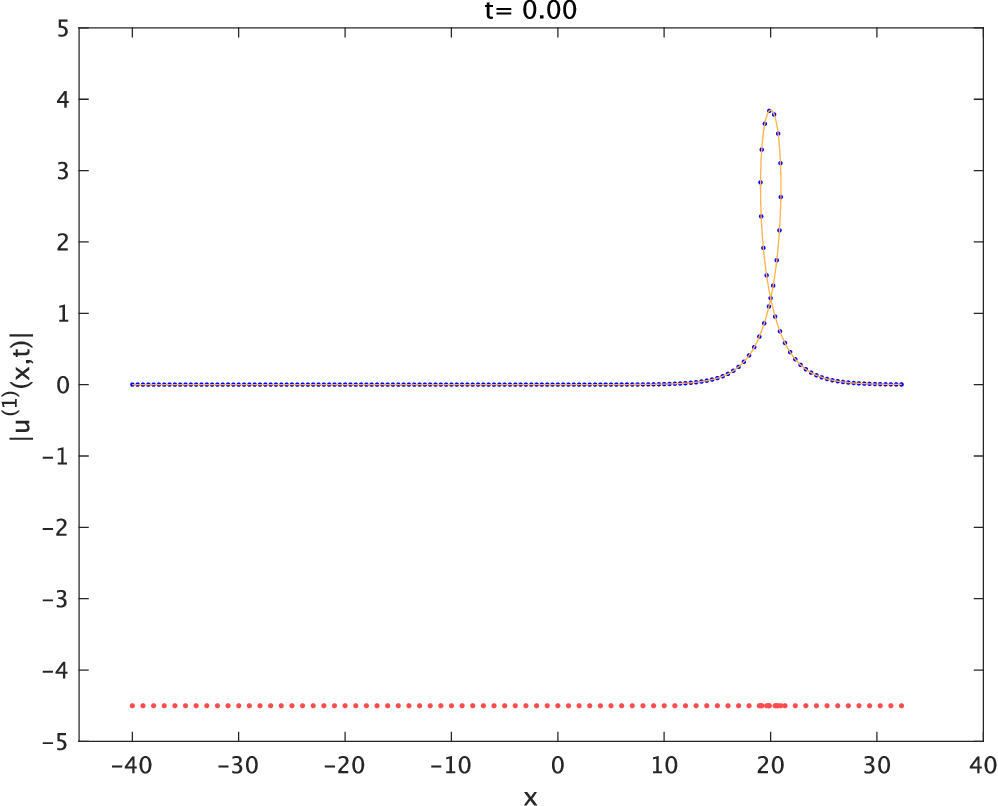}
      \end{minipage} &
      \begin{minipage}[t]{0.47\hsize}
        \centering
        \includegraphics[keepaspectratio, scale=0.33]{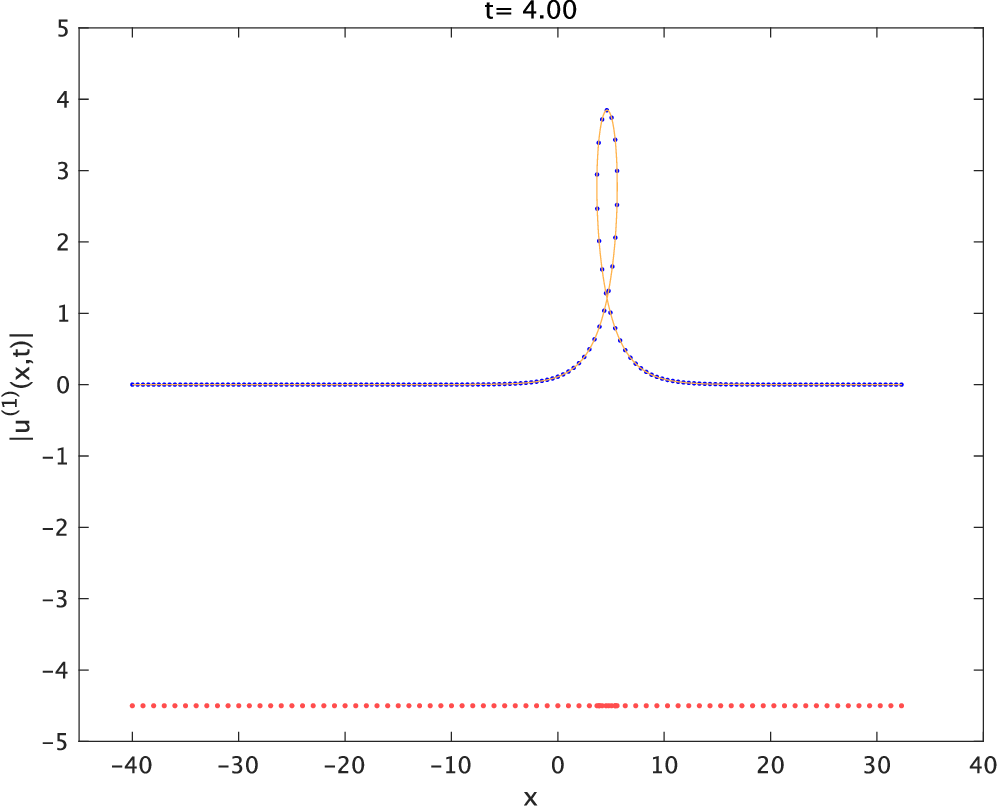}
      \end{minipage}\\ 
  
      \begin{minipage}[t]{0.47\hsize}
        \centering
        \includegraphics[keepaspectratio, scale=0.33]{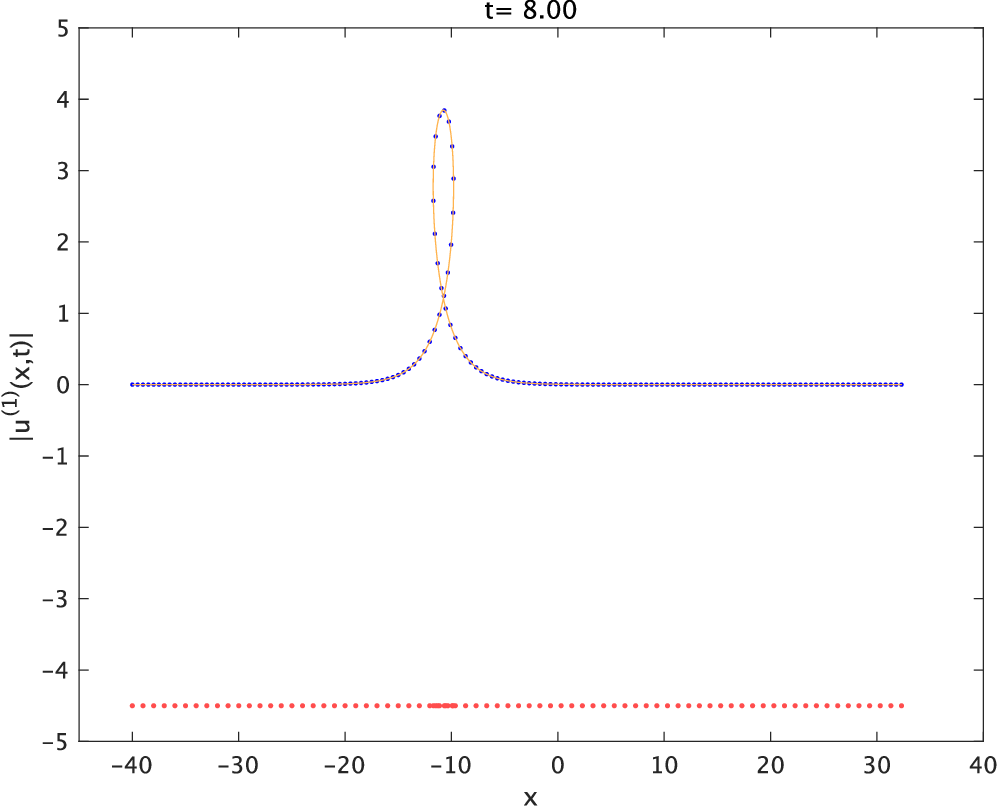}
      \end{minipage} &
      \begin{minipage}[t]{0.47\hsize}
        \centering
        \includegraphics[keepaspectratio, scale=0.33]{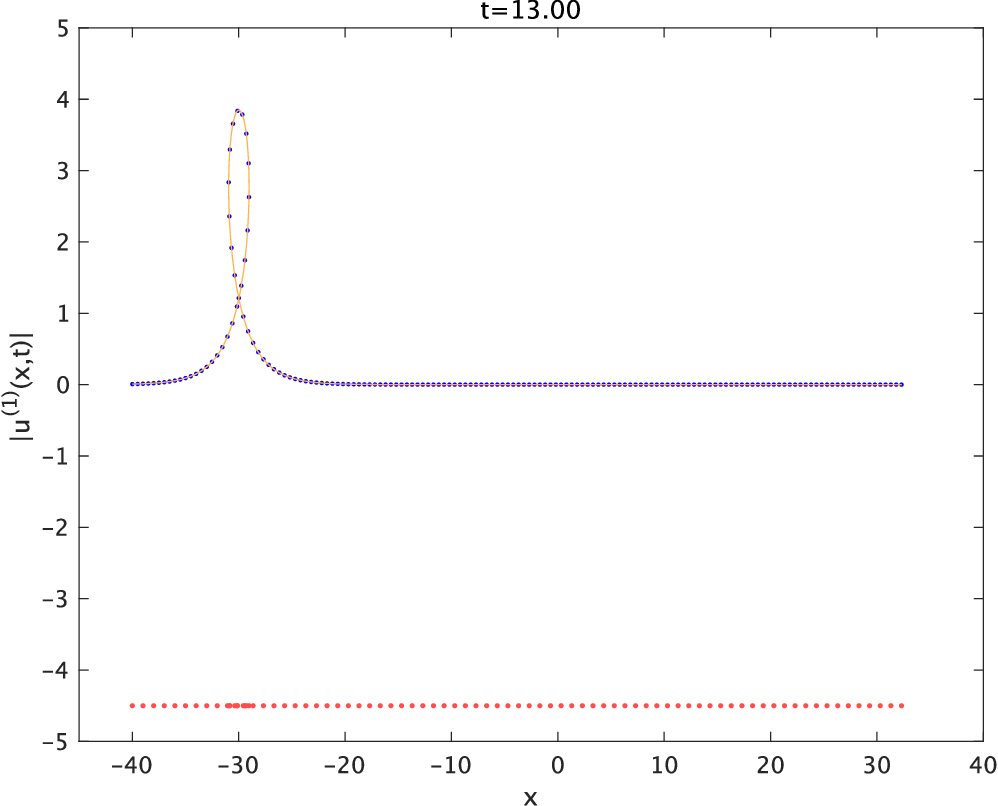}
      \end{minipage} 
       \end{tabular}
     \caption{Numerical simulation of the $|u^{(1)}|$-profile of the one-soliton solution of the 1-CSP equation. 
     ${\rm maxerr}(|u^{(1)}|)=1.93\times 10^{-4}$.}
              \label{1CSP_1_abs}
  \end{figure}
\begin{figure}[htbp]
 \begin{tabular}{cc}
      \begin{minipage}[t]{0.47\hsize}
       \centering
        \includegraphics[keepaspectratio, scale=0.33]{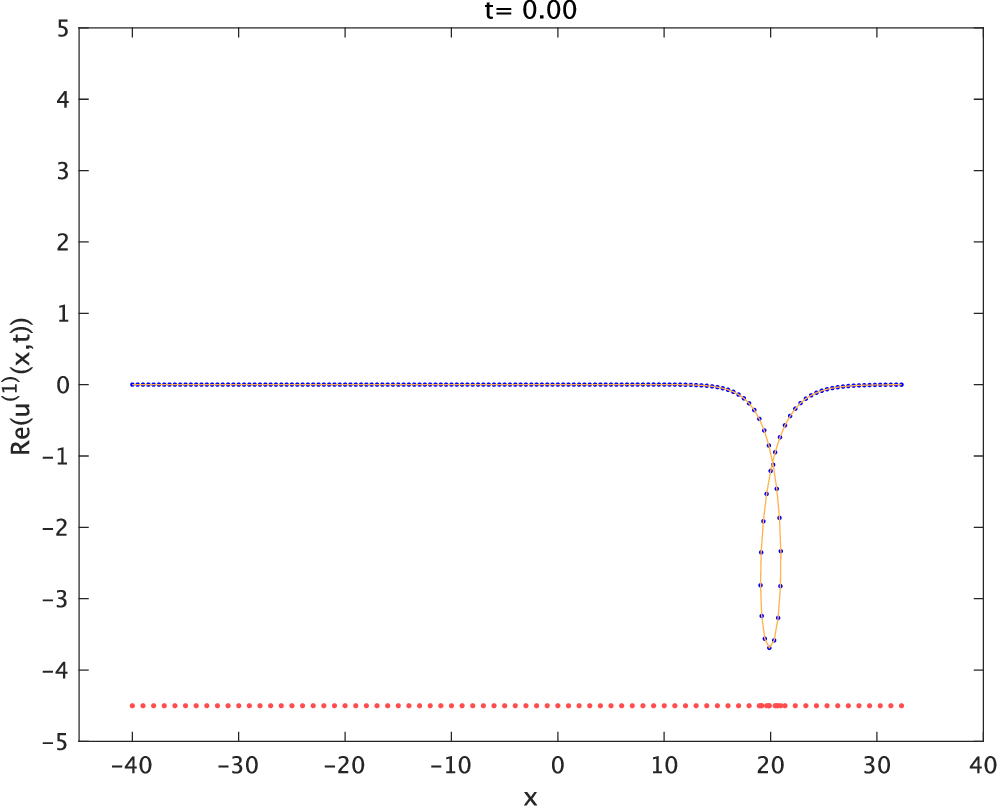}
      \end{minipage} &
      \begin{minipage}[t]{0.47\hsize}
        \centering
        \includegraphics[keepaspectratio, scale=0.33]{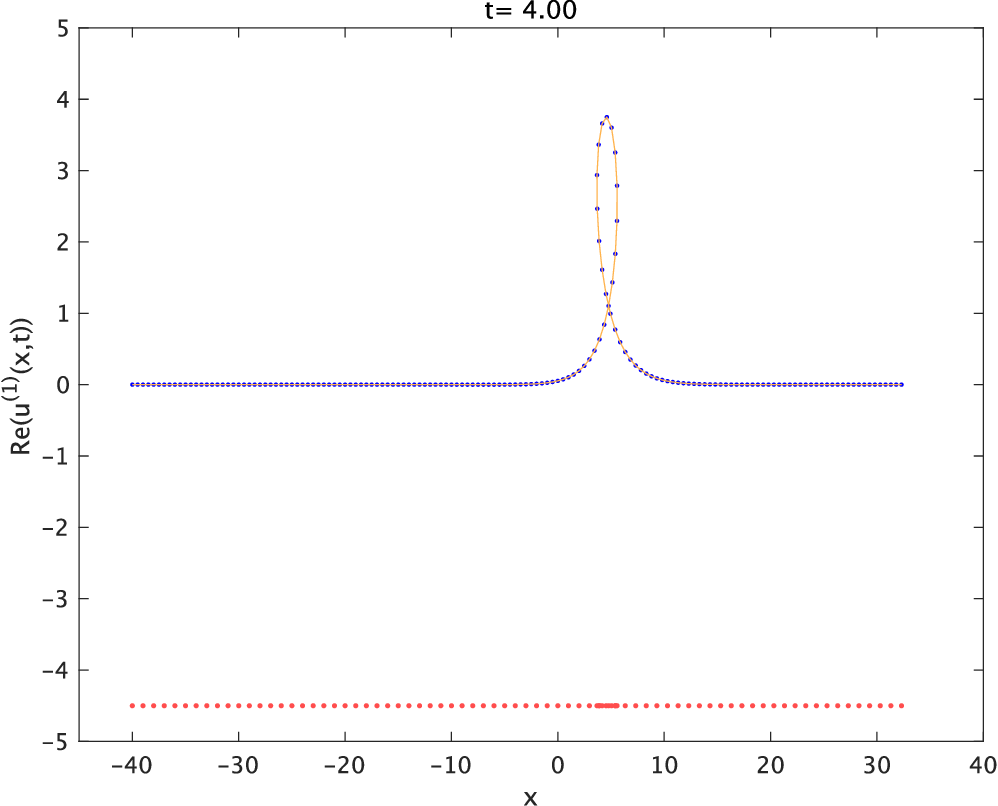}
      \end{minipage}\\ 
  
      \begin{minipage}[t]{0.47\hsize}
        \centering
        \includegraphics[keepaspectratio, scale=0.33]{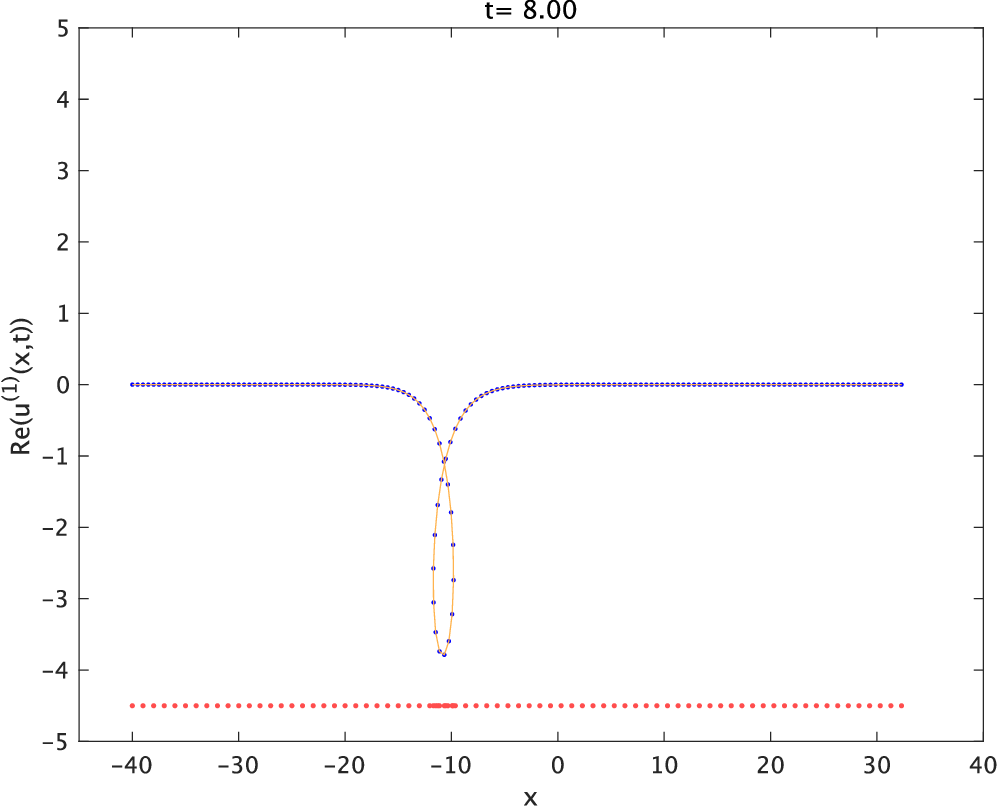}
      \end{minipage} &
      \begin{minipage}[t]{0.47\hsize}
        \centering
        \includegraphics[keepaspectratio, scale=0.33]{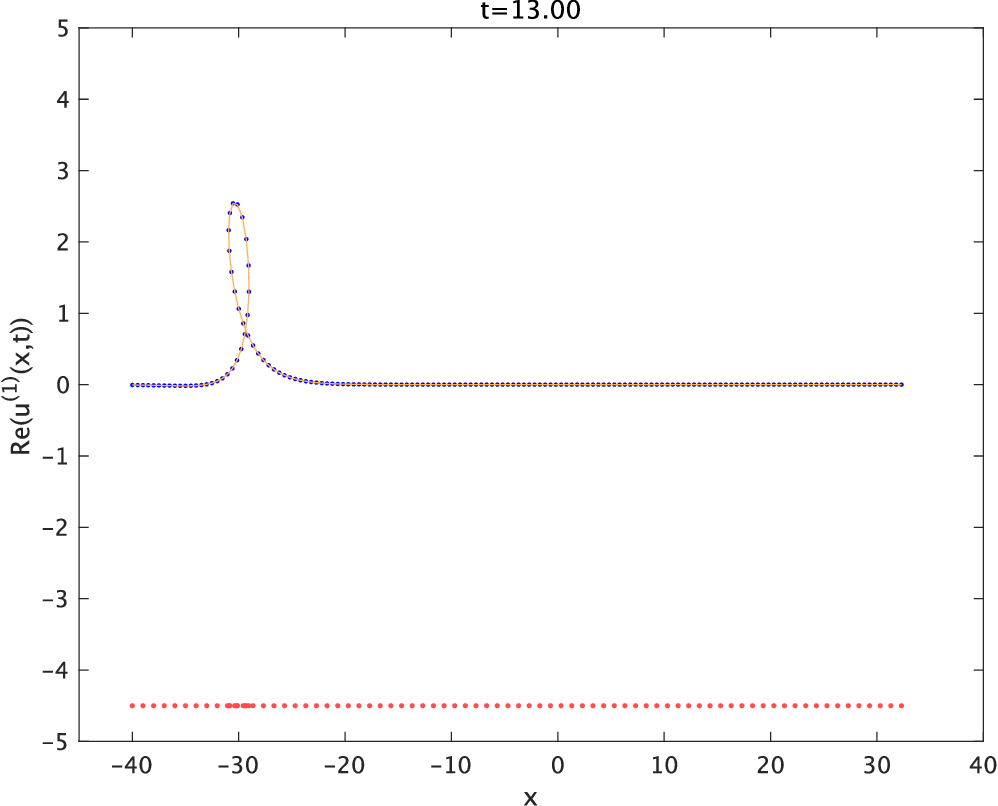}
      \end{minipage} 
       \end{tabular}
     \caption{Numerical simulation of the ${\rm Re}(u^{(1)})$-profile of the one-soliton solution of the 1-CSP equation. ${\rm maxerr}({\rm Re}(u^{(1)}))=2.99\times 10^{-3}$.}
              \label{1CSP_1_re}
  \end{figure}
\begin{figure}[htbp]
 \begin{tabular}{cc}
      \begin{minipage}[t]{0.47\hsize}
       \centering
        \includegraphics[keepaspectratio, scale=0.33]{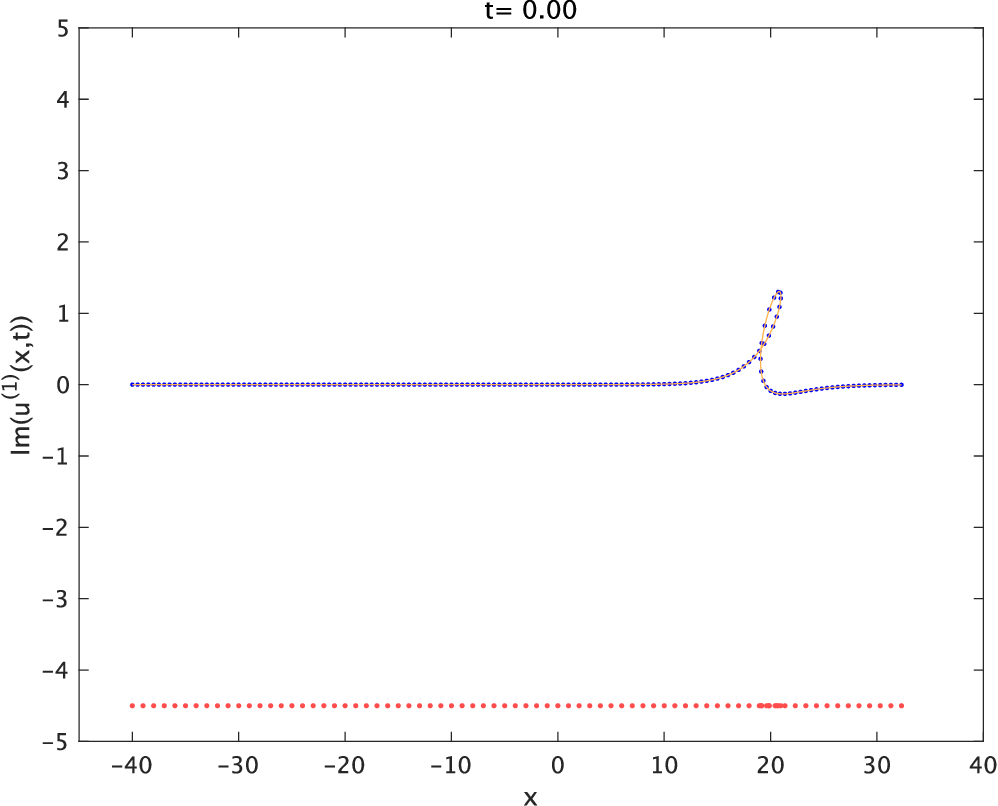}
      \end{minipage} &
      \begin{minipage}[t]{0.47\hsize}
        \centering
        \includegraphics[keepaspectratio, scale=0.33]{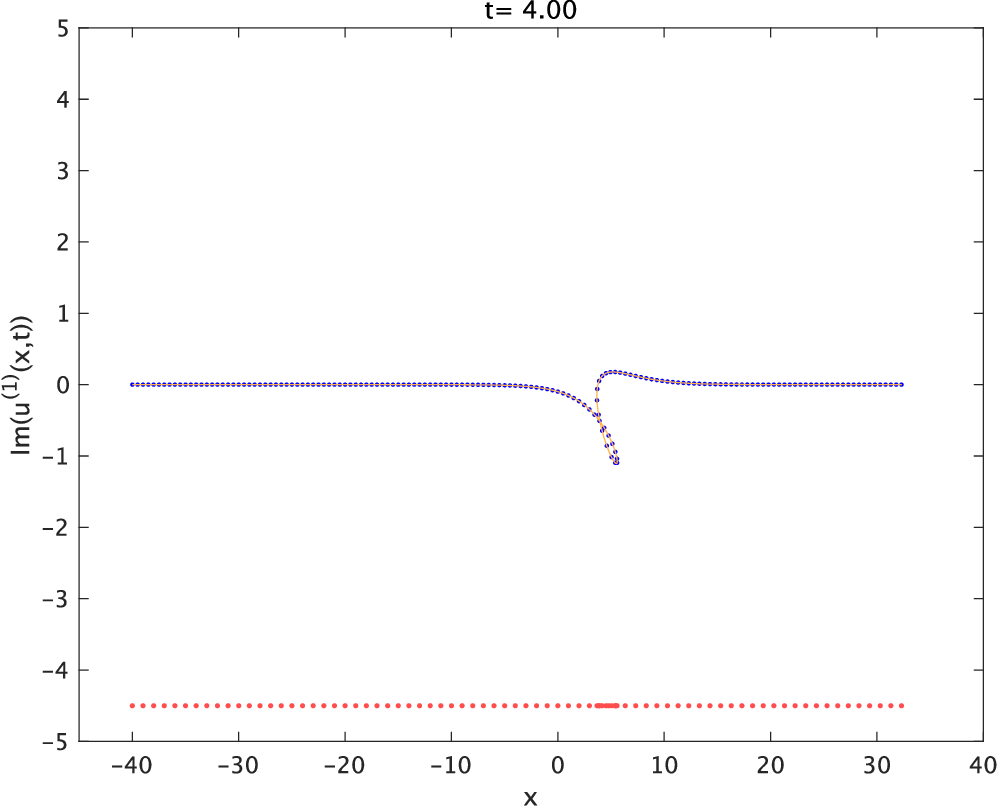}
      \end{minipage}\\ 
  
      \begin{minipage}[t]{0.47\hsize}
        \centering
        \includegraphics[keepaspectratio, scale=0.33]{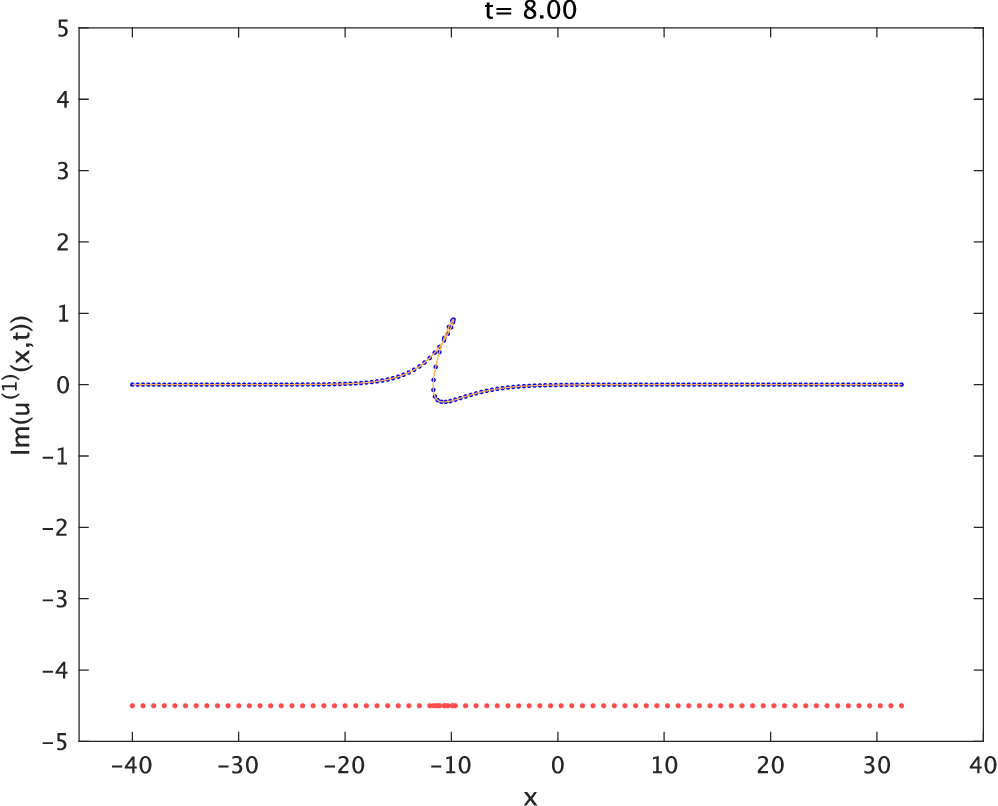}
      \end{minipage} &
      \begin{minipage}[t]{0.47\hsize}
        \centering
        \includegraphics[keepaspectratio, scale=0.33]{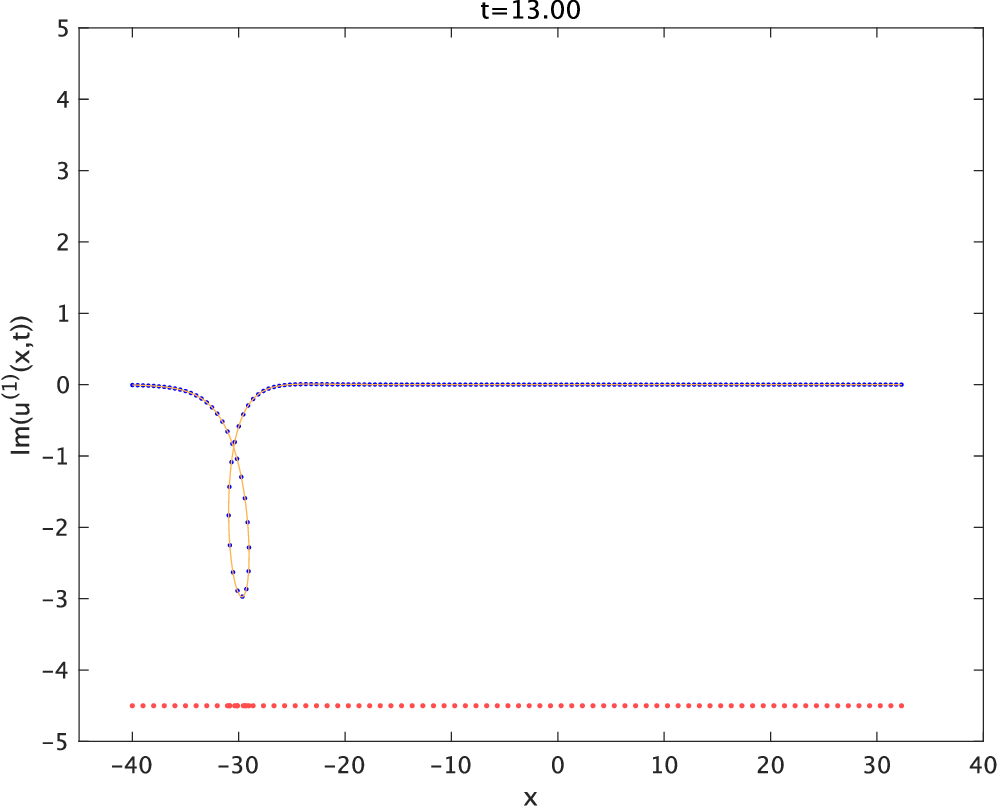}
      \end{minipage} 
       \end{tabular}
     \caption{Numerical simulation of the ${\rm Im}(u^{(1)})$-profile of the one-soliton solution of the 1-CSP equation. ${\rm maxerr}({\rm Im}(u^{(1)}))=4.32\times 10^{-3}$.}
              \label{1CSP_1_imag}
  \end{figure}

\noindent
One-soliton:\\
Applying the constraints $p_{1}=p_{2}^{*}$, $B_{1}=B_{2}^{*}$,  
and $a_{1}^{(1)}=(b_{2}^{(1)})^*$ to (\ref{eq_5.4}) gives
 \begin{eqnarray}
\fl{u^{(1)}=\frac{g^{(1)}}{f},\quad x=X-2(\log{f})_{T},\quad t=T,}\nonumber\\
\fl{f=1+\displaystyle\frac{1}{4}\left(\frac{p_{1}p_{1}^{*}}{p_{1}+p_{1}^{*}}\right)^{2}
\varphi_{1}\varphi_{1}^{*}a_{1}^{(1)}(a_{1}^{(1)})^{*},\quad
g^{(1)}=a_{1}^{(1)}\varphi_{1},}
\end{eqnarray}
where $\varphi_{1}=B_{1}e^{p_{1}X+p_{1}^{-1}T}$.\\

\noindent
\begin{samepage}
Two-soliton:\\
Applying the constraints $p_{1}=p_{4}^{*}$, $p_{2}=p_{3}^{*}$, 
$B_{1}=B_{4}^{*}$, $B_{2}=B_{3}^{*}$, $a_{1}^{(1)}=(b_{4}^{(1)})^*$, and
$a_{2}^{(1)}=(b_{3}^{(1)})^*$ to (\ref{eq_5.5}) 
and using the notation in (\ref{con_compact_coefficients}), we obtain 
\begin{eqnarray*}
\fl C_{14}=a_1^{(1)}(a_1^{(1)})^*,\quad
C_{13}=a_1^{(1)}(a_2^{(1)})^*,\quad
C_{24}=a_2^{(1)}(a_1^{(1)})^*,\quad
C_{23}=a_2^{(1)}(a_2^{(1)})^*,\\
\fl \varphi_3=\varphi_2^*,\qquad \varphi_4=\varphi_1^*.
\end{eqnarray*}
\begin{eqnarray}
\fl u^{(1)}&=&\frac{g^{(1)}}{f},\qquad x=X-2(\log f)_T,\qquad t=T\nonumber\\
\fl f&=&1+\tilde{C}_{14}\varphi_1\varphi_4+\tilde{C}_{13}\varphi_1\varphi_3
+\tilde{C}_{24}\varphi_2\varphi_4+\tilde{C}_{23}\varphi_2\varphi_3\nonumber\\
\fl &&+\Delta\left(\prod_{1\leq i<j\leq4}R_{ij}\right)
\varphi_1\varphi_2\varphi_3\varphi_4,\nonumber\\
\fl g^{(1)}&=&a_{1}^{(1)}\left(\varphi_1+R_{12}R_{14}\tilde{C}_{24}\varphi_1\varphi_2\varphi_4
+R_{12}R_{13}\tilde{C}_{23}\varphi_1\varphi_2\varphi_3\right)\nonumber\\
\fl &&+a_{2}^{(1)}\left(\varphi_2-R_{12}R_{24}\tilde{C}_{14}\varphi_1\varphi_2\varphi_4
-R_{12}R_{23}\tilde{C}_{13}\varphi_1\varphi_2\varphi_3\right).\label{eq_5.9}
\end{eqnarray}
where $\varphi_{i}=B_{i}e^{p_{i}X+p_{i}^{-1}T}$ and $i=1, 2, 3, 4$.
\end{samepage}

\begin{figure}[htbp]
 \begin{tabular}{cc}
      \begin{minipage}[t]{0.47\hsize}
       \centering
        \includegraphics[keepaspectratio, scale=0.33]{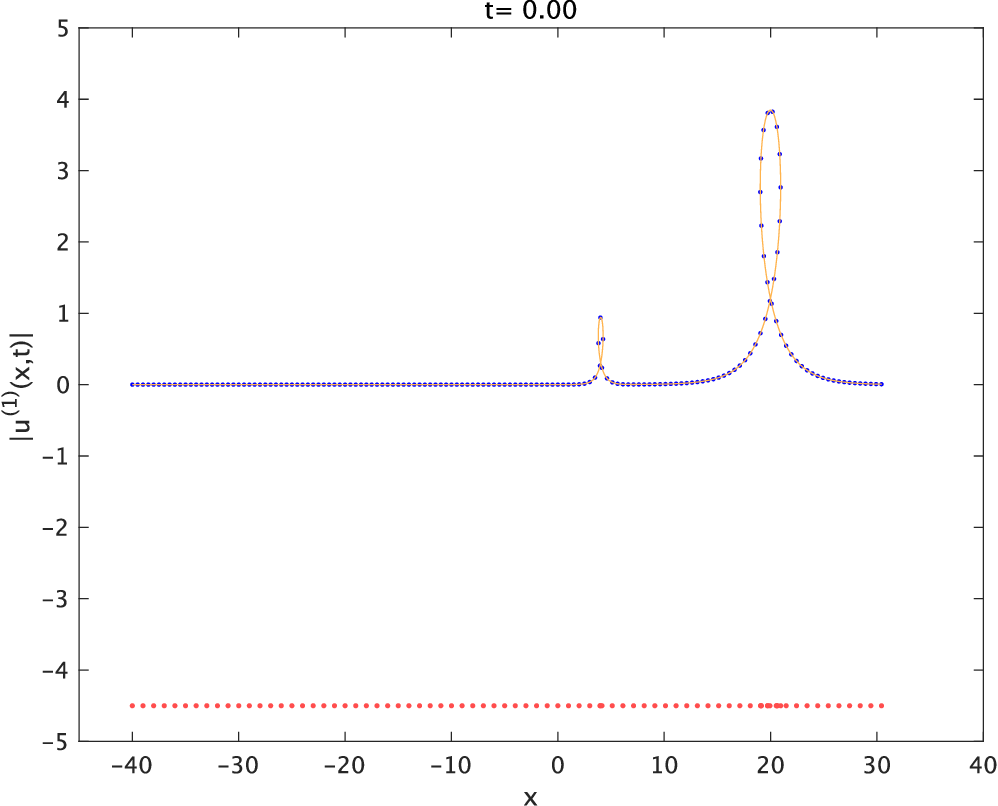}
      \end{minipage} &
      \begin{minipage}[t]{0.47\hsize}
        \centering
        \includegraphics[keepaspectratio, scale=0.33]{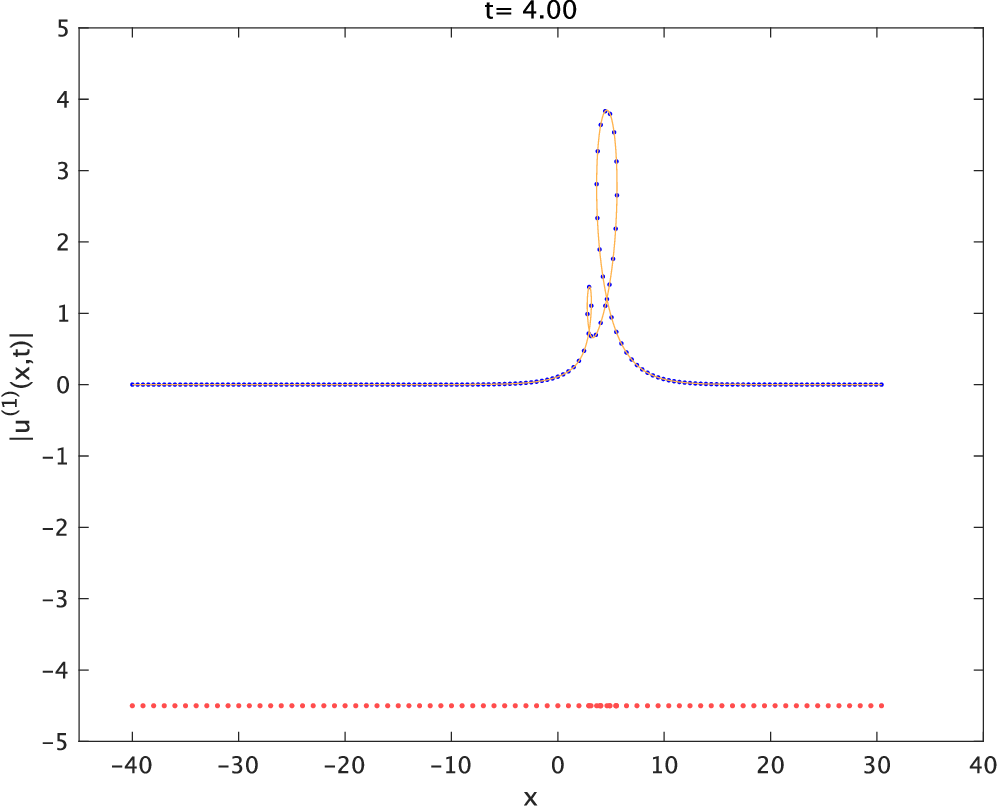}
      \end{minipage}\\ 
  
      \begin{minipage}[t]{0.47\hsize}
        \centering
        \includegraphics[keepaspectratio, scale=0.33]{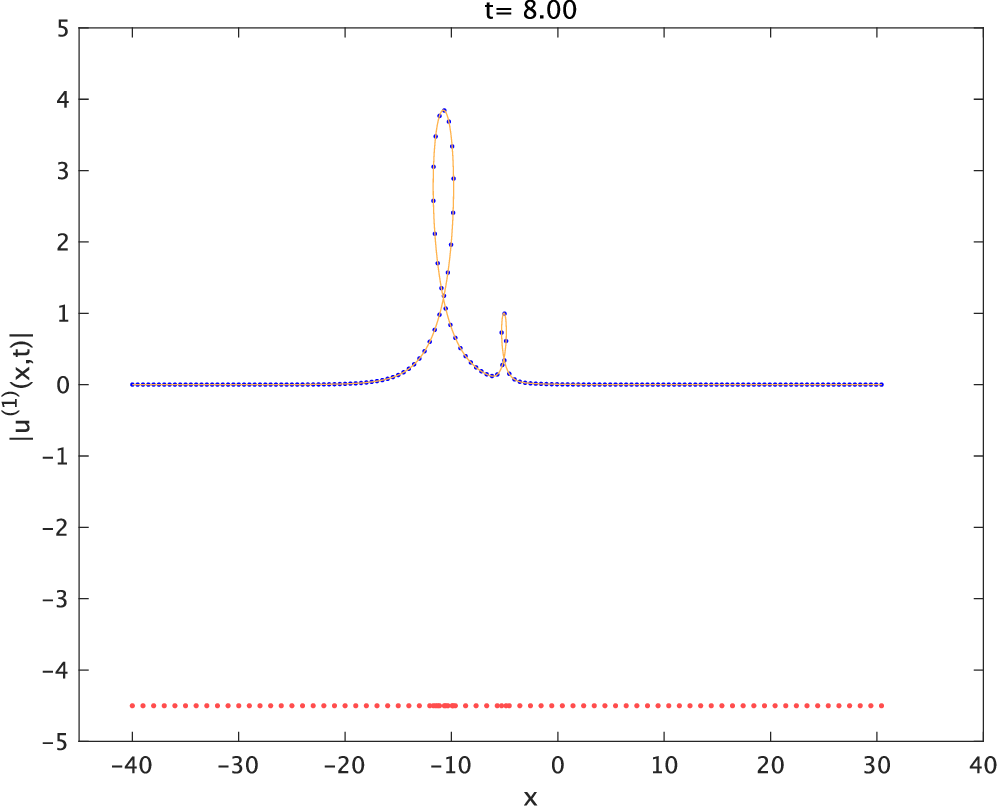}
      \end{minipage} &
      \begin{minipage}[t]{0.47\hsize}
        \centering
        \includegraphics[keepaspectratio, scale=0.33]{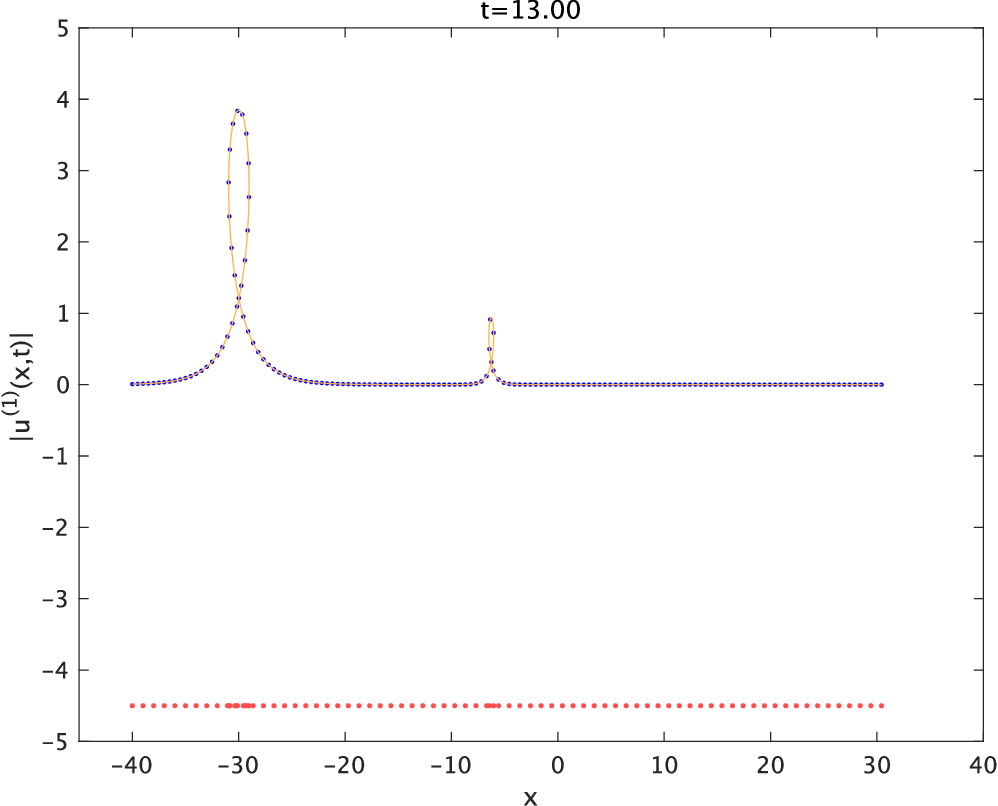}
      \end{minipage} 
       \end{tabular}
     \caption{Numerical simulation of the $|u^{(1)}|$-profile of the two-soliton solution of the 1-CSP equation. ${\rm maxerr}(|u^{(1)}|)=4.03\times 10^{-4}$.}
              \label{1CSP_2_abs}
  \end{figure}
\begin{figure}[htbp]
 \begin{tabular}{cc}
      \begin{minipage}[t]{0.47\hsize}
       \centering
        \includegraphics[keepaspectratio, scale=0.33]{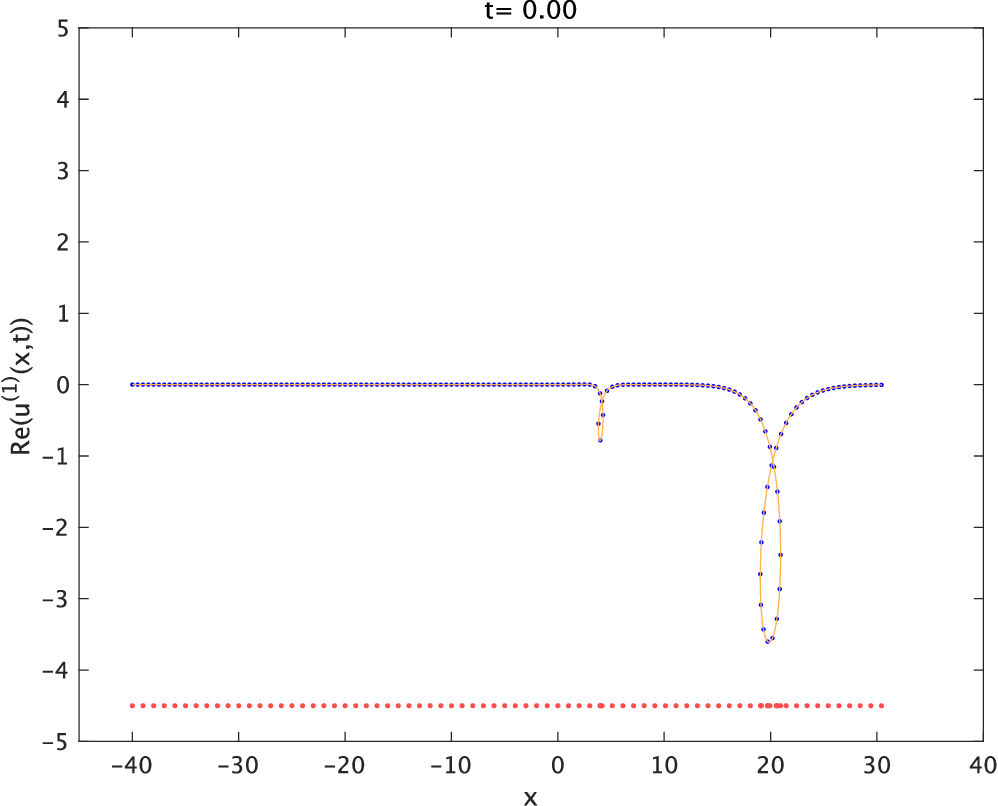}
      \end{minipage} &
      \begin{minipage}[t]{0.47\hsize}
        \centering
        \includegraphics[keepaspectratio, scale=0.33]{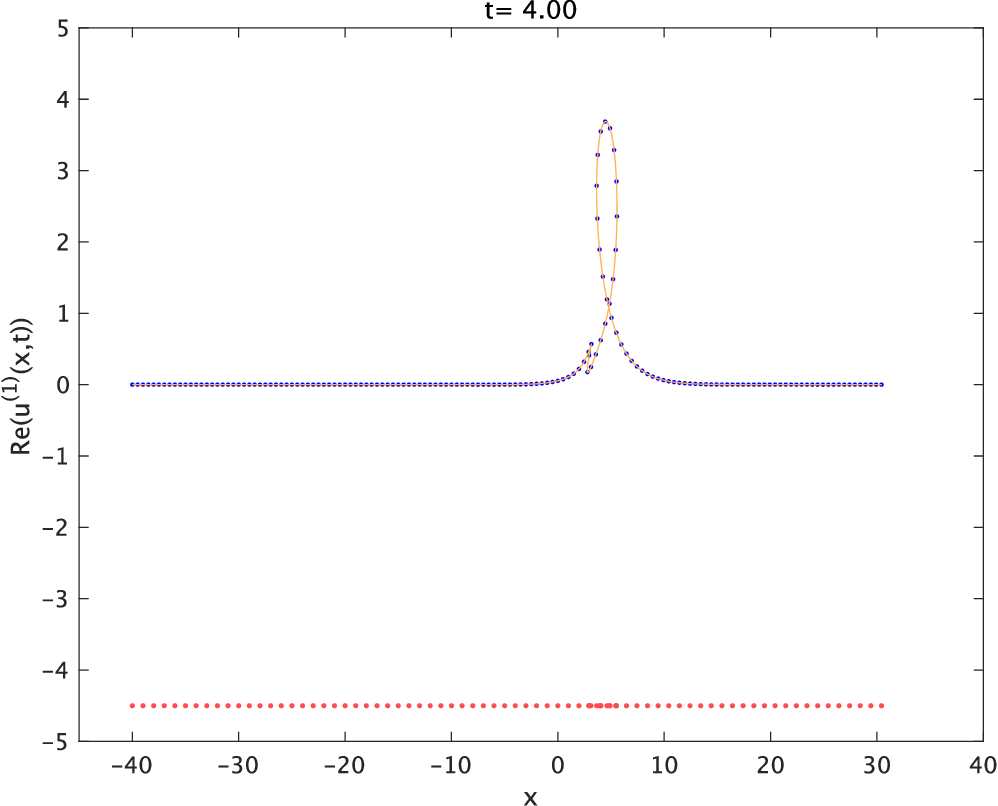}
      \end{minipage}\\ 
  
      \begin{minipage}[t]{0.47\hsize}
        \centering
        \includegraphics[keepaspectratio, scale=0.33]{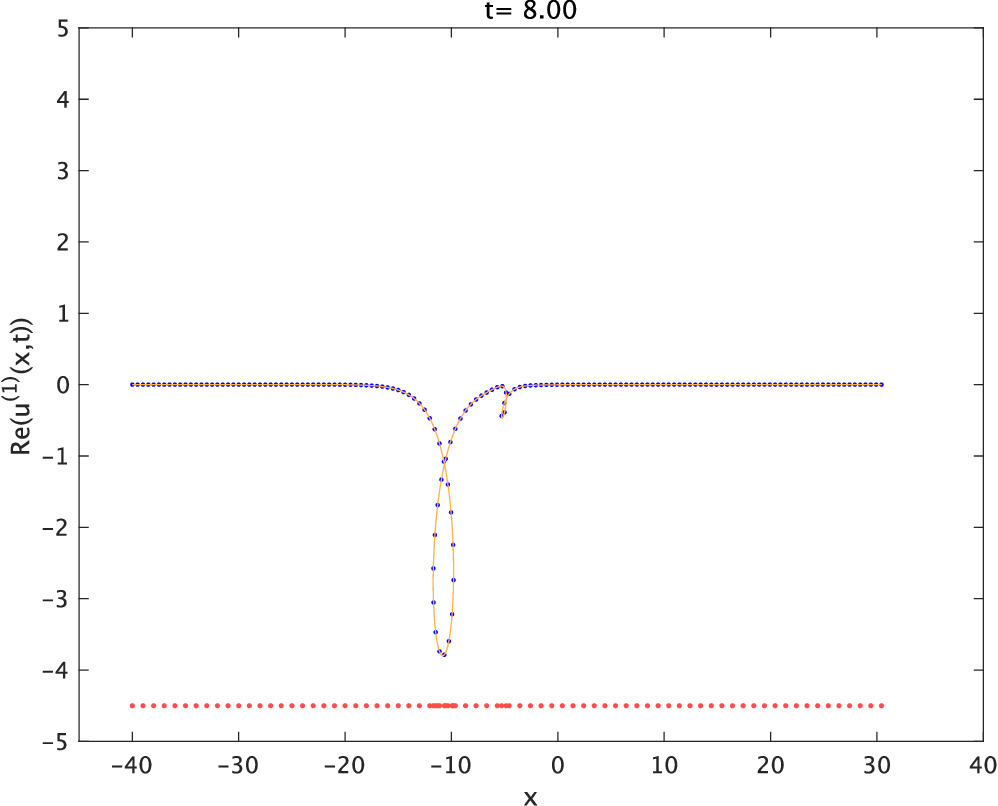}
      \end{minipage} &
      \begin{minipage}[t]{0.47\hsize}
        \centering
        \includegraphics[keepaspectratio, scale=0.33]{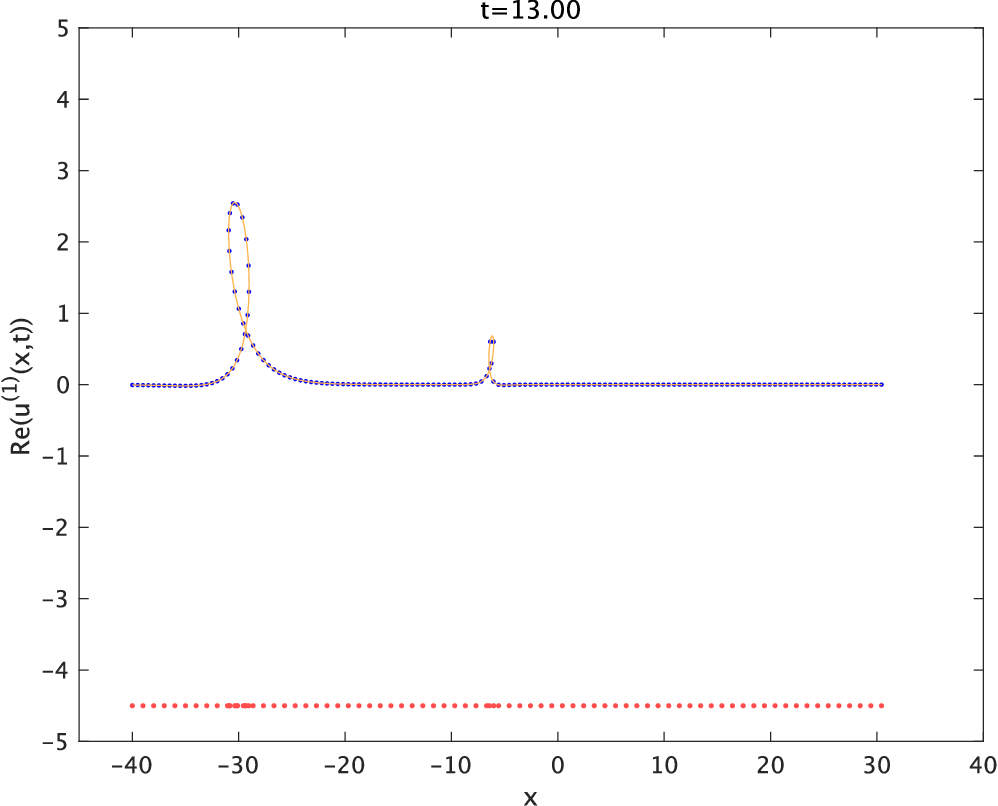}
      \end{minipage} 
       \end{tabular}
     \caption{Numerical simulation of the ${\rm Re}(u^{(1)})$-profile of the two-soliton solution of the 1-CSP equation. ${\rm maxerr}({\rm Re}(u^{(1)}))=2.07\times 10^{-3}$.}
              \label{1CSP_2_re}
  \end{figure}
\begin{figure}[htbp]
 \begin{tabular}{cc}
      \begin{minipage}[t]{0.47\hsize}
       \centering
        \includegraphics[keepaspectratio, scale=0.33]{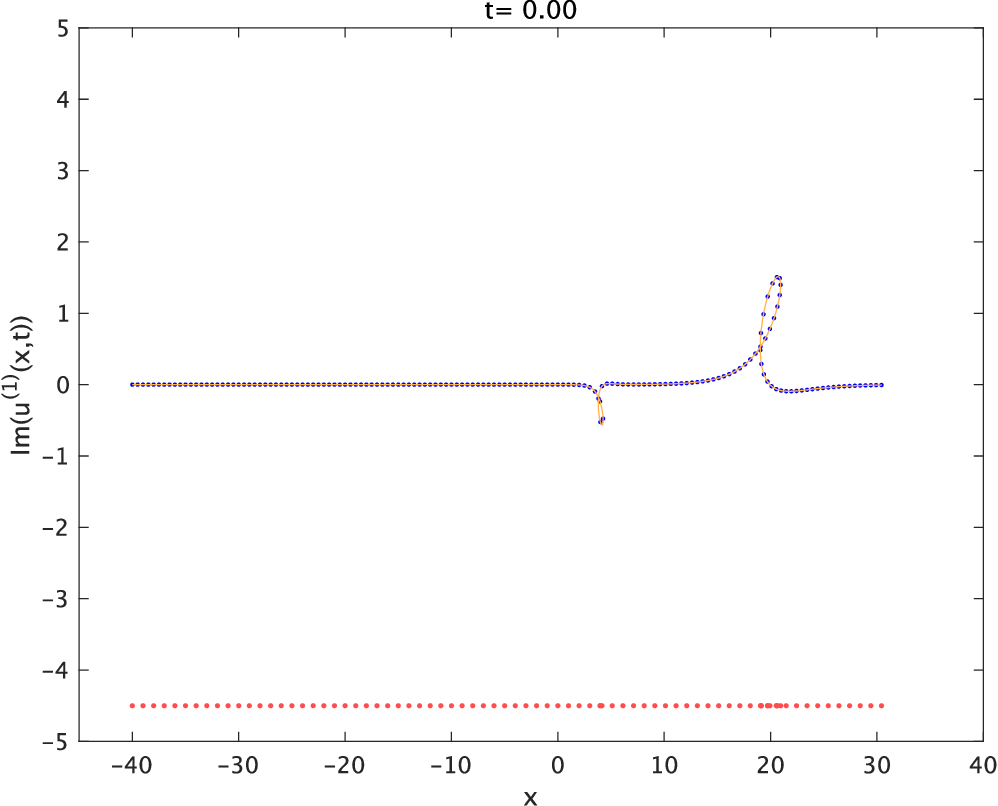}
      \end{minipage} &
      \begin{minipage}[t]{0.47\hsize}
        \centering
        \includegraphics[keepaspectratio, scale=0.33]{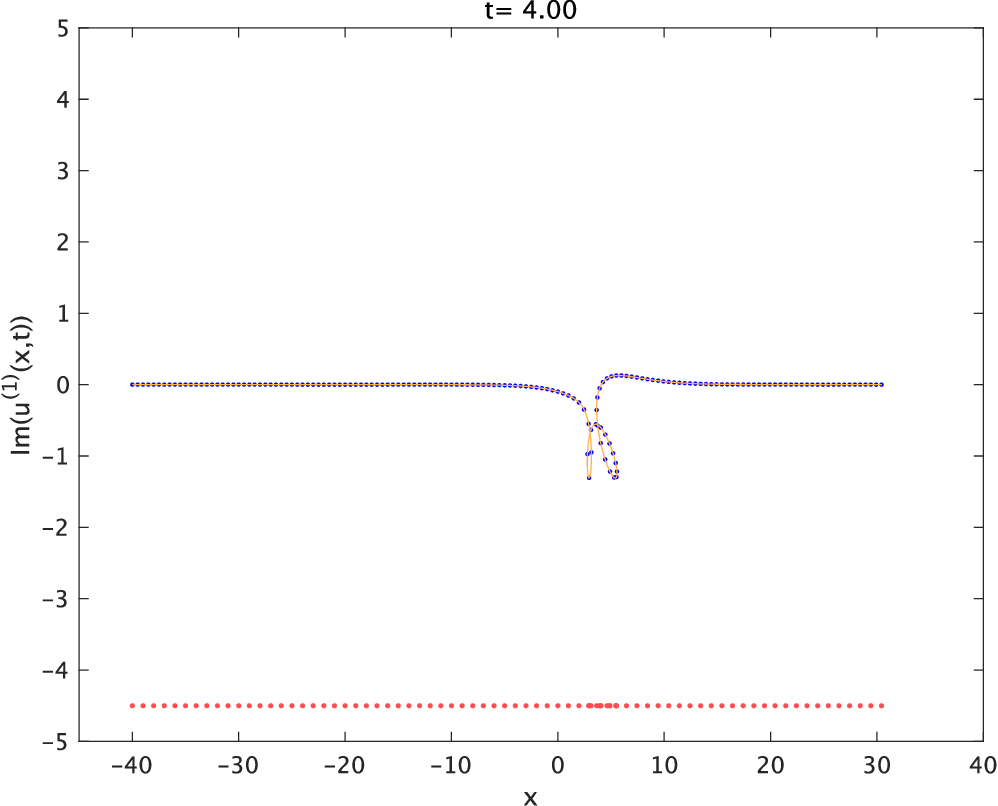}
      \end{minipage}\\ 
  
      \begin{minipage}[t]{0.47\hsize}
        \centering
        \includegraphics[keepaspectratio, scale=0.33]{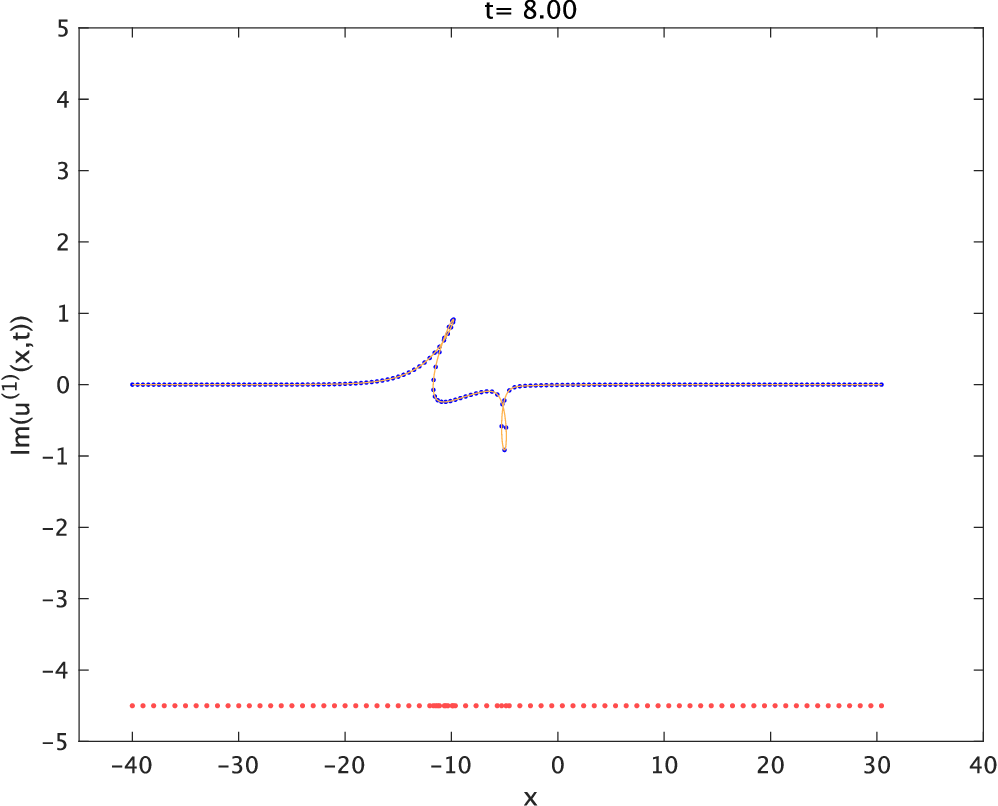}
      \end{minipage} &
      \begin{minipage}[t]{0.47\hsize}
        \centering
        \includegraphics[keepaspectratio, scale=0.33]{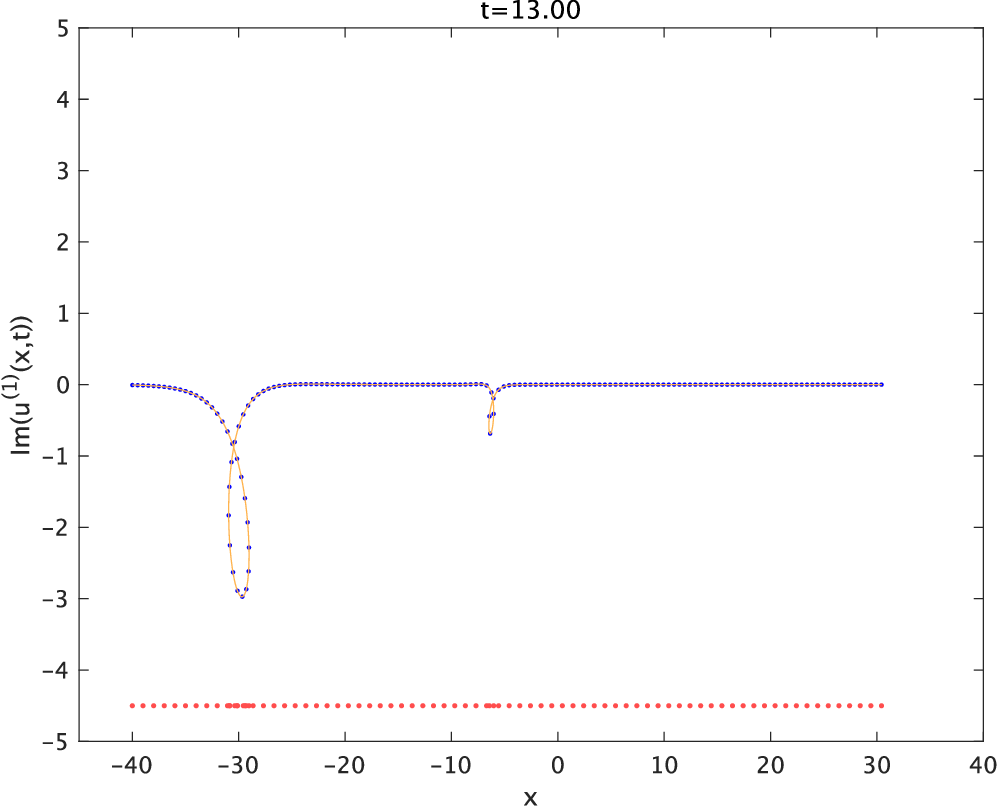}
      \end{minipage} 
       \end{tabular}
     \caption{Numerical simulation of the ${\rm Im}(u^{(1)})$-profile of the two-soliton solution of the 1-CSP equation. ${\rm maxerr}({\rm Im}(u^{(1)}))=3.87\times 10^{-3}$.}
              \label{1CSP_2_imag}
  \end{figure}

Figures \ref{1CSP_1_abs}--\ref{1CSP_1_imag} show the numerical
$|u^{(1)}|$, ${\rm Re}(u^{(1)})$, and ${\rm Im}(u^{(1)})$ profiles for the
one-soliton solution with $p_{1}=0.5+0.1\mathrm{i}$,
$a_{1}^{(1)}=1+0.5\mathrm{i}$, and $B_{1}={\rm exp}(-10)$.
Figures \ref{1CSP_2_abs}--\ref{1CSP_2_imag} show the corresponding profiles for
the two-soliton solution with $p_{1}=0.5+0.1\mathrm{i}$,
$p_{2}=2-0.5\mathrm{i}$, $a_{1}^{(1)}=1+0.5\mathrm{i}$,
$a_{2}^{(1)}=2-0.1\mathrm{i}$, and $B_{1}=B_{2}={\rm exp}(-10)$. In both
cases, the solitons propagate to the left along the $x$-axis.\\

  \noindent 
Two-component CSP (2-CSP) equation:\\
Setting $I=J=\{1,2\}$ in (\ref{MCSP}) gives the 4-component SP (4-SP)
equation
 \begin{eqnarray}
 \fl{
\left\{
\begin{array}{ll}
u^{(1)}_{xt}=u^{(1)}+\displaystyle\frac{1}{2}\left(\left(c_{11}u^{(1)}v^{(1)}+c_{12}u^{(1)}v^{(2)}+c_{21}u^{(2)}v^{(1)}+c_{22}u^{(2)}v^{(2)}\right)u^{(1)}_{x}\right)_{x},\\
\\
u^{(2)}_{xt}=u^{(2)}+\displaystyle\frac{1}{2}\left(\left(c_{11}u^{(1)}v^{(1)}+c_{12}u^{(1)}v^{(2)}+c_{21}u^{(2)}v^{(1)}+c_{22}u^{(2)}v^{(2)}\right)u^{(2)}_{x}\right)_{x},\\
\\
v^{(1)}_{xt}=v^{(1)}+\displaystyle\frac{1}{2}\left(\left(c_{11}u^{(1)}v^{(1)}+c_{12}u^{(1)}v^{(2)}+c_{21}u^{(2)}v^{(1)}+c_{22}u^{(2)}v^{(2)}\right)v^{(1)}_{x}\right)_{x},\\
\\
v^{(2)}_{xt}=v^{(2)}+\displaystyle\frac{1}{2}\left(\left(c_{11}u^{(1)}v^{(1)}+c_{12}u^{(1)}v^{(2)}+c_{21}u^{(2)}v^{(1)}+c_{22}u^{(2)}v^{(2)}\right)v^{(2)}_{x}\right)_{x}.\\
\end{array}
\right.}
\label{4SP}
\end{eqnarray}
If $u^{(1)}$, $u^{(2)}$, $v^{(1)}$, and $v^{(2)}$ are complex-valued,
then imposing $v^{(1)}=(u^{(1)})^*$, $v^{(2)}=(u^{(2)})^*$,
$c_{11}=c_{22}=1$, and $c_{12}=c_{21}=0$ reduces (\ref{4SP}) to the 2-CSP
equation
 \begin{eqnarray}
\left\{
\begin{array}{ll}
u^{(1)}_{xt}=u^{(1)}+\displaystyle\frac{1}{2}\Big(\left(|u^{(1)}|^{2}+|u^{(2)}|^{2}\right)u^{(1)}_{x}\Big)_{x}, \\
\\
u^{(2)}_{xt}=u^{(2)}+\displaystyle\frac{1}{2}\Big(\left(|u^{(1)}|^{2}+|u^{(2)}|^{2}\right)u^{(2)}_{x}\Big)_{x}. \\
\end{array}
\right.
\label{2CSP}
\end{eqnarray}
The corresponding fully discrete analogue is
\begin{eqnarray}
\left\{
\begin{array}{lll}
\displaystyle\frac{u^{(1)}_{l,m+1}+u^{(1)}_{l-1,m}}{u^{(1)}_{l,m}+u^{(1)}_{l-1,m+1}}=\displaystyle\frac{1-b(x_{l-1,m+1}-x_{l-1,m})}{1-b(x_{l,m+1}-x_{l,m})} \displaystyle\frac{u^{(1)}_{l,m}+u^{(1)}_{l-1,m-1}}{u^{(1)}_{l,m-1}+u^{(1)}_{l-1,m}},\\
\displaystyle\frac{u^{(2)}_{l,m+1}+u^{(2)}_{l-1,m}}{u^{(2)}_{l,m}+u^{(2)}_{l-1,m+1}}=\displaystyle\frac{1-b(x_{l-1,m+1}-x_{l-1,m})}{1-b(x_{l,m+1}-x_{l,m})} \displaystyle\frac{u^{(2)}_{l,m}+u^{(2)}_{l-1,m-1}}{u^{(2)}_{l,m-1}+u^{(2)}_{l-1,m}},\\
\displaystyle\frac{x_{l,m+1}-x_{l,m}}{2b}=-\displaystyle\frac{1}{2}\left(|u^{(1)}_{l,m}|^{2}+|u^{(2)}_{l,m}|^{2}\right).
\end{array}
\right.
\label{2CSPfulldis}
\end{eqnarray}
We test (\ref{2CSPfulldis}) using initial data generated from complex-conjugacy
reductions of the 4-SP soliton solutions listed in the Appendix B.
For the one-soliton solution, we impose
$p_{1}=p_{2}^{*}$, $a_{1}^{(\mu)}=(b_{2}^{(\mu)})^*$ $(\mu=1,2)$, and
$B_{1}=B_{2}^{*}$. 
For the two-soliton solution, we impose
$p_{1}=p_{4}^{*}$, $p_{2}=p_{3}^{*}$,
$a_{1}^{(\mu)}=(b_{4}^{(\mu)})^*$, $a_{2}^{(\mu)}=(b_{3}^{(\mu)})^*$
$(\mu=1,2)$, $B_{1}=B_{4}^{*}$, and $B_{2}=B_{3}^{*}$. 
The resulting reduced solutions used to generate the initial data are given below.

\noindent
One-soliton:
 \begin{eqnarray}
\fl{u^{(1)}=\frac{g^{(1)}}{f},\quad u^{(2)}=\frac{g^{(2)}}{f},\quad x=X-2(\log{f})_{T},\quad t=T,}\nonumber\\
\fl{f=1+\displaystyle\frac{1}{4}\left(\displaystyle\frac{p_{1}p_{1}^{*}}{p_{1}+p_{1}^{*}}\right)^{2}\varphi_{1}\varphi_{1}^{*}(a^{(1)}_{1}(a^{(1)}_{1})^*+a^{(2)}_{1}(a^{(2)}_{1})^*),}\nonumber\\
\fl{g^{(1)}=a^{(1)}_{1}\varphi_{1},\quad g^{(2)}=a^{(2)}_{1}\varphi_{1},}\label{eq_5.12}
\end{eqnarray}
where $\varphi_{1}=B_{1}e^{p_{1}X+p_{1}^{-1}T}$.
\begin{figure}[htbp]
 \begin{tabular}{cc}
      \begin{minipage}[t]{0.47\hsize}
       \centering
        \includegraphics[keepaspectratio, scale=0.33]{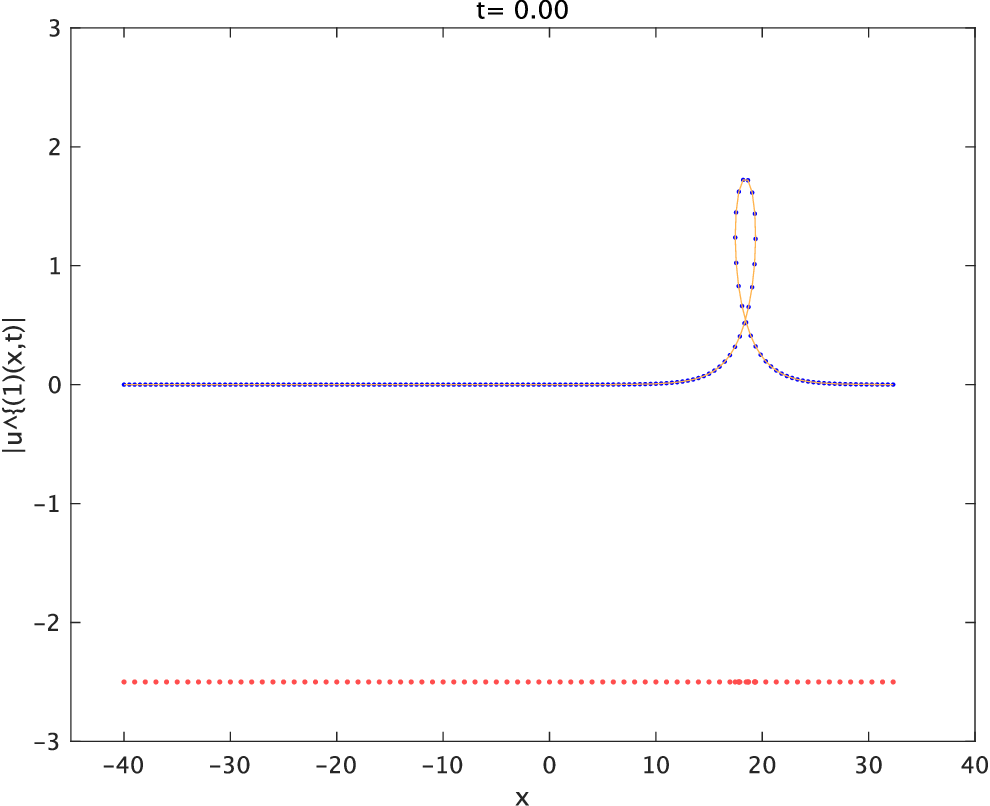}
      \end{minipage} &
      \begin{minipage}[t]{0.47\hsize}
        \centering
        \includegraphics[keepaspectratio, scale=0.33]{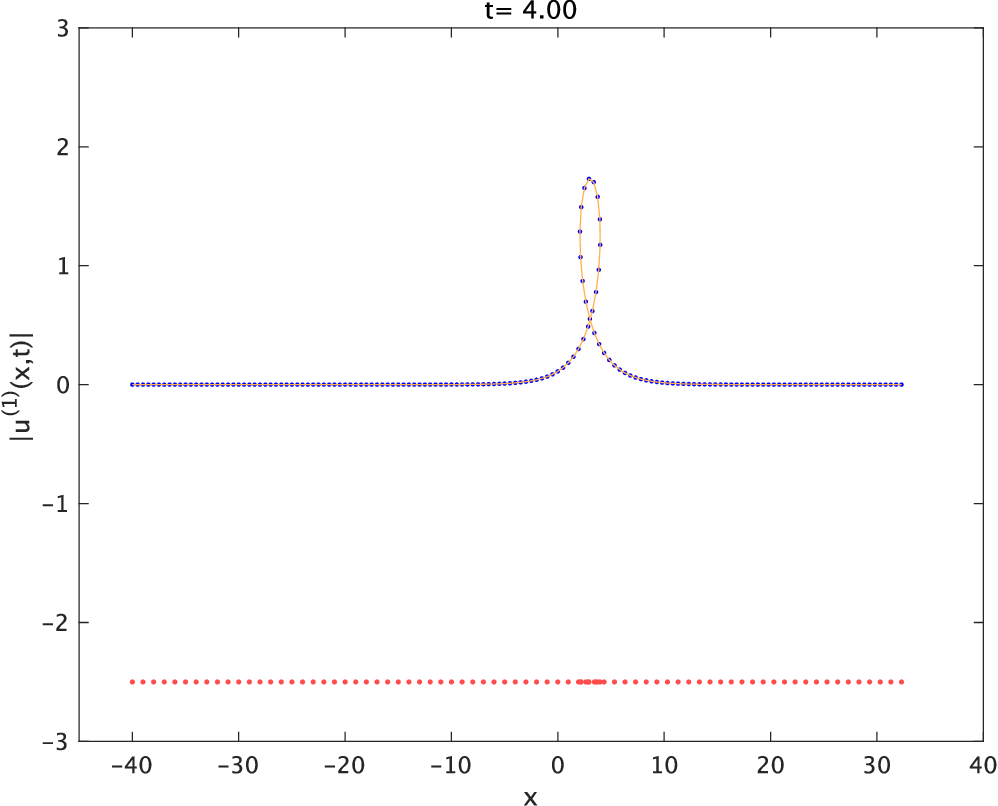}
      \end{minipage}\\ 
  
      \begin{minipage}[t]{0.47\hsize}
        \centering
        \includegraphics[keepaspectratio, scale=0.33]{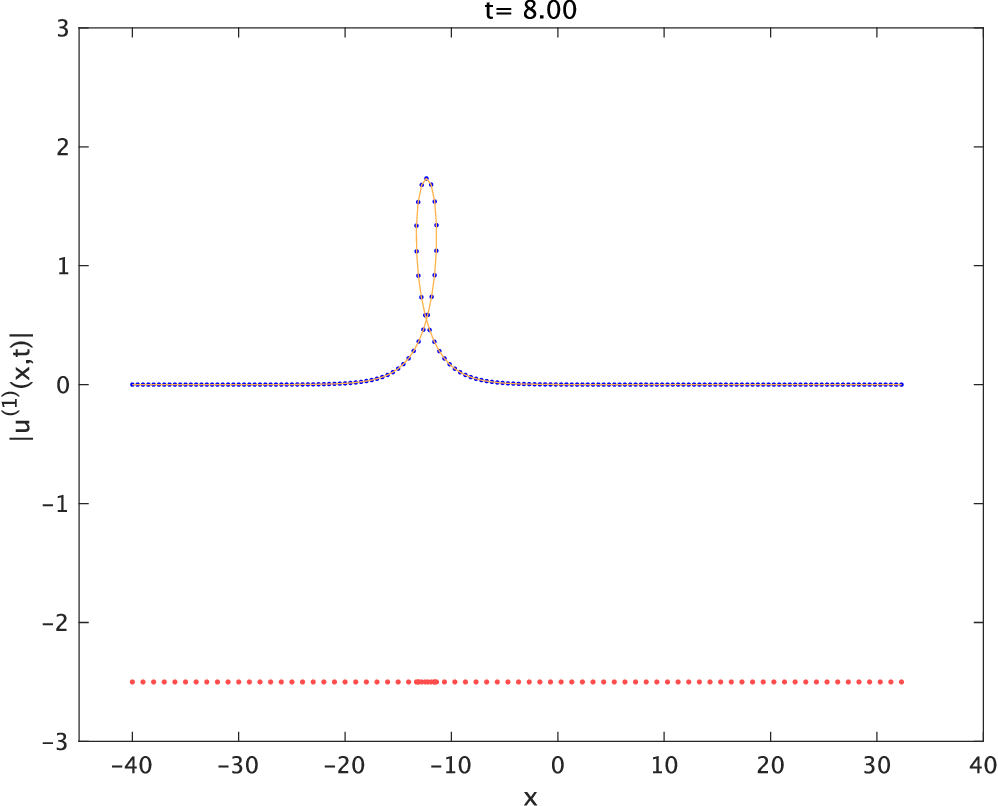}
      \end{minipage} &
      \begin{minipage}[t]{0.47\hsize}
        \centering
        \includegraphics[keepaspectratio, scale=0.33]{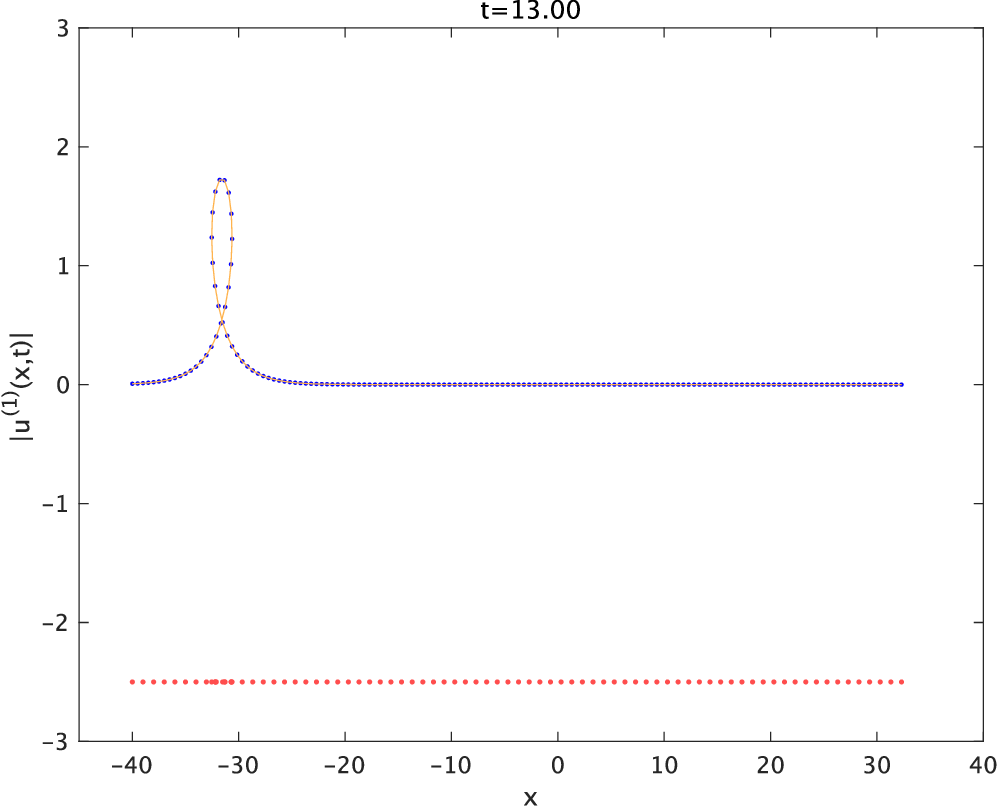}
      \end{minipage} 
       \end{tabular}
     \caption{Numerical simulation of the $|u^{(1)}|$-profile of the one-soliton solution of the 2-CSP equation. ${\rm maxerr}(|u^{(1)}|)=1.67\times 10^{-4}$.}
              \label{2CSP_1_abs_u1}
  \end{figure}

\begin{figure}[htbp]
 \begin{tabular}{cc}
      \begin{minipage}[t]{0.47\hsize}
       \centering
        \includegraphics[keepaspectratio, scale=0.33]{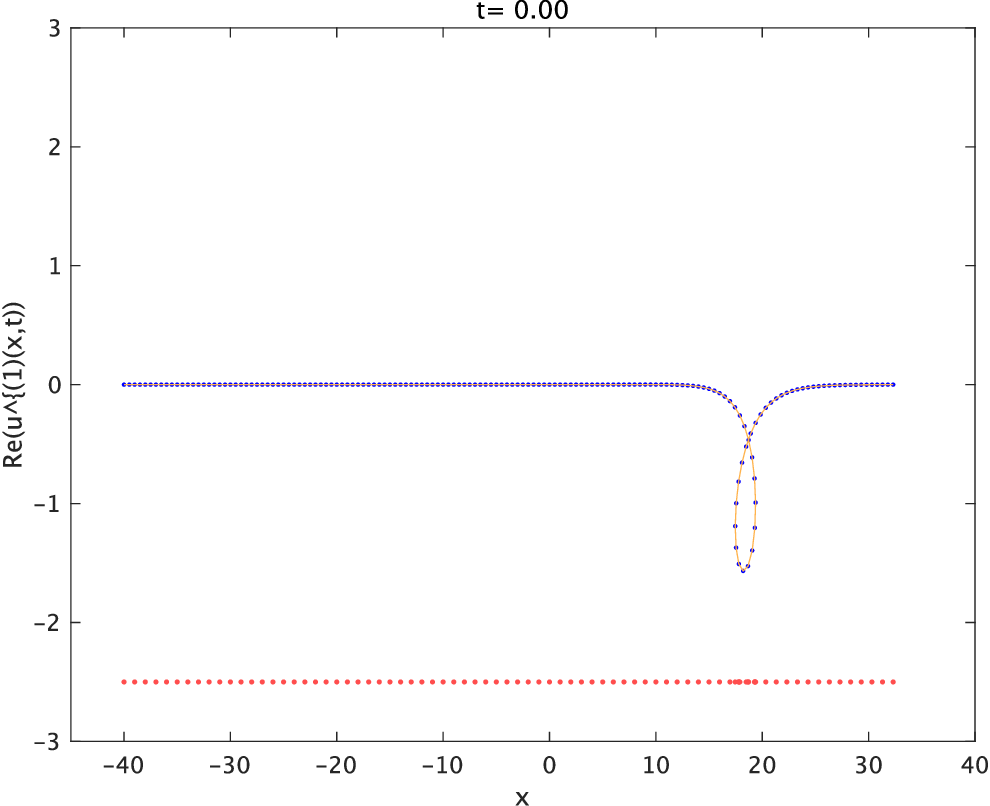}
      \end{minipage} &
      \begin{minipage}[t]{0.47\hsize}
        \centering
        \includegraphics[keepaspectratio, scale=0.33]{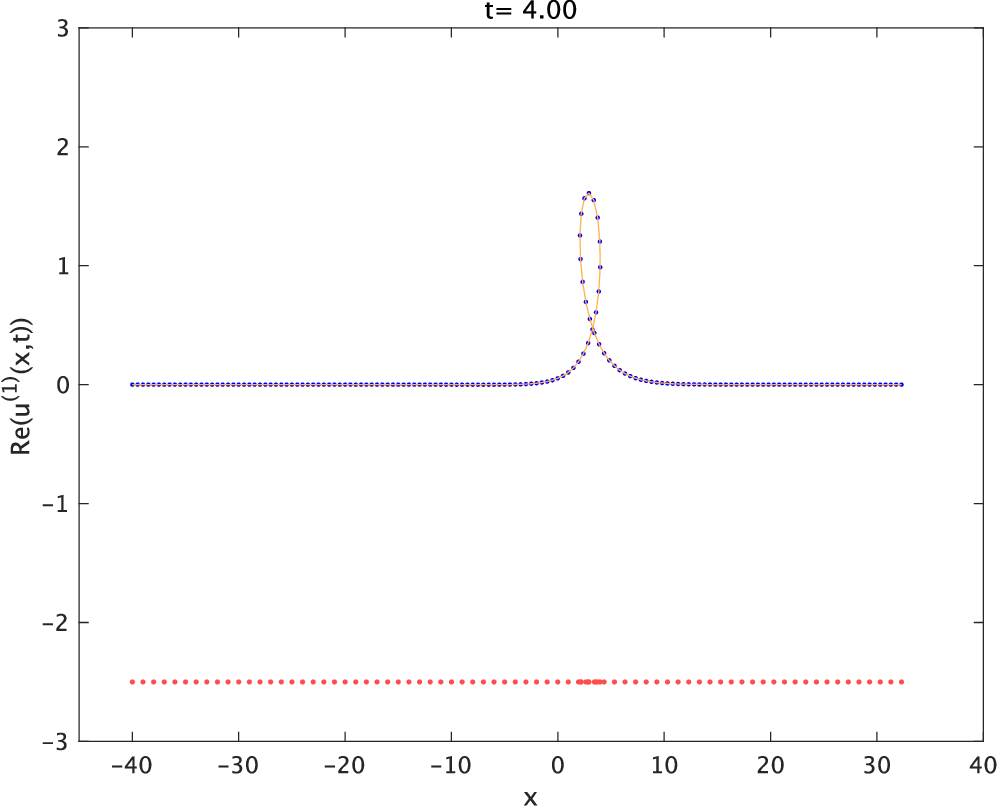}
      \end{minipage}\\ 
  
      \begin{minipage}[t]{0.47\hsize}
        \centering
        \includegraphics[keepaspectratio, scale=0.33]{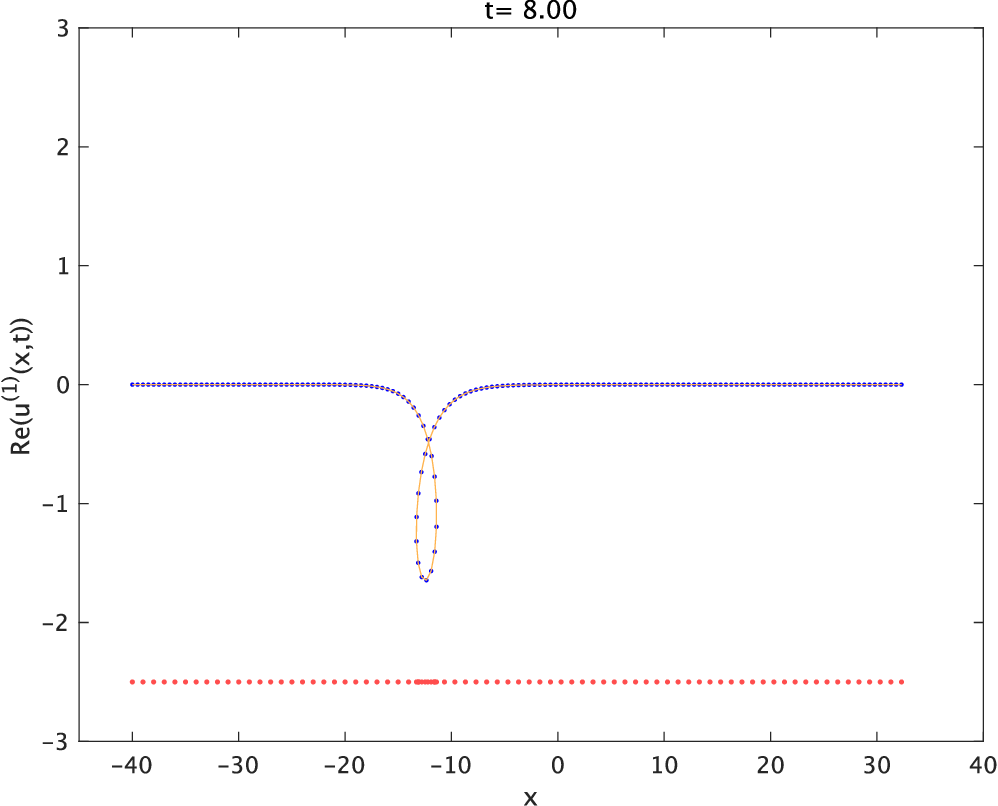}
      \end{minipage} &
      \begin{minipage}[t]{0.47\hsize}
        \centering
        \includegraphics[keepaspectratio, scale=0.33]{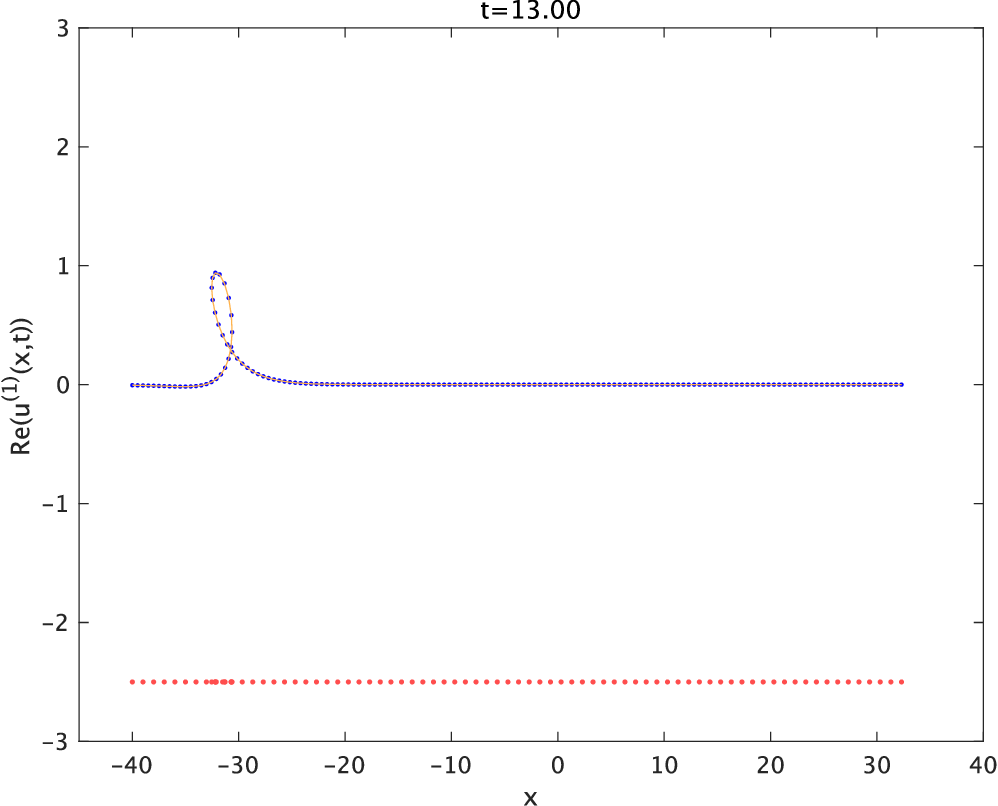}
      \end{minipage} 
       \end{tabular}
     \caption{Numerical simulation of the ${\rm Re}(u^{(1)})$-profile of the one-soliton solution of the 2-CSP equation. ${\rm maxerr}({\rm Re}(u^{(1)}))=2.87\times 10^{-3}$.}
              \label{2CSP_1_re_u1}
  \end{figure}

\begin{figure}[htbp]
 \begin{tabular}{cc}
      \begin{minipage}[t]{0.47\hsize}
       \centering
        \includegraphics[keepaspectratio, scale=0.33]{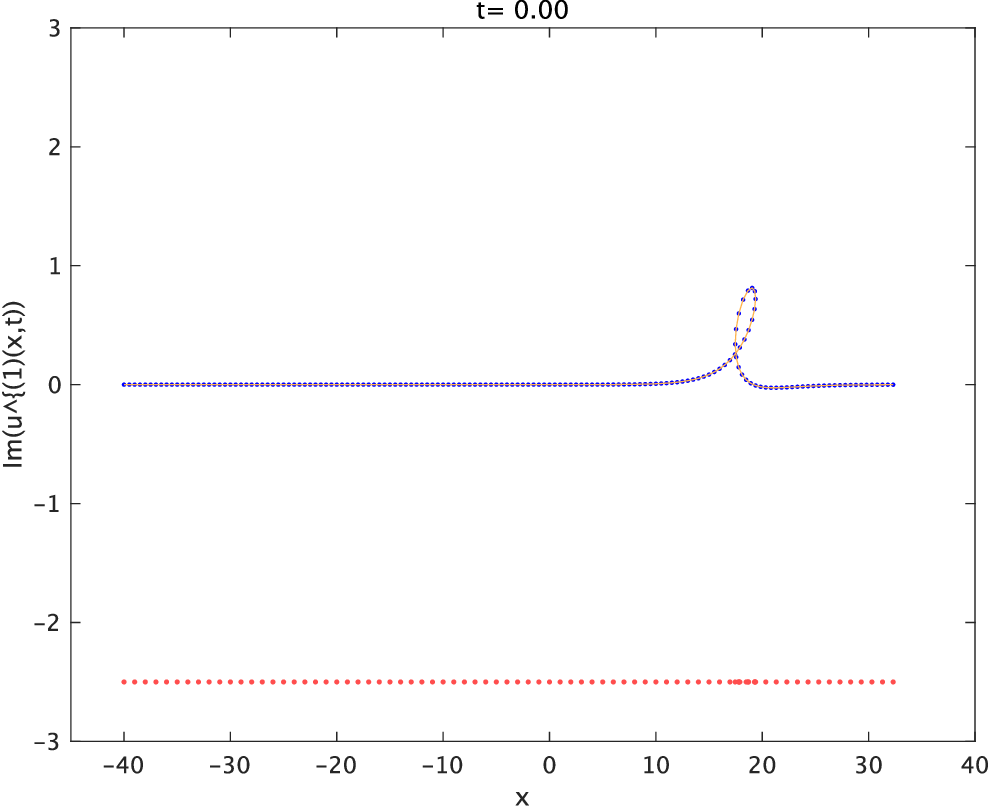}
      \end{minipage} &
      \begin{minipage}[t]{0.47\hsize}
        \centering
        \includegraphics[keepaspectratio, scale=0.33]{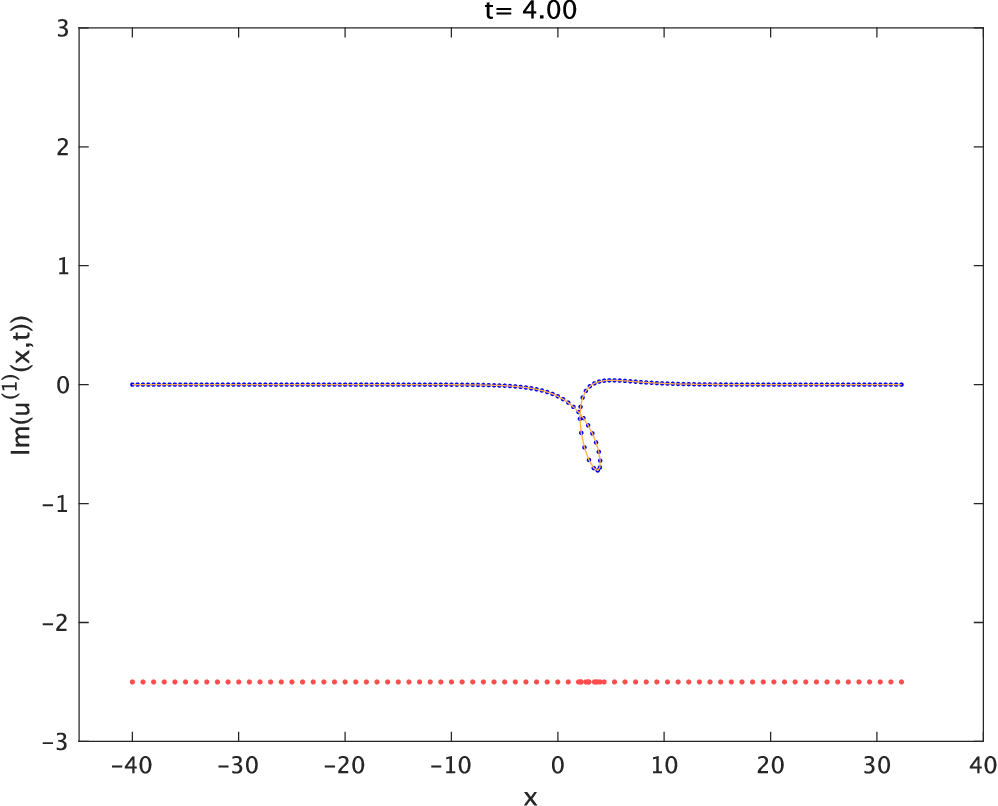}
      \end{minipage}\\ 
  
      \begin{minipage}[t]{0.47\hsize}
        \centering
        \includegraphics[keepaspectratio, scale=0.33]{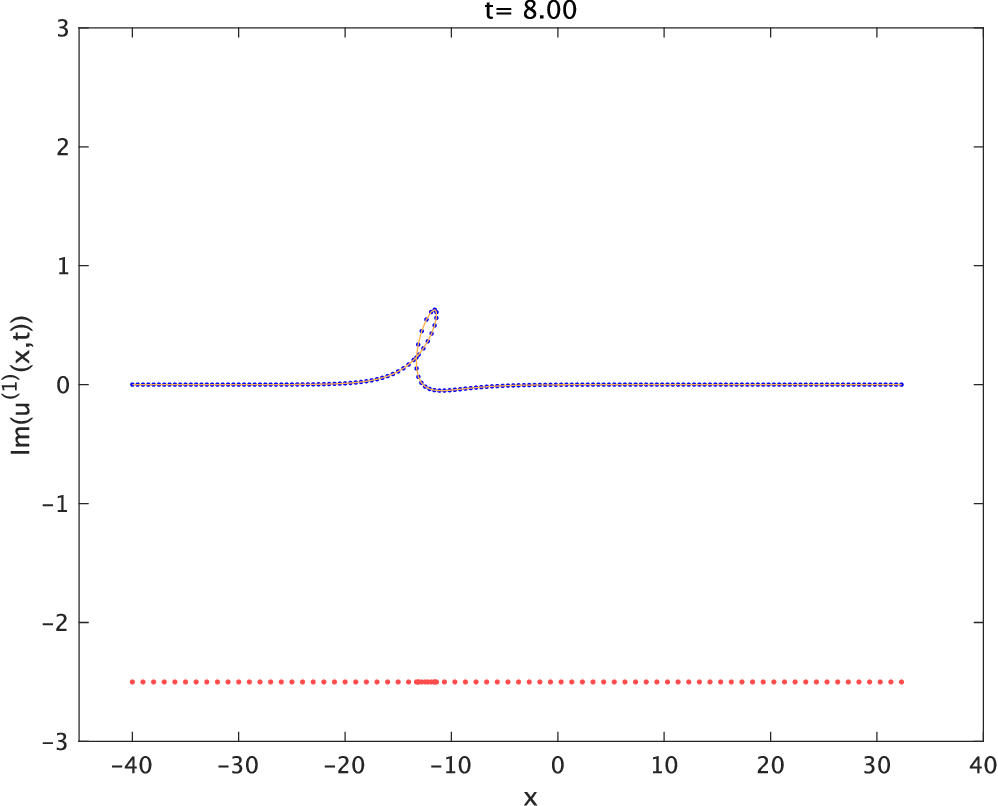}
      \end{minipage} &
      \begin{minipage}[t]{0.47\hsize}
        \centering
        \includegraphics[keepaspectratio, scale=0.33]{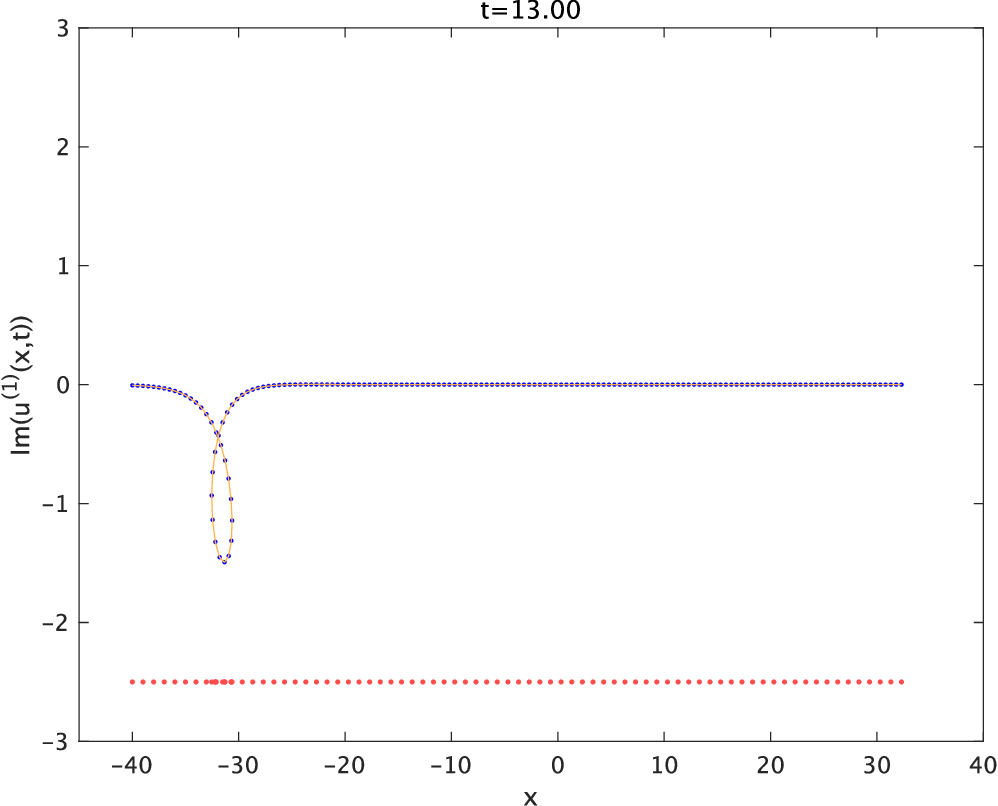}
      \end{minipage} 
       \end{tabular}
     \caption{Numerical simulation of the ${\rm Im}(u^{(1)})$-profile of the one-soliton solution of the 2-CSP equation. ${\rm maxerr}({\rm Im}(u^{(1)}))=4.22\times 10^{-3}$.}
              \label{2CSP_1_im_u1}
  \end{figure}

\begin{figure}[htbp]
 \begin{tabular}{cc}
      \begin{minipage}[t]{0.47\hsize}
       \centering
        \includegraphics[keepaspectratio, scale=0.33]{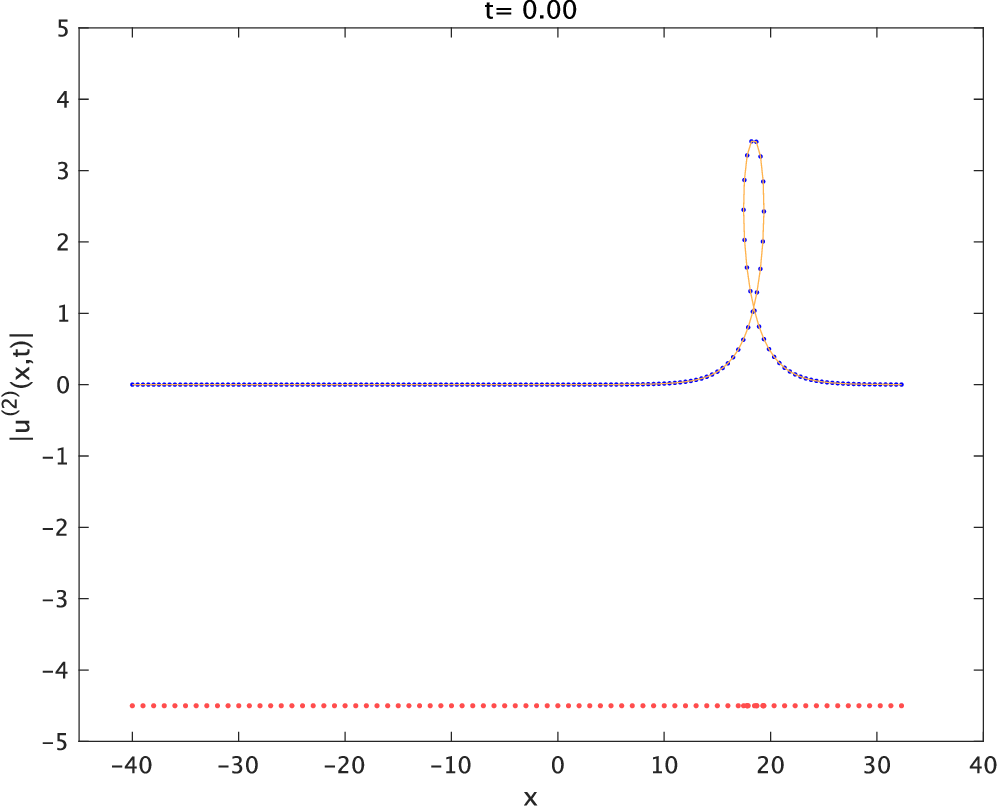}
      \end{minipage} &
      \begin{minipage}[t]{0.47\hsize}
        \centering
        \includegraphics[keepaspectratio, scale=0.33]{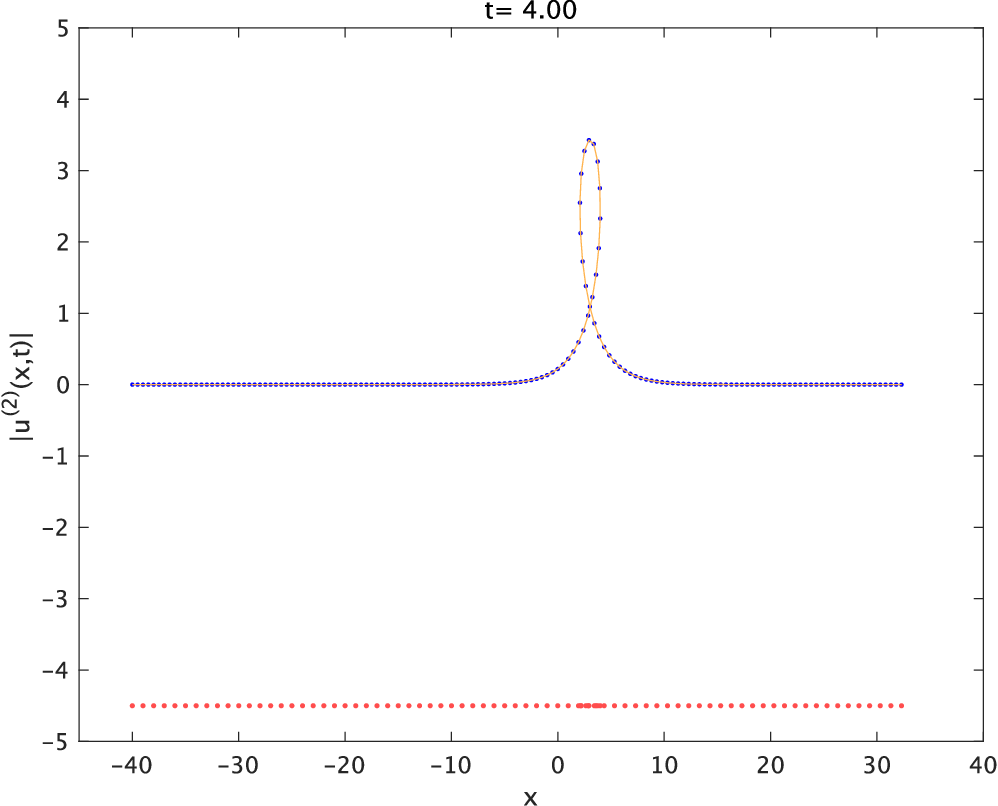}
      \end{minipage}\\ 
  
      \begin{minipage}[t]{0.47\hsize}
        \centering
        \includegraphics[keepaspectratio, scale=0.33]{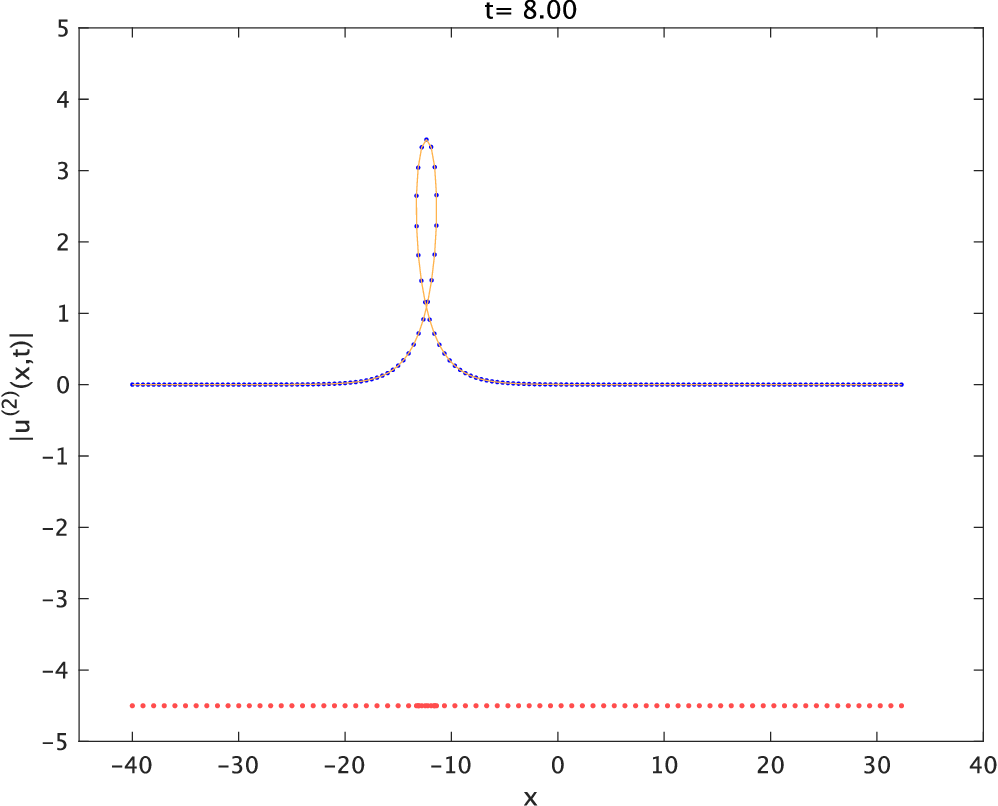}
      \end{minipage} &
      \begin{minipage}[t]{0.47\hsize}
        \centering
        \includegraphics[keepaspectratio, scale=0.33]{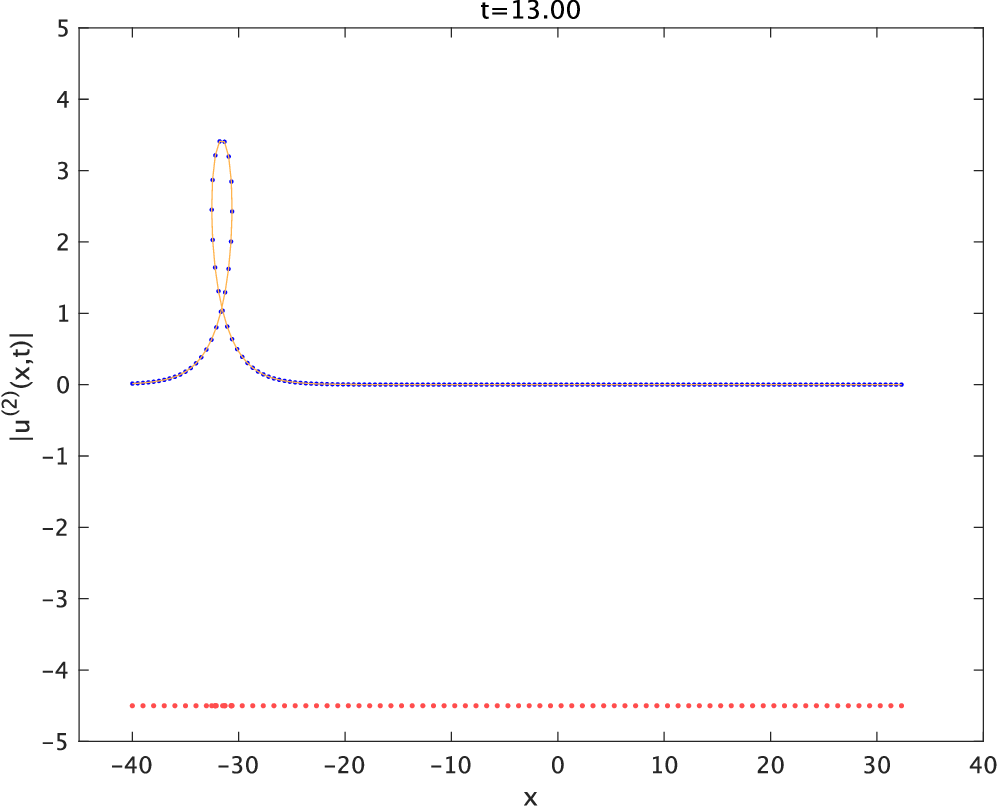}
      \end{minipage} 
       \end{tabular}
     \caption{Numerical simulation of the $|u^{(2)}|$-profile of the one-soliton solution of the 2-CSP equation. ${\rm maxerr}(|u^{(2)}|)=1.67\times 10^{-4}$.}
              \label{2CSP_1_abs_u2}
  \end{figure}

\begin{figure}[htbp]
 \begin{tabular}{cc}
      \begin{minipage}[t]{0.47\hsize}
       \centering
        \includegraphics[keepaspectratio, scale=0.33]{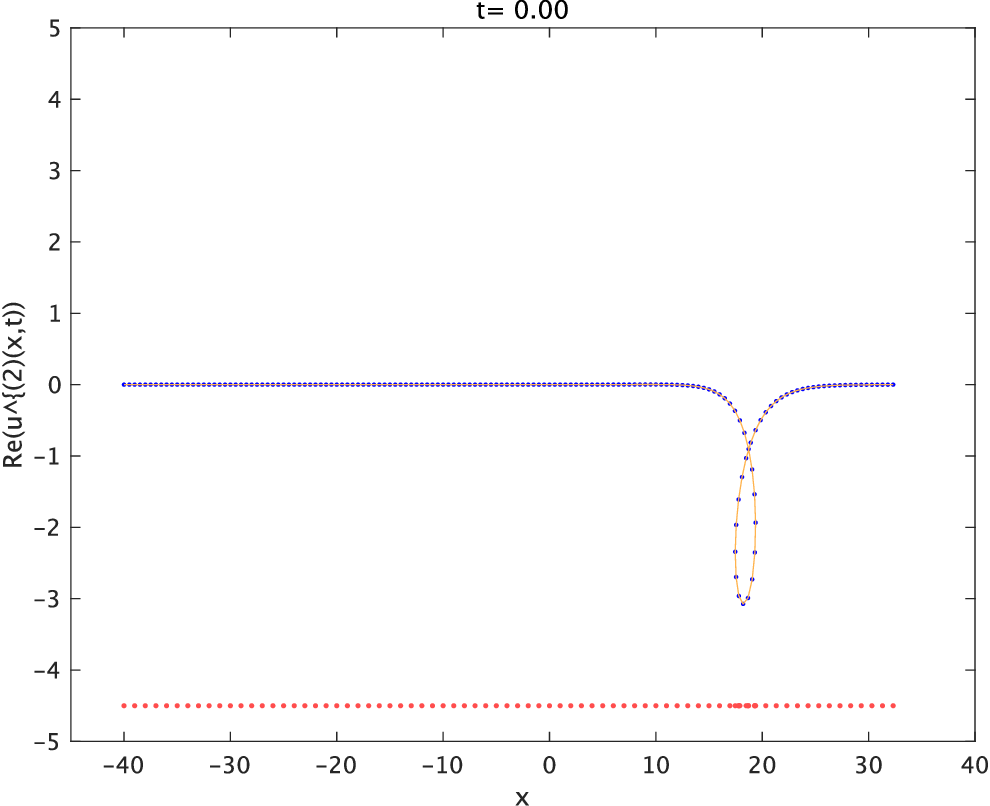}
      \end{minipage} &
      \begin{minipage}[t]{0.47\hsize}
        \centering
        \includegraphics[keepaspectratio, scale=0.33]{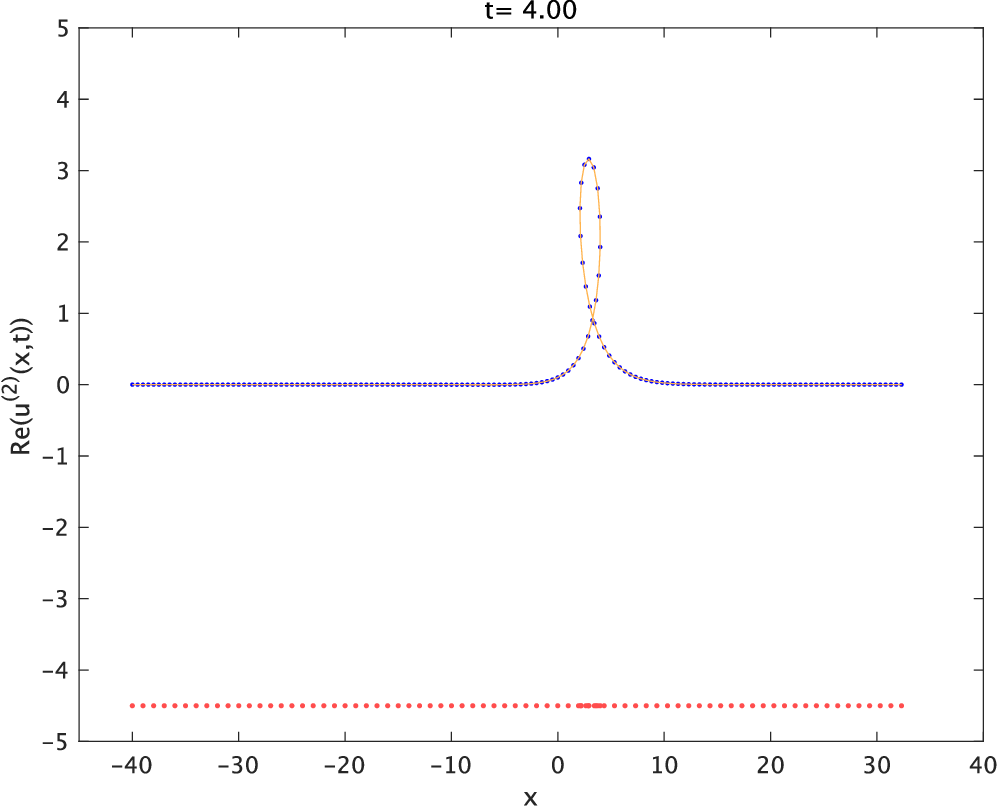}
      \end{minipage}\\ 
  
      \begin{minipage}[t]{0.47\hsize}
        \centering
        \includegraphics[keepaspectratio, scale=0.33]{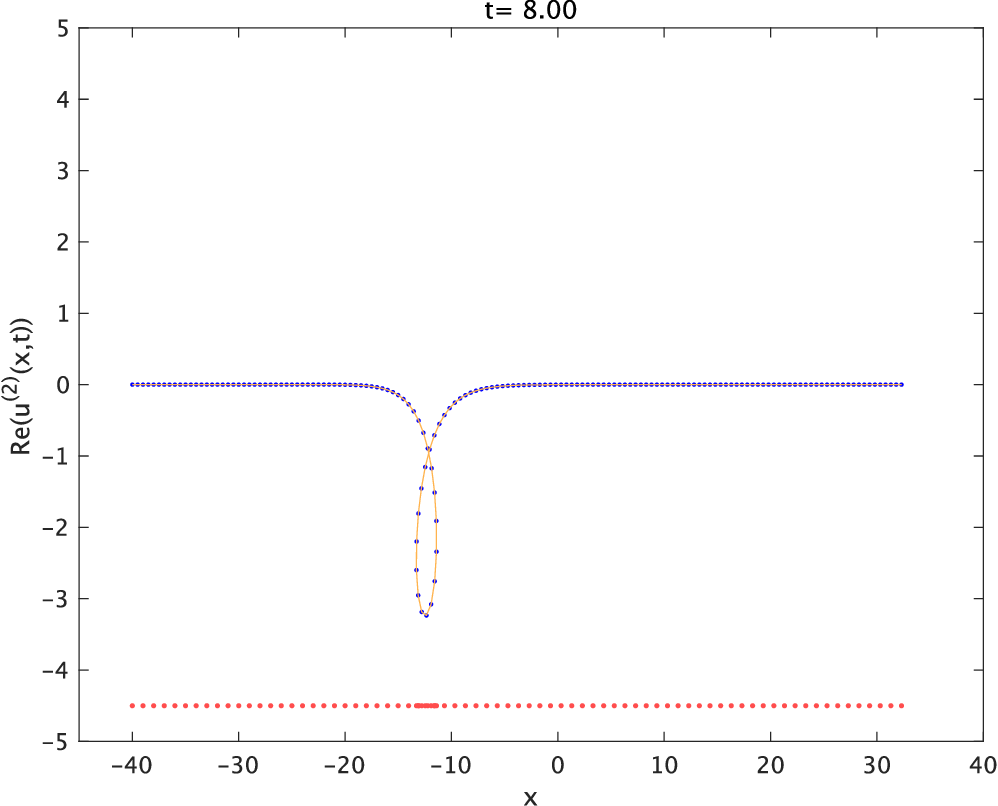}
      \end{minipage} &
      \begin{minipage}[t]{0.47\hsize}
        \centering
        \includegraphics[keepaspectratio, scale=0.33]{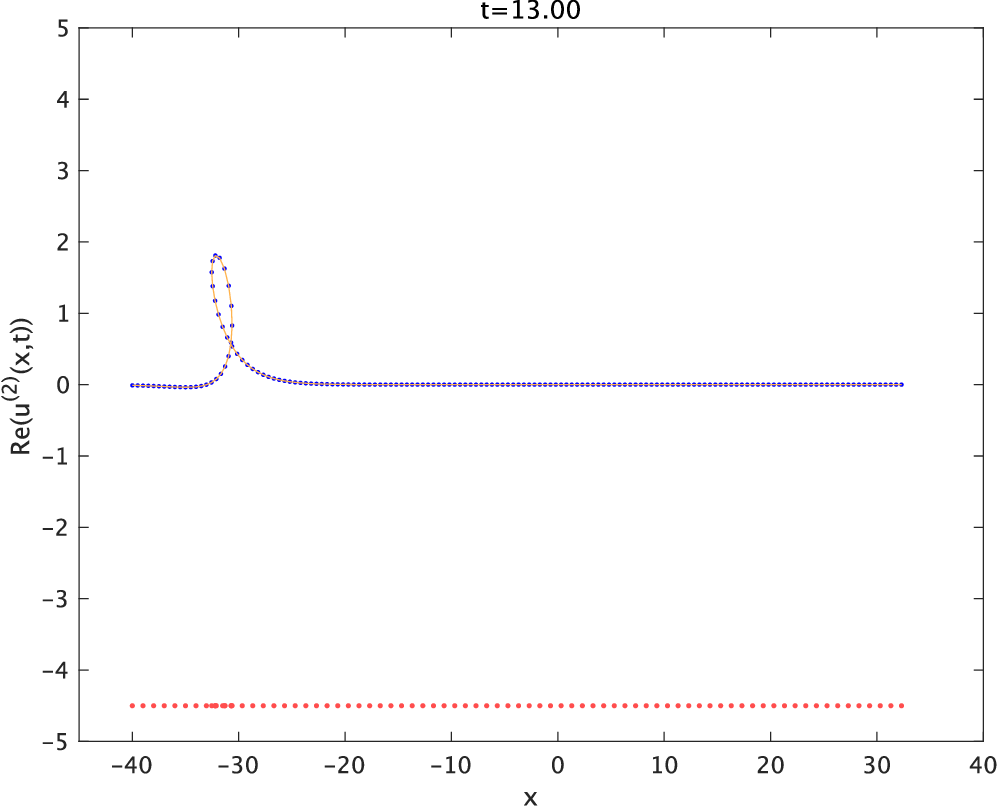}
      \end{minipage} 
       \end{tabular}
     \caption{Numerical simulation of the ${\rm Re}(u^{(2)})$-profile of the one-soliton solution of the 2-CSP equation. ${\rm maxerr}({\rm Re}(u^{(2)}))=2.85\times 10^{-3}$.}
              \label{2CSP_1_re_u2}
  \end{figure}

\begin{figure}[htbp]
 \begin{tabular}{cc}
      \begin{minipage}[t]{0.47\hsize}
       \centering
        \includegraphics[keepaspectratio, scale=0.33]{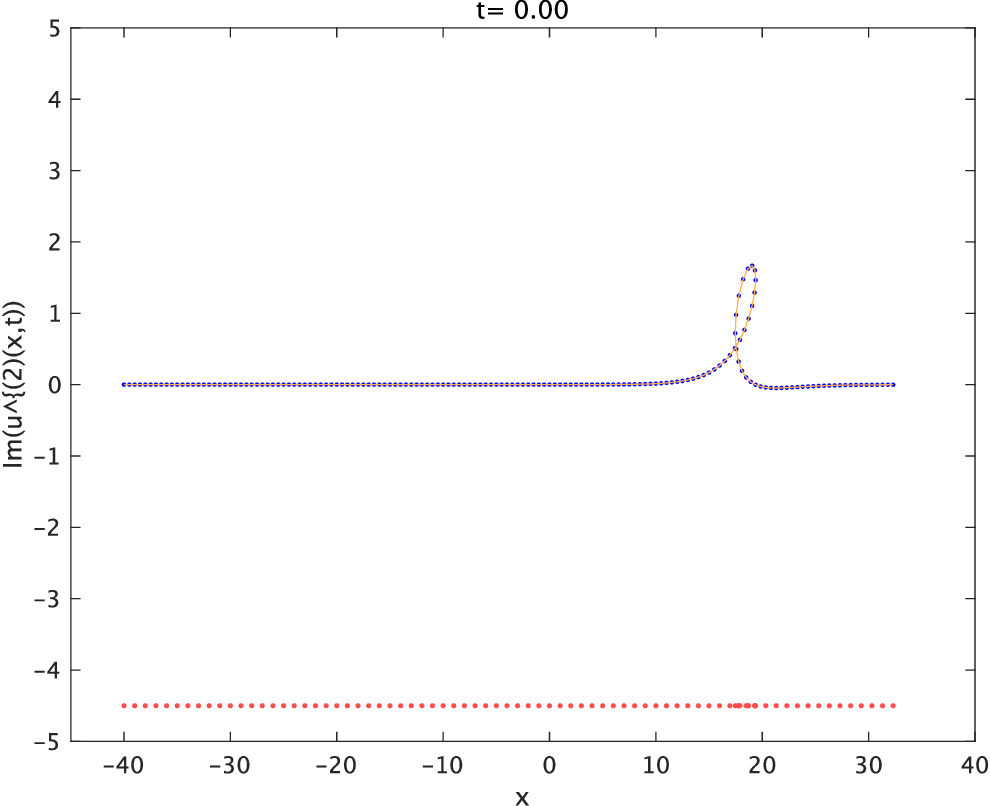}
      \end{minipage} &
      \begin{minipage}[t]{0.47\hsize}
        \centering
        \includegraphics[keepaspectratio, scale=0.33]{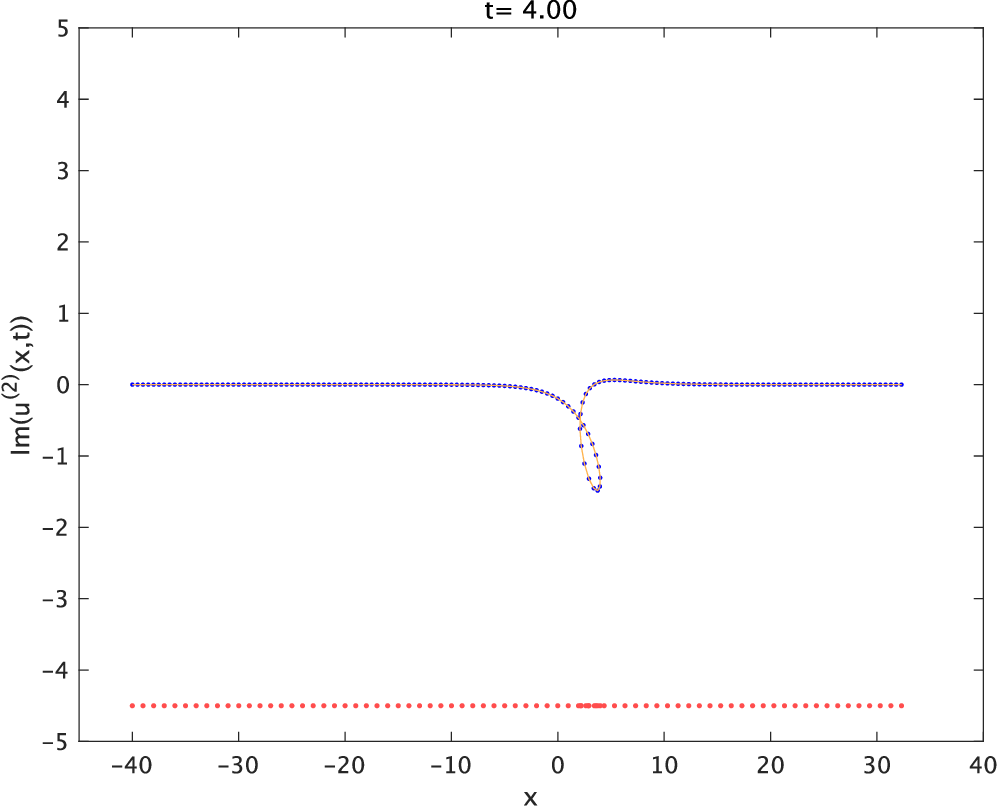}
      \end{minipage}\\ 
  
      \begin{minipage}[t]{0.47\hsize}
        \centering
        \includegraphics[keepaspectratio, scale=0.33]{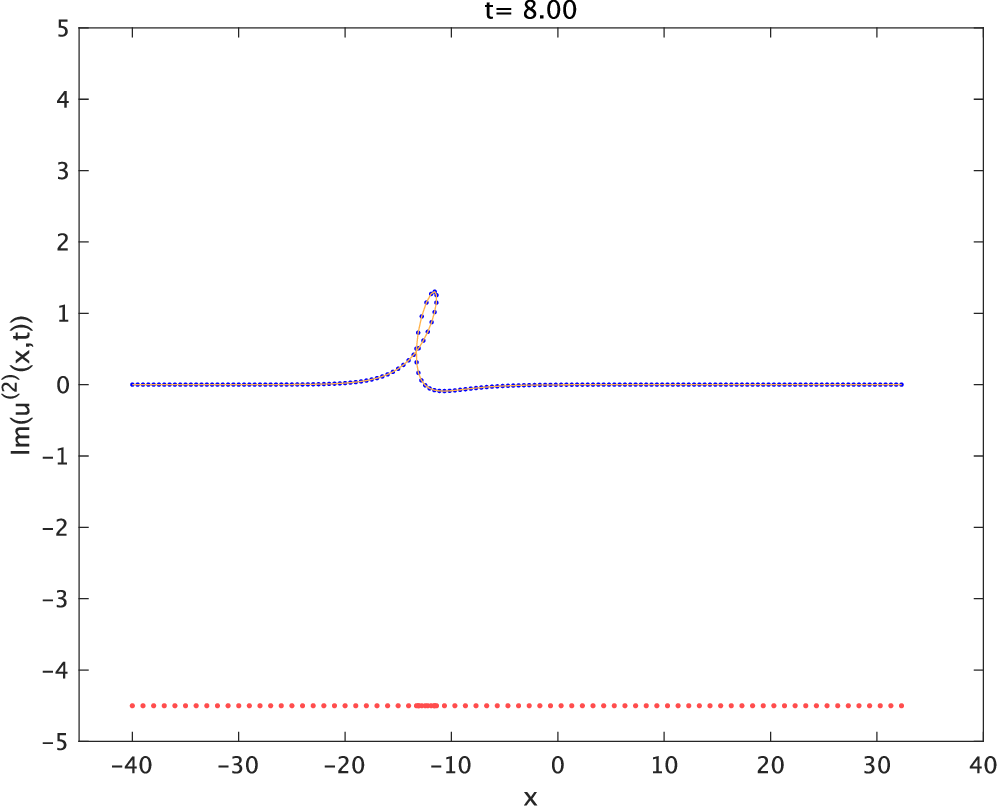}
      \end{minipage} &
      \begin{minipage}[t]{0.47\hsize}
        \centering
        \includegraphics[keepaspectratio, scale=0.33]{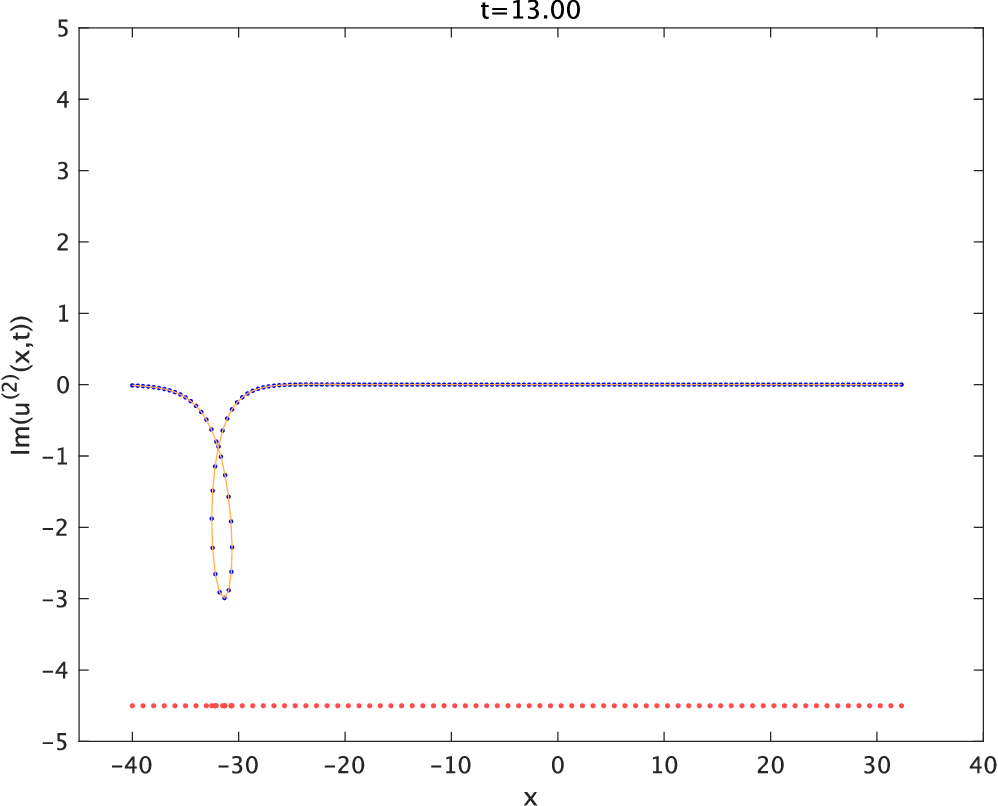}
      \end{minipage} 
       \end{tabular}
     \caption{Numerical simulation of the ${\rm Im}(u^{(2)})$-profile of the one-soliton solution of the 2-CSP equation. ${\rm maxerr}({\rm Im}(u^{(2)}))=4.21\times 10^{-3}$.}
              \label{2CSP_1_im_u2}
  \end{figure}

\noindent
Two-soliton:
Using the notation in (\ref{con_compact_coefficients}), the constraints for the
2-CSP equation imply
\begin{eqnarray*}
\fl C_{i3}=\sum_{\alpha=1}^{2}a_i^{(\alpha)}(a_2^{(\alpha)})^*,\qquad
C_{i4}=\sum_{\alpha=1}^{2}a_i^{(\alpha)}(a_1^{(\alpha)})^*,\qquad i=1,2,\\
\fl \varphi_3=\varphi_2^*,\qquad \varphi_4=\varphi_1^*.
\end{eqnarray*}
 \begin{eqnarray}
 \fl u^{(\mu)}&=&\frac{g^{(\mu)}}{f},\qquad x=X-2(\log f)_T,\qquad t=T\nonumber\\
\fl f&=&1+\tilde{C}_{14}\varphi_1\varphi_4+\tilde{C}_{13}\varphi_1\varphi_3
+\tilde{C}_{24}\varphi_2\varphi_4+\tilde{C}_{23}\varphi_2\varphi_3\nonumber\\
\fl &&+\Delta\left(\prod_{1\leq i<j\leq4}R_{ij}\right)
\varphi_1\varphi_2\varphi_3\varphi_4,\nonumber\\
\fl g^{(\mu)}&=&a_{1}^{(\mu)}\left(\varphi_1+R_{12}R_{14}\tilde{C}_{24}\varphi_1\varphi_2\varphi_4
+R_{12}R_{13}\tilde{C}_{23}\varphi_1\varphi_2\varphi_3\right)\nonumber\\
\fl &&+a_{2}^{(\mu)}\left(\varphi_2-R_{12}R_{24}\tilde{C}_{14}\varphi_1\varphi_2\varphi_4
-R_{12}R_{23}\tilde{C}_{13}\varphi_1\varphi_2\varphi_3\right),\qquad \mu=1,2.
\end{eqnarray}
where $\varphi_{i}=B_{i}e^{p_{i}X+p_{i}^{-1}T}$ and $i=1,2$.

Figures \ref{2CSP_1_abs_u1}--\ref{2CSP_1_im_u2} show the numerical
$|u^{(1)}|$, ${\rm Re}(u^{(1)})$, ${\rm Im}(u^{(1)})$, $|u^{(2)}|$,
${\rm Re}(u^{(2)})$, and ${\rm Im}(u^{(2)})$ profiles for the one-soliton
solution with $p_{1}=0.5+0.1\mathrm{i}$, $a^{(1)}_{1}=1+0.5\mathrm{i}$,
$a^{(2)}_{1}=2+0.95\mathrm{i}$, and $B_{1}={\rm exp}(-10)$.
Figures \ref{2CSP_2_abs_u1}--\ref{2CSP_2_im_u2} show the corresponding profiles
for the two-soliton solution with $p_{1}=0.5+0.1\mathrm{i}$,
$p_{2}=2-0.5\mathrm{i}$, $a^{(1)}_{1}=1+0.5\mathrm{i}$,
$a^{(1)}_{2}=2-0.1\mathrm{i}$, $a^{(2)}_{1}=2+0.95\mathrm{i}$,
$a^{(2)}_{2}=4-0.5\mathrm{i}$, and $B_{1}=B_{2}={\rm exp}(-10)$. In both
cases, the solitons propagate to the left along the $x$-axis.

\begin{figure}[htbp]
 \begin{tabular}{cc}
      \begin{minipage}[t]{0.47\hsize}
       \centering
        \includegraphics[keepaspectratio, scale=0.33]{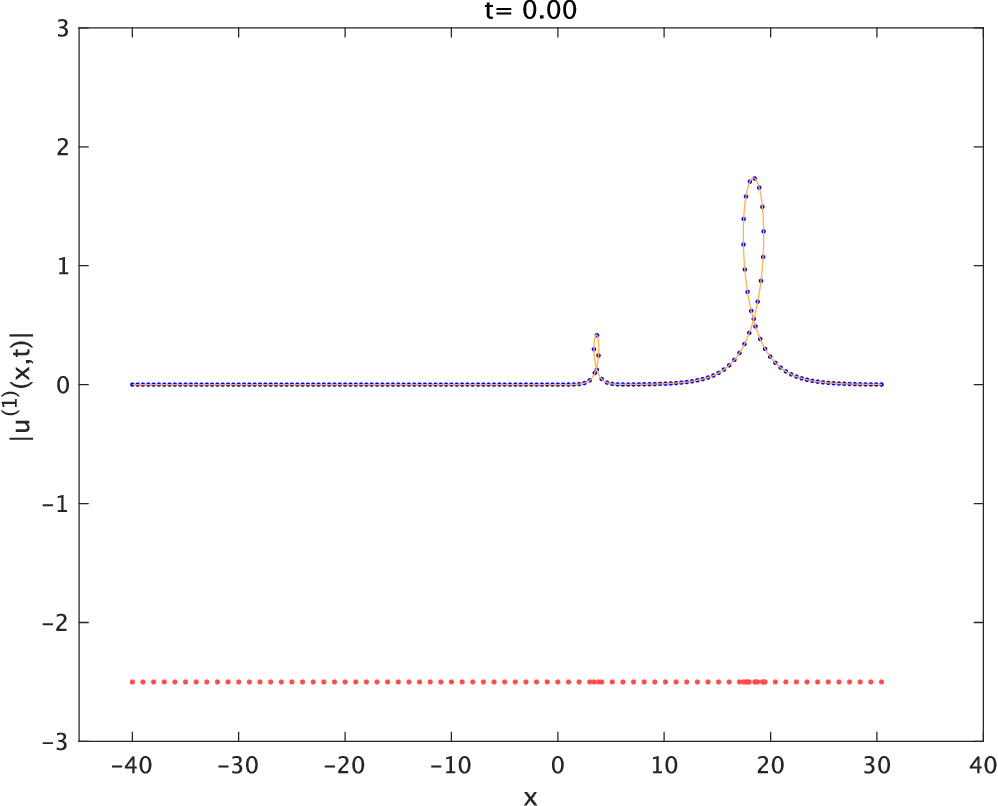}
      \end{minipage} &
      \begin{minipage}[t]{0.47\hsize}
        \centering
        \includegraphics[keepaspectratio, scale=0.33]{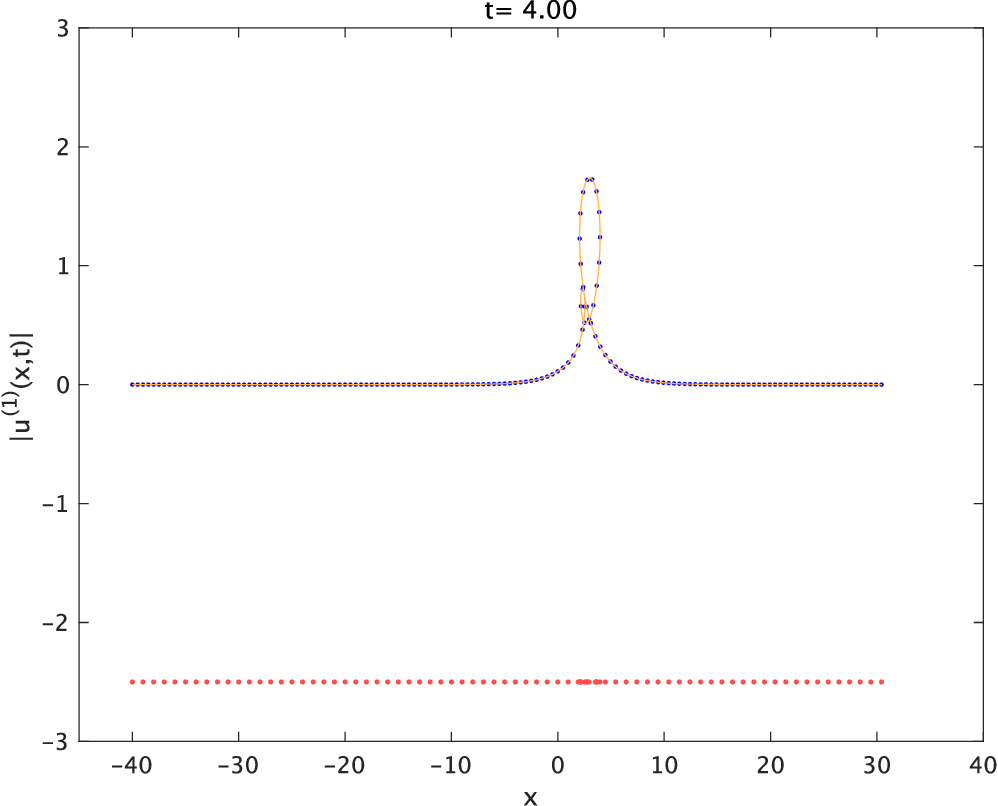}
      \end{minipage}\\ 
  
      \begin{minipage}[t]{0.47\hsize}
        \centering
        \includegraphics[keepaspectratio, scale=0.33]{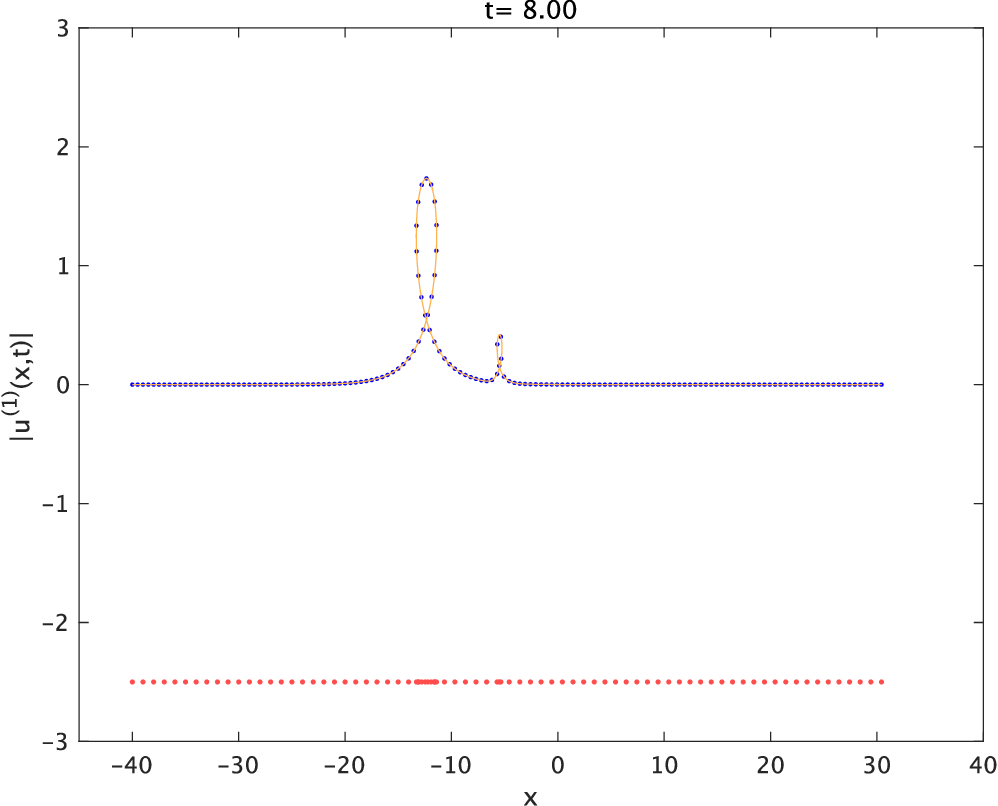}
      \end{minipage} &
      \begin{minipage}[t]{0.47\hsize}
        \centering
        \includegraphics[keepaspectratio, scale=0.33]{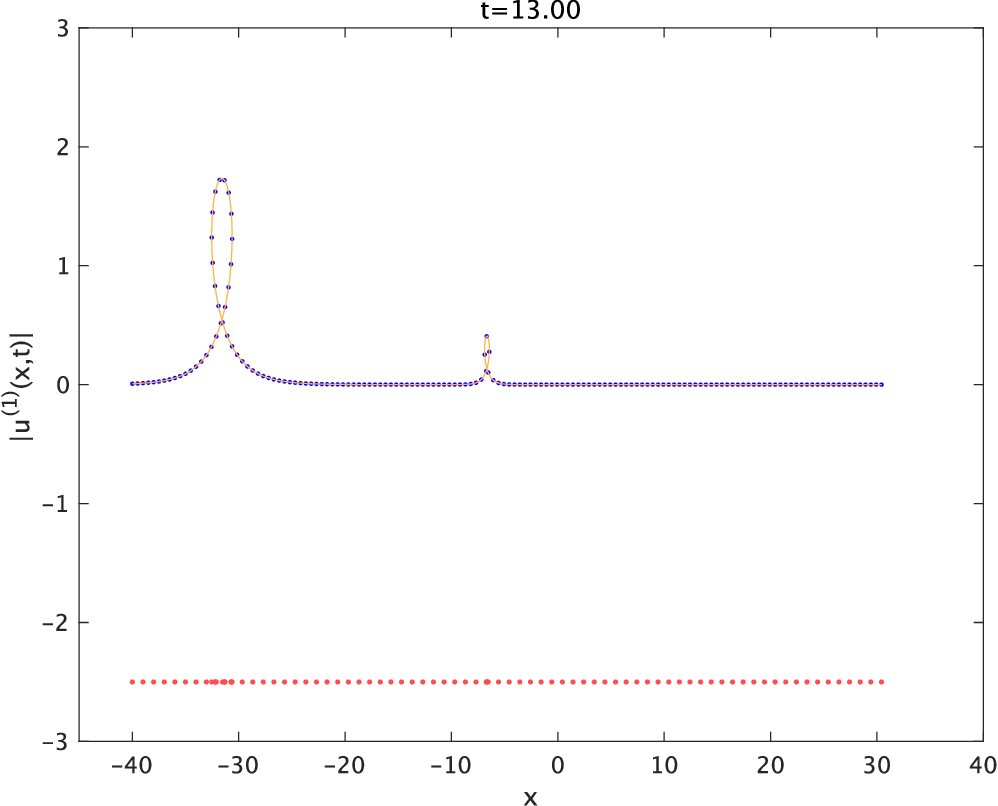}
      \end{minipage} 
       \end{tabular}
     \caption{Numerical simulation of the $|u^{(1)}|$-profile of the two-soliton solution of the 2-CSP equation. ${\rm maxerr}(|u^{(1)}|)=1.63\times 10^{-4}$.}
              \label{2CSP_2_abs_u1}
  \end{figure}

\begin{figure}[htbp]
 \begin{tabular}{cc}
      \begin{minipage}[t]{0.47\hsize}
       \centering
        \includegraphics[keepaspectratio, scale=0.33]{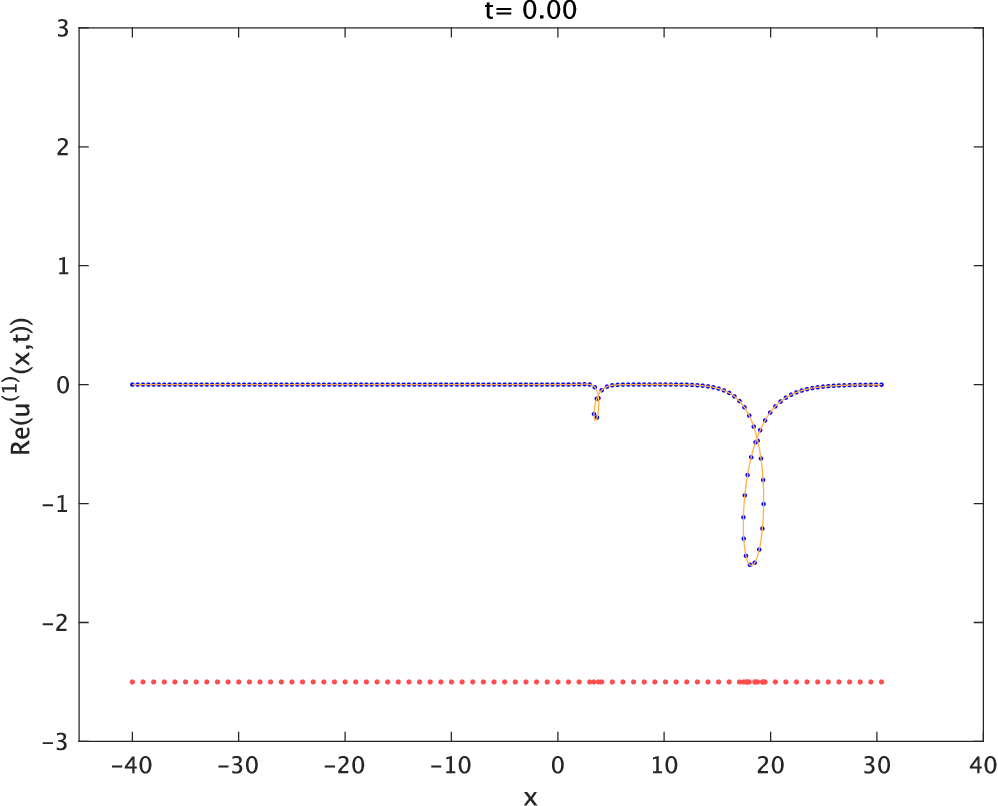}
      \end{minipage} &
      \begin{minipage}[t]{0.47\hsize}
        \centering
        \includegraphics[keepaspectratio, scale=0.33]{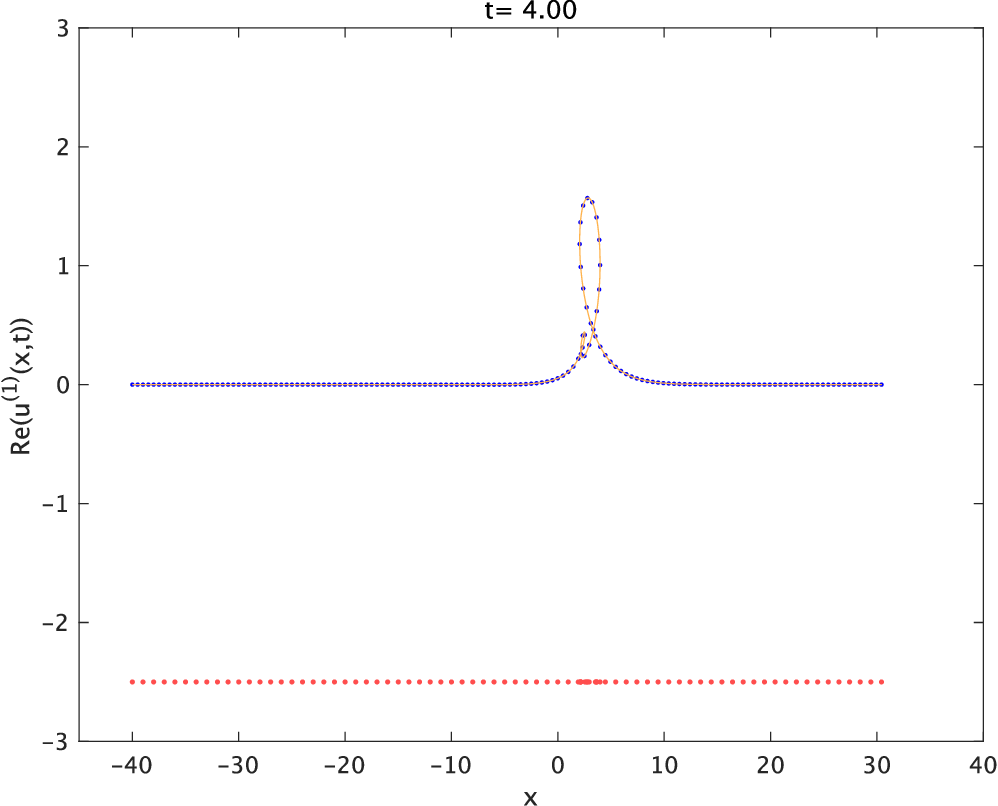}
      \end{minipage}\\ 
  
      \begin{minipage}[t]{0.47\hsize}
        \centering
        \includegraphics[keepaspectratio, scale=0.33]{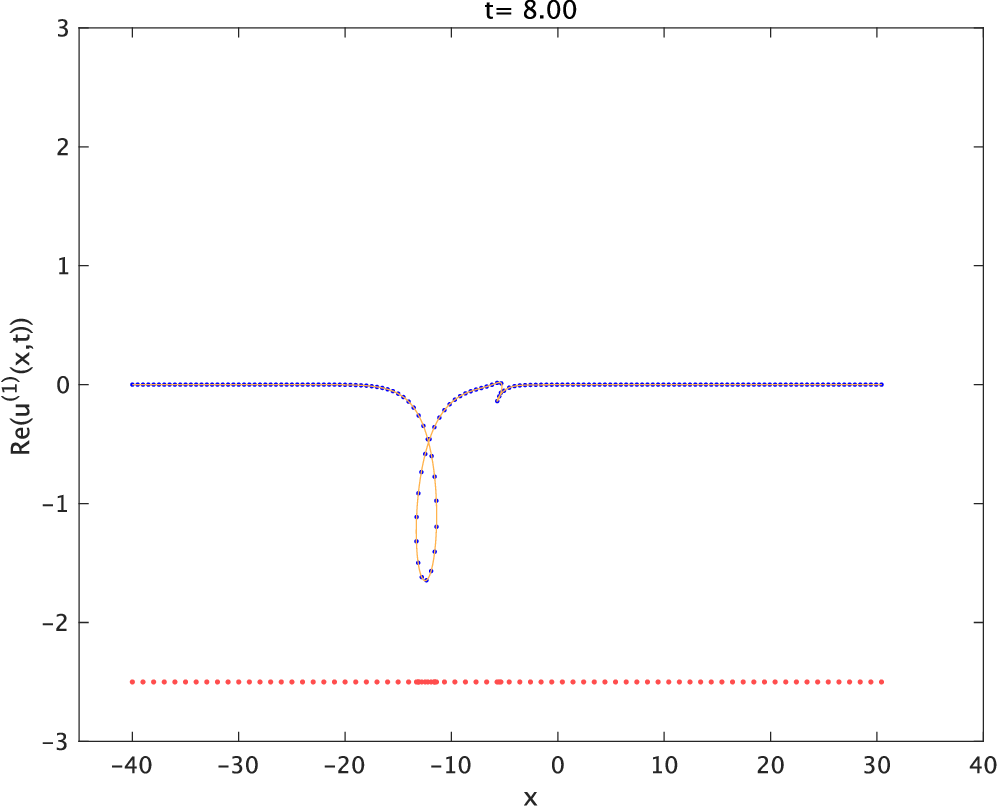}
      \end{minipage} &
      \begin{minipage}[t]{0.47\hsize}
        \centering
        \includegraphics[keepaspectratio, scale=0.33]{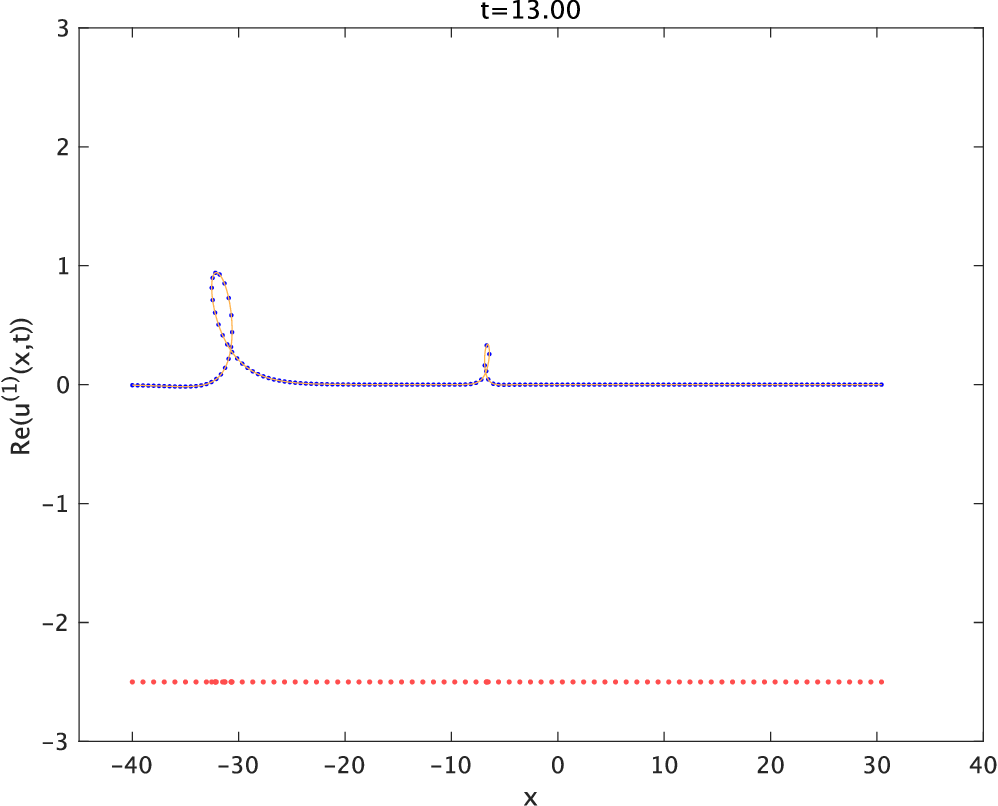}
      \end{minipage} 
       \end{tabular}
     \caption{Numerical simulation of the ${\rm Re}(u^{(1)})$-profile of the two-soliton solution of the 2-CSP equation. ${\rm maxerr}({\rm Re}(u^{(1)}))=2.27\times 10^{-3}$.}
              \label{2CSP_2_re_u1}
  \end{figure}

\begin{figure}[htbp]
 \begin{tabular}{cc}
      \begin{minipage}[t]{0.47\hsize}
       \centering
        \includegraphics[keepaspectratio, scale=0.33]{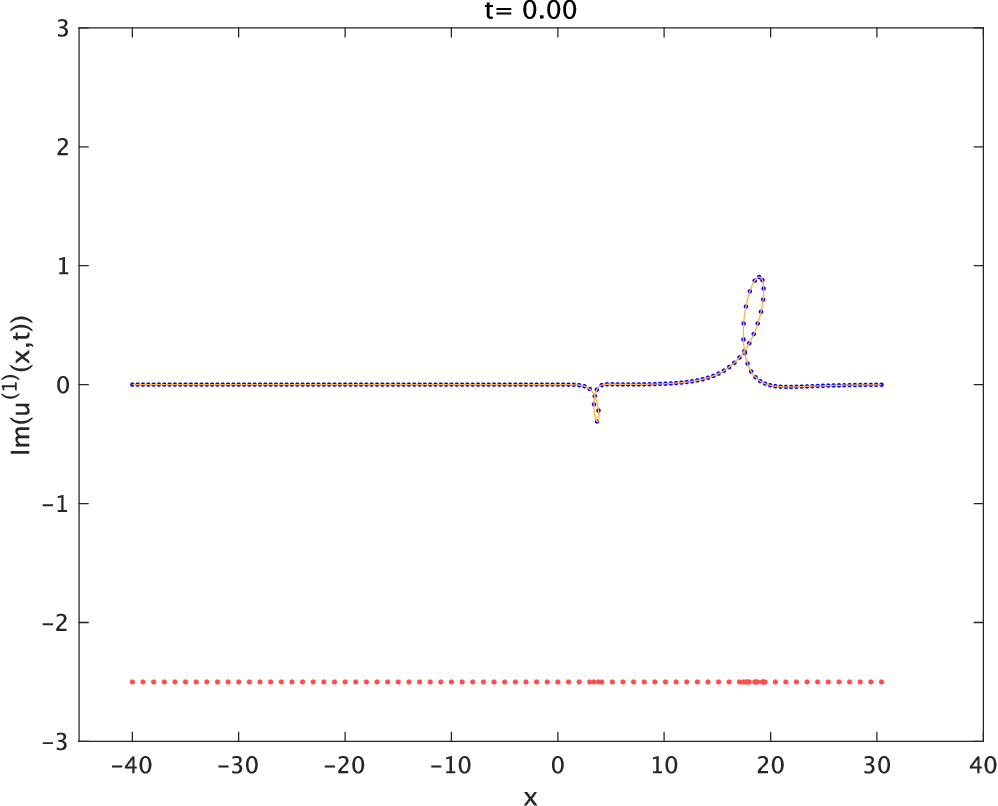}
      \end{minipage} &
      \begin{minipage}[t]{0.47\hsize}
        \centering
        \includegraphics[keepaspectratio, scale=0.33]{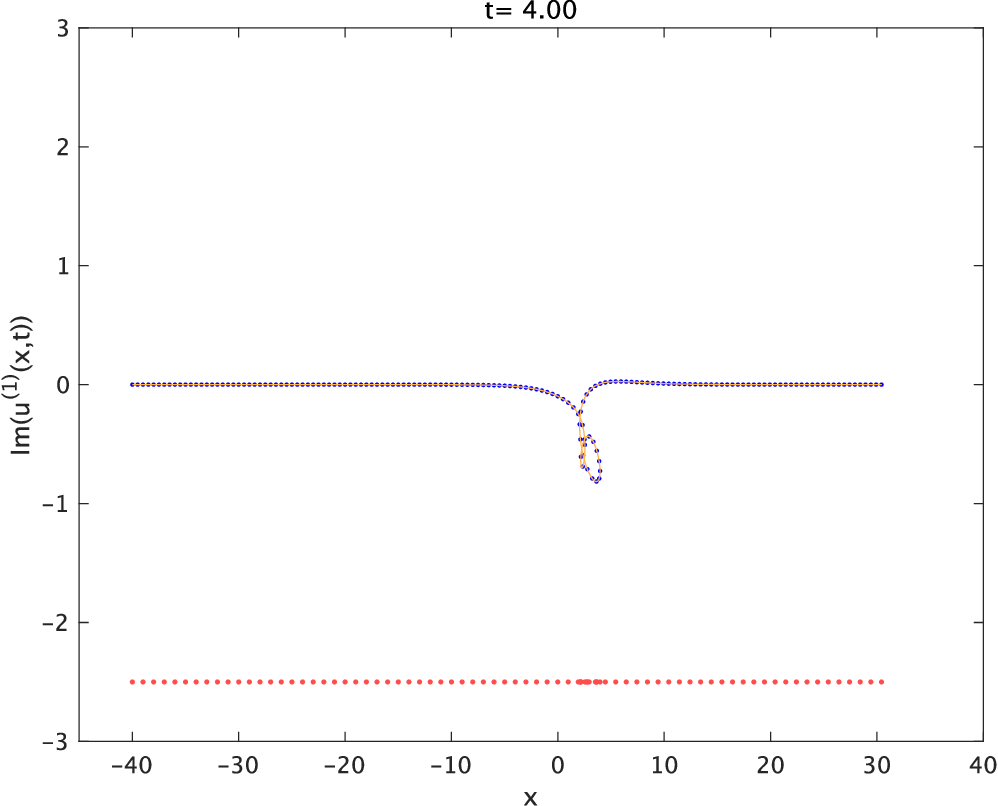}
      \end{minipage}\\ 
  
      \begin{minipage}[t]{0.47\hsize}
        \centering
        \includegraphics[keepaspectratio, scale=0.33]{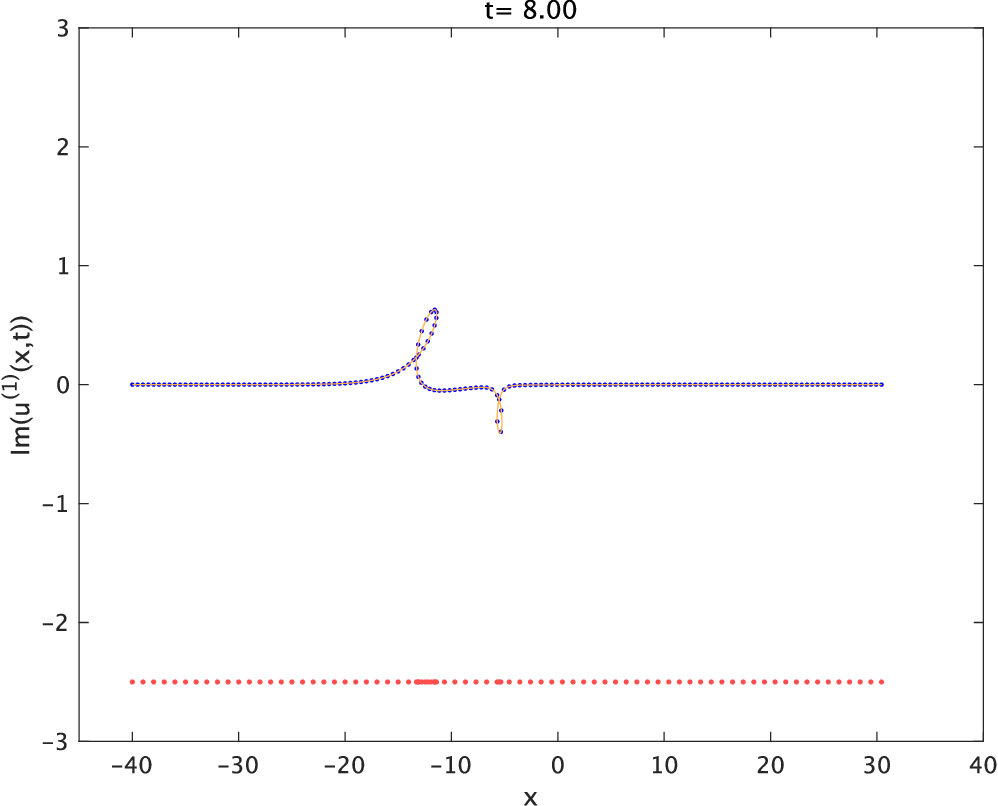}
      \end{minipage} &
      \begin{minipage}[t]{0.47\hsize}
        \centering
        \includegraphics[keepaspectratio, scale=0.33]{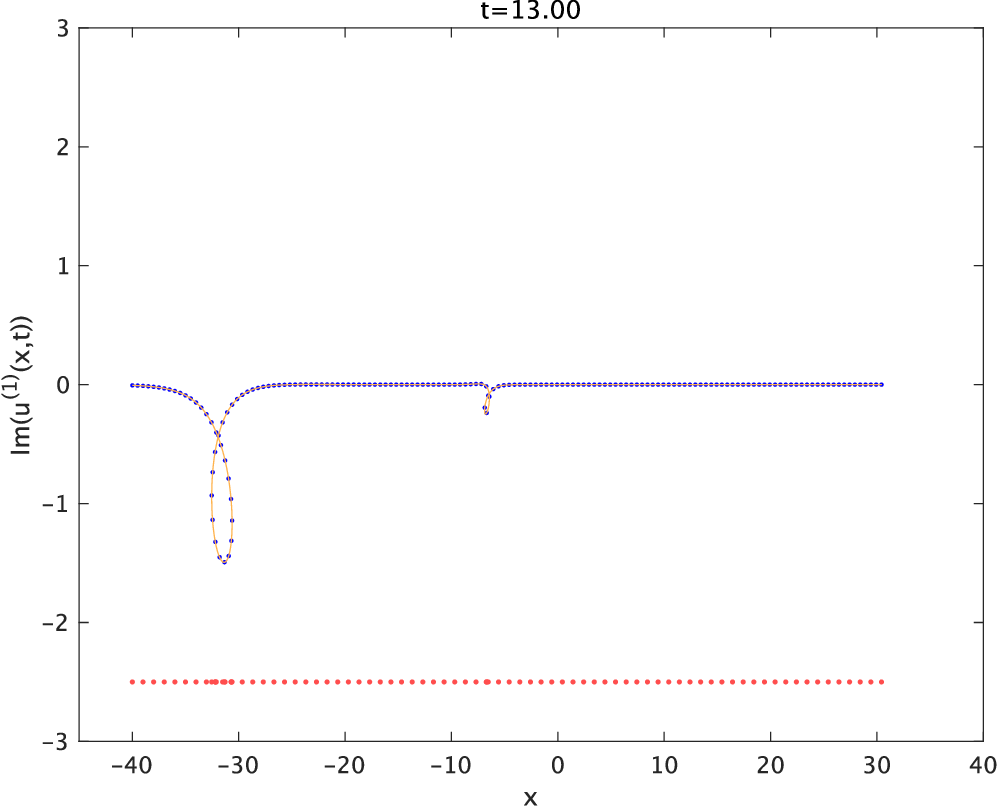}
      \end{minipage} 
       \end{tabular}
     \caption{Numerical simulation of the ${\rm Im}(u^{(1)})$-profile of the two-soliton solution of the 2-CSP equation. ${\rm maxerr}({\rm Im}(u^{(1)}))=2.44\times 10^{-3}$.}
              \label{2CSP_2_im_u1}
  \end{figure}

\begin{figure}[htbp]
 \begin{tabular}{cc}
      \begin{minipage}[t]{0.47\hsize}
       \centering
        \includegraphics[keepaspectratio, scale=0.33]{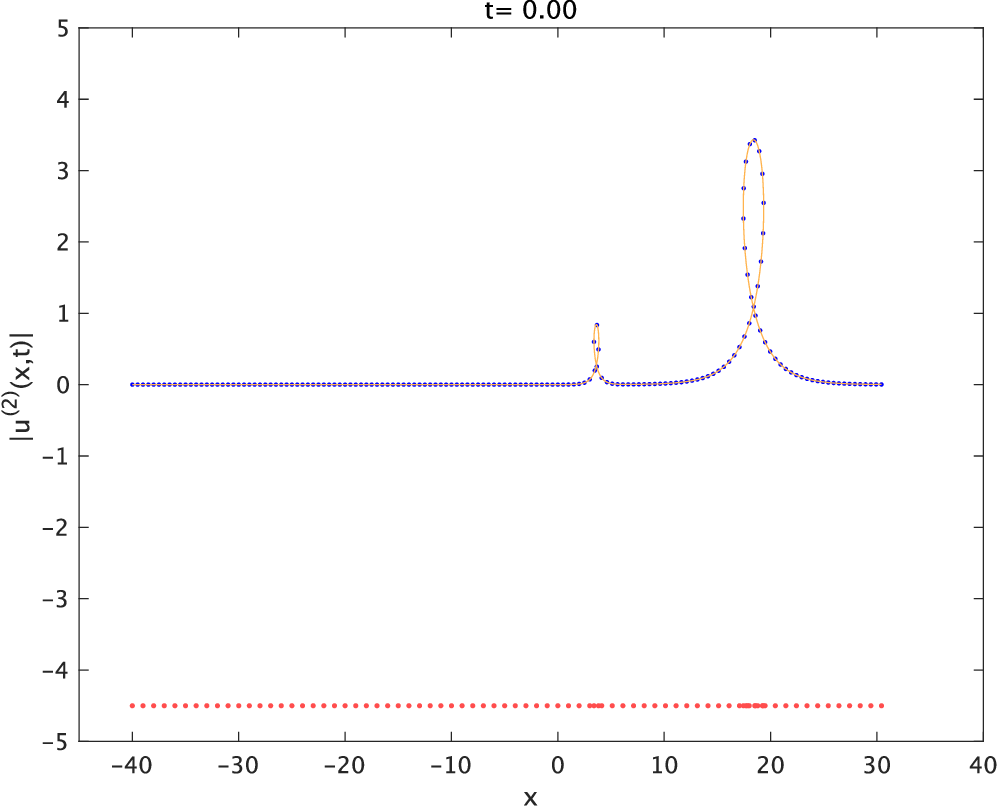}
      \end{minipage} &
      \begin{minipage}[t]{0.47\hsize}
        \centering
        \includegraphics[keepaspectratio, scale=0.33]{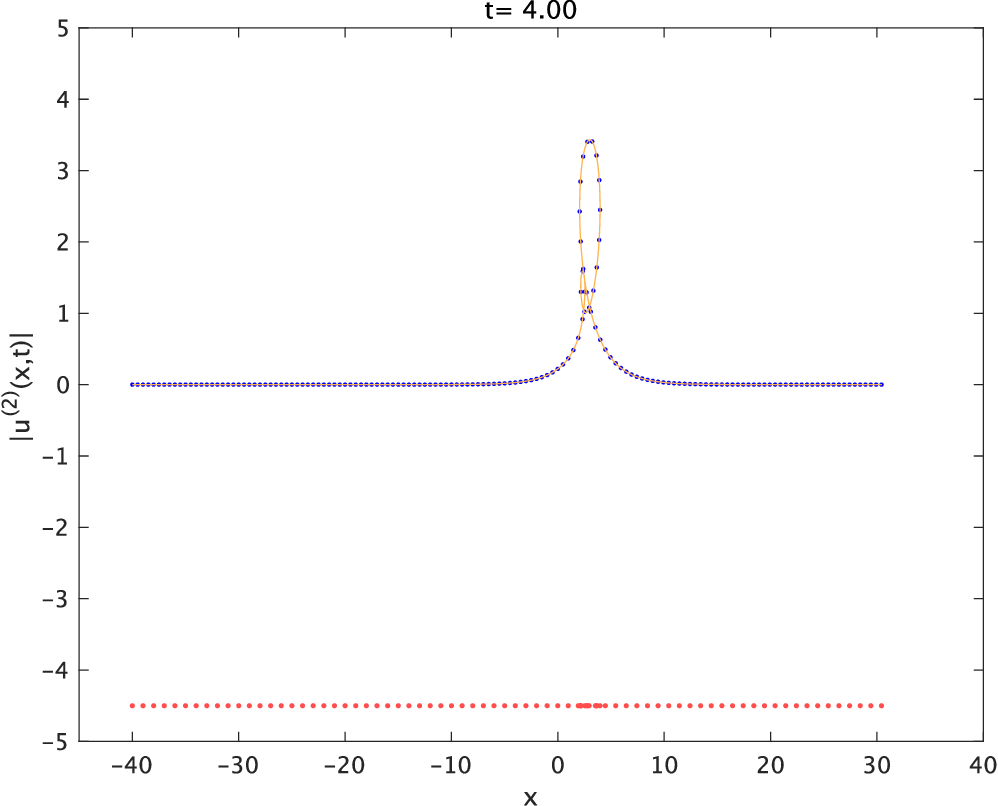}
      \end{minipage}\\ 
  
      \begin{minipage}[t]{0.47\hsize}
        \centering
        \includegraphics[keepaspectratio, scale=0.33]{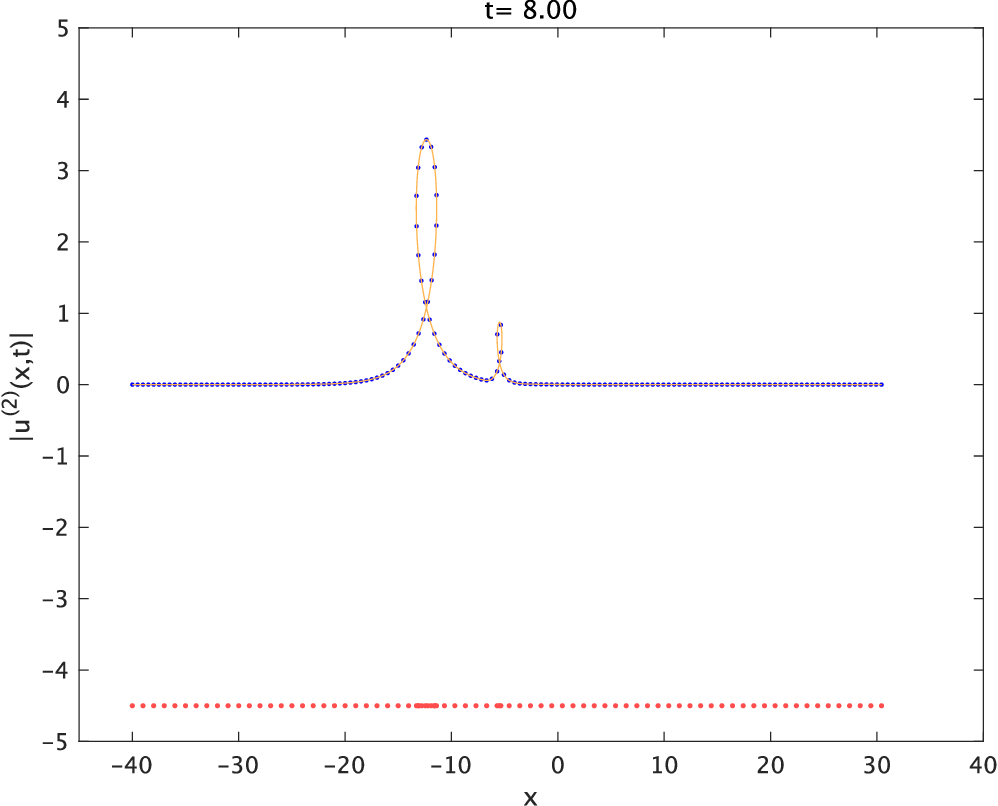}
      \end{minipage} &
      \begin{minipage}[t]{0.47\hsize}
        \centering
        \includegraphics[keepaspectratio, scale=0.33]{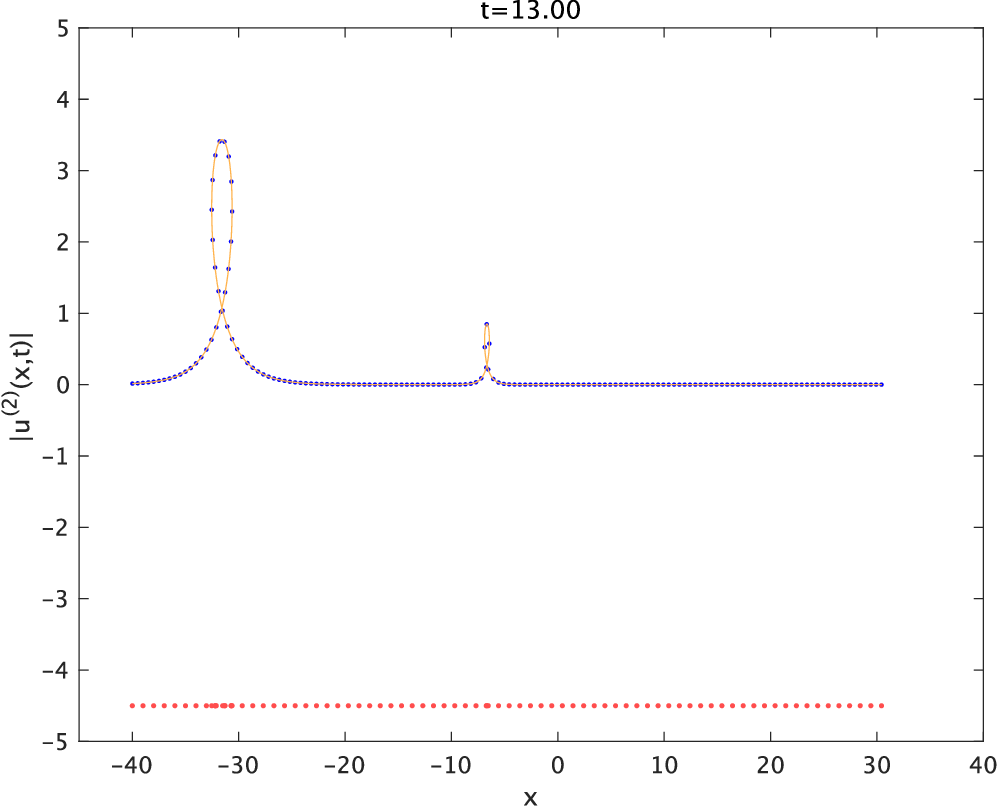}
      \end{minipage} 
       \end{tabular}
     \caption{Numerical simulation of the $|u^{(2)}|$-profile of the two-soliton solution of the 2-CSP equation. ${\rm maxerr}(|u^{(2)}|)=1.02\times 10^{-4}$.}
              \label{2CSP_2_abs_u2}
  \end{figure}

\begin{figure}[htbp]
 \begin{tabular}{cc}
      \begin{minipage}[t]{0.47\hsize}
       \centering
        \includegraphics[keepaspectratio, scale=0.33]{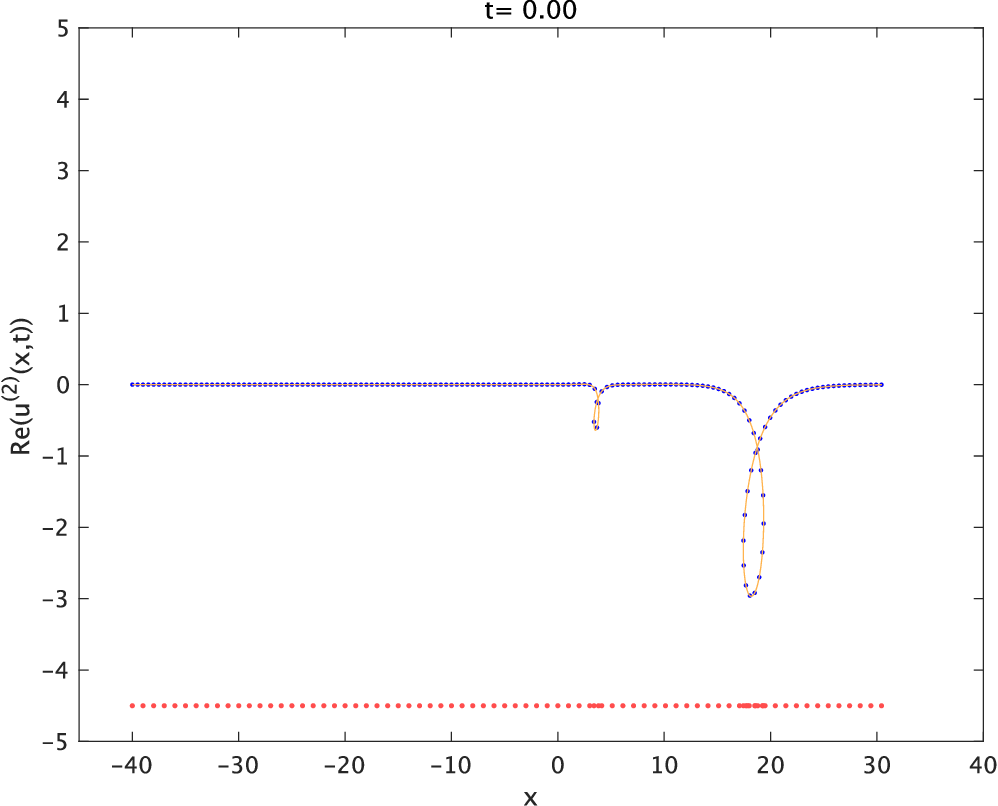}
      \end{minipage} &
      \begin{minipage}[t]{0.47\hsize}
        \centering
        \includegraphics[keepaspectratio, scale=0.33]{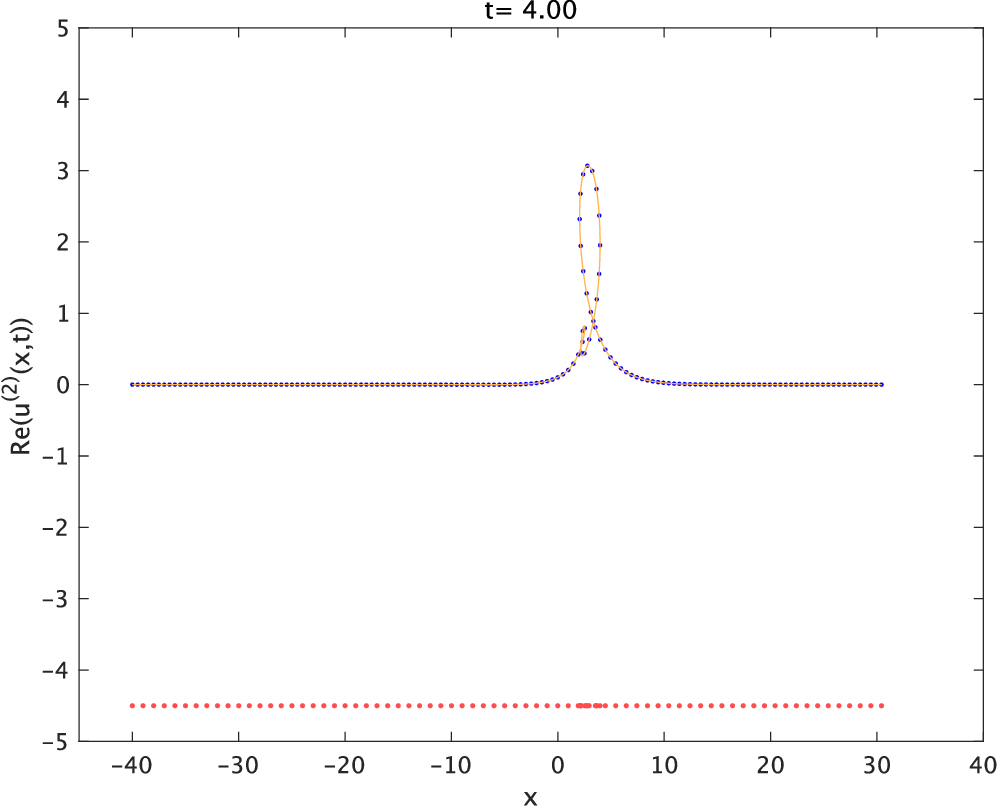}
      \end{minipage}\\ 
  
      \begin{minipage}[t]{0.47\hsize}
        \centering
        \includegraphics[keepaspectratio, scale=0.33]{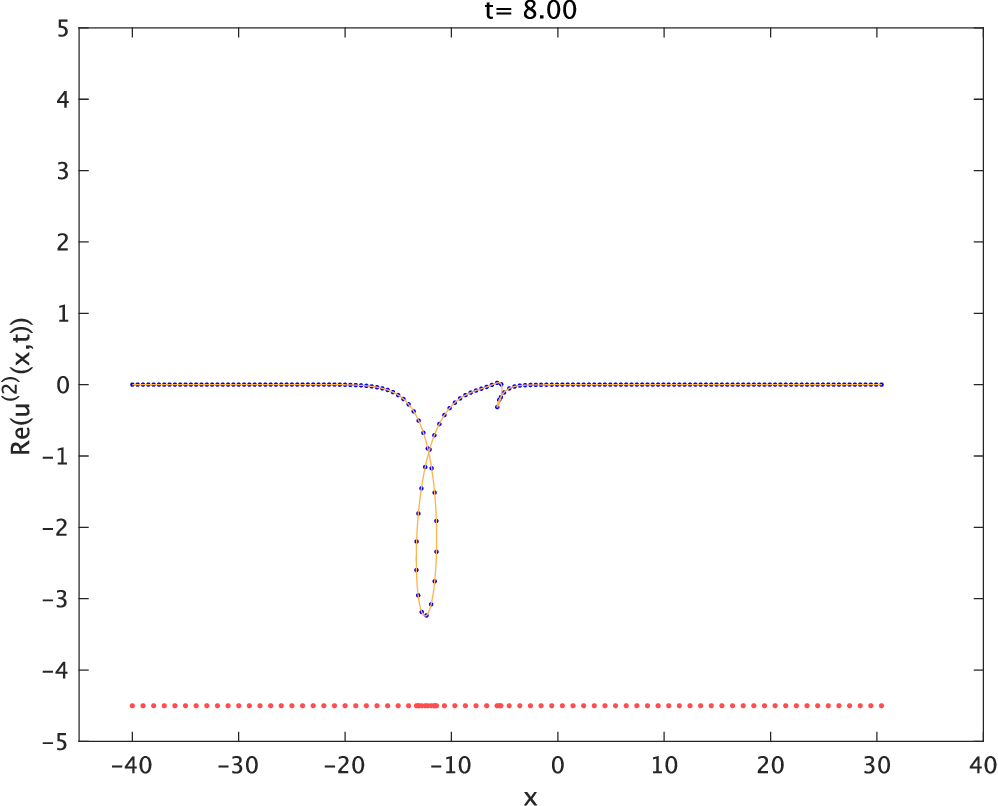}
      \end{minipage} &
      \begin{minipage}[t]{0.47\hsize}
        \centering
        \includegraphics[keepaspectratio, scale=0.33]{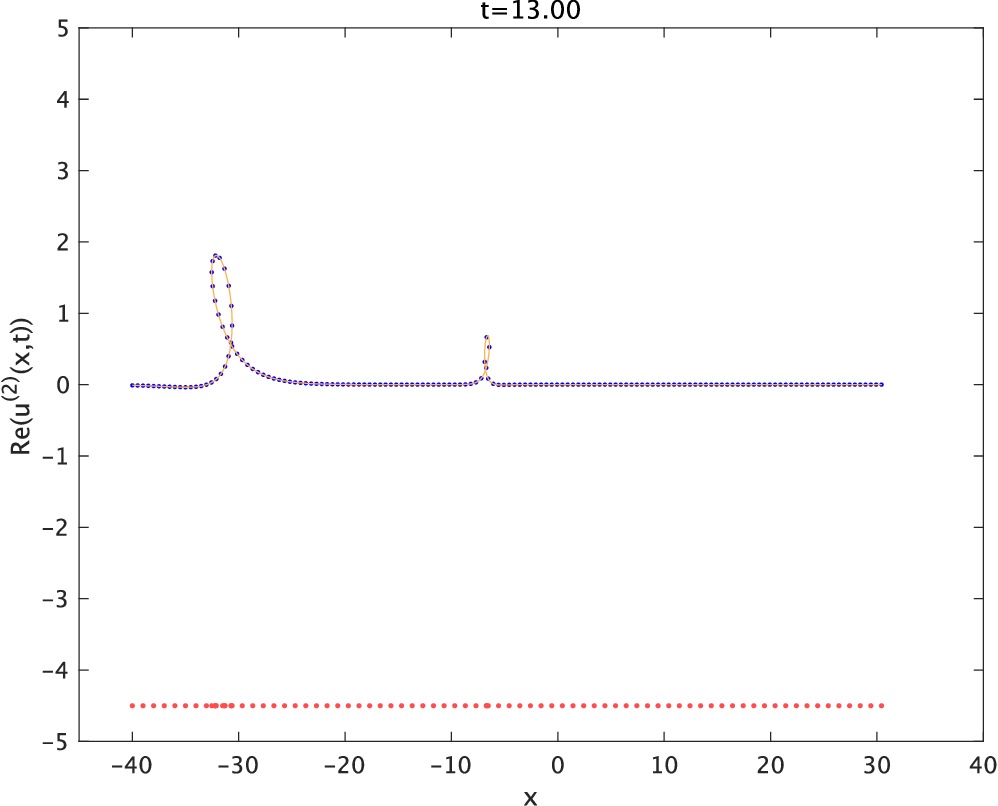}
      \end{minipage} 
       \end{tabular}
     \caption{Numerical simulation of the ${\rm Re}(u^{(2)})$-profile of the two-soliton solution of the 2-CSP equation. ${\rm maxerr}({\rm Re}(u^{(2)}))=2.27\times 10^{-3}$.}
              \label{2CSP_2_re_u2}
  \end{figure}

\begin{figure}[htbp]
 \begin{tabular}{cc}
      \begin{minipage}[t]{0.47\hsize}
       \centering
        \includegraphics[keepaspectratio, scale=0.33]{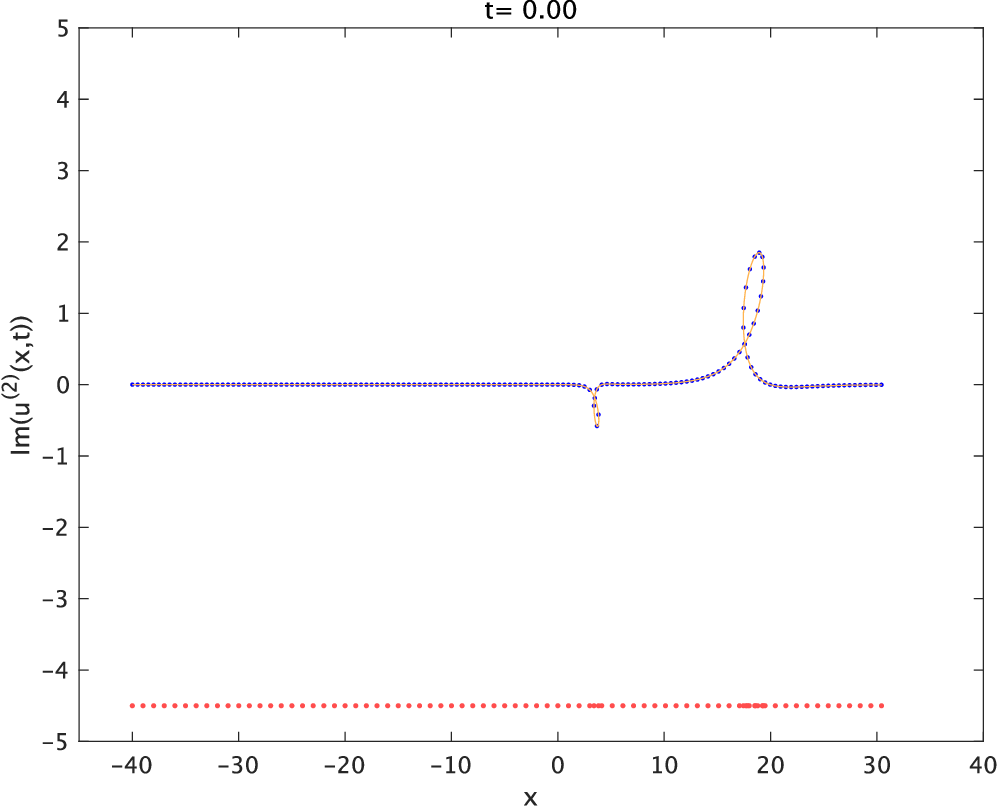}
      \end{minipage} &
      \begin{minipage}[t]{0.47\hsize}
        \centering
        \includegraphics[keepaspectratio, scale=0.33]{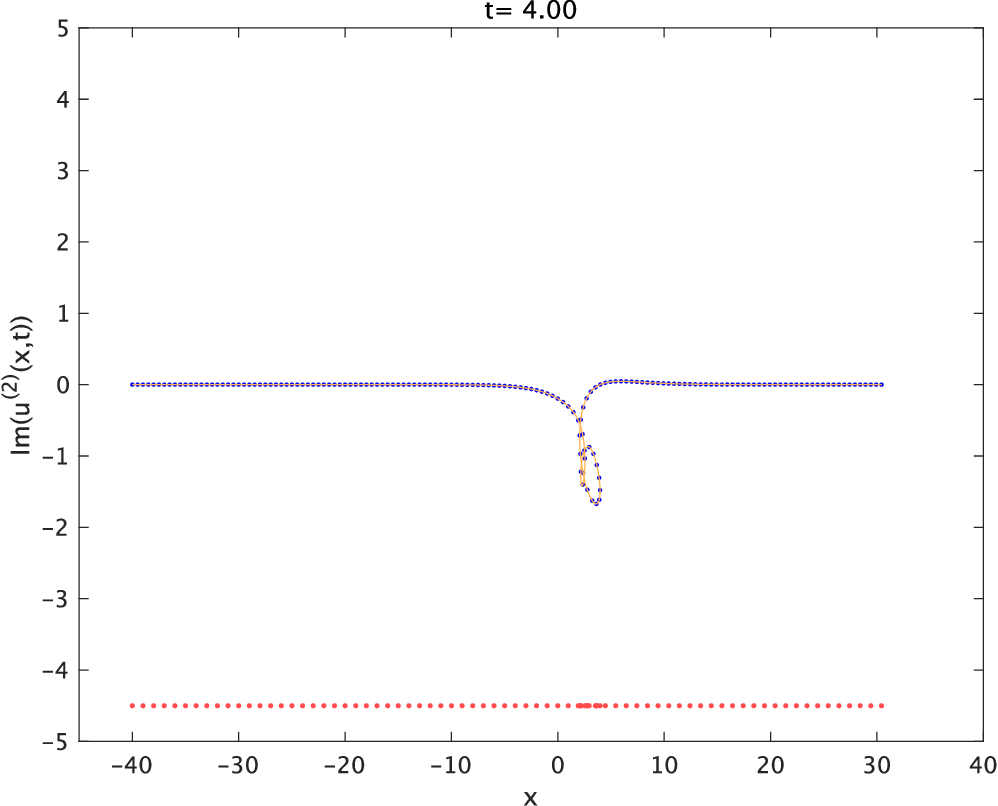}
      \end{minipage}\\ 
  
      \begin{minipage}[t]{0.47\hsize}
        \centering
        \includegraphics[keepaspectratio, scale=0.33]{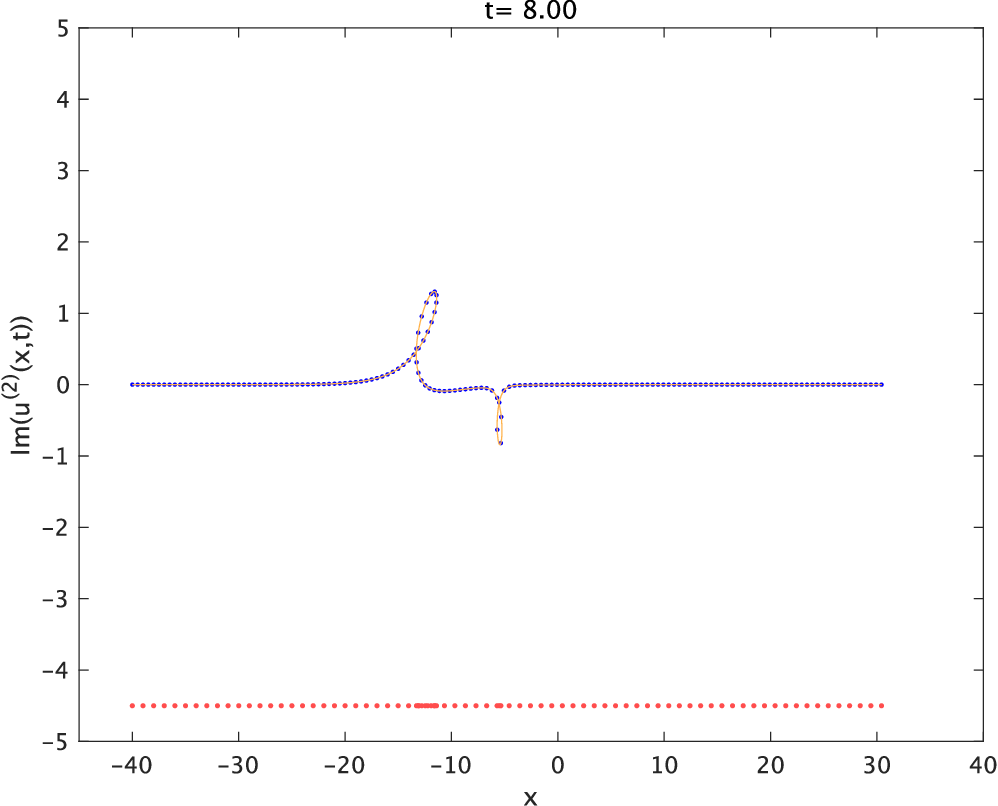}
      \end{minipage} &
      \begin{minipage}[t]{0.47\hsize}
        \centering
        \includegraphics[keepaspectratio, scale=0.33]{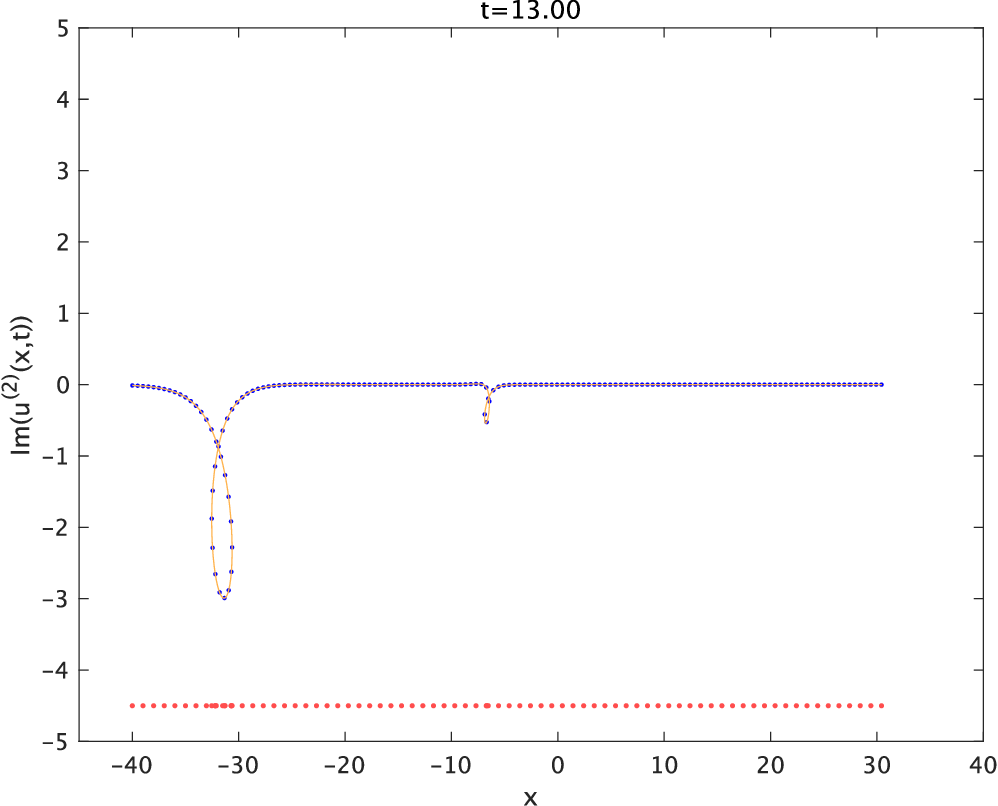}
      \end{minipage} 
       \end{tabular}
     \caption{Numerical simulation of the ${\rm Im}(u^{(2)})$-profile of the two-soliton solution of the 2-CSP equation. ${\rm maxerr}({\rm Im}(u^{(2)}))=2.27\times 10^{-3}$.}
              \label{2CSP_2_im_u2}
  \end{figure}

These numerical experiments show that the mesh automatically refines in
regions where the solutions vary rapidly, confirming that the fully discrete MCSP
system (\ref{fulldis}) provides a self-adaptive moving mesh scheme. For the parameter
sets considered, the numerical and exact solutions agree well near the soliton
peaks. The reported maximum relative errors remain small in these regions, and
two-soliton interactions do not substantially increase the error relative to the
one-soliton cases.

\end{section}

\begin{section}{Conclusion}
\label{sec_con}
We have established a new formulation of the MCSP equation that naturally encapsulates the CCSP equation as a reduction. 
Using Hirota's bilinear method, we derived a bilinear form of the MCSP equation and constructed its exact $N$-soliton solutions 
in Pfaffian form. Based on this bilinear formulation, we obtained integrable semi-discrete and fully discrete analogues of 
the MCSP equation together with their corresponding $N$-soliton solutions. 
By imposing appropriate complex conjugacy reductions, the corresponding semi-discrete and 
fully discrete CCSP equations were also derived.

The resulting fully discrete systems provide self-adaptive moving mesh schemes for the MCSP and CCSP equations. 
Numerical simulations demonstrated excellent agreement between the numerical and exact solutions, showing that 
the mesh points automatically concentrate in regions where the solutions vary rapidly. 
In particular, even for interacting two-soliton solutions, the proposed schemes accurately reproduce 
the exact solutions without any significant increase in the relative error compared with the one-soliton cases. 
These results indicate that the proposed integrable discretizations provide accurate numerical approximations for 
both one- and interacting two-soliton solutions.

\end{section}

\ack
This work was partially supported by JSPS KAKENHI Grant Numbers 
JP22K03441, JP23K22407, JP26K06919 and Waseda
University Grants for Special Research Projects.

\section*{Appendix}
\textbf{A. Detailed Pfaffian calculations for Theorem 4.1}\\
\textit{A.1. Details for the first bilinear equation.}
\setcounter{equation}{0}
\renewcommand{\theequation}{A.\arabic{equation}}
The following computation derives the $m$-shift formulas and equation (\ref{fd_first_relation}) used in Step 1 of the proof. From the definition of $\varphi_{j}(l,m)$, we obtain
\begin{eqnarray}
\fl{{\rm Pf}(j,k)_{l,m+1}={\rm Pf}(j,k)_{l,m}-\varphi_{j}(l,m+1)\varphi_{k}(l,m)+\varphi_{j}(l,m)\varphi_{k}(l,m+1)}\nonumber\\
\fl{\qquad={\rm Pf}(s_{m}^{+},d_{0},j,k)_{l,m},}\nonumber\\
\fl{{\rm Pf}(j,k)_{l,m-1}={\rm Pf}(j,k)_{l,m}-\varphi_{j}(l,m-1)\varphi_{k}(l,m)+\varphi_{j}(l,m)\varphi_{k}(l,m-1)}\nonumber\\
\fl{\qquad={\rm Pf}(s_{m}^{-},d_{0},j,k)_{l,m}.}\nonumber
\end{eqnarray}
Therefore, the $m$-shift relations for $f_{l,m}$ are
\begin{eqnarray}
\fl{f_{l,m+1}={\rm Pf}(s_{m}^{+},d_{0},\cdots)_{l,m},\qquad f_{l,m-1}={\rm Pf}(s_{m}^{-},d_{0},\cdots)_{l,m}.}
\label{shift_relation_ap}
\end{eqnarray}
Using the shift relations (\ref{shift_relation_ap}), the left-hand side of the first bilinear equation in (\ref{fullbilinear}) can be written as
\begin{eqnarray}
\fl{f_{l,m+1}f_{l,m-1}-f_{l,m}^{2}}\nonumber\\
\fl{={\rm Pf}(s_{m}^{+},d_{0},\cdots)_{l,m}{\rm Pf}(s_{m}^{-},d_{0},\cdots)_{l,m}-f_{l,m}^{2}}\nonumber\\
\fl{=\left(f_{l,m}+\displaystyle\sum_{j=1}^{M}(-1)^{j}\varphi_{j}(l,m+1){\rm Pf}(d_{0},\cdots, \hat{j},\cdots)_{l,m}\right)}\nonumber\\
\fl{\qquad\times\left(f_{l,m}+\displaystyle\sum_{j=1}^{M}(-1)^{j}\varphi_{j}(l,m-1){\rm Pf}(d_{0},\cdots, \hat{j},\cdots)_{l,m}\right)-f_{l,m}^{2}}\nonumber\\
\fl{=f_{l,m}\left(\displaystyle\sum_{j=1}^{M}(-1)^{j}\varphi_{j}(l,m-1){\rm Pf}(d_{0},\cdots, \hat{j},\cdots)_{l,m}\right.}\nonumber\\
\fl{\left.\qquad+\displaystyle\sum_{j=1}^{M}(-1)^{j}\varphi_{j}(l,m+1){\rm Pf}(d_{0},\cdots, \hat{j},\cdots)_{l,m}\right)}\nonumber\\
\fl{\qquad+\displaystyle\sum_{j=1}^{M}\displaystyle\sum_{k=1}^{M}(-1)^{j+k}\varphi_{j}(l,m+1){\rm Pf}(d_{0},\cdots, \hat{j},\cdots)_{l,m}}\nonumber\\
\fl{\qquad\quad\times\varphi_{k}(l,m-1){\rm Pf}(d_{0},\cdots, \hat{k},\cdots)_{l,m}}\nonumber\\
\fl{=f_{l,m}\left(\displaystyle\sum_{j=1}^{M}(-1)^{j}\Big(\varphi_{j}(l,m-1)+\varphi_{j}(l,m+1)\Big){\rm Pf}(d_{0},\cdots,\hat{j},\cdots)_{l,m}\right)}\nonumber\\
\fl{\qquad+\displaystyle\sum_{j=1}^{M}\displaystyle\sum_{k=1}^{M}(-1)^{j+k}\varphi_{j}(l,m+1){\rm Pf}(d_{0},\cdots,\hat{j},\cdots)_{l,m}}\nonumber\\
\fl{\qquad\quad\times\varphi_{k}(l,m-1){\rm Pf}(d_{0},\cdots,\hat{k},\cdots)_{l,m}}\nonumber\\
\fl{=f_{l,m}\displaystyle\sum_{j=1}^{M}(-1)^{j}\Big(\varphi_{j}(l,m+1)-2\varphi_{j}(l,m)+\varphi_{j}(l,m-1)\Big)}\nonumber\\
\fl{\qquad\times{\rm Pf}(d_{0},\cdots,\hat{j},\cdots)_{l,m}}\nonumber\\
\fl{\qquad+\displaystyle\sum_{j=1}^{M}\displaystyle\sum_{k=1}^{M}(-1)^{j+k}\Big(\varphi_{j}(l,m+1)\varphi_{k}(l,m-1)-\varphi_{j}(l,m)\varphi_{k}(l,m)\Big)}\nonumber\\
\fl{\qquad\quad\times{\rm Pf}(d_{0},\cdots,\hat{j},\cdots)_{l,m}{\rm Pf}(d_{0},\cdots,\hat{k},\cdots)_{l,m}.}\label{eq_h33}
\end{eqnarray}
We simplify the two terms in (\ref{eq_h33}) separately. The first term of (\ref{eq_h33}) can be written as
\begin{eqnarray}
\fl{\quad f_{l,m}\displaystyle\sum_{j=1}^{M}(-1)^{j}\Big(\varphi_{j}(l,m+1)-2\varphi_{j}(l,m)+\varphi_{j}(l,m-1)\Big){\rm Pf}(d_{0},\cdots,\hat{j},\cdots)_{l,m}}\nonumber\\
=f_{l,m}\displaystyle\sum_{j=1}^{M}(-1)^{j+1}\frac{-4b^{2}}{p_{j}^{2}-b^{2}}\varphi_{j}(l,m){\rm Pf}(d_{0},\cdots,\hat{j},\cdots)_{l,m}.\label{eq_h55}
\end{eqnarray}
Furthermore, the second term of (\ref{eq_h33}) can be written as
\begin{eqnarray}
\fl{\qquad\displaystyle\sum_{j=1}^{M}\displaystyle\sum_{k=1}^{M}(-1)^{j+k}\Big(\varphi_{j}(l,m+1)\varphi_{k}(l,m-1)-\varphi_{j}(l,m)\varphi_{k}(l,m)\Big)}\nonumber\\
\fl{\qquad\quad\times{\rm Pf}(d_{0},\cdots,\hat{j},\cdots)_{l,m}{\rm Pf}(d_{0},\cdots,\hat{k},\cdots)_{l,m}}\nonumber\\
\fl{=\displaystyle\sum_{1\leq j<k\leq M}(-1)^{j+k}\Big(\varphi_{j}(l,m+1)\varphi_{k}(l,m-1)+\varphi_{k}(l,m+1)\varphi_{j}(l,m-1)\Big.}\nonumber\\
\fl{\qquad\Big.\Big.-2\varphi_{j}(l,m)\varphi_{k}(l,m)\Big)}\nonumber\\
\fl{\qquad\quad\times{\rm Pf}(d_{0},\cdots,\hat{j},\cdots)_{l,m}{\rm Pf}(d_{0},\cdots,\hat{k},\cdots)_{l,m}}\nonumber\\
\fl{=\displaystyle\sum_{1\leq j<k\leq M}(-1)^{j+k+1}\Big(2\varphi_{j}(l,m)\varphi_{k}(l,m)-\varphi_{j}(l,m+1)\varphi_{k}(l,m-1)\Big.}\nonumber\\
\fl{\qquad\Big.\Big.-\varphi_{k}(l,m+1)\varphi_{j}(l,m-1)\Big)}\nonumber\\
\fl{\qquad\quad\times{\rm Pf}(d_{0},\cdots,\hat{j},\cdots)_{l,m}{\rm Pf}(d_{0},\cdots,\hat{k},\cdots)_{l,m}}\nonumber\\
\fl{=\displaystyle\sum_{1\leq j<k\leq M}(-1)^{j+k+1}\left(2-\frac{p_{j}+b}{p_{j}-b}\frac{p_{k}-b}{p_{k}+b}-\frac{p_{k}+b}{p_{k}-b}\frac{p_{j}-b}{p_{j}+b}\right)\varphi_{j}(l,m)\varphi_{k}(l,m)}\nonumber\\
\fl{\qquad\quad\times{\rm Pf}(d_{0},\cdots,\hat{j},\cdots)_{l,m}{\rm Pf}(d_{0},\cdots,\hat{k},\cdots)_{l,m}}\nonumber\\
\fl{=\displaystyle\sum_{1\leq j<k\leq M}(-1)^{j+k+1}\left(\frac{p_{j}+b}{p_{j}-b}-\frac{p_{k}+b}{p_{k}-b}\right)\left(\frac{p_{j}-b}{p_{j}+b}-\frac{p_{k}-b}{p_{k}+b}\right)}\nonumber\\
\fl{\qquad\quad\times\varphi_{j}(l,m)\varphi_{k}(l,m){\rm Pf}(d_{0},\cdots,\hat{j},\cdots)_{l,m}{\rm Pf}(d_{0},\cdots,\hat{k},\cdots)_{l,m}}\nonumber\\
\fl{=\displaystyle\sum_{1\leq j<k\leq M}(-1)^{j+k+1}\frac{-2b(p_{j}-p_{k})}{(p_{j}-b)(p_{k}-b)}\frac{2b(p_{j}-p_{k})}{(p_{j}+b)(p_{k}+b)}\varphi_{j}(l,m)\varphi_{k}(l,m)}\nonumber\\
\fl{\qquad\quad\times{\rm Pf}(d_{0},\cdots,\hat{j},\cdots)_{l,m}{\rm Pf}(d_{0},\cdots,\hat{k},\cdots)_{l,m}}\nonumber\\
\fl{=\displaystyle\sum_{1\leq j<k\leq M}(-1)^{j+k}\frac{4b^{2}(p_{j}^{2}-p_{k}^{2})}{(p_{j}^{2}-b^{2})(p_{k}^{2}-b^{2})}{\rm Pf}(j,k)_{l,m}}\nonumber\\
\fl{\qquad\quad\times{\rm Pf}(d_{0},\cdots,\hat{j},\cdots)_{l,m}{\rm Pf}(d_{0},\cdots,\hat{k},\cdots)_{l,m}}\nonumber\\
\fl{=\displaystyle\sum_{1\leq j<k\leq M}(-1)^{j+k+1}\displaystyle\frac{4b^{2}}{p_{j}^{2}-b^{2}}{\rm Pf}(j,k)_{l,m}{\rm Pf}(d_{0},\cdots,\hat{j},\cdots)_{l,m}}\nonumber\\
\fl{\qquad\quad\times{\rm Pf}(d_{0},\cdots,\hat{k},\cdots)_{l,m}}\nonumber\\
\fl{-\displaystyle\sum_{1\leq j<k\leq M}(-1)^{j+k+1}\displaystyle\frac{4b^{2}}{p_{k}^{2}-b^{2}}{\rm Pf}(j,k)_{l,m}{\rm Pf}(d_{0},\cdots,\hat{j},\cdots)_{l,m}}\nonumber\\
\fl{\qquad\quad\times{\rm Pf}(d_{0},\cdots,\hat{k},\cdots)_{l,m}}\nonumber\\
\fl{=\displaystyle\sum_{j=1}^{M}\displaystyle\sum_{k=j+1}^{M}(-1)^{j+k+1}\displaystyle\frac{4b^{2}}{p_{j}^{2}-b^{2}}{\rm Pf}(j,k)_{l,m}{\rm Pf}(d_{0},\cdots,\hat{j},\cdots)_{l,m}}\nonumber\\
\fl{\qquad\quad\times{\rm Pf}(d_{0},\cdots,\hat{k},\cdots)_{l,m}}\nonumber\\
\fl{-\displaystyle\sum_{k=1}^{M}\displaystyle\sum_{j=1}^{k-1}(-1)^{j+k+1}\displaystyle\frac{4b^{2}}{p_{k}^{2}-b^{2}}{\rm Pf}(j,k)_{l,m}{\rm Pf}(d_{0},\cdots,\hat{j},\cdots)_{l,m}}\nonumber\\
\fl{\qquad\quad\times{\rm Pf}(d_{0},\cdots,\hat{k},\cdots)_{l,m}}\nonumber\\
\fl{=\displaystyle\sum_{j=1}^{M}\displaystyle\sum_{k=1}^{M}(-1)^{j+k}\displaystyle\frac{-4b^{2}}{p_{j}^{2}-b^{2}}{\rm Pf}(j,k)_{l,m}{\rm Pf}(d_{0},\cdots,\hat{j},\cdots)_{l,m}}\nonumber\\
\fl{\qquad\quad\times{\rm Pf}(d_{0},\cdots,\hat{k},\cdots)_{l,m}.}\label{eq_h44}
\end{eqnarray}
Substituting (\ref{eq_h55}) and (\ref{eq_h44}) into (\ref{eq_h33}) gives
\begin{eqnarray}
\fl{\quad f_{l,m+1}f_{l,m-1}-f_{l,m}^{2}}\nonumber\\
\fl{=f_{l,m}\displaystyle\sum_{j=1}^{M}(-1)^{j+1}\displaystyle\frac{-4b^{2}}{p_{j}^{2}-b^{2}}\varphi_{j}(l,m){\rm Pf}(d_{0},\cdots,\hat{j},\cdots)_{l,m}}\nonumber\\
\fl{\qquad+\displaystyle\sum_{j=1}^{M}\displaystyle\sum_{k=1}^{M}(-1)^{j+k}\displaystyle\frac{-4b^{2}}{p_{j}^{2}-b^{2}}{\rm Pf}(j,k)_{l,m}{\rm Pf}(d_{0},\cdots,\hat{j},\cdots)_{l,m}}\nonumber\\
\fl{\qquad\quad\times{\rm Pf}(d_{0},\cdots,\hat{k},\cdots)_{l,m}}\nonumber\\
\fl{=b^{2}\displaystyle\sum_{j=1}^{M}(-1)^{j}\displaystyle\frac{-4}{p_{j}^{2}-b^{2}}{\rm Pf}(d_{0},\cdots,\hat{j},\cdots)_{l,m}}\nonumber\\
\fl{\qquad\quad\times\left(\displaystyle\sum_{k=1}^{M}(-1)^{k}{\rm Pf}(j,k)_{l,m}{\rm Pf}(d_{0},\cdots,\hat{k},\cdots)_{l,m}-f_{l,m}\varphi_{j}(l,m)\right).}\label{eq_h66}
\end{eqnarray}
Now, expanding the trivially vanishing Pfaffian ${\rm Pf}(j, d_{0},1,\cdots, j, \cdots M, M^{\prime},\cdots,1^{\prime})_{l,m}$, we obtain
\begin{eqnarray}
{\displaystyle\sum_{k=1}^{M}(-1)^{k}{\rm Pf}(j,k)_{l,m}{\rm Pf}(d_{0},\cdots,\hat{k},\cdots)_{l,m}}\nonumber\\
{\qquad =f_{l,m}\varphi_{j}(l,m)+(-1)^{j}{\rm Pf}(d_{0},\cdots, \hat{j^{\prime}},\cdots)_{l,m}}.\label{eq_h777}
\end{eqnarray}
Substituting (\ref{eq_h777}) into (\ref{eq_h66}) gives
\begin{eqnarray}
\fl{f_{l,m+1}f_{l,m-1}-f_{l,m}^{2}=b^{2}\displaystyle\sum_{j=1}^{M}\displaystyle\frac{-4}{p_{j}^{2}-b^{2}}{\rm Pf}(d_{0},\cdots,\hat{j},\cdots)_{l,m}{\rm Pf}(d_{0},\cdots,\hat{j^{\prime}},\cdots)_{l,m}},\nonumber
\end{eqnarray}
which is precisely equation (\ref{fd_first_relation}).
\\
\\
\textit{A.2. Details for the second and third bilinear equations.}
{\footnotesize
\begin{eqnarray}
\fl{f_{l+1,m+1}-f_{l,m}}\nonumber\\
\fl{\quad ={\rm Pf}(d_{0},s_{l}^{+},\cdots)_{l,m+1}-{\rm Pf}(\cdots)_{l,m}}\nonumber\\
\fl{\quad =\left(\displaystyle\sum_{j=1}^{M}(-1)^{j}{\rm Pf}(d_{0},j)_{l,m+1}{\rm Pf}(s_{l}^{+},\cdots \hat{j},\cdots)_{l,m+1}\right.}\nonumber\\
\fl{\left.\qquad+{\rm Pf}(d_{0},s_{l}^{+})_{l,m+1}{\rm Pf}(\cdots)_{l,m+1}-{\rm Pf}(\cdots)_{l,m}\right)}\nonumber\\
\fl{\quad =\left(\displaystyle\sum_{j=1}^{M}(-1)^{j}{\rm Pf}(s_{m}^{+},j)_{l,m}{\rm Pf}(s_{l}^{+},\cdots \hat{j},\cdots)_{l,m+1}\right.}\nonumber\\
\fl{\left.\qquad-{\rm Pf}(\cdots)_{l,m}+{\rm Pf}(s_{m}^{+},d_{0},\cdots)_{l,m}\right)}\nonumber\\
\fl{\quad =-{\rm Pf}(\cdots)_{l,m}+{\rm Pf}(s_{m}^{+},d_{0},\cdots)_{l,m}}\nonumber\\
\fl{\qquad +\left[\displaystyle\sum_{j=1}^{M}(-1)^{j}{\rm Pf}(s_{m}^{+},j)_{l,m}
\Bigg(\displaystyle\sum_{k=1}^{j-1}(-1)^{k+1}{\rm Pf}(s_{l}^{+},k)_{l,m+1}{\rm Pf}(1,\cdots \hat{k},\cdots \hat{j},\cdots)_{l,m+1}\bigg.\right.}\nonumber\\
\fl{\left.\Bigg.\qquad+\displaystyle\sum_{k=j+1}^{M}(-1)^{k}{\rm Pf}(s_{l}^{+},k)_{l,m+1}{\rm Pf}(1,\cdots \hat{j},\cdots \hat{k},\cdots)_{l,m+1}\Bigg)\right]}\nonumber\\
\fl{\quad =-{\rm Pf}(\cdots)_{l,m}+{\rm Pf}(s_{m}^{+},d_{0},\cdots)_{l,m}}\nonumber\\
\fl{\qquad +\left[\displaystyle\sum_{j=1}^{M}(-1)^{j}\varphi_{j}(l,m+1)\times
\Bigg(\displaystyle\sum_{k=1}^{j-1}(-1)^{k+1}\varphi_{k}(l+1,m+1){\rm Pf}(s_{m}^{+},d_{0},1,\cdots \hat{k},\cdots \hat{j},\cdots)_{l,m}\Bigg.\right.}\nonumber\\
\fl{\left.\Bigg.\qquad+\displaystyle\sum_{k=j+1}^{M}(-1)^{k}\varphi_{k}(l+1,m+1){\rm Pf}(s_{m}^{+},d_{0},1,\cdots \hat{j},\cdots \hat{k},\cdots)_{l,m}\Bigg)\right]}\nonumber\\
\fl{\quad =-{\rm Pf}(\cdots)_{l,m}+{\rm Pf}(s_{m}^{+},d_{0},\cdots)_{l,m}}\nonumber\\
\fl{\qquad +\displaystyle\sum_{k=1}^{M}(-1)^{k}\varphi_{k}(l+1,m+1)\times
\Bigg(\displaystyle\sum_{j=1}^{k-1}(-1)^{j}\varphi_{j}(l,m+1){\rm Pf}(s_{m}^{+},d_{0},1,\cdots \hat{j},\cdots \hat{k},\cdots)_{l,m}\Bigg.}\nonumber\\
\fl{\Bigg.\qquad+\displaystyle\sum_{j=k+1}^{M}(-1)^{j+1}\varphi_{j}(l,m+1){\rm Pf}(s_{m}^{+},d_{0},1,\cdots \hat{k},\cdots \hat{j},\cdots)_{l,m}\Bigg).}\label{eq_h77}
\end{eqnarray}
}
Now, expanding the trivially vanishing Pfaffian
\begin{eqnarray*}
{\rm Pf}(s_{m}^{+}, d_{0}, s_{m}^{+},1, \cdots, \hat{k}, \cdots, M, M^{\prime},\cdots,1^{\prime})_{l,m},
\end{eqnarray*}
we obtain
\begin{eqnarray}
\fl{\sum_{j=1}^{k-1}(-1)^{j}{\rm Pf}(s_{m}^{+},j)_{l,m}{\rm Pf}(s_{m}^{+},d_{0},1,\cdots \hat{j},\cdots \hat{k},\cdots)_{l,m}}\nonumber\\
\fl{\qquad +\displaystyle\sum_{j=k+1}^{M}(-1)^{j+1}{\rm Pf}(s_{m}^{+},j)_{l,m}{\rm Pf}(s_{m}^{+},d_{0},1,\cdots \hat{k},\cdots \hat{j},\cdots)_{l,m}}\nonumber\\
\fl{\qquad ={\rm Pf}(d_{0},s_{m}^{+})_{l,m}{\rm Pf}(s_{m}^{+},\cdots, \hat{k},\cdots)_{l,m}}
\label{identity}
\end{eqnarray}
Multiplying (\ref{eq_h77}) by $(1-ab)/(1+ab)$ and using the identity (\ref{identity}) together with
$\frac{1-ab}{1+ab}\varphi_k(l+1,m+1)
=-\frac{1-ab}{1+ab}\varphi_k(l,m)+\varphi_k(l+1,m)+\varphi_k(l,m+1)$,
we obtain
\begin{eqnarray}
\fl{\frac{1-ab}{1+ab}(f_{l+1,m+1}-f_{l,m})}\nonumber\\
\fl{=-\frac{1-ab}{1+ab}{\rm Pf}(\cdots)_{l,m}+\frac{1-ab}{1+ab}{\rm Pf}(s_{m}^{+},d_{0},\cdots)_{l,m}}\nonumber\\
\fl{\qquad +\displaystyle\sum_{k=1}^{M}(-1)^{k}\Bigg(-\frac{1-ab}{1+ab}\varphi_{k}(l,m)+\varphi_{k}(l+1,m)+\varphi_{k}(l,m+1)\Bigg)}\nonumber\\
\fl{\qquad\times{\rm Pf}(d_{0},s_{m}^{+})_{l,m}{\rm Pf}(s_{m}^{+},\cdots,\hat{k},\cdots)_{l,m}}\nonumber\\
\fl{=-\frac{1-ab}{1+ab}{\rm Pf}(\cdots)_{l,m}+\frac{1-ab}{1+ab}{\rm Pf}(s_{m}^{+},d_{0},\cdots)_{l,m}}\nonumber\\
\fl{\qquad+\frac{1-ab}{1+ab}\displaystyle\sum_{k=1}^{M}(-1)^{k}{\rm Pf}(d_{0},k)_{l,m}{\rm Pf}(s_{m}^{+},\cdots,\hat{k},\cdots)_{l,m}}\nonumber\\
\fl{\qquad-\displaystyle\sum_{k=1}^{M}(-1)^{k}{\rm Pf}(s_{l}^{+},k){\rm Pf}(s_{m}^{+},\cdots,\hat{k},\cdots)_{l,m}}\nonumber\\
\fl{\qquad-\displaystyle\sum_{k=1}^{M}(-1)^{k}{\rm Pf}(s_{m}^{+},k){\rm Pf}(s_{m}^{+},\cdots,\hat{k},\cdots)_{l,m}}\nonumber\\
\fl{=-\frac{1-ab}{1+ab}{\rm Pf}(\cdots)_{l,m}+\frac{1-ab}{1+ab}{\rm Pf}(s_{m}^{+},d_{0},\cdots)_{l,m}}\nonumber\\
\fl{\qquad +\frac{1-ab}{1+ab}\Bigg({\rm Pf}(d_{0},s_{m}^{+},\cdots)_{l,m}+{\rm Pf}(\cdots)_{l,m}\Bigg)}\nonumber\\
\fl{\qquad-{\rm Pf}(s_{l}^{+},s_{m}^{+},\cdots)_{l,m}-{\rm Pf}(s_{m}^{+},s_{m}^{+},\cdots)_{l,m}}\nonumber\\
\fl{=-{\rm Pf}(s_{l}^{+},s_{m}^{+},\cdots)_{l,m},}\nonumber
\end{eqnarray}
which is precisely equation (\ref{fg_shift1}).

We next derive the corresponding shift relations for $g_{l,m}^{(\mu)}$. Expanding the trivially vanishing Pfaffian ${\rm Pf}(s_{m}^{+},s_{m}^{+},d_{0},a^{(\mu)},\cdots)_{l,m}$, we obtain
\begin{eqnarray}
\fl{g_{l,m+1}^{(\mu)}=-{\rm Pf}(d_{0},a^{(\mu)},\cdots)_{l,m+1}}\nonumber\\
\fl{\qquad=-\displaystyle\sum_{j=1}^{M}(-1)^{j}{\rm Pf}(d_{0},j)_{l,m+1}{\rm Pf}(a^{(\mu)},\cdots,\hat{j},\cdots)_{l,m+1}}\nonumber\\
\fl{\qquad =\displaystyle\sum_{j=1}^{M}(-1)^{j+1}{\rm Pf}(s_{m}^{+},j)_{l,m}{\rm Pf}(s_{m}^{+},d_{0},a^{(\mu)},\cdots ,\hat{j},\cdots)_{l,m}}\nonumber\\
\fl{\qquad =-{\rm Pf}(s_{m}^{+},a^{(\mu)},\cdots)_{l,m}.}
\label{g_{l,m+1}}
\end{eqnarray}
Combining (\ref{g_{l,m+1}}) with the corresponding $l$- and $m$-shift relations gives
\begin{eqnarray}
\fl{\frac{1-ab}{1+ab}\big(g^{(\mu)}_{l+1,m+1}+g^{(\mu)}_{l,m}\big)}\nonumber\\
\fl{\quad = \frac{1-ab}{1+ab}\Big(-{\rm Pf}(s_{m}^{+},a^{(\mu)},\cdots)_{l+1,m}+{\rm Pf}(a^{(\mu)},d_{0},\cdots)_{l,m}\Big)}\nonumber\\
\fl{\quad =\frac{1-ab}{1+ab}{\rm Pf}(a^{(\mu)},d_{0},\cdots)_{l,m}}\nonumber\\
\fl{\qquad-\frac{1-ab}{1+ab}\displaystyle\sum_{j=1}^{M}(-1)^{j}{\rm Pf}(s_{m}^{+},j)_{l+1,m}{\rm Pf}(a^{(\mu)},\cdots,\hat{j},\cdots)_{l+1,m}}\nonumber\\
\fl{\quad =\frac{1-ab}{1+ab}{\rm Pf}(a^{(\mu)},d_{0},\cdots)_{l,m}}\nonumber\\
\fl{\qquad-\frac{1-ab}{1+ab}\displaystyle\sum_{j=1}^{M}(-1)^{j}\varphi_{j}(l+1,m+1){\rm Pf}(d_{0},s_{l}^{+},a^{(\mu)},\cdots,\hat{j},\cdots)_{l,m}}\nonumber\\
\fl{\quad =\frac{1-ab}{1+ab}{\rm Pf}(a^{(\mu)},d_{0},\cdots)_{l,m}}\nonumber\\
\fl{\qquad +\displaystyle\sum_{j=1}^{M}(-1)^{j}\left(\frac{1-ab}{1+ab}\varphi_{j}(l,m)-\varphi_{j}(l+1,m)-\varphi_{j}(l,m+1)\right)}\nonumber\\ 
\fl{\quad \qquad\times{\rm Pf}(d_{0},s_{l}^{+},a^{(\mu)},\cdots,\hat{j},\cdots)_{l,m}}\nonumber\\
\fl{\quad =\frac{1-ab}{1+ab}{\rm Pf}(a^{(\mu)},d_{0},\cdots)_{l,m}+\frac{1-ab}{1+ab}{\rm Pf}(d_{0},a^{(\mu)},\cdots)_{l,m}}\nonumber\\
\fl{\qquad-{\rm Pf}(s_{l}^{+},a^{(\mu)},\cdots)_{l,m}+{\rm Pf}(s_{l}^{+},a^{(\mu)},\cdots)_{l,m}
-{\rm Pf}(s_{m}^{+},d_{0},s_{l}^{+},a^{(\mu)},\cdots)_{l,m}}\nonumber\\
\fl{\quad =-{\rm Pf}(s_{m}^{+},d_{0},s_{l}^{+},a^{(\mu)},\cdots)_{l,m},}\nonumber
\end{eqnarray}
which is precisely equation (\ref{fg_shift2}).
\\
\\
\textbf{B. Explicit one- and two-soliton solutions of the 4-SP equation}\\
For completeness, we list the one- and two-soliton solutions of the 4-SP
equation (\ref{4SP}).\\
\noindent
One-soliton:
 \begin{eqnarray}
\fl{u^{(1)}=\frac{g^{(1)}}{f},\quad u^{(2)}=\frac{g^{(2)}}{f},\quad v^{(1)}=\frac{h^{(1)}}{f},\quad v^{(2)}=\frac{h^{(2)}}{f},\quad x=X-2(\log{f})_{T},\quad t=T,}\nonumber\\
\fl{f=1+\displaystyle\frac{1}{4}\left(\displaystyle\frac{p_{1}p_{2}}{p_{1}+p_{2}}\right)^{2}\varphi_{1}\varphi_{2}(c_{11}a^{(1)}_{1}b^{(1)}_{2}+c_{12}a^{(1)}_{1}b^{(2)}_{2}+c_{21}a^{(2)}_{1}b^{(1)}_{2}+c_{22}a^{(2)}_{1}b^{(2)}_{2}),}\nonumber\\
\fl{g^{(1)}=a^{(1)}_{1}\varphi_{1},\quad g^{(2)}=a^{(2)}_{1}\varphi_{1},\quad h^{(1)}=b^{(1)}_{2}\varphi_{2},\quad h^{(2)}=b^{(2)}_{2}\varphi_{2},}\nonumber
\end{eqnarray}
where $\varphi_{i}=B_{i}e^{p_{i}X+p_{i}^{-1}T}$ and $i=1,2$.
\\
\noindent
Two-soliton:
Using the notation in (\ref{con_compact_coefficients}) with $I=J=\{1,2\}$,
the two-soliton solution can be written as
\begin{eqnarray}
\fl u^{(\mu)}&=&\frac{g^{(\mu)}}{f},\quad
v^{(\nu)}=\frac{h^{(\nu)}}{f},\quad
x=X-2(\log f)_T,\quad t=T,\qquad \mu,\nu=1,2,\nonumber\\
\fl f&=&1+\tilde{C}_{14}\varphi_1\varphi_4+\tilde{C}_{13}\varphi_1\varphi_3
+\tilde{C}_{24}\varphi_2\varphi_4+\tilde{C}_{23}\varphi_2\varphi_3\nonumber\\
\fl &&+\Delta\left(\prod_{1\leq i<j\leq4}R_{ij}\right)
\varphi_1\varphi_2\varphi_3\varphi_4,\nonumber\\
\fl g^{(\mu)}&=&a_{1}^{(\mu)}\left(\varphi_1+R_{12}R_{14}\tilde{C}_{24}\varphi_1\varphi_2\varphi_4
+R_{12}R_{13}\tilde{C}_{23}\varphi_1\varphi_2\varphi_3\right)\nonumber\\
\fl &&+a_{2}^{(\mu)}\left(\varphi_2-R_{12}R_{24}\tilde{C}_{14}\varphi_1\varphi_2\varphi_4
-R_{12}R_{23}\tilde{C}_{13}\varphi_1\varphi_2\varphi_3\right),\qquad \mu=1,2,\nonumber\\
\fl h^{(\nu)}&=&b_{3}^{(\nu)}\left(\varphi_3-R_{23}R_{34}\tilde{C}_{24}\varphi_2\varphi_3\varphi_4
-R_{13}R_{34}\tilde{C}_{14}\varphi_1\varphi_3\varphi_4\right)\nonumber\\
\fl &&+b_{4}^{(\nu)}\left(\varphi_4+R_{24}R_{34}\tilde{C}_{23}\varphi_2\varphi_3\varphi_4
+R_{14}R_{34}\tilde{C}_{13}\varphi_1\varphi_3\varphi_4\right),\qquad \nu=1,2.\nonumber
\end{eqnarray}
where $\varphi_{i}=B_{i}e^{p_{i}X+p_{i}^{-1}T}$ and $i=1,2,3,4$.

\end{document}